\documentclass[10pt,a4paper]{article}
\usepackage[T1]{fontenc}
\usepackage[utf8]{inputenc}
\usepackage[english]{babel}
\usepackage{lmodern,microtype,amsmath,amssymb,amsthm,mathtools}
\usepackage[margin=27mm]{geometry}
\usepackage{graphicx,tikz}
\usetikzlibrary{arrows.meta}
\usepackage[hidelinks]{hyperref}
\numberwithin{equation}{section}
\numberwithin{figure}{section}
\newtheorem{theorem}{Theorem}[section]
\newtheorem{proposition}[theorem]{Proposition}
\newtheorem{lemma}[theorem]{Lemma}
\newtheorem{corollary}[theorem]{Corollary}
\newcommand{\be}{\begin{equation}}
\newcommand{\ee}{\end{equation}}
\newcommand{\bse}{\begin{subequations}}
\newcommand{\ese}{\end{subequations}}
\newcommand{\del}{\partial}
\newcommand{\Ric}{\operatorname{Ric}}
\newcommand{\Hess}{\operatorname{Hess}}
\newcommand{\Scal}{\operatorname{Scal}}
\makeatletter
\renewcommand{\@seccntformat}[1]{\@ifundefined{#1numberformat}{\csname the#1\endcsname\quad}{\csname #1numberformat\endcsname}}

\renewcommand{\@pnumwidth}{4.5em}
\renewcommand{\@tocrmarg}{5.5em}
\makeatother
\newcommand{\Ocal}{\mathcal O}
\makeatletter
\renewcommand\paragraph{\@startsection{paragraph}{4}{\z@}{1ex \@plus .3ex \@minus .2ex}{-1em}{\normalfont\normalsize\bfseries}}
\makeatother
\title{\resizebox{\textwidth}{!}{\bf Existence of static axisymmetric solutions to the Einstein equations}}
\author{Philippe G. LeFloch and Luc Rozoy}
\date{}
\begin{document}
\maketitle

\footnotetext[1]{Laboratoire Jacques-Louis Lions,
Centre National de la Recherche Scientifique, Sorbonne Universit\'e,
4 Place Jussieu, 75252 Paris, France.
Email: {\tt contact@philippelefloch.org}.}

\footnotetext[2]{Université Grenoble Alpes, 38058 Grenoble, France.
Email: {\tt rozoy.luc@gmail.com}.}

\begin{abstract}
We construct smooth Riemannian metrics and scalar potentials that are invariant under a circle action and satisfy the static Einstein equations with positive cosmological constant. The underlying compact three-manifold is a \emph{ball with two disjoint open balls removed}. The potential (or lapse function) is positive in the interior and vanishes at the boundary, consisting of three non-degenerate horizons. The areas of the two inner boundary components are controlled by two arbitrary parameters, namely the masses of the black holes, which are taken to be sufficiently small and comparable. Each inner horizon radius scales linearly with the individual mass parameters, while the separation of their reference centers scales as the cube root of the total mass. These two scales are assumed to be sufficiently small at \emph{different rates}, whereas the cosmological length remains fixed. The main analytical challenge is to control the linearized equation and the nonlinear remainder simultaneously near the shrinking horizons, between the centers and in the outer region. Weighted H\"older estimates quantify the growth of the inverse and show that the error of an approximation joining \emph{Kottler collars} to \emph{de Sitter space} is small enough to close a contraction argument. Namely, we solve a \emph{reduced elliptic Einstein equation} that is endowed with \emph{two obstruction tensors}; varying the two axial positions removes their coefficients. The contracted Bianchi identity and an energy estimate for an auxiliary vector field (Killing field) introduced by the projection to vanish. Finally, for a fixed cosmological constant, the intrinsic metric of one of the two boundary components is sufficient to determine a static solution with simply connected interior and non-vanishing boundary gradient, \emph{up to} isometry and multiplication of the potential by a positive constant.
\end{abstract}


\

{\small 

\setcounter{tocdepth}{1}
\begingroup
\normalsize
\setlength{\parskip}{0pt}
\setlength{\baselineskip}{12pt}
\makeatletter
\renewcommand{\l@section}[2]{\vskip 0pt plus 1fill\@dottedtocline{1}{0pt}{1.8em}{\bfseries #1}{\bfseries #2}}
\par\addvspace{.4\baselineskip}
\noindent\textbf{\contentsname}\par\nobreak\vspace{.3\baselineskip}
\@starttoc{toc}
\makeatother
\endgroup
}

\

\clearpage

\section{Introduction}
\label{LS-section-introduction}

\subsection{The static Einstein equations}
\label{LS-section-equations}

We construct axisymmetric solutions of the {static Einstein equations} with positive cosmological constant.
This article is motivated by the work of Dias, Gibbons, Santos and Way~\cite{P2-Dias}, who \emph{numerically}constructed static solutions that are axisymmetric but not spherically symmetric, in contrast with the standard spherical symmetry property imposed by classical black-hole uniqueness theorems (in the asymptotically flat setting, say). Here, we establish that such solutions do exist, by formulating and solving a nonlinear elliptic problem together with two scalar compatibility conditions and introducing a suitable scalar of parameters involving the mass and the location of the black holes.

We consider a connected, compact, orientable three-dimensional manifold $M$ with nonempty smooth boundary, a smooth Riemannian metric $h$ on $\overline M$, and a smooth function $N$ satisfying $N>0$ in its interior $M^\circ$ and $N=0$ on $\del M$, whose connected components are the \emph{horizons}. We call $N$ the \emph{lapse function}. We denote the \emph{cosmological constant} by $\Lambda$. The four-dimensional \emph{static metric} must satisfy
\be
\label{LS-eq-spacetime}
\mathbf g=-N^2dt^2+h,\qquad \Ric_{\mathbf g}=\Lambda\mathbf g,\qquad \Lambda>0, 
\ee
so that\footnote{On $N>0$, the nonzero time connection coefficients are $\Gamma^t_{it}=N^{-1}\del_iN$ and $\Gamma^i_{tt}=Nh^{ij}\del_jN$. Their contraction gives $\Ric_{\mathbf g,ij}=\Ric_{h,ij}-N^{-1}\Hess_hN_{ij}$, $\Ric_{\mathbf g,tt}=N\Delta_hN$, and $\Ric_{\mathbf g,it}=0$. In the static region,~\eqref{LS-eq-spacetime} and~\eqref{LS-eq-static} are therefore equivalent.} the spatial equations are\footnote{We use the Laplacian $\Delta_h\coloneqq\operatorname{div}_h\nabla^h$ and the convention $R(X,Y)Z\coloneqq\nabla_X\nabla_YZ-\nabla_Y\nabla_XZ-\nabla_{[X,Y]}Z$, for which round spheres have positive sectional curvature.}
\be
\label{LS-eq-static}
\Hess_hN=N(\Ric_h-\Lambda h),\qquad \Delta_hN=-\Lambda N.
\ee 
Taking the trace of the tensor equation and using the scalar equation gives $N(\Scal_h-2\Lambda)=0$. Positivity in the interior and continuity at the boundary give $\Scal_h=2\Lambda$ on $\overline M$.

A \emph{positive cosmological constant} is necessary in the compact setting under consideration in this paper. Multiplying the lapse equation by $N$ and integrating by parts, with $N=0$ on the boundary, gives
\be
\label{AX-eq-positive-cosmological-constant}
0<\int_M|\nabla^hN|_h^2\,d\mu_h =\Lambda\int_MN^2\,d\mu_h.
\ee
Thus $\Lambda>0$ is the only sign compatible with the compact boundary problem considered here. For $\Lambda=0$, the theorem of Bunting--Masood-ul-Alam~\cite{AX-BuntingMasood} shows that a regular asymptotically Euclidean static vacuum exterior with nonempty smooth compact horizon boundary is Schwarzschild, hence spherically symmetric with a single horizon component. The static axisymmetric Myers--Korotkin--Nicolai solutions at $\Lambda=0$, described by Reiris--Peraza~\cite{AX-ReirisPeraza}, have a \emph{noncompact spatial region} and an asymptotically Kasner end, although their horizon boundary is compact. Thus spherical symmetry in the noncompact problem depends on the asymptotic conditions.


\subsection{Existence and uniqueness}
\label{LS-section-strategy}

\paragraph{Main result}

Our prove in this paper a small-mass existence theorem and a uniqueness statement from data prescribed on one boundary component. The two reference masses of the black holes provide natural parameters in our construction scheme, while the intrinsic horizon metric is a natural object to establish an injective parametrization of the solutions. .

\begin{theorem}[Existence of static axisymmetric solutions to the Einstein equations]
\label{AX-thm-intro-existence}
Fix $\Lambda>0$ and $0<\theta_0\leq1/2$. For any two positive reference masses $m_+,m_-$ with sufficiently small sum $m_-+m_+$ and with controlled ratio $m_+/(m_++m_-)\in[\theta_0,1-\theta_0]$, there exist a compact three-dimensional manifold $M$, a Riemannian metric $h$ and a function $N$ satisfying the static vacuum equations~\eqref{LS-eq-static}, with the following properties.
\begin{itemize}
\item The manifold $M$ is diffeomorphic to a closed ball with two disjoint open balls removed. An effective circle action preserves $h$ and $N$ and has a \emph{regular axis of symmetry}.
\item The fields $h,N$ are smooth up to the boundary and analytic in suitable interior coordinates. The lapse is positive in the interior and vanishes exactly on the three boundary spheres; each sphere is a \emph{nondegenerate horizon} (namely, $\nabla N\neq0$) with a regular analytic local spacetime extension.
\item The two inner horizon areas satisfy $A_j=16\pi m_j^2(1+o(1))$ as $(m_++m_-)\to0$. The two masses may be unequal; for equal masses, an additional reflection is available and interchanges the inner horizons.
\end{itemize}
\end{theorem}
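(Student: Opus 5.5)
The construction is a gluing argument. We place two small Kottler (Schwarzschild--de Sitter) collars of masses $m_\pm$ inside the static patch of de Sitter space, centred at two points on a common axis, and then solve a reduced elliptic problem around the resulting approximate solution. The abstract fixes the scales.

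\emph{Scales.} The horizon radii are $\approx 2m_\pm$, and the separation of the centres is $d\sim (m_++m_-)^{1/3}$.

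The steps, in order, are as follows.

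\emph{Approximate solution.} Work on de Sitter space $h_{dS}=dr^2/(1-\Lambda r^2/3)+r^2g_{S^2}$ with lapse $N_{dS}=\sqrt{1-\Lambda r^2/3}$. The cosmological horizon $r=\sqrt{3/\Lambda}$ is the outer boundary sphere. Choose axial positions $z_\pm$ with $|z_+-z_-|\sim d$. In annuli $m_\pm\ll|x-z_\pm|\ll d$, interpolate with cutoffs between $(h_{dS},N_{dS})$ and the Kottler pair
\[
h_K=\frac{d\rho^2}{1-2m/\rho-\Lambda\rho^2/3}+\rho^2g_{S^2},\qquad N_K=c\sqrt{1-2m/\rho-\Lambda\rho^2/3},
\]
where $c$ is a constant matching the lapse. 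The resulting $(h_0,N_0)$ is $U(1)$-invariant about the axis. Its inner horizons are exactly the Kottler horizons, of radius $2m_\pm(1+O(m^2))$, so $A_\pm=16\pi m_\pm^2(1+o(1))$ once the correction is shown small in weighted norms. I would quantify the error $E=\Phi(h_0,N_0)$ of the static operator in weighted H\"older spaces $C^{k,\alpha}_\delta$. These spaces carry weights $|x-z_\pm|^{\delta}$ near the centres and are adapted to the vanishing of $N$ at the boundary. The error is supported in the gluing annuli and has size governed by $m/|x-z_\pm|$ times tidal terms of order $m\,|x-z_\mp|^{-3}$ from the other hole. This is where the choice $d^3\sim m$ enters, since it balances the mutual attraction of the holes against the de Sitter repulsion $\Lambda d$.

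\emph{Reduced equation.} To obtain an elliptic system, replace \eqref{LS-eq-static} by a gauged version. Add a Bianchi/DeTurck term $N\,\mathcal L_W h$ with $W$ built from $h$ relative to $h_0$. Then solve
\[
\Phi(h,N)+\mathcal L_W h=\textstyle\sum_\pm \lambda_\pm\,\Theta_\pm
\]
for $(h,N,\lambda_\pm)$. Here $\Theta_\pm$ are the two obstruction tensors, which are cutoffs of the axial translation modes of each Kottler collar; they span the approximate cokernel of the linearization. Linearizing at $(h_0,N_0)$ gives a degenerate elliptic operator: it is of Fuchsian type at the horizons, where $N\sim$ distance, and it has an approximate kernel generated by translations of the small holes.

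\emph{Uniform inverse.} The main obstacle is a linear estimate with controlled blow-up. One needs $\|L^{-1}\|\lesssim m^{-\gamma}$ on the complement of the obstruction directions, with $\gamma$ small enough that $\|E\|\,m^{-\gamma}\ll1$ and the quadratic terms close. I would first prove the estimate separately on three models: Kottler rescaled by $m$ (a Schwarzschild-type exterior), the region between the centres, and de Sitter with small balls removed. Nondegeneracy of de Sitter static perturbations up to Killing fields and translations is used in the outer model. A blow-up/contradiction argument then patches the three model estimates, with weights chosen to avoid indicial roots.

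\emph{Closing the argument.} A contraction yields $(h,N,\lambda_\pm(z_\pm))$. A degree or implicit-function argument in the two axial positions $z_\pm$ then makes $\lambda_\pm=0$. The leading term of $\lambda_\pm$ is the net axial force, namely the Newtonian attraction $\sim m_\mp/d^2$ minus the de Sitter repulsion $\sim\Lambda d$, and this has a nondegenerate zero.

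\emph{Removing the gauge term.} With $\lambda_\pm=0$, the contracted Bianchi identity shows that $W$ satisfies a homogeneous elliptic equation. An energy estimate weighted by $N$, using vanishing at the horizons and $\Lambda>0$, forces $W$ to be Killing and hence the gauge term to vanish. The pair $(h,N)$ therefore solves \eqref{LS-eq-static}.

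\emph{Stated properties.} Regularity follows from elliptic theory: analyticity comes from analytic ellipticity in harmonic-type coordinates. Smoothness at the horizons and the local spacetime extensions come from the surface gravity $|\nabla N|\neq0$ being close to Kottler's. The topology, a ball with two balls removed, and the regular $U(1)$ axis are inherited from the construction, since every step is equivariant. For $m_+=m_-$, choosing $z_+=-z_-$ and working in reflection-symmetric spaces gives the additional reflection interchanging the inner horizons.
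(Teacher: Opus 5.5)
Your outline shares the paper's architecture: Kottler collars in de Sitter, obstruction tensors built from the Schwarzschild axial displacement mode, a blow-up argument over the Schwarzschild, flat and de Sitter limits, selection of the two axial positions, and removal of the auxiliary term by the contracted Bianchi identity. The genuine gap is the step ``a degree argument in $z_\pm$ makes $\lambda_\pm=0$ because the leading term of $\lambda_\pm$ is the net axial force.'' That force is a quadratic interaction. In the normalized axial flux it is $8\pi\varepsilon^4B(D)$: mass $\asymp\varepsilon^3$ times acceleration $\asymp\varepsilon$, with $\ell=1$. Your approximate solution is exactly de Sitter outside collars of radius $a\ll d$, so it contains neither hole's $1/r$ tail at the separation scale. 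Its leading errors are therefore first order (cutting off $m/\rho$ alone gives scaled size $m/a\gg m/d$), not the tidal products you describe. The correction must rebuild $m_\pm/|x-z_\pm|$, so it has relative size $\rho\asymp m/d\asymp\varepsilon^2$ on spheres of radius $\asymp d$. The unavoidable flux errors, of order $t\rho+\rho^2$ with $t=\varepsilon^2$, are then as large as the force itself. Nor can the force be read off from the inverse. Any a priori bound on $\lambda$ obtained from the error size and an inverse with growth is far weaker than $\varepsilon^4$; even with the paper's improved reference one only gets $|\lambda|\lesssim\varepsilon^{p+3\gamma}$, with $p+3\gamma<12/5$. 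So neither the boundary signs nor a nondegenerate zero of $\lambda$ in $z_\pm$ follows.

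Three ingredients are missing.
\begin{enumerate}
\item The reference must carry the linearized two-body field on the intermediate region, $N_I=1-U_m-Q_\ell$ and $h_I=(1+2U_m-Q_\ell)\delta_X$. The Kottler constants $c_{h,j},c_{N,j}$ must be matched to the companion potential and the cosmological term at each center, with transition radii $\varepsilon^{5/3}$ and $\varepsilon^{3/5}$. Only then is the Einstein error $\Ocal(\varepsilon^p)$ small enough that, after the loss $\varepsilon^{-s_{\rm inv}}$ of the inverse, the correction is $o(\varepsilon^2)$ on the separation annuli.
\item You need the closed-sphere identity of Proposition~\ref{AX-prop-flux-stability}. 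The linearized Einstein tensor at the flat product is divergence free, so any correction has zero linear flux. The term $8\pi\varepsilon^4B(D)$ then survives with error $\Ocal(t^3+t\rho+\rho^2)=o(\varepsilon^4)$.
\item One does not make $\lambda$ vanish directly. One makes the geometric fluxes of $S_\Lambda(g)$ vanish. Then $\mathbf F=\mathsf F\lambda+\operatorname{Flux}(B_GV)$ forces $\lambda=0$, since $\mathsf F$ is close to $8\pi I$ and the flux of the auxiliary term is bounded by $o(1)|\lambda|$ via the energy estimate.
\end{enumerate}
Two smaller corrections. First, in $2\|BV\|^2=\|\nabla V\|^2+\|\operatorname{div}V\|^2-\int\Ric(V,V)$ the positive cosmological constant has the unfavourable sign. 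The uniform coercivity needed to remove the auxiliary term therefore comes from a Hardy inequality near the shrinking holes and the absence of Killing fields orthogonal to the orbits, not from $\Lambda>0$. Second, closeness of the surface gravity to Kottler's does not give smoothness across the horizons. The paper secures this by working with $h+N^2dt_E^2$ in regular Cartesian angular lifts, where the surface gravities are fixed by the reference.
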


Theorem~\ref{AX-thm-binary-existence}, below, gives the quantitative existence statement, Proposition~\ref{AX-lem-local-horizon-extension} the local horizon extension, and Proposition~\ref{AX-prop-prescribed-Kottler-masses} the formulation with prescribed Kottler reference masses. The lapse attains its maximum on a \emph{smooth embedded circle}, represented by one interior point of the orbit domain, with nearby levels diffeomorphic to tori (Proposition~\ref{AX-prop-lapse-maximum-circle}). Near the maximal circle, we expand the lapse in the two \emph{normal directions}. Axial symmetry makes the coefficients constant along the circle, but not independent of normal direction (Section~\ref{AX-section-local-critical-geometry}). The remaining \emph{critical set} (where $\nabla N=0$) consists of three \emph{nondegenerate axial saddles} (Propositions~\ref{AX-prop-three-axial-saddles} and~\ref{AX-prop-complete-critical-set}). Furthermore, for the special class of \emph{reflection-symmetric equal-mass solutions}, the regular levels pass from three spheres to two spheres and then to a torus as the lapse value increases (Proposition~\ref{AX-prop-equal-mass-level-topology}).

At fixed $\Lambda$, Theorem~\ref{AX-thm-horizon-global-uniqueness}, stated below, determines any compact static region with simply connected interior and regular nondegenerate horizons by the \emph{intrinsic metric of one marked horizon}, up to isometry and lapse rescaling. In the axial class, these data are a \emph{meridian profile} and its \emph{interval length}. This injective parametrization applies without a small-mass restriction. Its image and uniqueness from the two masses alone remain open.


\paragraph{The Kottler collars}
\label{AX-section-intro-construction}

In the proposed scheme,  
we construct the approximate solutions from Kottler and de Sitter metrics and solve a \emph{reduced nonlinear elliptic Einstein equation}. Two scalar compatibility conditions select the axial centers; the {contracted Bianchi identity} and a {coercive estimate for the symmetric gradient of the auxiliary vector field} then recover the static equations. \emph{Kottler collars} solve the static equations with the same positive cosmological constant as the outer de Sitter region. At fixed $\Lambda$, their local parameters are the \emph{horizon radius} and \emph{surface gravity}, the latter including the lapse normalization. Rescaling both lapse functions by the same positive constant leaves their relative normalization unchanged; global matching prevents treating all these local parameters as independent solution parameters. The Schwarzschild metric arises when the cosmological coefficient of the rescaled Kottler family vanishes and is used to calculate the \emph{kernel of the limiting Schwarzschild linearized operator}.

Dias, Gibbons, Santos and Way~\cite{P2-Dias} also solve \emph{numerically} a gauge-fixed elliptic problem with horizon and axis regularity. Their numerical reference modifies Bach--Weyl and joins it to de Sitter; our construction is analogous but different, and joins two exact Kottler collars to de Sitter through explicitly constructed intermediate fields.

\paragraph{Parameter ranges and uniform estimates}

Multiplying $N$ by a positive constant preserves the static equations and rescales the time coordinate. We define
\be
\label{AX-eq-lapse-normalization}
N_{\max}\coloneqq\max_{\overline M}N>0,\qquad \ell\coloneqq\sqrt{3/\Lambda}.
\ee
On fixed compact parameter ranges away from horizon degeneracy, Kottler collars admit uniform local estimates, which may help relax the mass restrictions. Larger masses also require control of the global inverse, the nonlinear correction and the two parameter equations. Theorem~\ref{AX-thm-binary-existence}, below, allows different reference masses with sufficiently small sum and ratio in a fixed compact interval. Proposition~\ref{AX-prop-prescribed-Kottler-masses}, below, prescribes the two Kottler reference parameters.

We denote by $m_+,m_->0$ the auxiliary mass scales of the two inner reference regions and by $d>0$ the \emph{separation} of their centers in the reference coordinates. With the cosmological length $\ell$ defined in~\eqref{AX-eq-lapse-normalization}, we introduce the \emph{dimensionless total-mass parameter} $\varepsilon_{\mathrm{tot}}\coloneqq((m_++m_-)/\ell)^{1/3}$. We write $a\asymp b$ for positive quantities whose ratio is bounded above and below by constants independent of sufficiently small $\varepsilon_{\mathrm{tot}}$, with the mass ratio in a fixed compact subset of $(0,\infty)$. In other words, our scales are
\be
\label{AX-eq-intro-scales}
m_\pm\asymp\varepsilon_{\mathrm{tot}}^3\ell, \qquad d\asymp\varepsilon_{\mathrm{tot}}\ell, \qquad \frac{m_\pm}{d}\asymp\varepsilon_{\mathrm{tot}}^2, \qquad \Lambda d^2\asymp\varepsilon_{\mathrm{tot}}^2.
\ee
Figure~\ref{AX-fig-intro-scales} illustrates these scales.

\begin{figure}[!htbp]
\centering
\begin{tikzpicture}[x=1cm,y=1cm,font=\small]
\draw[black] (0,0)--(12,0);
\draw[thick,black] (2,0) circle (1.35);
\fill[white] (1.86,0) circle (.045);
\draw[black] (1.86,0) circle (.045);
\fill[white] (2.14,0) circle (.06);
\draw[black] (2.14,0) circle (.06);
\draw[<->] (2,0.25)--(3.32,0.25) node[midway,above] {$\asymp\ell$};
\draw[dashed] (2,0) ellipse (.4 and .23);
\node at (2,-1.75) {Cosmological scale};
\draw[->,dashed] (3.55,0.65)--(5.1,0.65) node[midway,above] {enlargement};
\draw[thick,black,fill=white] (6.3,0) circle (.16);
\draw[thick,black,fill=white] (9.7,0) circle (.23);
\fill (6.3,0) circle (.025); \fill (9.7,0) circle (.025);
\draw[dashed] (6.3,-.15)--(6.3,-.75);
\draw[dashed] (9.7,-.23)--(9.7,-.75);
\draw[<->] (6.3,-.65)--(9.7,-.65) node[midway,below] {$d\asymp\varepsilon_{\mathrm{tot}}\ell$};
\draw[->] (5.4,1)--(6.2,.13);
\node[anchor=south] at (5.4,1) {radius $\asymp m_-$};
\draw[->] (10.6,1)--(9.85,.17);
\node[anchor=south] at (10.6,1) {radius $\asymp m_+$};
\node at (8,-1.75) {$m_\pm\asymp\varepsilon_{\mathrm{tot}}^3\ell\ll d\ll\ell$};
\end{tikzpicture}
\caption{The three length scales in the reference construction. The left panel shows a section through the symmetry axis at the cosmological scale; the right panel enlarges the neighborhood of the two inner centers. The inner radii are of order $m_\pm$, and their separation is of order $\varepsilon_{\mathrm{tot}}\ell$. The two masses may differ. The panels use different magnifications and the radii are exaggerated for visibility; the displayed asymptotic relations specify the relative sizes.}
\label{AX-fig-intro-scales}
\end{figure}

\subsection{Related nonlinear elliptic constructions}
\label{AX-section-related-works}

\paragraph{Axis regularity and asymptotically flat equilibrium}
For $\Lambda=0$, Weinstein~\cite{AX-WeinsteinEquilibrium,AX-WeinsteinTwoBody} develops the construction of stationary axisymmetric vacuum metrics through singular harmonic maps into the hyperbolic plane. His first existence and uniqueness result treats zero total angular momentum; the subsequent two-horizon analysis allows all five parameters of the reduced problem and establishes regularity at the horizon poles. The reconstructed metrics may carry a conical defect on the intervening axis, whose angle determines the interaction force. These works identify axis regularity as an additional equilibrium condition on solutions of the reduced equations.

The asymptotically flat problem provides several obstructions to static equilibrium. Beig and Schoen~\cite[Theorems~2.2 and~2.3]{AX-BeigSchoenStatic} prove rigidity for asymptotically flat static configurations with nonnegative scalar curvature and a closed, noncompact, totally geodesic surface in the vacuum region. Their result excludes configurations with a noncompact reflection surface disjoint from the bodies. For asymptotically flat, axisymmetric stationary vacuum configurations, Li and Tian~\cite[Section~3]{AX-LiTianHorizons} show that an involution interchanging the horizon components forces a positive interaction force between at least two of them. Their argument relates the force to the regularity condition on the intervening axis. Our construction places the equilibrium problem in a compact static region with positive cosmological constant and a cosmological horizon; the two scalar compatibility equations select the centers while preserving a regular axis.

In their recent preprint, Han--Khuri--Weinstein--Xiong~\cite[Theorem~1.1]{AX-HKWXKerrUniqueness} prove a strict sign result for asymptotically flat stationary axisymmetric vacuum metrics reconstructed from singular harmonic maps with the prescribed axis, pole and infinity asymptotics. Assuming nonzero angular momentum on every degenerate horizon component, they obtain a nonzero attractive interaction force on every bounded axis interval between horizons. This strengthens the asymptotically flat equilibrium obstruction in the class covered by their theorem. The balance equations in our construction include the positive cosmological term.

A different equilibrium construction for $\Lambda=0$ is given by Khuri--Weinstein--Yamada~\cite[Theorem~1(ii) and Section~3]{AX-KWYSignedMasses}. In four spacetime dimensions, they construct balanced asymptotically flat static configurations with three constituents of both mass signs. The axis is regular, while negative-mass singularities remain. Their formulas describe how angle defects change across horizons and negative-mass singularities. Our construction achieves a regular axis with two positive-mass inner reference regions and a positive cosmological constant.

\paragraph{Periodic solutions with a Kasner end}
Reiris and Peraza~\cite[Introduction and Theorems~2--3]{AX-ReirisPeraza} describe the Myers--Korotkin--Nicolai solutions and classify complete circle-symmetric static vacuum black holes for $\Lambda=0$. The Myers--Korotkin--Nicolai construction starts from an infinite periodic array of spherical horizons with a regular axis and takes a quotient along the periodic direction, possibly combined with an axial rotation. The quotient has the topology of an open solid torus $\mathbb R^2\times\mathbb S^1$ with finitely many open balls removed. Its horizon boundary is compact, but the radial direction extends to infinity through an \emph{asymptotically Kasner end} with toroidal cross-sections. Our solutions have $\Lambda>0$ and a \emph{compact, simply connected spatial region}, diffeomorphic to a ball with two open balls removed; its boundary consists of two inner spherical horizons and one outer \emph{cosmological horizon}. The lapse equation expresses this distinction analytically: the harmonic lapse at $\Lambda=0$ is defined on a noncompact exterior, whereas on our compact domain the condition $N=0$ on every boundary component gives the positive Dirichlet eigenvalue identity~\eqref{AX-eq-positive-cosmological-constant}.

\paragraph{Gluing for Einstein and other nonlinear elliptic equations}
For positive cosmological constant, Hintz~\cite[Theorem~1.1 and Sections~3.1--3.3]{AX-HintzDeSitterGluing} constructs dynamical vacuum spacetimes by gluing Schwarzschild--de Sitter metrics near prescribed points of the future conformal boundary. A balance condition on the masses and points removes the obstruction to a linear divergence equation for trace-free symmetric tensors. Both constructions combine exact Kottler regions with de Sitter geometry and impose compatibility conditions on the gluing parameters. Hintz solves a spacetime problem near conformal infinity, whereas we solve a static elliptic problem on a compact region bounded by three horizons.

Our construction shares several analytical features with gluing and concentration methods for other nonlinear elliptic equations. Mazzeo--Pacard--Pollack~\cite{AX-MPPConnectedSums} construct connected sums of constant mean curvature surfaces using weighted estimates on the constituent pieces and matching their Cauchy data. Their work illustrates how estimates adapted to a degenerating geometry can be combined with finite-dimensional matching conditions. Breiner--Kapouleas~\cite{AX-BKEmbeddedCMC} solve the linearized equation modulo localized obstruction functions, estimate the nonlinear correction in weighted H\"older spaces, and vary geometric parameters to remove the remaining obstructions through integral identities. This last step is closely related to our selection of the two axial centers through scalar compatibility equations. In the construction of Riemannian Einstein metrics with negative Ricci curvature, Bamler~\cite[Sections~4--6]{AX-BamlerDehn} incorporates cusp deformation modes into the norms to obtain a uniform inverse estimate and carry out a nonlinear fixed-point argument; the related issue here is to identify the kernels of the limiting linearized operators and quantify the inverse bounds as the inner radii and their separation shrink at different rates. Esposito--Pistoia--V\'etois~\cite[Section~2, Propositions~2.1 and~2.3]{AX-EPVYamabe} likewise separate an infinite-dimensional equation from a finite-dimensional parameter problem: they solve the projected Yamabe equation orthogonally to dilation and translation modes, then select the concentration parameters through a reduced energy. In our setting, the reduced Einstein equation and the two scalar compatibility equations play these respective roles, while the contracted Bianchi identity and an energy estimate for the auxiliary vector field recover the full static Einstein equations.


\subsection{Outline of the proof}
\label{AX-section-intro-outline}

Section~\ref{AX-section-problem}, below, sets out the axial formulation, including the tensor constraint required for equivalence with the static equations, and the regularity conditions at the horizons, the axis and their intersections. The first four points explain how the analytical construction gives an exact solution; the last two describe its critical geometry and its determination by horizon data.
\begin{enumerate}
\item \emph{An approximation adapted to the different scales.} Section~\ref{AX-section-existence} constructs the Kottler--de Sitter reference. Near each inner horizon, we match the constant parts of the companion and cosmological fields and use regular Cartesian lifts, with the local angular coordinate normalized by that horizon's surface gravity. In units where $\ell=1$, the transition radii $\varepsilon_{\mathrm{tot}}^{5/3}$ and $\varepsilon_{\mathrm{tot}}^{3/5}$ balance the competing matching errors. Lemma~\ref{AX-lem-improved-reference-error} makes the Einstein error small enough to compensate for the growth of the linear inverse while preserving the fields on the annuli used for the compatibility integrals.
\item \emph{Cancellation that preserves the leading compatibility equations.} Proposition~\ref{AX-prop-flux-stability} shows that linearization at the flat product metric gives zero contribution to each closed-sphere compatibility integral for any $C^2$ correction independent of the auxiliary angular coordinate and defined on an annular neighborhood of the sphere. An explicit quadratic calculation for the reference then yields the leading terms, of order $\varepsilon_{\mathrm{tot}}^4$. A correction of size $o(\varepsilon_{\mathrm{tot}}^2)$ in the rescaled annular $C^2$ norm preserves these leading terms. This cancellation makes the accuracy of the nonlinear correction sufficient for selecting the centers.
\item \emph{Control of the two displacement modes.} Sections~\ref{AX-section-two-masses} and~\ref{AX-section-force-surjectivity} classify the kernels of the linearized operators in the rescaled Kottler, flat and de Sitter limits and derive weighted estimates. The inner orthogonality conditions remove the two axial displacement modes, while Proposition~\ref{AX-prop-centered-index} identifies the two scalar obstructions to solvability. Testing against localized displacement tensors controls the two obstruction coefficients. Theorem~\ref{AX-thm-physical-force-inverse} then solves the linear equation with these two additional scalar unknowns and quantifies the growth of its inverse; Corollary~\ref{AX-cor-improved-projection-quadratic} controls the nonlinear remainder, including the mass dependence introduced by the tensor projection.
\item \emph{Selection of the centers and recovery of the Einstein equation.} In Section~\ref{AX-section-comparable-mass-existence}, a contraction solves the reduced equation. Varying the axial barycenter and separation gives a zero of the two compatibility integrals by Lemma~\ref{AX-prop-two-mass-balance}, also for different comparable masses. The contracted Bianchi identity and a coercive estimate control the auxiliary vector contribution by the two obstruction coefficients. The matrix of compatibility integrals, together with this estimate, forces both coefficients to vanish at the selected zero. The same estimate eliminates the auxiliary contribution, giving the full Einstein equation and hence Theorem~\ref{AX-thm-binary-existence}. The resulting reference separation satisfies $d^3\sim(m_++m_-)\ell^2$. Proposition~\ref{AX-prop-prescribed-Kottler-masses} also gives existence with prescribed Kottler reference masses.
\item \emph{The maximal circle and the remaining critical points.} Section~\ref{AX-section-critical-geometry} uses an explicit positive factorization of the leading potential to locate its circle of minima. Global estimates and local $C^2$ control transfer this description to a circle of maxima of the exact lapse, with nearby torus levels. The same estimates identify three nondegenerate axial saddles, describe their conical critical levels within the analytic spatial geometry and determine the topology of the regular levels. These conclusions describe the solutions produced by the existence argument.
\item \emph{Determination by one intrinsic horizon metric.} Section~\ref{AX-section-uniqueness} distinguishes uniqueness in the fixed contraction ball from global determination by one horizon. The static equations recursively determine all normal expansion coefficients from the intrinsic horizon metric and the lapse normalization. Analyticity gives local uniqueness, and continuation across the spatial doubles gives Theorem~\ref{AX-thm-horizon-global-uniqueness} under its regularity, nondegeneracy and simple connectivity assumptions. This yields an injective parametrization by the horizon metrics realized by regular solutions.
\end{enumerate}
The appendices supply the detailed local calculations used above: Appendix~\ref{AX-appendix-horizon} proves the horizon extension, Appendix~\ref{AX-appendix-Schwarzschild} gives the weighted inner estimates and classifies the kernel of the Schwarzschild linearized operator, Appendix~\ref{AX-appendix-quadratic-flux} computes the quadratic compatibility integrals, and Appendix~\ref{AX-appendix-weighted-loss} establishes the estimates for the vector field exhibiting growth of the inverse constant.

The main analytical contribution of this paper is the quantitative compatibility analysis made between the approximation, the growth of the linear inverse bound and the integral identities selecting the geometric parameters. We expect this approach to be useful for other nonlinear geometric elliptic constructions involving several small domains, provided sufficient structure is shared by the leading equations and the small domain geometry. Applying it in another setting would require estimates for the (co-)kernels of the limiting linearized operators and the nonlinear error, together with control of the resulting finite-dimensional parameter equations.  


\section{Reduction of the equations and boundary regularity}
\label{AX-section-problem}


\subsection{The metric on the orbit domain}
\label{AX-section-equations}

\paragraph{The three-horizon spatial region}
We express the static equations in axial variables and specify regularity at the horizons, the axis and their intersections. These conditions and the small-hole scaling determine the class of approximate metrics used in the construction. We specialize the static equations~\eqref{LS-eq-static} to an axial region with three horizon components. The reduced equations govern the metric on the orbit domain, the circular orbit radius and the lapse. The connected spatial manifold $M$ is diffeomorphic to a closed ball with the interiors of two disjoint balls removed. Its boundary consists of three spheres, denoted by $H_+$, $H_-$ and $H_c$: the first two represent black-hole horizons and the last represents the cosmological horizon. Thus the construction concerns a compact static region with three horizon components. We require an effective action of the circle $S^1$, with angular coordinate $\varphi\in\mathbb R/(2\pi\mathbb Z)$, preserving $h$ and $N$. The circle action is free away from its fixed axis. We also impose integrability of the planes orthogonal to the circular orbits. In adapted orthogonal coordinates, the reflection $\varphi\mapsto-\varphi$ across a plane containing the symmetry axis preserves the data and each horizon component. For equal masses, the construction can additionally impose a reflection interchanging $H_+$ and $H_-$. These two reflections have distinct geometric roles.

The spatial region is a ball with two holes. Its orbit domain $\Omega\coloneqq M/S^1$ is a disk with boundary arcs and corners, illustrated in Figure~\ref{AX-fig-orbit-domain}. On its interior the integrability assumption gives a metric on the orbit domain $q$ and a positive orbit radius $R$, with
\be
\label{AX-eq-metric}
h=q+R^2d\varphi^2,\qquad \mathbf g=-N^2dt^2+q+R^2d\varphi^2.
\ee
The length of a circular orbit is $2\pi R$. The fields $q$, $R$ and $N$ are defined on $\Omega$, with $R,N>0$ in its interior. In local isothermal coordinates $(x,y)$, the metric takes the form $q=e^{2\sigma}(dx^2+dy^2)$. Its covariant derivative, Laplacian and Gauss curvature are denoted by $D$, $\Delta_q$ and $K_q$, respectively. The isothermal chart is a convenient description of the interior equations; regularity at a horizon or an axis must be expressed in regular spatial coordinates.

Figure~\ref{AX-fig-orbit-domain} illustrates the orbit domain with auxiliary level curves. In dimensionless drawing coordinates $(z,r)$, we use fixed parameters $a_{\rm fig},b_{\rm fig},L_{\rm fig}>0$ satisfying $b_{\rm fig}<a_{\rm fig}$, $a_{\rm fig}+b_{\rm fig}<L_{\rm fig}$ and $b_{\rm fig}L_{\rm fig}^2>a_{\rm fig}^2(a_{\rm fig}-b_{\rm fig})$. We put $r_\pm^{\rm fig}\coloneqq((z\mp a_{\rm fig})^2+r^2)^{1/2}$ and define
\be
\label{AX-eq-figure-lapse}
N_{\rm fig}(z,r)\coloneqq \Bigl(1-\frac{z^2+r^2}{L_{\rm fig}^2}\Bigr) \Bigl(1-\frac{b_{\rm fig}}{r_+^{\rm fig}}\Bigr) \Bigl(1-\frac{b_{\rm fig}}{r_-^{\rm fig}}\Bigr).
\ee
This function is positive in the drawn region, vanishes on its three horizon arcs and is even across each open axis interval. Figure~\ref{AX-fig-spatial-3D} shows the corresponding spatial picture under rotation about the axis.

\begin{figure}[!htbp]
\centering
\resizebox{0.74\textwidth}{!}{\input{FIGURES/AX-orbit-levels-picture.tex}}
\caption{Illustration of the axial orbit domain and lapse levels. Red arcs are $H_\pm$, the blue arc is $H_c$, and the dashed axis intervals are $\mathsf A_-,\mathsf A_0,\mathsf A_+$. The black points $\mathsf B_-,\mathsf B_0,\mathsf B_+$ are the axial critical points of $N_{\rm fig}$. The curves and critical points belong to the auxiliary function $N_{\rm fig}$. Table~\ref{AX-tab-level-topologies} gives the level topologies of the constructed solutions for both equal and unequal comparable masses. Closed curves away from the axis lift to tori under rotation.}
\label{AX-fig-orbit-domain}
\par\medskip
\includegraphics[width=0.64\textwidth]{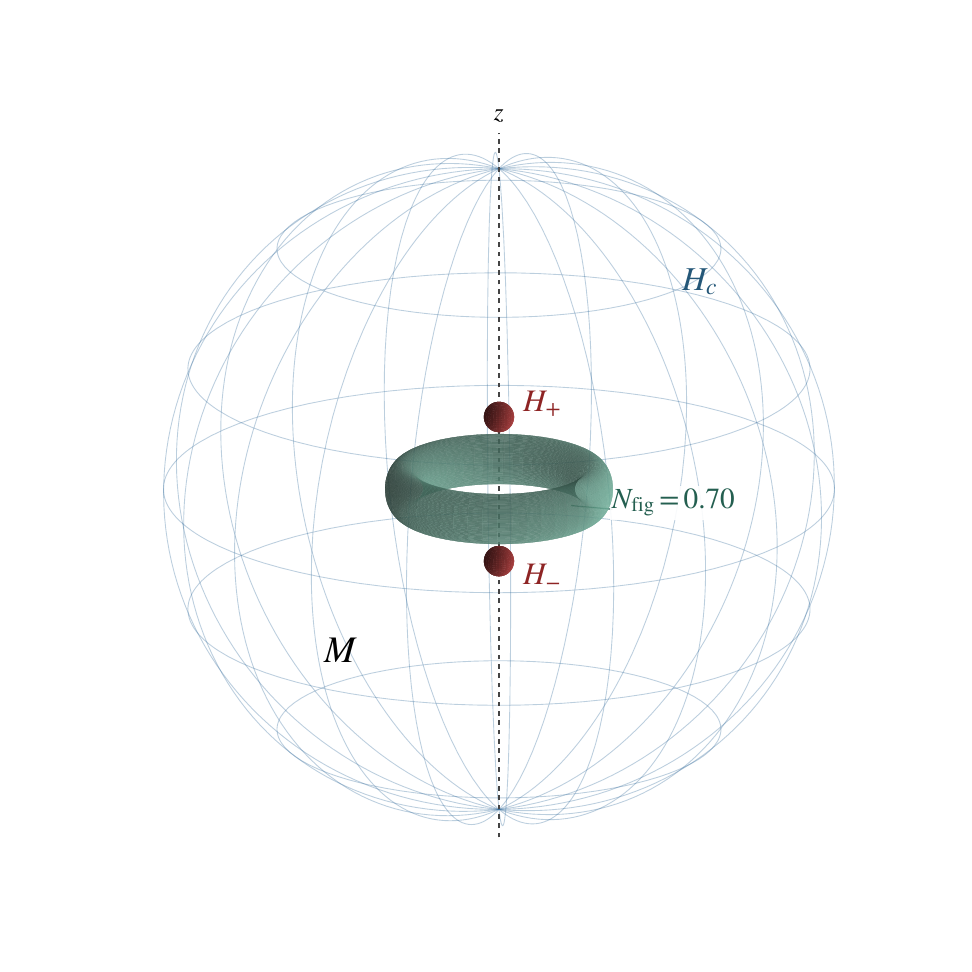}
\caption{A three-dimensional representation of the compact spatial region $M$. The blue mesh represents the outer boundary $H_c$ and the two red spheres represent $H_\pm$. The green torus is obtained by rotating a closed auxiliary level disjoint from the axis in Figure~\ref{AX-fig-orbit-domain}. This is a topological illustration, not an isometric embedding of the static metric.}
\label{AX-fig-spatial-3D}
\end{figure}

\paragraph{Local geometry of the illustrative function}
The symmetric auxiliary function~\eqref{AX-eq-figure-lapse} illustrates regular axes and the local geometry of conical critical levels. The ordering of the critical values of the constructed Einstein solutions follows from the estimates below. For equal masses, the lapse value at the exact middle saddle is smaller than at the two outer saddles, which have the same value (Corollary~\ref{AX-cor-equal-mass-critical-order}). For unequal comparable masses, we order the three exact critical values as $c_{(1)}\leq c_{(2)}\leq c_{(3)}$. We assume no reflection symmetry and make no claim about their order according to the saddle labels (Proposition~\ref{AX-prop-general-level-topology}). Table~\ref{AX-tab-level-topologies} gives the corresponding level topologies.

The function $N_{\rm fig}$ is analytic at the three marked points. Its Hessian in Cartesian coordinates has two positive transverse eigenvalues and one negative axial eigenvalue. We can verify these signs directly. The second axial derivative of $\log N_{\rm fig}(z,0)$ is strictly negative on each axis interval. Since $N_{\rm fig}$ is positive inside that interval and vanishes at its endpoints, it has exactly one axial critical point. At the central point, positivity of the transverse eigenvalues is exactly the condition $b_{\rm fig}L_{\rm fig}^2>a_{\rm fig}^2(a_{\rm fig}-b_{\rm fig})$. At the positive outer point with coordinate $z_*$, the distances are $r_\pm^{\rm fig}=z_*\mp a_{\rm fig}$. The critical-point equation gives the transverse eigenvalue $\lambda_\perp$ as
\be
\label{AX-eq-figure-critical-hessian}
\frac{\lambda_\perp}{N_{\rm fig}(z_*,0)} =\frac{a_{\rm fig}b_{\rm fig}}{z_*} \Bigl(\frac{1}{(r_+^{\rm fig})^2(r_+^{\rm fig}-b_{\rm fig})} -\frac{1}{(r_-^{\rm fig})^2(r_-^{\rm fig}-b_{\rm fig})}\Bigr)>0.
\ee
The reflected point has the same signs. Write $N_{\rm fig}=F(z,x^2+y^2)$ near a marked axial point with coordinate $z_B$. The implicit function theorem gives the critical level in the form $x^2+y^2=(z-z_B)^2g(z)$, where $g$ is positive and analytic. With the coordinate $w=(z-z_B)\sqrt{g(z)}$, this level becomes the double cone $x^2+y^2=w^2$. The level surface is singular at its vertex, while the function remains analytic there. For the Einstein solutions, both $h$ and $N$ are regular at every interior critical point of $N$. Proposition~\ref{AX-prop-three-axial-saddles} supplies three axial saddle points of the exact solutions and their local critical cones. The marked positions and critical values in this drawing still belong to the auxiliary function.

\paragraph{The full reduced system}
The reduction retains both the scalar equations and the two components of a trace-free tensor equation. The product $NR$ is useful, but the cosmological term changes its equation. The following proposition expresses the full static system in the three fields $q$, $R$ and $N$.
\begin{proposition}[Axial reduction of the static equations]
\label{AX-prop-reduction}
Fix an open orbit domain, smooth positive functions $N,R$ and a smooth Riemannian metric $q$ on that domain. The metric~\eqref{AX-eq-metric} satisfies the static vacuum equations if and only if
\be
\label{AX-eq-reduced}
\begin{aligned}
K_q q-R^{-1}\Hess_qR&=N^{-1}\Hess_qN+\Lambda q,\\
\Delta_qR+N^{-1}\langle DR,DN\rangle_q+\Lambda R&=0,\\
\Delta_qN+R^{-1}\langle DR,DN\rangle_q+\Lambda N&=0.
\end{aligned}
\ee
For the product $\rho\coloneqq NR$, one has
\be
\label{AX-eq-area}
\Delta_q\rho=-2\Lambda\rho.
\ee
In an isothermal chart, the system consists of the three scalar equations~\eqref{AX-eq-isothermal} and the constraint~\eqref{AX-eq-constraint} below. Equivalence with the static vacuum equations requires the constraint as well as the three scalar equations.
\end{proposition}
The statement applies where both circular directions have positive length. The boundary conditions in Section~\ref{AX-section-boundaries} describe the regular and conical extensions of this system.
\begin{proof}
\bse
We use indices $a,b\in\{1,2\}$ for the two orbit-domain coordinates. Besides the Christoffel coefficients of $q$, the nonzero spatial coefficients involving the angular direction are $\Gamma^a_{\varphi\varphi}=-R D^aR$ and $\Gamma^\varphi_{a\varphi}=R^{-1}\del_aR$. Substitution in the Ricci and Hessian formulas gives
\be
\label{AX-eq-warped-components}
\begin{aligned}
(\Ric_h)_{ab}&=K_q q_{ab}-R^{-1}D_aD_bR,&
(\Ric_h)_{\varphi\varphi}&=-R\Delta_qR,\\
(\Hess_hN)_{ab}&=D_aD_bN,&
(\Hess_hN)_{\varphi\varphi}&=R\langle DR,DN\rangle_q.
\end{aligned}
\ee
For example, the additional terms in the Ricci component with orbit-domain indices are $-\del_b(R^{-1}\del_aR)+R^{-1}\Gamma^c_{ab}\del_cR-R^{-2}\del_aR\del_bR$, which combine to $-R^{-1}D_aD_bR$. In the angular component the terms containing $|DR|_q^2$ cancel, leaving $-R\Delta_qR$. The mixed components vanish because the fields are independent of $\varphi$ and the metric has no mixed term between the orbit-domain and angular directions. Taking the spatial trace of the Hessian yields $\Delta_hN=\Delta_qN+R^{-1}\langle DR,DN\rangle_q$. The components in the orbit-domain and angular directions of the tensor equation, together with the lapse equation, now give~\eqref{AX-eq-reduced}; the same substitutions prove the converse.

The product rule and the two scalar equations give
\be
\label{AX-eq-product-calculation}
\Delta_q(NR)=N\Delta_qR+R\Delta_qN+2\langle DR,DN\rangle_q=-2\Lambda NR,
\ee
which proves~\eqref{AX-eq-area}. To express the remaining equations, we denote the Euclidean derivative and Laplacian in $(x,y)$ by $\nabla_0$ and $\Delta_0$. The trace of the tensor equation is $2K_q=R^{-1}\Delta_qR+N^{-1}\Delta_qN+2\Lambda$. Substitution of the scalar equations gives $K_q=-(NR)^{-1}\langle DR,DN\rangle_q$. Since $K_q=-e^{-2\sigma}\Delta_0\sigma$ and $\Delta_q=e^{-2\sigma}\Delta_0$ on functions, the three scalar equations are
\be
\label{AX-eq-isothermal}
\begin{aligned}
\Delta_0R+N^{-1}\nabla_0R\cdot\nabla_0N+\Lambda e^{2\sigma}R&=0,\\
\Delta_0N+R^{-1}\nabla_0R\cdot\nabla_0N+\Lambda e^{2\sigma}N&=0,\\
\Delta_0\sigma&=(NR)^{-1}\nabla_0R\cdot\nabla_0N.
\end{aligned}
\ee
For a symmetric tensor $T$ on the two-dimensional orbit domain, we define its trace-free part by $T^\circ\coloneqq T-\tfrac12(\operatorname{tr}_qT)q$. The trace-free part of the equation with orbit-domain indices is
\be
\label{AX-eq-constraint}
\bigl(R^{-1}\Hess_qR+N^{-1}\Hess_qN\bigr)^\circ=0.
\ee
A symmetric trace-free two-tensor has two independent components, so this is an additional pair of conditions. Conversely, the trace equation supplied by~\eqref{AX-eq-isothermal} and the constraint~\eqref{AX-eq-constraint} recover the tensor equation on the orbit domain. This proves the final assertion.
\ese
\end{proof}

The product $\rho=NR$ vanishes on every horizon and axis arc. With $\Lambda>0$, its equation~\eqref{AX-eq-area} contains a strictly negative right-hand side in the interior, whereas it is harmonic for $\Lambda=0$. Thus $\rho=NR$ cannot serve as a harmonic Weyl coordinate when $\Lambda>0$. The constraint also remains necessary after solving the three scalar equations. We therefore solve a tensor equation and use the contracted Bianchi identity to obtain the full static system. Near the critical set of the lapse, we use regular coordinates on the orbit domain.


\subsection{Horizon, axis and corner conditions}
\label{AX-section-boundaries}

\paragraph{Regularity at nondegenerate horizons}
We express regularity at the horizons and the axis through expansions in smooth local coordinates. These conditions must hold jointly where a horizon meets the axis, so that every pole is regular in the spatial metric. The boundary arcs of the orbit disk occur in the cyclic order $\mathsf A_+,H_+,\mathsf A_0,H_-,\mathsf A_-,H_c$, with $\mathsf A_0$ the axis segment between the two black holes. When the additional reflection interchanging the inner horizons is imposed, it exchanges the two black-hole arcs and the two exterior axis arcs. Each horizon arc meets the axis at its two poles. If the spatial data are smooth, the quotient has a right-angle corner at such a meeting point; the point is a regular boundary point of the three-dimensional spatial manifold. This corner geometry follows from the linear rotation action on the tangent plane to the horizon and from the inward spatial normal, which is fixed by the rotations. The quotient of the tangent half-space is therefore a quadrant.

Khuri and Woolgar~\cite[Theorem~1(i)]{LS-KhuriWoolgar} exclude degenerate horizon components in the static vacuum black-hole class with positive cosmological constant. The construction below supplies positive surface gravities and analytic collars at all three horizons.

On a connected nondegenerate horizon, the surface gravity $\kappa_j\coloneqq|\nabla^hN|_h$ is constant and positive, as proved in Proposition~\ref{AX-lem-local-horizon-extension}. For the regular angular lifts used in the construction, the same lemma gives analytic parity. With inward proper distance $s$, the data have the local form
\be
\label{AX-eq-horizon-data}
h=ds^2+\gamma_j(s^2,x),\qquad N=\kappa_j s\,b_j^{\rm hor}(s^2,x),\qquad b_j^{\rm hor}(0,x)=1,\qquad j\in\{+,-,c\}.
\ee
Here $x$ denotes horizon coordinates, and $\gamma_j,b_j^{\rm hor}$ are analytic for an exact regular static solution. The additional reflection interchanging the inner horizons, when imposed, gives $\kappa_+=\kappa_-$ and $\operatorname{Area}_h(H_+)=\operatorname{Area}_h(H_-)$. The general two-mass class permits different inner surface gravities and areas. The lapse scaling changes all three constants by the same positive factor. Integration of the lapse equation gives
\be
\label{AX-eq-horizon-flux}
\kappa_+\operatorname{Area}_h(H_+)+\kappa_-\operatorname{Area}_h(H_-)+\kappa_c\operatorname{Area}_h(H_c)=\Lambda\int_MN\,d\mu_h.
\ee
Indeed, the outward normal derivative is $-\kappa_j$, and the divergence theorem applied to $\Delta_hN=-\Lambda N$ proves this relation. Thus the surface gravities and the horizon geometries cannot be prescribed as independent boundary functions.

\paragraph{Regularity on the axis and at horizon poles}
On an open axis arc, we use proper distance $r_\perp$ from that arc and define its angular ratio by
\be
\label{AX-eq-angle}
\beta_i^{\rm ax}\coloneqq\lim_{r_\perp\downarrow0}\frac{R}{r_\perp},\qquad \delta_i^{\rm ax}\coloneqq2\pi(1-\beta_i^{\rm ax}),\qquad i\in\{-,0,+\}.
\ee
The circumference-to-radius ratio tends to $2\pi\beta_i^{\rm ax}$. In smooth normal Cartesian coordinates, rotation invariance makes $N$ and the tangential metric coefficients smooth even functions of $r_\perp$, while the orbit radius is $R=r_\perp+\Ocal(r_\perp^3)$. This proves $\beta_i^{\rm ax}=1$ for a regular axis. Conversely, the scalar condition $\beta_i^{\rm ax}=1$ must be accompanied by the higher parity conditions expressing smoothness of the tensor in Cartesian coordinates. The same conditions must hold jointly with the horizon parity at every pole. After filling the local auxiliary circle, a pole is described by two orthogonal polar planes, with radii $s$ and $r_\perp$; regular invariant tensors in their Cartesian coordinates provide the simultaneous conditions. Separate conditions on the two open faces do not ensure compatibility at their intersection.

For $\Lambda=0$, the interpretation of axis angle defects as interaction forces is developed by Weinstein~\cite{AX-WeinsteinTwoBody} and Khuri--Weinstein--Yamada~\cite[Section~3]{AX-KWYSignedMasses}. The auxiliary angular coordinate used to fill $H_j$ is $\theta_{E,j}\coloneqq\kappa_j t_E$, where $t_E$ is the auxiliary coordinate in the Riemannian metric. Different surface gravities therefore require different local angular normalizations. We formulate regularity separately at each horizon, using the local period of $t_E$ $2\pi/\kappa_j$. Proposition~\ref{AX-lem-local-horizon-extension} supplies the local Lorentzian extensions from these Cartesian regularity conditions. The number of regular metric on the orbit domain coefficients is three before fixing spatial coordinates; together with $N$ and $R$, there are five coefficient functions. We impose a coordinate condition on the metric on the orbit domain while preserving these regularity conditions.

The following proposition converts the Cartesian regularity conditions into an analytic local extension. The proof is given in Appendix~\ref{AX-appendix-horizon}.

\begin{proposition}[Analytic extension across a nondegenerate horizon]
\label{AX-lem-local-horizon-extension}
Fix smooth static data satisfying~\eqref{LS-eq-static} and a connected nondegenerate horizon $H$. Then $H$ is totally geodesic and its surface gravity $\kappa$ is constant. If the local metric $G_H\coloneqq h+(N/\kappa)^2d\theta^2$, with $\theta$ of period $2\pi$, extends as a positive $C^{2,\alpha}$ metric in normal Cartesian coordinates at its fixed surface $H$, where $0<\alpha<1$, then $G_H$ is analytic in suitable coordinates, the spatial data have the parity~\eqref{AX-eq-horizon-data}, and the Lorentzian metric has an analytic local extension across $H$. This Cartesian regularity is imposed on the reference metrics and corrections; analyticity follows from the Einstein equation.
\end{proposition}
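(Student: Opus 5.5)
The plan has three parts: establish the two geometric properties on $H$, pass to the Riemannian Einstein metric $G_H$ and use elliptic regularity in harmonic coordinates, then read off the parity and the Lorentzian extension.

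\emph{Step 1: $H$ is totally geodesic and $\kappa$ is constant.} Work near $H$, where $\nabla N\neq0$ and $N=0$. The unit normal is $\nu=\nabla N/|\nabla N|$. The second fundamental form of the level set $\{N=0\}$ is $\Hess_hN|_{TH}/|\nabla N|$. By~\eqref{LS-eq-static}, $\Hess_hN=N(\Ric_h-\Lambda h)$, and this vanishes on $H$ because $N=0$ there; hence $H$ is totally geodesic. For tangential $X$ we have
\[
X(|\nabla N|^2)=2\Hess_hN(X,\nabla N)=0
\]
on $H$ for the same reason. So $\kappa=|\nabla N|$ is locally constant, and therefore constant because $H$ is connected.

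\emph{Step 2: Einstein equation for $G_H$ and analyticity.} On $N>0$, the Riemannian metric $G_H=h+(N/\kappa)^2d\theta^2$ is the Wick rotation of $\mathbf g$ (with $t_E=\theta/\kappa$). The footnote formulas then give $\Ric_{G_H}=\Lambda G_H$: the $tt$ component $N\Delta_hN=-\Lambda N^2$ and the spatial components are unchanged up to sign conventions. By hypothesis $G_H$ extends as a $C^{2,\alpha}$ metric across the fixed surface $H$ of the rotation $\partial_\theta$. The identity $\Ric_{G_H}=\Lambda G_H$ holds on the dense set $N>0$, hence everywhere by continuity.

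Next choose harmonic coordinates for $G_H$ near a point of $H$. These have $C^{3,\alpha}$ regularity, so the metric components are at least $C^{2,\alpha}$ in them. In these coordinates the Einstein equation becomes the quasilinear elliptic system
\[
-\tfrac12 G^{ab}\partial_a\partial_b G_{ij}+Q_{ij}(G,\partial G)=\Lambda G_{ij},
\]
whose coefficients are analytic in $(G,\partial G)$. Morrey's analytic regularity theorem then shows that $G_H$ is analytic; this is the DeTurck--Kazdan argument. The main technical point I would watch is exactly this step: the $C^{2,\alpha}$ hypothesis has to be enough to start the bootstrap, namely Schauder estimates first, then analyticity.

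\emph{Step 3: parity and the Lorentzian extension.} The Killing field $\partial_\theta$ of $G_H$ is analytic, and its fixed-point set $H$ is a totally geodesic analytic surface. Take analytic Gaussian-type coordinates consisting of coordinates $x$ on $H$ and Cartesian coordinates $(u,v)=(s\cos\theta,s\sin\theta)$ in the normal plane given by the exponential map. In these coordinates the rotation acts linearly, so every invariant analytic function is an analytic function of $(s^2,x)$.

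Apply this to the pieces of $G_H$. The restriction to the orthogonal complement of $\partial_\theta$ gives $h=ds^2+\gamma(s^2,x)$. The norm satisfies $|\partial_\theta|^2=(N/\kappa)^2=s^2\,b(s^2,x)^2$ with $b(0,x)=1$, where the normalization follows from $|\nabla N|=\kappa$ at $s=0$. This yields~\eqref{AX-eq-horizon-data}.

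For the Lorentzian extension, set $U=s e^{\kappa t}$ and $V=-s e^{-\kappa t}$, so that $s^2=-UV$. Then
\[
-N^2dt^2+ds^2=b^2(-UV,x)\,\bigl(-dU\,dV\bigr)/1+\Ocal(\text{analytic terms}),
\]
because $-s^2\kappa^2 dt^2+ds^2=-dU\,dV$ and the remaining factors depend analytically on $-UV$. These Kruskal-type coordinates therefore provide an analytic extension across $U=0$ and $V=0$, since all coefficients are analytic in $(UV,x)$.
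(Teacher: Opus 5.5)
Your three steps are the paper's own proof, but your final display is wrong as written and needs the repair below; one further step is only implicit. The matching parts are: the undivided Hessian identities on $H$; the Einstein equation for $G_H$, extended by continuity and made analytic by DeTurck--Kazdan; rotation invariance in analytic normal coordinates; and Kruskal-type coordinates (your $U=se^{\kappa t}$, $V=-se^{-\kappa t}$ are $X+T$ and $T-X$ for the paper's $X=s\cosh\kappa t$, $T=s\sinh\kappa t$).

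The implicit point is in Step~3. Writing the normal Cartesian coordinates as $(s\cos\theta,s\sin\theta)$, with $s$ the \emph{spatial} distance and $\theta$ the Killing angle, requires the $G_H$-normal geodesics from $H$ to stay in one static slice. This holds because $G_H(\partial_\theta,\dot c)$ is constant along a geodesic and vanishes at $H$; the paper uses the reflection $\theta\mapsto-\theta$ instead. Such a geodesic is then an $h$-geodesic normal to $H$ in the slice, and the Gauss lemma gives $h=ds^2+\gamma$.

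Your justification ``the remaining factors depend analytically on $-UV$'' is not sufficient. Exactly, $-N^2dt^2+ds^2=-b^2\,dU\,dV+(1-b^2)\,ds^2$. Here $ds^2=-(U\,dV+V\,dU)^2/(4UV)$ is singular on $UV=0$, so analyticity of the coefficients in $UV$ alone proves nothing. What makes the remainder analytic is the normalization $b(0,x)=1$. Write $b^2-1=s^2\beta(s^2,x)$ with $\beta$ analytic. Then $(1-b^2)\,ds^2=-\beta\,(s\,ds)^2=-\tfrac14\beta(-UV,x)\,(U\,dV+V\,dU)^2$, and hence
\[
\mathbf g=-b^2(-UV,x)\,dU\,dV-\tfrac14\,\beta(-UV,x)\,(U\,dV+V\,dU)^2+\gamma(-UV,x).
\]
This is analytic, and at $U=V=0$ it reduces to $-dU\,dV+\gamma(0,x)$, so it is nondegenerate. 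It is equivalent to the paper's~\eqref{AX-eq-local-Lorentzian-extension}, which puts the same correction on the $dt^2$ term via $(X\,dT-T\,dX)^2=\kappa^2s^4\,dt^2$. If $b(0,x)$ were a constant different from $1$ (a wrong period of $\theta$), the same coordinates would not regularize the metric. This cancellation is therefore the actual content of the step, not a lower-order detail.
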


We seek smooth static data on the compact ball with two holes, with positive interior lapse, exactly three nondegenerate horizon components, a regular axis with $\beta_-^{\rm ax}=\beta_0^{\rm ax}=\beta_+^{\rm ax}=1$, and axial symmetry with the reflection across a plane containing the symmetry axis. Each horizon has the local Lorentzian extension proved below. The maximum set of $N$ is an interior set and is not an additional boundary.


\subsection{Topology of the domain and level sets}
\label{AX-section-model-topologies}

We compare the topology of a static region with three boundary components with those of de Sitter, Kottler and Nariai. The topology of the spatial manifold and that of a regular lapse level are different objects, summarized in Figure~\ref{AX-fig-model-topologies} and Table~\ref{AX-tab-level-topologies}.

We normalize each lapse by $\max N=1$ and write $I$ for a compact interval. In de Sitter, the domain is a ball, each regular level is a sphere and the maximum is a point. In subextremal Kottler of mass $\mu>0$, the domain is $I\times S^2$. In the areal radius $s$, the squared lapse before normalization is $F(s)\coloneqq1-2\mu/s-\Lambda s^2/3$. The derivative $F'(s)=2\mu/s^2-2\Lambda s/3$ has a unique zero, at $s=(3\mu/\Lambda)^{1/3}$, and the lapse reaches its maximum there. Each regular level therefore consists of two spheres, while the maximum set is a sphere. The Nariai solution is
\be
\label{AX-eq-Nariai-comparison}
h=\Lambda^{-1}(d\chi^2+d\Omega^2),\qquad N=\sin\chi,\qquad 0\leq\chi\leq\pi,
\ee
where $d\Omega^2$ is the unit round metric. Direct substitution verifies~\eqref{LS-eq-static}. Its geometry is a product, with the same domain and level topologies as Kottler.

For the constructed solutions, nondegeneracy gives a collar of each horizon on which $N$ is a regular coordinate. The complement of these collars is compact and has a positive minimum of $N$. More explicitly, choose disjoint horizon collars $U_j$ on which $dN\ne0$ and which are foliated by the levels of $N$. Their inner boundary levels have positive values $c_j$. On the compact complement, the minimum $c_K$ of $N$ is positive. For $c_0\coloneqq\tfrac12\min(c_K,c_+,c_-,c_c)$, each level $\{N=c\}$ satisfying $0<c<c_0$ is the union of three spheres, one in each horizon collar. The solution determines the threshold $c_0$. Here $c$ is a lapse value, not a mass or a separation parameter. Proposition~\ref{AX-prop-lapse-maximum-circle} shows that the maximum set of each constructed solution is one smooth embedded circle, represented by one interior point of $\Omega$, and that regular levels sufficiently close to its maximum are tori. Proposition~\ref{AX-prop-three-axial-saddles} gives three axial saddles and Section~\ref{AX-section-local-critical-geometry} describes their local levels. Proposition~\ref{AX-prop-complete-critical-set} exhausts the critical set, and Proposition~\ref{AX-prop-equal-mass-level-topology} gives the complete level topologies for equal masses. For arbitrary comparable masses, sorting the three critical values gives the full regular-level description in Proposition~\ref{AX-prop-general-level-topology}. Table~\ref{AX-tab-level-topologies} records the topologies established for the constructed solutions.

\begin{figure}[!htbp]
\centering
\begingroup%
\colorlet{axtopouter}{blue!65!black}%
\colorlet{axtopinner}{red!70!black}%
\colorlet{axtoplevel}{teal!70!black}%
\colorlet{axtopink}{black!75}%
\newcommand{\AxTopSphere}[4]{%
  \begin{scope}[shift={(#1,#2)}]
  \fill[#4!6] (0,0) circle[radius=#3];
  \draw[#4,line width=.65pt] (0,0) circle[radius=#3];
  \draw[#4,densely dashed,line width=.35pt]
    (#3,0) arc[start angle=0,end angle=180,x radius=#3,y radius={.30*#3}];
  \draw[#4,line width=.45pt]
    (-#3,0) arc[start angle=180,end angle=360,x radius=#3,y radius={.30*#3}];
  \end{scope}}%
\newcommand{\AxTopRegion}[3]{%
  \begin{scope}[shift={(#1,#2)}]
  \fill[axtopouter!6] (0,0) circle[radius=.85];
  \draw[axtopouter,line width=.8pt] (0,0) circle[radius=.85];
  \draw[axtopouter!65,densely dashed,line width=.35pt]
    (.85,0) arc[start angle=0,end angle=180,x radius=.85,y radius=.27];
  \draw[axtopouter!75,line width=.45pt]
    (-.85,0) arc[start angle=180,end angle=360,x radius=.85,y radius=.27];
  \draw[axtopouter!35,line width=.35pt] (0,0) ellipse[x radius=.27,y radius=.85];
  #3
  \end{scope}}%
\newcommand{\AxTopHole}[3]{%
  \begin{scope}[shift={(#1,#2)}]
  \fill[white] (0,0) circle[radius=#3];
  \fill[axtopinner!5] (0,0) circle[radius=#3];
  \draw[axtopinner,line width=.75pt] (0,0) circle[radius=#3];
  \draw[axtopinner,densely dashed,line width=.35pt]
    (#3,0) arc[start angle=0,end angle=180,x radius=#3,y radius={.30*#3}];
  \draw[axtopinner,line width=.45pt]
    (-#3,0) arc[start angle=180,end angle=360,x radius=#3,y radius={.30*#3}];
  \end{scope}}%
\noindent\begin{tikzpicture}[x=1cm,y=1cm,line cap=round,line join=round,
    font=\small,text=axtopink]
\path[use as bounding box] (0,-11.92) rectangle (15.4,.70);
\fill[black!2] (0,-.12) rectangle (15.4,-2.92);
\fill[black!2] (0,-5.72) rectangle (15.4,-8.52);
\draw[black!32,line width=.55pt] (0,-.12)--(15.4,-.12);
\foreach \y in {-2.92,-5.72,-8.52,-11.32}
  \draw[black!17,line width=.35pt] (0,\y)--(15.4,\y);
\node[anchor=west,font=\small\bfseries] at (.15,.30) {Static model};
\node[font=\small\bfseries] at (4.65,.30) {Spatial domain};
\node[font=\small\bfseries] at (9.15,.30) {Regular level $\{N=c\}$};
\node[font=\small\bfseries] at (13.35,.30) {Maximum set $\{N=1\}$};

\node[anchor=west,font=\small\bfseries] at (.15,-1.42) {de Sitter};
\AxTopRegion{4.65}{-1.32}{}
\node at (4.65,-2.46) {$\overline B^{\,3}$};
\AxTopSphere{9.15}{-1.32}{.59}{axtoplevel}
\node at (9.15,-2.30) {$S^2$};
\fill[axtopink] (13.35,-1.32) circle[radius=.065];
\node at (13.35,-2.30) {point};

\node[anchor=west,font=\small\bfseries] at (.15,-4.22) {Kottler};
\AxTopRegion{4.65}{-4.12}{\AxTopHole{0}{0}{.40}}
\node at (4.65,-5.26) {$I\times S^2$};
\AxTopSphere{8.48}{-4.12}{.52}{axtoplevel}
\AxTopSphere{9.82}{-4.12}{.52}{axtoplevel}
\node at (9.15,-5.10) {$S^2\sqcup S^2$};
\AxTopSphere{13.35}{-4.12}{.59}{axtopink}
\node at (13.35,-5.10) {$S^2$};

\node[anchor=west,font=\small\bfseries] at (.15,-7.02) {Nariai};
\AxTopRegion{4.65}{-6.92}{\AxTopHole{0}{0}{.40}}
\node at (4.65,-7.98) {$I\times S^2$};
\node[font=\footnotesize,text=black!60] at (4.65,-8.32) {product geometry};
\AxTopSphere{8.48}{-6.92}{.52}{axtoplevel}
\AxTopSphere{9.82}{-6.92}{.52}{axtoplevel}
\node at (9.15,-7.90) {$S^2\sqcup S^2$};
\AxTopSphere{13.35}{-6.92}{.59}{axtopink}
\node at (13.35,-7.90) {$S^2$};

\node[anchor=west,font=\small\bfseries,align=left]
  at (.15,-9.82) {Axial\\solutions};
\AxTopRegion{4.65}{-9.72}{%
  \AxTopHole{-.38}{.06}{.27}\AxTopHole{.38}{-.06}{.27}}
\node[font=\footnotesize] at (4.65,-10.86)
  {$\overline B^{\,3}\setminus(B_+^3\sqcup B_-^3)$};
\AxTopSphere{7.93}{-9.72}{.43}{axtoplevel}
\AxTopSphere{9.15}{-9.72}{.43}{axtoplevel}
\AxTopSphere{10.37}{-9.72}{.43}{axtoplevel}
\node at (9.15,-10.66) {$S^2\sqcup S^2\sqcup S^2$};
\node[font=\footnotesize,text=black] at (9.15,-11.05)
  {$0<c<c_{(1)}$ (before the first saddle)};
\draw[axtopink,line width=.9pt] (13.35,-9.72) ellipse[x radius=.66,y radius=.24];
\node at (13.35,-10.55) {$S^1$};

\draw[axtopouter,line width=1pt] (3.0,-11.69)--(3.47,-11.69);
\node[anchor=west,font=\footnotesize] at (3.60,-11.69) {outer boundary};
\draw[axtopinner,line width=1pt] (7.10,-11.69)--(7.57,-11.69);
\node[anchor=west,font=\footnotesize] at (7.70,-11.69) {inner boundaries};
\end{tikzpicture}
\endgroup
\caption{Topological representations of compact static regions and selected lapse levels, with $\max N=1$. The sets $B_+^3$ and $B_-^3$ are the two removed open balls. Blue and red distinguish the boundary components. For the constructed axial solutions, the maximum set is one circle (Proposition~\ref{AX-prop-lapse-maximum-circle}). Table~\ref{AX-tab-level-topologies} describes the level topology as the lapse value varies.}
\label{AX-fig-model-topologies}
\end{figure}

\begin{table}[htbp]
\centering
\small
\begin{tabular}{@{}lll@{}}
\hline
Model & Lapse value & Topology of $\{N=c\}$\\
\hline
All four models & $c=0$ & boundary components in Figure~\ref{AX-fig-model-topologies}\\
de Sitter & $0<c<1$; $c=1$ & $S^2$; one point\\
Kottler and Nariai & $0<c<1$; $c=1$ & $S^2\sqcup S^2$; $S^2$\\
Axial solutions (comparable masses) & $0<c<c_{(1)}$ & $S^2\sqcup S^2\sqcup S^2$\\
 & $c_{(1)}<c<c_{(2)}$ & $S^2\sqcup S^2$\\
 & $c_{(2)}<c<c_{(3)}$ & $S^2$\\
 & $c_{(3)}<c<1$ & $S^1\times S^1$\\
 & $c=1$ & $S^1$\\
Equal-mass solutions & $0<c<c_{\rm mid}^{\rm ex}$ & $S^2\sqcup S^2\sqcup S^2$\\
 & $c=c_{\rm mid}^{\rm ex}$ & $(S^2\vee S^2)\sqcup S^2$\\
 & $c_{\rm mid}^{\rm ex}<c<c_{\rm out}^{\rm ex}$ & $S^2\sqcup S^2$\\
 & $c=c_{\rm out}^{\rm ex}$ & two spheres meeting at two points\\
 & $c_{\rm out}^{\rm ex}<c<1$ & $S^1\times S^1$\\
\hline
\end{tabular}
\caption{Topologies of the lapse levels, with $\max N=1$. For de Sitter, Kottler and Nariai, the indicated regular levels occur at every $0<c<1$. For the axial solutions, $c_{(1)}\leq c_{(2)}\leq c_{(3)}$ are the three axial critical values, counted with multiplicity; intervals with equal endpoints are empty (Proposition~\ref{AX-prop-general-level-topology}). For equal masses the final rows give the complete level topology; $\vee$ denotes identification of one point on each sphere. The exact critical values satisfy $c_{\rm mid}^{\rm ex}<c_{\rm out}^{\rm ex}$ (Proposition~\ref{AX-prop-equal-mass-level-topology}).}
\label{AX-tab-level-topologies}
\end{table}


\subsection{Length scales and the leading equation for the horizon separation}
\label{AX-section-scales}

We introduce the length scales associated with two small reference horizons and their separation. The smooth part of the leading lapse at either center determines an explicit scalar equilibrium equation.

We keep $\Lambda>0$ fixed and recall the cosmological length $\ell=\sqrt{3/\Lambda}$. The positive parameter $m$ is an auxiliary inner mass scale. It specifies the reference horizon radius and lapse normalization; the associated Kottler mass is defined in~\eqref{AX-eq-kottler-fields}. It is not an asymptotic mass of the compact spatial region. We introduce $\varepsilon\coloneqq(m/\ell)^{1/3}$; for equal masses, the total-mass parameter in the introduction is $\varepsilon_{\mathrm{tot}}=2^{1/3}\varepsilon$. We use a dimensionless separation $D$ in a fixed compact interval $I\coloneqq[D_-,D_+]\Subset(0,\infty)$. If $e_z$ denotes the unit vector along the reference symmetry axis, the centers are $p_\pm\coloneqq\pm(d/2)e_z$. The scales satisfy
\be
\label{AX-eq-scales}
m=\varepsilon^3\ell,\qquad d=\varepsilon\ell D,\qquad \frac md=\frac{\varepsilon^2}{D},\qquad \Lambda d^2=3\varepsilon^2D^2.
\ee
The separation $d$ is measured in the reference coordinates; the proper distance between the two horizon boundaries is a different quantity. The three geometric scales are the hole radius, of order $m$, the separation $d$, and the cosmological radius $\ell$.

In the intermediate region, with Euclidean coordinates $X$, radii $r_\pm\coloneqq|X-p_\pm|$ and Euclidean metric $\delta_X$, the leading fields are
\be
\label{AX-eq-intermediate}
U_m\coloneqq\frac m{r_+}+\frac m{r_-},\qquad Q_\ell\coloneqq\frac{|X|^2}{2\ell^2},\qquad N_I\coloneqq1-U_m-Q_\ell,\qquad h_I\coloneqq(1+2U_m-Q_\ell)\delta_X.
\ee
These expressions are approximate fields, used on regions separated from the centers. The operators $\Delta_0$ and $\Hess_0$ are Euclidean in the coordinates $X$. The fields satisfy the linearized static equations in these coordinates: $\Delta_0U_m=0$ and $\Delta_0Q_\ell=3/\ell^2=\Lambda$ imply $\Delta_0(N_I-1)=-\Lambda$. For the conformal perturbation $h=\delta_X+2a\delta_X$, the three-dimensional linearized Ricci tensor has the expression $-(\Hess_0a+\Delta_0a\,\delta_X)$. Indeed, the connection variation is $\dot\Gamma^k_{ij}=\delta^k_j\del_i a+\delta^k_i\del_j a-\delta_{ij}\del^k a$, and its Ricci contraction $\del_k\dot\Gamma^k_{ij}-\del_j\dot\Gamma^k_{ik}$ gives that expression. Taking $a=U_m-Q_\ell/2$ gives $\Ric^{\rm lin}=-\Hess_0U_m+2\ell^{-2}\delta_X$, which equals $\Hess_0(N_I-1)+\Lambda\delta_X$. The omitted terms are at least quadratic in $U_m,Q_\ell$ and their derivatives, or are products of $\Lambda$ with the metric perturbation.

At the upper center, the part of $N_I$ that is smooth at $p_+$ is $1-m/r_- -Q_\ell$. Its negative axial derivative is $d/(2\ell^2)-m/d^2$. We define the dimensionless equilibrium function $B(D)\coloneqq D/2-D^{-2}$ and obtain
\be
\label{AX-eq-balance}
\frac d{2\ell^2}-\frac m{d^2}=\frac{\varepsilon}{\ell}B(D),\qquad B(D_0)=0,\qquad D_0\coloneqq2^{1/3},\qquad B'(D)=\frac12+\frac2{D^3}>0.
\ee
The expression on the left is the negative axial derivative of $-m/|X-p_-|-|X|^2/(2\ell^2)$ at $p_+$. Its zero gives $d^3=2m\ell^2$, the scaling also used in~\cite[equation (3)]{P2-Dias}. This calculation determines a candidate separation. The nonlinear proof requires a quantitative comparison between $B$ and the obstruction of the corrected global metric.


\section{The reduced Einstein equation and its approximation}
\label{AX-section-existence}


\subsection{Weighted spaces and the reduced Einstein equation}
\label{AX-section-analytic}

\paragraph{Weighted tensor norms and the projection of the Einstein equation}
We construct the Kottler--de Sitter approximation and estimate its Einstein tensor on the different length scales. The axial flux calculation and an energy estimate for the auxiliary vector field identify the conditions under which a projected solution satisfies the Einstein equation. We introduce weighted H\"older spaces adapted to the horizon, separation and cosmological scales. The inverse of the operator on vector fields defines the tensor projection, and the admissible exponents control the dependence of the linear and nonlinear bounds on the masses. We fix an integer $k\geq3$, $0<\alpha<1$, and a weight $-1<\gamma<0$. A positive length function $\omega$ is comparable to $m$ near each inner horizon, to the distance from the nearer center away from those horizons, and to $\ell$ near the cosmological horizon. For each fixed pair of positive masses, $\omega$ is positive on the closed reference domain and measures the local chart radius. At the inner horizons one has $\omega\asymp m_\pm$, and at the outer horizon $\omega\asymp\ell$. The lapse instead has the boundary expansion $N=\kappa_j s+\Ocal(s^3)$ in inward normal distance $s$ at each horizon. On charts of radius $c\omega(x)$, where $c>0$ is fixed and small, we use the scaled $C^{r,\alpha}$ norm: derivatives of order $j$ are multiplied by $\omega(x)^j$, and the order-$r$ H\"older seminorm by $\omega(x)^{r+\alpha}$. A dimensionless weighted tensor norm is defined by
\be
\label{AX-eq-weighted-norm}
\|u\|_{r,\alpha,\gamma;\omega}\coloneqq \sup_x\Bigl(\frac{\omega(x)}\ell\Bigr)^{-\gamma} \|u\|_{C^{r,\alpha}_{\omega(x)}(B_{c\omega(x)}(x))}.
\ee

For the reference coordinates used below, a convenient choice is
\be
\label{AX-eq-explicit-length-weight}
\omega(X)\coloneqq\bigl(\ell^{-2}+r_+^{-2}+r_-^{-2}\bigr)^{-1/2}, \qquad r_\pm\coloneqq|X-p_\pm|.
\ee
The weight specifies the local length on which the geometry is examined. Near an inner horizon, the scaled norm measures a correction on the horizon scale; farther from the centers, it measures the same correction on the larger local scale. Multiplying each derivative by the corresponding power of $\omega$ allows these measurements to be compared after rescaling the coordinate neighborhoods to fixed size. The extra factor $(\omega/\ell)^{-\gamma}$ in~\eqref{AX-eq-weighted-norm} prescribes the allowed change in the amplitude of the tensor between these scales. In particular, $\omega$ remains positive at the horizons, whereas $N$ vanishes there. Figure~\ref{AX-fig-omega-domain} displays the levels of $\omega$ and its graph on a planar slice of the reference domain.

\begin{figure}[!htbp]
\centering
\includegraphics[width=\textwidth]{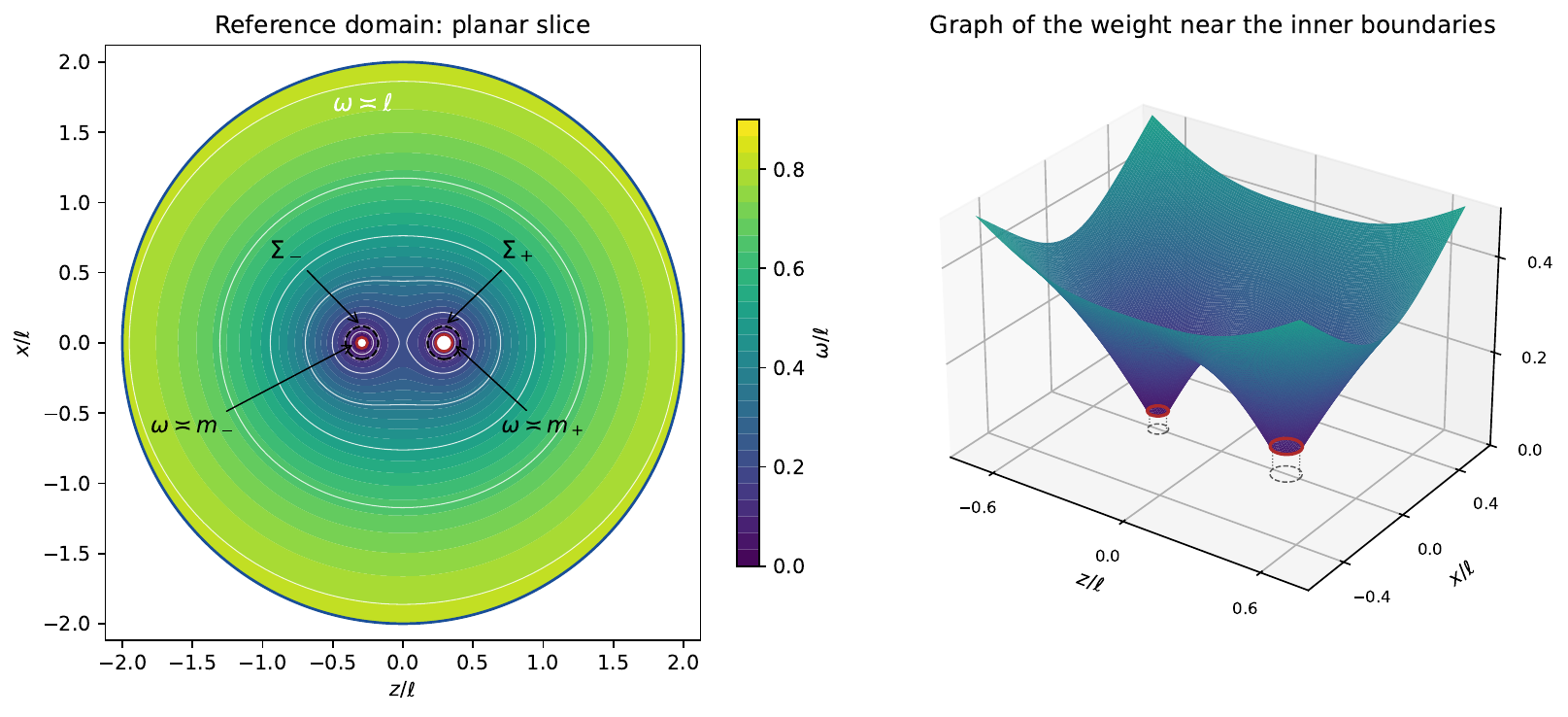}
\caption{The reference domain and the positive length weight $\omega$. The left panel shows a planar slice through the two centers, with level curves of $\omega$ and the inner boundaries in red. The outer boundary is the reference sphere $|X|=2\ell$. The dashed black circles are planar sections of the integration spheres $\Sigma_\pm=\{|X-p_\pm|=\vartheta d\}$ in~\eqref{AX-eq-spatial-flux}. The right panel shows the graph of $\omega$ over a neighborhood of the inner boundaries; the dashed circles lie in the plane $\omega=0$, below the positive red boundary values. The boundary sizes are $\omega\asymp m_\pm$ at the inner horizons and $\omega\asymp\ell$ at the outer horizon. The illustrative unequal masses are chosen to make the inner boundaries visible.}
\label{AX-fig-omega-domain}
\end{figure}

The components are measured relative to the reference metric in regular Cartesian lifts. At an axis or horizon pole the same definition uses the corresponding lifted chart, including the simultaneous parity conditions of Section~\ref{AX-section-boundaries}. We abbreviate the correction norm of order $k+2$ and weight $\gamma$ by $\|\cdot\|_X$, and the source norm of order $k$ and weight $\gamma-2$ by $\|\cdot\|_Y$. These are norm conventions; the coordinate condition and the orthogonality conditions on the inner Kottler regions will be imposed when the global tensor domain is defined in Section~\ref{AX-section-projected-tensor-estimates}.

The interval $-1<\gamma<0$ gives decay of the flat metric corrections and local integrability of the corresponding sources of order $r^{\gamma-2}$. We choose a vector weight parameter $0<\sigma_{\rm v}<1/2$ and a cutoff exponent $0<\zeta<4/9$. For $j\in\{+,-\}$, we write $m_j$ for the corresponding reference mass. The exponent $\zeta$ specifies growing inner cutoffs $R_j=m_j^{-\zeta}$ after normalization to $\ell=1$. The cutoff restriction $\zeta<4/9$ ensures that $m_jR_j=m_j^{1-\zeta}=o(\varepsilon^{5/3})$, so the growing cutoffs remain inside the Kottler collars of~\eqref{AX-eq-improved-reference-constants}. We define the exponent measuring the approximation error, the two exponents measuring growth in the inverse and quadratic estimates, and the resulting correction exponents by
\be
\label{AX-eq-exponent-relations}
\begin{aligned}
p&\coloneqq(12-3\gamma)/5,&
s_{\rm inv}&\coloneqq3(\sigma_{\rm v}-\gamma),&
s_{\rm quad}&\coloneqq3(\sigma_{\rm v}-2\gamma),\\
p_{\rm lin}&\coloneqq p+6\gamma-\tfrac32,&
p_{\rm cor}&\coloneqq p-s_{\rm inv},&
p_{\rm bal}&\coloneqq p_{\rm cor}+\gamma-2.
\end{aligned}
\ee
The exponent $p=3(4-\gamma)/5$ comes from the outer matching radius $r_{\rm out}=\varepsilon^{3/5}$ in~\eqref{AX-eq-improved-reference-constants}; the radii $\varepsilon^{5/3}$ and $\varepsilon^{3/5}$ balance the competing matching errors. The factor $3$ in the exponents for these bounds converts powers of the inner mass scale $m_j\asymp\varepsilon^3$ into powers of $\varepsilon$. The exponent $p_{\rm lin}$ records the projected linearization error in~\eqref{AX-eq-small-projection-on-linear-image}, while $p_{\rm cor}$ subtracts the exponent for growth of the inverse bound from the approximation order and $p_{\rm bal}$ measures the remaining margin in the compatibility equations.

We call a triple admissible when it lies in the ranges above and satisfies the inequalities in Table~\ref{AX-tab-admissible-exponents}. These conditions suffice for the estimates used below. The constants in subsequent estimates may depend on the fixed triple, $k,\alpha$, and the fixed compact ranges of mass fractions and geometric parameters; they are independent of the masses tending to zero.

\begin{table}[htbp]
\centering\small
\begin{tabular}{p{.43\textwidth}p{.49\textwidth}}
\hline
Condition on the exponents & Estimate controlled\\
\hline
$\zeta(1-\gamma)>\sigma_{\rm v}-\gamma$ & Absorption of the commutator with an inner cutoff\\
$2-\zeta(2+\gamma)>\sigma_{\rm v}-\gamma$ & Absorption of the cosmological remainder of the Kottler displacement tensors\\
$p_{\rm lin}>s_{\rm inv}$ & Absorption of the error in the linearized Bianchi identity\\
$p>2s_{\rm inv}+s_{\rm quad}$ & Contraction of the nonlinear correction map\\
$p_{\rm cor}+3\gamma>2$ & Relative metric error, horizon areas and estimation of the obstruction coefficients from the axial fluxes\\
\hline
\end{tabular}
\caption{Sufficient inequalities for the weighted estimates and the nonlinear argument.}
\label{AX-tab-admissible-exponents}
\end{table}

\paragraph{An equivalent description of the admissible exponents}
For the parameter ranges above, Table~\ref{AX-tab-admissible-exponents} imposes precisely the following inequalities\footnote{The fractions in~\eqref{AX-eq-admissible-parameter-family} follow by substituting~\eqref{AX-eq-exponent-relations} into the sufficient inequalities of Table~\ref{AX-tab-admissible-exponents}. The coefficients in~\eqref{AX-eq-direct-Kottler-absorption-margin} follow by using the resulting upper bounds for $\sigma_{\rm v}$ and $\zeta$. These numbers reflect the chosen matching scales and the ensuing algebraic conditions. By contrast, the constants specifying annulus widths and cutoff profiles may be chosen freely within the stated ranges.}:
\be
\label{AX-eq-admissible-parameter-family}
-\frac{2}{27}<\gamma<0,\qquad 0<\sigma_{\rm v}<\frac{2}{15}+\frac95\gamma,\qquad \frac{\sigma_{\rm v}-\gamma}{1-\gamma}<\zeta<\frac49.
\ee
The interval for $\sigma_{\rm v}$ is nonempty under $\gamma>-2/27$. Its upper bound gives $(\sigma_{\rm v}-\gamma)/(1-\gamma)<4/9$. The upper bound for $\sigma_{\rm v}$ is equivalent to $p_{\rm cor}+3\gamma>2$. Substitution of~\eqref{AX-eq-exponent-relations} gives $3\sigma_{\rm v}<2/5+(27/5)\gamma$, hence the fractions $2/15$ and $9/5$; requiring a positive interval for $\sigma_{\rm v}$ gives $\gamma>-2/27$. The inner cutoff commutator condition follows from the lower bound for $\zeta$; substitution of the upper bounds for $\sigma_{\rm v},\zeta$ gives the other inequalities in Table~\ref{AX-tab-admissible-exponents}. In particular, we have
\be
\label{AX-eq-direct-Kottler-absorption-margin}
2-\zeta(2+\gamma)-(\sigma_{\rm v}-\gamma) >\frac{44}{45}-\frac{56}{45}\gamma>0.
\ee
\begin{table}[htbp]
\centering\small
\begin{tabular}{p{.54\textwidth}p{.38\textwidth}}
\hline
Quantity or operator & Order as $\varepsilon\to0$\\
\hline
Einstein error of the improved reference & $\Ocal(\varepsilon^p)$\\
Tensor component of the inverse with two additional unknowns & $\Ocal(\varepsilon^{-s_{\rm inv}})$\\
Lipschitz factor of the projected quadratic term & $\Ocal(\varepsilon^{-s_{\rm quad}})(\|v\|_X+\|w\|_X)$\\
Correcting tensor & $\Ocal(\varepsilon^{p_{\rm cor}})$\\
Relative correction on a separation annulus & $\Ocal(\varepsilon^{p_{\rm cor}+\gamma})$\\
Error in the normalized axial fluxes & $\Ocal(\varepsilon^{p_{\rm bal}})$\\
\hline
\end{tabular}
\caption{The exponents proved in the approximation, inverse and nonlinear estimates.}
\label{AX-tab-estimate-exponents}
\end{table}

We use these weighted H\"older norms for the global elliptic correction. For reference spatial data $(h_0,N_0)$, we define the auxiliary Riemannian metric $G\coloneqq h_0+N_0^2dt_E^2$. For a current auxiliary metric $g$, we define
\be
\label{AX-eq-projection}
\begin{gathered}
S_\Lambda(g)\coloneqq\Ric_g-\tfrac12\Scal_g\,g+\Lambda g,\qquad
B_GV\coloneqq\tfrac12\mathcal L_VG,\\
(\delta_GT)_j\coloneqq-\nabla_G^iT_{ij},\qquad
A_G\coloneqq\delta_GB_G,\qquad
\Pi_G\coloneqq I-B_GA_G^{-1}\delta_G.
\end{gathered}
\ee
Here $V$ is a regular spatial vector field orthogonal to the axial symmetry orbits, invariant under rotation and with the reflection symmetry. Thus it has components only in the two directions represented in the orbit domain. The estimates for this inverse of the operator on vector fields are proved below, starting with Proposition~\ref{AX-prop-uniform-vector}. The formulas are understood on the static interior and in the local angular lifts at the horizons. They do not assert smoothness of a globally filled metric with a single period of $t_E$. When the inverse exists, $\delta_G\Pi_GT=0$ is obtained by applying $\delta_G$ to the defining formula. Also, $\Pi_GT=T$ whenever $\delta_GT=0$. These two identities imply $\Pi_G^2=\Pi_G$.

We make the estimates explicit. For a regular tensor correction $s$ and its auxiliary metric $g=G+s$, we set $\eta\coloneqq\|g-G\|_{C^{2,\alpha}_{0;\omega}}$, using the weight-zero version of~\eqref{AX-eq-weighted-norm}, and let $\eta_{\rm sep}$ be its scaled $C^{2,\alpha}_d$ norm on an enlarged sphere neighborhood of radius comparable to $d$ about $p_+$. The required comparisons are
\be
\label{AX-eq-correction-comparison}
\eta\leq C\varepsilon^{3\gamma}\|s\|_X,\qquad \eta_{\rm sep}\leq C\varepsilon^\gamma\|s\|_X.
\ee
These powers follow from $\omega/\ell\geq c\varepsilon^3$ globally and $\omega/\ell\asymp\varepsilon$ on the separation neighborhood: $|s|$ in the scaled norm is bounded by $(\omega/\ell)^\gamma\|s\|_X$. The same comparisons hold in each fixed regular angular lift.

\paragraph{Integration over spheres to measure the axial obstructions}
The fluxes below detect the finite-dimensional obstructions to the corrected Einstein equation. Their interpretation as forces is only explanatory: the proof uses the flux integrals and the coefficients of the obstruction tensors.
We fix annulus constants $0<a_{\rm ann}<1<b_{\rm ann}$ and choose $\vartheta>0$ satisfying $2b_{\rm ann}\vartheta<1$. The annuli with radii between $a_{\rm ann}\vartheta d$ and $b_{\rm ann}\vartheta d$ contain their flux spheres and have disjoint closures. These constants specify the widths of the neighborhoods; all subsequent estimates allow any fixed choice satisfying these inequalities. For sufficiently small masses, the full two-dimensional Euclidean sphere $\Sigma_a\coloneqq\{|X-p_a|=\vartheta d\}$ lies in the spatial interior, surrounds the inner horizon $H_a$, and remains separated from the other horizon. Its planar section is shown as a dashed black circle in Figure~\ref{AX-fig-omega-domain}. For a tensor $T$, we define its axial spatial flux by integration over this sphere:
\be
\label{AX-eq-spatial-flux}
\operatorname{Flux}_a(T)\coloneqq \int_{|X-p_a|=\vartheta d}T_{ij}n_\delta^i(e_z)^j\,dA_\delta,\qquad a\in\{+,-\}.
\ee
Here $i,j$ are spatial indices, $n_\delta$ is the outward Euclidean unit normal to the sphere, pointing away from $p_a$, and $dA_\delta$ is its two-dimensional Euclidean area measure. This specifies the sign and normalization; the flux is a reference-coordinate functional. We use $P_E>0$ for an integration length in $t_E$ on the static interior, independently of the local horizon angles.


\subsection{Kottler and de Sitter reference metrics and the vector energy estimate}
\label{AX-section-uniform-vector}

\paragraph{The reference fields and their horizon parameters}
The de Sitter and Kottler models are spherically symmetric static solutions with positive cosmological constant. The de Sitter spatial region has a regular center and one cosmological horizon. For sufficiently small positive mass, the Kottler spatial region lies between a black-hole horizon and a cosmological horizon. In the reference construction, de Sitter supplies the outer region, while a Kottler collar supplies a neighborhood of each inner horizon. The formulas below specify the coordinates, mass parameters and lapse normalizations.  We join two exact Kottler collars to an outer de Sitter region through the intermediate fields. We estimate the Einstein tensor and the operator on spatial vector fields orthogonal to the axial symmetry orbits directly on this reference. We then improve the matching at the inner horizons, keeping the intermediate fields on the annuli used for the axial fluxes. All projections and coordinate conditions below use the resulting Kottler--de Sitter reference itself.

We set $\ell=1$, so $\Lambda=3$, and first use the equal scales $m_j=m=\varepsilon^3$ and $d=\varepsilon D$, with $D\in I\Subset(0,\infty)$. In the closed Euclidean ball $\overline B_2$, we remove the open balls $B_{m_j/2}(p_j)$ and denote the remaining compact spatial region by $M_\varepsilon$. The coordinate radii are $r=|X|$ and $r_j=|X-p_j|$. The outer de Sitter fields are
\be
\label{AX-eq-vector-reference-models}
h_D\coloneqq(1+r^2/4)^{-2}\delta_X,\qquad N_D\coloneqq\frac{1-r^2/4}{1+r^2/4}.
\ee
For each inner horizon, we choose constants $c_{h,j},c_{N,j}>0$. For the first reference they equal one; their matched values will be specified in~\eqref{AX-eq-improved-reference-constants}. They determine the horizon radius and surface gravity
\be
\label{AX-eq-kottler-horizon-parameters}
r_{H,j}\coloneqq2m_j\sqrt{c_{h,j}},\qquad \kappa_j\coloneqq\frac{c_{N,j}}{4m_j\sqrt{c_{h,j}}},\qquad Q_{H,j}\coloneqq\Lambda r_{H,j}^2.
\ee
The corresponding Kottler parameters and fields are
\be
\label{AX-eq-kottler-fields}
\begin{gathered}
\mu_j\coloneqq\frac{r_{H,j}}2(1-Q_{H,j}/3),\qquad
\alpha_{K,j}\coloneqq\frac{2\kappa_jr_{H,j}}{1-Q_{H,j}}
=\frac{c_{N,j}}{1-Q_{H,j}},\\
F_j(R)\coloneqq1-\frac{2\mu_j}{R}-\frac{\Lambda R^2}{3},\qquad
h_{K,j}\coloneqq F_j(R)^{-1}dR^2+R^2d\Omega^2,\qquad
N_{K,j}\coloneqq\alpha_{K,j}\sqrt{F_j(R)}.
\end{gathered}
\ee
Here $R$ is the areal radius on the spherical collar: its circular orbit radius, denoted by $R$ in~\eqref{AX-eq-metric}, is $R\sin\theta$ in spherical polar coordinates. The metric $d\Omega^2$ is the unit round metric, and $G_{K,j}\coloneqq h_{K,j}+N_{K,j}^2dt_E^2$ denotes the auxiliary metric on the collar. The identities $RF_j'+F_j-1+\Lambda R^2=0$ and $F_j''/2+F_j'/R=-\Lambda$ verify~\eqref{LS-eq-static}. Under the condition $Q_{H,j}<1$, the root $R=r_{H,j}$ defines a nondegenerate black-hole horizon. Its area is $4\pi r_{H,j}^2$ and its surface gravity is $\alpha_{K,j}F_j'(r_{H,j})/2=\kappa_j$. The Kottler mass $\mu_j=m_j\sqrt{c_{h,j}}(1-Q_{H,j}/3)$ differs from the auxiliary scale $m_j$. Figure~\ref{AX-fig-reference-profiles} displays the lapse profiles of the two models and locates their horizons. In these coordinates $F^{-1}$ is the radial metric coefficient; its divergence at a simple zero of $F$ is resolved by using normal distance to the horizon.

\begin{figure}[!htbp]
\centering
\includegraphics[width=\textwidth]{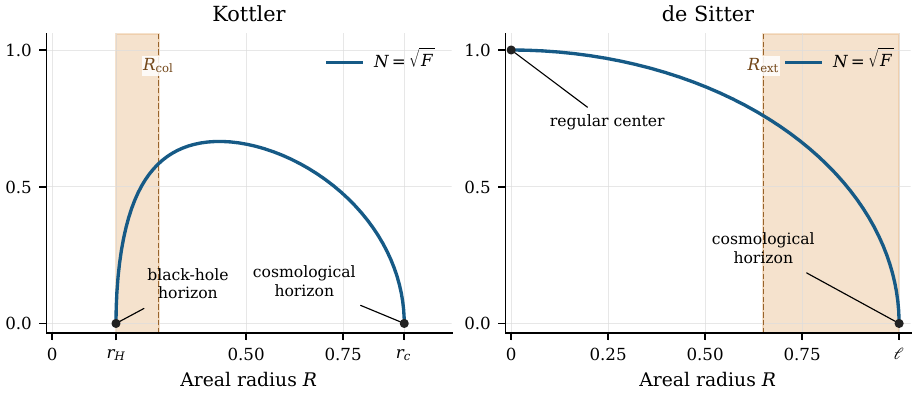}
\caption{Lapse profiles $N=\sqrt F$ in areal radius $R$, with $\ell=1$, $\Lambda=3$ and lapse factor one. On the left, for the illustrative Kottler mass $\mu=0.08$, $F=1-2\mu/R-R^2$ is positive between the horizon roots $r_H<r_c$. On the right, de Sitter has $F=1-R^2$, a regular center at $R=0$ and a horizon at $R=\ell$. The brown intervals schematically mark the reference pieces: the Kottler collar $r_H\leq R\leq R_{\mathrm{col}}$ on the left and the outer de Sitter region $R\geq R_{\mathrm{ext}}$ on the right. The areal cutoff locations are schematic; Figure~\ref{AX-fig-gluing-scales} specifies the reference-coordinate radii $r_{\rm in},r_{\rm out}$ and the transition annuli. At each simple horizon, normal distance gives $N=\kappa s+\Ocal(s^3)$, as in~\eqref{AX-eq-horizon-data}.}
\label{AX-fig-reference-profiles}
\end{figure}

At fixed $\Lambda$, each Kottler reference collar has two parameters, the mass $\mu_j$ and the positive lapse factor $\alpha_{K,j}$, or equivalently $r_{H,j}$ and $\kappa_j$. The triple $m_j,c_{h,j},c_{N,j}$ is redundant up to radial reparametrization. Before matching, two collars carry four local parameters; rescaling both lapse functions by the same positive constant leaves two masses and one relative surface gravity. In this construction, however, matching determines $c_{h,j},c_{N,j}$. The construction does not show that $\kappa_+/\kappa_-$ can be chosen independently, nor does it give a three-parameter classification of global solutions.

We use the radial coordinate and local angular coordinate
\be
\label{AX-eq-kottler-collar-identification}
R=\sqrt{c_{h,j}}\,r_j\Bigl(1+\frac{m_j}{2r_j}\Bigr)^2,\qquad y_j\coloneqq\sqrt{R/r_{H,j}-1} =\frac{r_j-m_j/2}{\sqrt{2m_jr_j}},\qquad \vartheta_j\coloneqq\kappa_jt_E.
\ee
Thus $r_j=m_j/2$ is the horizon. These are reference coordinates, not isotropic coordinates for $h_{K,j}$. To give the fields directly in these coordinates, we put $u_j\coloneqq m_j/(2r_j)$ and
$b_j\coloneqq1-\Lambda(R^2+r_{H,j}R+r_{H,j}^2)/3$. The fields are
\be
\label{AX-eq-kottler-radial-fields}
h_{K,j}=c_{h,j}(1+u_j)^4\bigl(b_j^{-1}dr_j^2+r_j^2d\Omega^2\bigr),\qquad N_{K,j}=\frac{c_{N,j}}{1-Q_{H,j}}\frac{1-u_j}{1+u_j}\sqrt{b_j}.
\ee
Indeed, $F_j(R)=((1-u_j)/(1+u_j))^2b_j$ and $dR/dr_j=\sqrt{c_{h,j}}(1-u_j^2)$. The regular angular form of these fields is proved in Lemma~\ref{AX-lem-kottler-comparison}.

We choose a fixed smooth function $\chi:[0,\infty)\to[0,1]$ equal to one on $[0,1]$ and zero on $[2,\infty)$. For the first reference, we put $a\coloneqq\varepsilon^2$ and $b\coloneqq\varepsilon^{\beta_0}$, where the fixed exponent $\beta_0$ satisfies $(2-2\gamma)/(4-\gamma)\leq\beta_0\leq(1+2\gamma)/(1+\gamma)$. We define $\chi_j\coloneqq\chi(r_j/a)$ and $\chi_o\coloneqq\chi(r/b)$. The choice of inner exponent $2$ balances the two unmatched error orders $2-a_0\gamma$ and $6-a_0(2+\gamma)$. Equating these orders gives $a_0=2$ and maximizes their minimum. Using the intermediate fields~\eqref{AX-eq-intermediate}, we define the reference by
\be
\label{AX-eq-vector-reference-patching}
\begin{aligned}
h_B&\coloneqq\chi_o h_I+(1-\chi_o)h_D,&
N_B&\coloneqq\chi_o N_I+(1-\chi_o)N_D,\\
h_\varepsilon&\coloneqq h_B+\sum_j\chi_j(h_{K,j}-h_B),&
N_\varepsilon&\coloneqq N_B+\sum_j\chi_j(N_{K,j}-N_B),\\
G_\varepsilon&\coloneqq h_\varepsilon+N_\varepsilon^2dt_E^2,&
\omega&=(1+r_+^{-2}+r_-^{-2})^{-1/2}.
\end{aligned}
\ee
Each inner term is supported in its collar. We fix a collar margin $\beta_{\rm col}\in(0,1)$. A sufficient condition for positivity of its Kottler factor is
\be
\label{AX-eq-kottler-collar-positivity}
\frac{\Lambda}{3}\max_j\bigl(L_j^2+r_{H,j}L_j+r_{H,j}^2\bigr)\leq1-\beta_{\rm col}, \qquad L_j\coloneqq2a\sqrt{c_{h,j}}\Bigl(1+\frac{m_j}{4a}\Bigr)^2.
\ee
It gives $b_j\geq\beta_{\rm col}$ on each inserted collar and $Q_{H,j}<1$. For sufficiently small $\varepsilon$, this inequality holds, the inner supports are disjoint, and the intermediate lapse is positive wherever a cutoff acts, since $U_m+Q_1\leq C\varepsilon$. The spatial metric and lapse are convex combinations of positive fields there. Near the inner boundaries the fields are exactly Kottler, and near $r=2$ they are exactly de Sitter. They have three nondegenerate regular horizons and regular axes. The isolated cosmological roots of the Kottler fields are outside the inserted collars and introduce no boundary. The fields are exactly $(h_I,N_I)$ on the enlarged flux annuli. The smooth cutoffs are constant on each side of a transition annulus, so the fields and all their derivatives agree across its boundary spheres. The final global correction is not required to vanish on the exact model regions of this reference.

\begin{lemma}[A bound for the Einstein tensor of the reference metric]
\label{AX-lem-reference-error}
For the reference~\eqref{AX-eq-vector-reference-patching} with $a=\varepsilon^2$ and $b=\varepsilon^{\beta_0}$ as above, every fixed exponent $\nu$ satisfying $2-\gamma<\nu\leq2-2\gamma$ gives, for each fixed $k\geq0$ and $0<\alpha<1$,
\be
\label{AX-eq-reference-error-proved}
\|S_3(G_\varepsilon)\|_{k,\alpha,\gamma-2;\omega}\leq C_{k,\alpha,\gamma}\varepsilon^\nu.
\ee
Here $\ell=1$, and the norms are those of~\eqref{AX-eq-weighted-norm}, using the reference metric and regular local angular lifts. The constants are uniform in $D\in I$ and for the comparable unequal masses specified below.
\end{lemma}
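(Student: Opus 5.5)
Since $S_3$ is a local operator and each cutoff is constant off its transition annulus, we split $M_\varepsilon$ into five kinds of regions and estimate $S_3(G_\varepsilon)$ separately on each. The regions are:
\begin{itemize}
\item the exact Kottler collars $r_j\le a$;
\item the inner transition annuli $a\le r_j\le 2a$;
\item the intermediate region, where the fields equal $(h_I,N_I)$;
\item the outer transition annulus $b\le r\le 2b$;
\item the exact de Sitter region $r\ge 2b$.
\end{itemize}
On the exact model regions $S_3(G_\varepsilon)=0$, because $h_{K,j},N_{K,j}$ and $h_D,N_D$ solve~\eqref{LS-eq-static}. By the footnote computation after~\eqref{LS-eq-spacetime}, this is equivalent to $\Ric_{G}=\Lambda G$ for the auxiliary Riemannian metric, hence to $S_3(G)=0$. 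Near the horizons we work in the regular angular lifts $\vartheta_j=\kappa_j t_E$, where the models are smooth by Lemma~\ref{AX-lem-kottler-comparison}. No error therefore arises near any horizon or at the poles, and the weight $\omega\asymp m_j$ plays no role there. It remains to bound the scaled $C^{k,\alpha}$ norm of $S_3$ on the intermediate region and the two families of transition annuli. Each bound is then multiplied by $\omega^{2-\gamma}$, since the source weight is $\gamma-2$ and $\ell=1$.

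\emph{Intermediate region.} Here $\omega\asymp\min(r_+,r_-,1)$. The computation after~\eqref{AX-eq-intermediate} shows that $(h_I,N_I)$ solves the linearized static equations exactly. The error is therefore quadratic: it is a sum of terms like $\nabla U\nabla U$, $U\nabla^2U$, $U\nabla^2 Q$, $Q\nabla^2 Q$ and $\Lambda(U+Q)$ times the perturbation. In scaled norms near a center, at radius $r_j\in[a,\,\text{const}\cdot\varepsilon]$, the dominant term has size $r_j^{-2}(m/r_j)^2+m/r_j+\dots$. After multiplication by $\omega^{2-\gamma}=r_j^{2-\gamma}$ it becomes $m^2r_j^{-2-\gamma}+m\,r_j^{1-\gamma}+r_j^{2-\gamma}\cdot O(\varepsilon^2)$. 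At $r_j=a=\varepsilon^2$ the first term gives $\varepsilon^{6-2(2+\gamma)}=\varepsilon^{2-2\gamma}$. Away from the centers, up to $r\lesssim 2b$, the quadratic terms in $Q$ and the cross terms $UQ$ give contributions of the form $r^{2-\gamma}(r^2+m/r)$. These are bounded at $r=2b$ by $b^{4-\gamma}=\varepsilon^{\beta_0(4-\gamma)}$, and at $r\sim\varepsilon$ by $\varepsilon^{4-\gamma}\lesssim\varepsilon^{\nu}$. The cross term $m\,r^{1-\gamma}$ is at most $\varepsilon^3 b^{1-\gamma}$, which is harmless. The lower restriction $\beta_0(4-\gamma)\ge 2-2\gamma$ is exactly what makes the outer term at most $\varepsilon^{2-2\gamma}\le\varepsilon^\nu$.

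\emph{Transition annuli.} On $a\le r_j\le 2a$ we write $G_\varepsilon=G_I+\chi_j(G_{K,j}-G_I)$. We expand $S_3$ at the exact solution $G_{K,j}$:
\[
S_3(G_\varepsilon)=S_3(G_I)+\bigl(DS_3\bigr)\bigl[(1-\chi_j)(G_I-G_{K,j})\bigr]+\text{quadratic}.
\]
Scaled derivatives of $\chi_j$ cost nothing, so the key quantity is the scaled difference $|G_{K,j}-G_I|$ on the annulus. We compute it by expanding the Kottler fields~\eqref{AX-eq-kottler-radial-fields} in $u_j=m/(2r_j)$ and in $\Lambda R^2$, and compare with $N_I=1-m/r_j-m/r_{j'}-Q$ and $h_I$. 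The Kottler fields reproduce the monopole $m/r_j$ but not the terms that are constant and linear in $X-p_j$. These come from the companion potential $m/r_{j'}\sim\varepsilon^2$ and from $Q(p_j)\sim\varepsilon^2$. They also fail to reproduce the quadratic terms $(m/r_j)^2$ correctly. The difference is therefore $O(\varepsilon^2)+O(m^2/r_j^2)$. After weighting by $\omega^{-\gamma}$ and the factor for two derivatives, it gives orders $2-2\gamma$ (from $\varepsilon^2\cdot a^{-\gamma}$) and $6-2(2+\gamma)=2-2\gamma$. This is the balance $a_0=2$ recorded before~\eqref{AX-eq-vector-reference-patching}.

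On $b\le r\le 2b$ the same argument applies with $G_D$ in place of $G_{K,j}$. The difference $h_I-h_D$, $N_I-N_D$ consists of the monopole $2m/r$ and its derivatives, and the quartic de Sitter terms $O(r^4)$. In scaled weighted norm this gives $m\,b^{-1-\gamma}$ and $b^{4-\gamma}$. The upper restriction $\beta_0\le(1+2\gamma)/(1+\gamma)$ makes $\varepsilon^{3-\beta_0(1+\gamma)}\le\varepsilon^{2-2\gamma}$. All exponents are therefore at least $2-2\gamma\ge\nu$. Since the estimates are order by order, they also hold for any $\nu\le 2-2\gamma$; the condition $\nu>2-\gamma$ is only needed later.

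\emph{Uniformity.} All constants depend only on $D\in I$, on the mass ratio through $m_j\asymp\varepsilon^3$, and on $k,\alpha$. The H\"older norms follow from the pointwise derivative bounds, because every field is an explicit analytic function in the rescaled coordinates.

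The main obstacle is the inner transition annulus. There one must check carefully, with $c_{h,j}=c_{N,j}=1$, that the Kottler expansion in the chosen non-isotropic coordinates~\eqref{AX-eq-kottler-collar-identification} matches $h_I$ and $N_I$ through first order in $m/r_j$. The factor $b_j^{-1}$ in the radial direction introduces $O(\Lambda r_j^2)$ anisotropy, which must be shown to be of order $\varepsilon^4$ only. The mismatch must also be verified componentwise in the regular Cartesian lift, so that no lower-order term survives; I would do this expansion first.
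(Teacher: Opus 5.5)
Your proposal is correct and follows essentially the same route as the paper's proof. It uses the same region decomposition and the vanishing of $S_3$ on the exact Kottler and de Sitter pieces. In the intermediate region it uses a quadratic bound coming from the cancellation of the linearized static equations. On the transitions it uses first-order mismatch bounds of sizes $a^{-\gamma}\bigl((m/a)^2+m/d+d^2+a^2\bigr)$ and $b^{-\gamma}(m/b+b^4)$, and it invokes the two constraints on $\beta_0$ exactly as the paper does.

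There is one small slip. If you expand at the exact solution $G_{K,j}$ on the inner annulus, the zeroth-order term is $S_3(G_{K,j})=0$, not $S_3(G_I)$. Expanding at $G_I$ instead also works, since $S_3(G_I)$ is already controlled by your intermediate estimate, so the conclusion is unaffected.
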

\begin{proof}
\bse
On a chart of radius comparable to $\omega$, the weighted source norm is the scaled norm of $\omega^2S_3$ multiplied by $\omega^{-\gamma}$. We estimate this quantity in the four regions of the construction. Every estimate below holds with any fixed number of scaled derivatives; an additional derivative also gives its H\"older version by the mean value theorem.

The Einstein tensor vanishes on the exact Kottler collars. On an inner transition, $r_j\asymp a$ and the companion distance is comparable to $d$. Expanding the explicit Kottler fields~\eqref{AX-eq-kottler-radial-fields} in $m/r_j$ gives $h_{K,j}=(1+2m/r_j)\delta_X+\Ocal((m/a)^2+a^2)$ and $N_{K,j}=1-m/r_j+\Ocal((m/a)^2+a^2)$ in every fixed scaled norm. We write $G_I\coloneqq h_I+N_I^2dt_E^2$ for the intermediate product metric. Its difference from $G_{K,j}$ satisfies
\be
\bigl\|G_I-G_{K,j}\bigr\|_{C^{k+2,\alpha}_a} \leq C\bigl((m/a)^2+m/d+d^2+a^2\bigr)\leq C\varepsilon^2.
\ee
The companion and cosmological terms obey the same bound after differentiation, since $a\ll d$. Differentiation of the fixed cutoff preserves this scaled estimate. Smooth dependence of the Ricci formula on the metric and its first two derivatives, relative to the rescaled positive Kottler metric on that annulus, gives $a^2\|S_3\|_{C^{k,\alpha}_a}\leq C\varepsilon^2$. The weighted contribution is therefore
\be
a^{-\gamma}\varepsilon^2=\varepsilon^{2-2\gamma}.
\ee

In the exact intermediate region, we use $r\leq C(\omega+d)$ and the derivative bounds on $m/r_+$ and $m/r_-$. The first-order static equations verified after~\eqref{AX-eq-intermediate} cancel the constant and linear terms in the Einstein equation, including its cosmological source. The remaining terms in the coordinate Ricci formula are products of the metric perturbation with its second derivatives, or products of first derivatives; the cosmological correction is bounded by the same quantity after multiplication by $\omega^2$. Thus we obtain
\be
\omega^2\|S_3(G_I)\|_{C^{k,\alpha}_\omega} \leq C\bigl((m/\omega)^2+d^4+\omega^4\bigr).
\ee
More explicitly, the scaled norms of the metric perturbation and its first two derivatives are bounded by $C(m/\omega+d^2+\omega^2)$. The coefficient of $dt_E^2$ $N_I^2$ adds only a quadratic term to its first-order perturbation. Taylor's formula for the Ricci tensor, with uniformly bounded inverse-metric coefficients, bounds the remaining expression by the square of this quantity. Since $a\leq C\omega\leq Cb$ and $2+\gamma>0$, multiplication by $\omega^{-\gamma}$ bounds this contribution by
\be
C\bigl(m^2a^{-2-\gamma}+d^4b^{-\gamma}+b^{4-\gamma}\bigr) \leq C\bigl(\varepsilon^{2-2\gamma}+\varepsilon^{4-\beta_0\gamma}+\varepsilon^{\beta_0(4-\gamma)}\bigr).
\ee

On the outer transition, $r\asymp b$ and $r_j\asymp b$. The de Sitter factors in~\eqref{AX-eq-vector-reference-models} have expansions $h_D=(1-r^2/2+\Ocal(r^4))\delta_X$ and $N_D=1-r^2/2+\Ocal(r^4)$. Consequently the scaled difference from the intermediate product is bounded by $C(m/b+b^4)$. The cutoff has bounded derivatives in the scaled chart. Since $S_3(G_D)=0$, the smooth Ricci formula now gives
\be
b^{-\gamma}b^2\|S_3(G_\varepsilon)\|_{C^{k,\alpha}_b} \leq C\bigl(mb^{-1-\gamma}+b^{4-\gamma}\bigr) =C\bigl(\varepsilon^{3-\beta_0(1+\gamma)}+\varepsilon^{\beta_0(4-\gamma)}\bigr).
\ee
The tensor vanishes in the exact outer de Sitter and inner Kottler regions, and the inequalities defining $\beta_0$ make each outer exponent at least $2-2\gamma$; the intermediate exponent $4-\beta_0\gamma$ is larger as well. Every exponent obtained above is therefore at least $2-2\gamma$, and the error is bounded by $C\varepsilon^{2-2\gamma}\leq C\varepsilon^\nu$. The lower restriction $\nu>2-\gamma$ ensures that a correction of norm $\Ocal(\varepsilon^\nu)$ is $o(\varepsilon^2)$ on a separation annulus, by~\eqref{AX-eq-correction-comparison}. Replacing $m$ by two comparable masses changes only the constants.
\ese
\end{proof}

Throughout the vector estimates, the fields are spatial, orthogonal to the axial symmetry orbits and invariant under rotation. Equivalently, they are independent of $t_E$ and $\varphi$ and have no component in either of these two Killing directions. They extend smoothly in the invariant local angular lifts at every horizon, axis and pole. Norms below use $G_\varepsilon$ and any fixed positive integration length in $t_E$. Each integral has the same length factor, which does not affect the constants. We write $B_\varepsilon V\coloneqq\tfrac12\mathcal L_VG_\varepsilon$.

\begin{proposition}[A uniform energy estimate for spatial vector fields orthogonal to the axial symmetry orbits]
\label{AX-prop-uniform-vector}
For the references~\eqref{AX-eq-vector-reference-patching}, there is a constant $C$ independent of sufficiently small $\varepsilon$, $D\in I$ and the integration period in $t_E$ such that every smooth spatial vector field orthogonal to the axial symmetry orbits satisfies
\be
\label{AX-eq-uniform-vector-proved}
\|\nabla^{G_\varepsilon}V\|_{L^2(G_\varepsilon)} +\|\omega^{-1}V\|_{L^2(G_\varepsilon)} \leq C\|B_\varepsilon V\|_{L^2(G_\varepsilon)}.
\ee
The same conclusion holds for two comparable positive masses $m_j=\theta_j\varepsilon^3$, with $\theta_j$ in a fixed compact subset of $(0,\infty)$, and centers $p_j$ on the axis satisfying $|p_j|\leq C_0\varepsilon$ and $c_0\varepsilon\leq|p_+-p_-|\leq C_0\varepsilon$, with fixed positive constants $c_0,C_0$. A reflection interchanging the two horizons is not required. The estimate persists for sufficiently small perturbations in the scaled $C^1$ norm with weight zero that are smooth in the same reference angular lifts.
\end{proposition}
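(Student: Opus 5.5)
We argue by contradiction and compactness, combined with a Korn-type integration-by-parts identity on each model region. First, for a vector field $V$ orthogonal to the orbits and independent of $t_E,\varphi$, we integrate by parts the identity
\[
2|B_\varepsilon V|^2=|\nabla V|^2+\langle\nabla_iV_j,\nabla_jV_i\rangle,
\]
commuting derivatives in the second term. This gives
\[
2\|B_\varepsilon V\|^2=\|\nabla V\|^2+\|\operatorname{div}V\|^2-\int\Ric_{G_\varepsilon}(V,V)+\text{boundary terms}.
\]
The boundary terms vanish here. On the lifted auxiliary manifold the horizons become fixed surfaces of the $t_E$-rotation, the axis that of $\varphi$, so the regular Cartesian lifts leave no boundary. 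Since $\Ric_{G_\varepsilon}=\Lambda G_\varepsilon+\Ocal(S_3)$, Lemma~\ref{AX-lem-reference-error} shows the Ricci term is not signed favourably. We therefore do not expect a direct Bochner argument to work and instead use the identity only locally, to control $\|\nabla V\|$ by $\|B_\varepsilon V\|+\|\omega^{-1}V\|$ with a weighted cutoff. The cutoff commutator is $\Ocal(\omega^{-1}|V|)$, which is exactly the second term on the left of~\eqref{AX-eq-uniform-vector-proved}. It then suffices to prove the weighted Hardy-type bound $\|\omega^{-1}V\|\le C\|B_\varepsilon V\|$.

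For that bound we suppose it fails along a sequence $\varepsilon_n\to0$, $D_n\in I$ (or comparable masses and centres), with fields $V_n$ normalised by $\|\omega^{-1}V_n\|=1$ and $\|B_nV_n\|\to0$. The local Korn estimate bounds $\nabla V_n$ in $L^2$, and we locate where the weighted mass concentrates. There are three scales, each with a limiting geometry.
\begin{itemize}
\item Rescaling by $m_j$ near a horizon gives the Riemannian Schwarzschild metric, a complete $\mathbb R^2\times S^2$.
\item Rescaling by $d$ in the separation region gives flat $\mathbb R^3\times S^1$ with two punctures.
\item The unrescaled outer region gives the de Sitter metric, closing up to the round $S^4$ after filling the cosmological horizon.
\end{itemize}
In each limit the weak limit $V_\infty$ satisfies $\mathcal L_{V_\infty}G_\infty=0$ and lies in the weighted space $\omega^{-1}V\in L^2$. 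It is therefore a Killing field of the limit that is spatial, orthogonal to the orbits and commutes with both rotations. On Schwarzschild the only Killing fields are rotations and $\del_t$, and none has a nonzero component of the required type. On flat space the candidates are translations along $e_z$ and rotations; translations fail the weighted $L^2$ condition because $\omega^{-1}\sim r^{-1}$ gives a divergent integral in dimension four. On the round $S^4$, Killing fields commuting with both the $t_E$- and $\varphi$-rotations and orthogonal to their orbits vanish: the centraliser is spanned by the rotations themselves. So every limit is zero.

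We then rule out loss of mass in the necks (the annuli $a\ll\omega\ll d$ and $d\ll\omega\ll1$). On these, $G_\varepsilon$ is close to a flat cone metric. There we apply a weighted Hardy inequality for the Killing operator on annuli, decomposing $V$ in spherical harmonics compatible with the symmetry. The weight $\omega^{-1}$ is the critical Hardy weight in dimension four, with the $S^1$ factor of length $P_E$ dividing out. Since translations, the only approximate conformal-Killing-type modes, are excluded by orthogonality to orbits together with axial invariance, we expect a uniform constant. This neck analysis is the main obstacle: one must show uniformly in the long logarithmic necks that the $\ell=1$ radial mode $V=f(r)e_z$-type fields cannot carry weighted mass at negligible $\|B V\|$. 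We would try to handle it by an explicit ODE computation of the symmetric gradient of $f(r)\,\del_r$ and $f(r)\,e_z$ on the cone, showing $|B V|\gtrsim|f'|+|f|/r$ pointwise after averaging over spheres.

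Combining zero limits with no neck loss contradicts the normalisation. The unequal-mass and displaced-centre cases change only the constants, since the limiting geometries are the same. Persistence under small weight-zero $C^1$ perturbations follows because $B_{G+s}V-B_GV=\Ocal(|s|_{C^1}(|\nabla V|+\omega^{-1}|V|))$, which is absorbed into the left side.
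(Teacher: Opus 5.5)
Your reduction to the weighted bound $\|\omega^{-1}V\|\le C\|B_\varepsilon V\|$ and your three limiting Killing-field arguments are fine. The step you flag as the main obstacle is left open, however, and both ingredients you propose for it are incorrect.

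First, axial translations are \emph{not} excluded by the symmetries. The field $e_z$ is invariant under rotation about the axis and orthogonal to both the $\varphi$- and $t_E$-orbits; you yourself list it as a candidate in the flat limit. Second, the pointwise bound $|BV|\gtrsim|f'|+|f|/r$ fails exactly for this mode. For $V=f(r)e_z$ one has $B_0V=\tfrac12 f'(r)(n\otimes e_z+e_z\otimes n)$, which has no $|f|/r$ component, even after averaging over spheres. Consider fields $c(r)e_z$ whose amplitude $c$ drifts slowly across many dyadic scales. Their weighted mass vanishes at every fixed scale, so all your blow-up limits are zero for them. The contradiction therefore rests entirely on the missing neck estimate.

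The missing idea is an integral Hardy inequality, and it makes the neck analysis, and most of the blow-up, unnecessary. Static invariant fields, integrated with $N\,d\mu_h$ per unit length in $t_E$, live in three spatial dimensions. There the scale-invariant weight $r^{-2}$ is not critical: $\int r_j^{-2}f^2\le 4\int|df|^2$ holds with a uniform constant. The truncated profile $c=r^{-1/2}$ shows that $4$ is sharp, but there is no logarithmic loss along long necks; the inequality fails only in two dimensions. On the Kottler collars the analogous inequality follows by integrating $\partial_R((R-r_{H,j})f^2)$, using $F_j\ge\beta_{\rm col}(1-r_{H,j}/R)$. Applying this to $f=|V|$ with $|d|V||\le|\nabla V|$ gives $\|\omega^{-1}V\|^2\le C(\|\nabla V\|^2+\|V\|^2)$.

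The paper then uses the direct Bochner identity that you dismissed. One has $\Ric_{G_\varepsilon}=3G_\varepsilon$ on the exact Kottler and de Sitter pieces, and $|\Ric_{G_\varepsilon}|\le C_0+C_1\varepsilon\omega^{-2}$ elsewhere, because the metric is $\Ocal(\varepsilon)$-close to flat in the scaled $C^2$ norm. So the unfavourable term is an unweighted $\|V\|^2$ plus a piece that Hardy absorbs for small $\varepsilon$. This yields $\|\nabla V\|^2+\|\omega^{-1}V\|^2\le C(\|B_\varepsilon V\|^2+\|V\|^2)$.

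Only $\|V\|^2$ then has to be removed by contradiction, and a single limit suffices. On $r\ge r_0$ the metric is exactly de Sitter, so the limit is a Killing field orthogonal to the orbits and hence zero. Concentration at the holes is excluded by $\int_{r<r_0}|V|^2\le (r_0+Cd)^2\|\omega^{-1}V\|^2$.

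The small coefficient $C_1\varepsilon$ is essential. Your local Korn step only uses $|\Ric|\le C\omega^{-2}$, which Hardy cannot absorb. If you keep your blow-up framework, you must run Korn plus Hardy on the nearly flat necks themselves, where this smallness is available.
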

\begin{proof}
\bse
\emph{1. Weighted Hardy inequality.} We first establish a weighted inequality for scalar functions directly on a Kottler collar. In its areal coordinate, $F_j(R)=(1-r_{H,j}/R)b_j(R)$, where $b_j(R)\geq\beta_*>0$; condition~\eqref{AX-eq-kottler-collar-positivity} gives $\beta_*=\beta_{\rm col}$. After division by the integration length in $t_E$, the volume measure is $\alpha_{K,j}R^2\,dR\,d\Omega$, and the radial gradient integral has density $\alpha_{K,j}F_jR^2|\del_Rf|^2$. For a regular invariant scalar and $r_{H,j}<L\leq a\sqrt{c_{h,j}}(1+m_j/(2a))^2$, integration of $\del_R((R-r_{H,j})f^2)$ gives
\be
\int_{r_{H,j}}^{L}f^2\,dR \leq2Lf(L)^2+\frac4{\beta_*}\int_{r_{H,j}}^{L}F_jR^2|\del_Rf|^2\,dR.
\ee
The lower boundary term vanishes. Young's inequality bounds $2(R-r_{H,j})|f\del_Rf|$ by $f^2/2+2(R-r_{H,j})^2|\del_Rf|^2$, and $(R-r_{H,j})^2\leq F_jR^2/\beta_*$. The constant $\alpha_{K,j}$ multiplies both radial integrals and is uniformly bounded above and below. Also, $R$ and $r_j$ are uniformly comparable down to the horizon. This proves the required weighted inequality on the inner collar.

On the complement of the two radius-$a$ balls inside a fixed smaller ball, the spatial coefficients and lapse are uniformly comparable to the flat product. Integrate $\operatorname{div}((X-p_j)/r_j^2)=r_j^{-2}$ against a compactly supported $f^2$ and apply Cauchy--Schwarz. This proves the Euclidean Hardy inequality $\int r_j^{-2}f^2\leq4\int|df|^2$. To apply it on the punctured region, we extend $f$ into each radius-$a$ ball from its surrounding annulus. An extension with gradient norm bounded by the annular gradient norm is obtained by subtracting the annular average, extending the difference on the fixed unit annulus by reflection in boundary charts and a partition of unity, and adding the average back. Rescaling gives constants independent of $a$; the Poincar\'e inequality for the mean-zero difference controls the cutoff terms. A fixed outer cutoff adds only an unweighted $L^2$ term. The trace at $r_j=a$ is controlled by the same annular gradient and weighted $L^2$ integrals, by integrating the fundamental theorem of calculus along radii in $[a,2a]$. Combining these estimates with the inequality on the inner region gives
\be
\label{AX-eq-vector-weighted-Hardy}
\|\omega^{-1}f\|_{L^2(G_\varepsilon)}^2 \leq C_H\bigl(\|df\|_{L^2(G_\varepsilon)}^2+\|f\|_{L^2(G_\varepsilon)}^2\bigr).
\ee
All transition annuli have uniformly comparable metrics, so the comparison constants are uniform. On regions where $r$ has a positive lower bound, $\omega^{-1}$ is bounded, so the last term suffices, including near the cosmological horizon. The scalar inequality applied to $(|V|^2+\delta^2)^{1/2}$ passes to the limit as $\delta\downarrow0$. The bound $|df|\leq|\nabla V|$ then gives the vector inequality.
\ese

\bse
\emph{2. Curvature control and the symmetric-gradient identity.} We next control the curvature term in the integrated symmetric-gradient identity. Both the exact Kottler collars and the exact de Sitter region have $\Ric_{G_\varepsilon}=3G_\varepsilon$. On every transition or intermediate chart, the metric differs from the flat product by at most $C\varepsilon$ in the scaled $C^2$ norm at radius comparable to $\omega$. This follows from $m/r_j\leq C\varepsilon$, $r^2\leq C\varepsilon$ and the factors $a^{-k}$ or $b^{-k}$ contributed by the cutoff derivatives of order $k\leq2$. The coordinate Ricci formula therefore yields
\be
\label{AX-eq-vector-reference-Ricci}
|\Ric_{G_\varepsilon}|_{G_\varepsilon} \leq C_0+C_1\varepsilon\omega^{-2}.
\ee
For the outer transition, $\omega\asymp b$; for an inner transition, $\omega\asymp a$. In the remaining intermediate region, derivatives of each potential obey the same estimates with $\omega\leq r_j$ and $\omega\leq C(r+d)$. This verifies the curvature bound throughout the region where the reference is not an exact model.

Integration by parts and commutation of one covariant derivative give
\be
2\|B_\varepsilon V\|_2^2 =\|\nabla V\|_2^2+\|\operatorname{div}V\|_2^2 -\int\Ric_{G_\varepsilon}(V,V)\,dV_{G_\varepsilon}.
\ee
For clarity, the cross term in $2|B V|^2=|\nabla V|^2+\nabla_iV_j\nabla^jV^i$ integrates to $\|\operatorname{div}V\|_2^2-\int\Ric(V,V)$. The horizon and axis tube boundary terms vanish in their regular angular lifts, as in Proposition~\ref{AX-lem-Einstein-recovery}. Substitution of~\eqref{AX-eq-vector-reference-Ricci} and~\eqref{AX-eq-vector-weighted-Hardy}, followed by absorption for small $\varepsilon$, proves
\be
\label{AX-eq-vector-augmented-Korn}
\|\nabla V\|_2^2+\|\omega^{-1}V\|_2^2 \leq C\bigl(\|B_\varepsilon V\|_2^2+\|V\|_2^2\bigr).
\ee
Here and in the next step the integration period in $t_E$ may be fixed to one, since it multiplies every integral.
\ese

\bse
\emph{3. Removal of the lower norm and stability.} We remove the last term by contradiction. If~\eqref{AX-eq-uniform-vector-proved} failed along a sequence $\varepsilon_i\to0$, we could choose fields with the left side of~\eqref{AX-eq-vector-augmented-Korn} equal to one and $\|B_{\varepsilon_i}V_i\|_2\to0$. On each fixed region $r\geq r_0>0$, the metrics are exactly de Sitter for large $i$. Their $H^1$ bounds give a subsequence converging strongly in $L^2$ on that region and weakly in $H^1$. This compactness also holds at the cosmological horizon: a finite set of regular four-dimensional angular charts turns it into the usual compactness on bounded smooth domains. The latter follows by approximation on cubes by their averages, whose $L^2$ error is bounded by the cube size times the $H^1$ norm. A diagonal subsequence gives a spatial vector field orthogonal to the axial symmetry orbits $V_0$ with $B_DV_0=0$ on $0<r\leq2$.

This Killing field vanishes. At a pole of the cosmological horizon, the two circle actions rotate two orthogonal coordinate planes. We set $s^2\coloneqq x_1^2+x_2^2$ and $u^2\coloneqq x_3^2+x_4^2$. Invariance and the absence of components along the circles give smooth scalar functions $A,C$ such that
\be
V_0=A(s^2,u^2)(x_1\del_{x_1}+x_2\del_{x_2}) +C(s^2,u^2)(x_3\del_{x_3}+x_4\del_{x_4}).
\ee
A weak Killing field is smooth: differentiating its equation and commuting derivatives expresses its second covariant derivatives as curvature contractions with the field, in the distributional sense. Starting from $H^1$, this gives $H^2$ regularity in local coordinates; iteration gives smoothness. Its value at the pole is zero. Its derivative there is diagonal on the two planes; the Killing equation makes that derivative skew, hence zero. Differentiating the Killing equation and commuting derivatives expresses $\nabla\nabla V_0$ as a curvature contraction with $V_0$. Along a geodesic, $(V_0,\nabla V_0)$ therefore satisfies a linear homogeneous ordinary differential system. The zero initial data propagate, first in a neighborhood and then throughout the connected region. Thus $V_0=0$.

There is no remaining concentration in the unweighted norm at the shrinking holes. Since $\omega\leq r_j\leq r+Cd$, the normalization gives
\be
\int_{r<r_0}|V_i|^2\,dV_{G_{\varepsilon_i}} \leq(r_0+Cd_i)^2\|\omega^{-1}V_i\|_2^2 \leq(r_0+Cd_i)^2.
\ee
Strong convergence outside this region, followed by $i\to\infty$ and $r_0\downarrow0$, gives $\|V_i\|_2\to0$. This contradicts~\eqref{AX-eq-vector-augmented-Korn}. If a proposed failure sequence has a positive limiting value of $\varepsilon$, the same compactness argument applies on the fixed smooth family; its Killing field vanishes by the identical pole argument. This completes the uniform estimate.

For comparable unequal masses, the inequality on the inner region uses the corresponding $r_{H,j}$, and all cutoff, curvature and concentration estimates have the same bounds. The estimate does not require a reflection interchanging the inner horizons. The position estimates in this argument use only $|p_j|\leq C_0\varepsilon$ and the two-sided bound for the separation. In particular, the concentration estimate uses $r_j\leq r+C_0\varepsilon$, while the cutoff and companion estimates use the separation; no positive lower bound for an individual $|p_j|$ enters the proof. The estimate therefore remains uniform when a center passes through the origin. Finally, if $G'=G_\varepsilon+q$ with $|q|+\omega|\nabla q|\leq\delta$, we have
\be
|B_{G'}V-B_\varepsilon V| \leq C\delta(|\nabla V|+\omega^{-1}|V|).
\ee
Metric comparison and~\eqref{AX-eq-uniform-vector-proved} absorb this difference for sufficiently small $\delta$. The connection difference satisfies the same bound, so the left side remains equivalent to its $G'$ version. Constant rescaling restores any fixed $\ell>0$, with $\omega$ replaced by the corresponding length function.
\ese
\end{proof}

The estimate gives a weak inverse for $A_{G_\varepsilon}=\delta_{G_\varepsilon}B_\varepsilon$ on the completion of regular spatial vector fields orthogonal to the axial symmetry orbits in its left-hand norm. For a bounded linear functional $L$, coercivity makes a minimizing sequence for $\frac12\int|B_\varepsilon V|^2-L(V)$ bounded. Weak lower semicontinuity gives a minimizer, strict positivity gives uniqueness, and the Euler equation yields the weak solution with norm at most $C\|L\|$. For smooth admissible sources, local elliptic regularity in the invariant lifts gives smoothness at fixed positive masses. The weighted projection requires the further estimates of Lemma~\ref{AX-lem-vector-weight-upper}; Proposition~\ref{AX-lem-vector-weight-loss} shows why the energy bound alone is insufficient.

\paragraph{Improving the approximation near the inner horizons}
To compensate for the negative power of $\varepsilon$ in the inverse estimate, we match the constant companion field at each small horizon and adjust the transition radii. At inner radius $\varepsilon^a$, the leading matching errors are of orders $\varepsilon^{6-2a}$ and $\varepsilon^{1+a}$. Equating these orders gives $a=5/3$. At outer radius $\varepsilon^b$, the weighted mass and cosmological truncation errors are of orders $\varepsilon^{3-b(1+\gamma)}$ and $\varepsilon^{b(4-\gamma)}$. Their balance gives $b=3/5$. Each choice maximizes the smaller of its two competing error exponents within these matching estimates. We therefore put
\be
\label{AX-eq-improved-reference-constants}
r_{\rm in}\coloneqq\varepsilon^{5/3},\qquad r_{\rm out}\coloneqq\varepsilon^{3/5},\qquad c_{h,j}\coloneqq1+\frac{2m_{\bar j}}d-\frac{|p_j|^2}{2},\qquad c_{N,j}\coloneqq1-\frac{m_{\bar j}}d-\frac{|p_j|^2}{2}.
\ee
Here $m_{\bar j}$ is the companion mass, equal to $m$ in the equal-mass case. For comparable unequal masses the collar at $p_j$ uses its own mass $m_j$. The improved reference $\widetilde G_\varepsilon\coloneqq\widetilde h_\varepsilon+\widetilde N_\varepsilon^2dt_E^2$ is defined by~\eqref{AX-eq-vector-reference-patching}, with $(a,b)$ replaced by $(r_{\rm in},r_{\rm out})$ and with the Kottler fields determined by these $c_{h,j},c_{N,j}$. The intermediate and outer fields are unchanged. We require $|p_j|\leq Cd$. Figure~\ref{AX-fig-gluing-scales} shows the regions of this reference and derives the main error powers. Table~\ref{AX-tab-estimate-exponents} records how the subsequent inverse and nonlinear estimates turn the approximation order into the order of the final correction.

\begin{lemma}[A smaller Einstein error with the same fields on the flux annuli]
\label{AX-lem-improved-reference-error}
The reference $\widetilde G_\varepsilon$ has three regular nondegenerate horizons and regular axes, and equals $h_I+N_I^2dt_E^2$ on the enlarged annuli used for the axial fluxes. For every fixed $k\geq0$, $0<\alpha<1$ and admissible weight $\gamma$, one has
\be
\label{AX-eq-improved-reference-error}
\|S_3(\widetilde G_\varepsilon)\|_{k,\alpha,\gamma-2;\omega}\leq C\varepsilon^p.
\ee
The constants are uniform for $D$ in its fixed compact interval and for comparable positive masses. The uniform energy estimate for the auxiliary vector field of Proposition~\ref{AX-prop-uniform-vector} also holds for these references in their own regular angular lifts.
\end{lemma}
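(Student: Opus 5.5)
The plan is to repeat the regional decomposition of the proof of Lemma~\ref{AX-lem-reference-error}, with $(a,b)$ replaced by $(r_{\rm in},r_{\rm out})=(\varepsilon^{5/3},\varepsilon^{3/5})$. The new ingredient is that the matched constants~\eqref{AX-eq-improved-reference-constants} cancel the constant parts of the companion and cosmological fields at each inner transition. The qualitative statements come first.

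\emph{Qualitative properties.} Horizon regularity and nondegeneracy follow because the fields are exactly Kottler near $r_j=m_j/2$ and exactly de Sitter near $r=2$, together with Lemma~\ref{AX-lem-kottler-comparison}. Positivity on the inserted collars follows from~\eqref{AX-eq-kottler-collar-positivity}, with $L_j\asymp r_{\rm in}$ and $c_{h,j},c_{N,j}=1+\Ocal(\varepsilon^2)$. The flux annuli have radii $\asymp\vartheta d\asymp\varepsilon$, which lie strictly between $2r_{\rm in}$ and $r_{\rm out}$. Hence all cutoffs are constant there and the fields equal $(h_I,N_I)$.

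\emph{Error estimates.} On the exact regions $S_3=0$. In the intermediate region, the bound $\omega^2\|S_3(G_I)\|\le C((m/\omega)^2+d^4+\omega^4)$ from the previous proof gives, after weighting by $\omega^{-\gamma}$ and using $r_{\rm in}\lesssim\omega\lesssim r_{\rm out}$, the orders
\[
\varepsilon^{6-\frac53(2+\gamma)},\qquad \varepsilon^{4-\frac35\gamma},\qquad \varepsilon^{\frac35(4-\gamma)}.
\]
The first equals $\varepsilon^{(8-5\gamma)/3}$ and the last equals $\varepsilon^{p}$. On the outer transition, the mass and cosmological truncation errors give $\varepsilon^{3-\frac35(1+\gamma)}$ and $\varepsilon^{\frac35(4-\gamma)}=\varepsilon^p$. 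The first of these is larger than $p$ for $\gamma<0$.

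\emph{Main obstacle: the inner transition.} Here I expand $G_I$ about $p_j$. The self term is $m_j/r_j$, and the companion term plus the cosmological term equal a constant plus a linear part plus $\Ocal(r_j^2)$:
\[
\frac{m_{\bar j}}{|X-p_{\bar j}|}+\frac{|X|^2}{2}=\frac{m_{\bar j}}{d}+\frac{|p_j|^2}{2}+\Ocal\bigl((\varepsilon^2+\varepsilon)r_j\bigr)+\Ocal(r_j^2).
\]
The choice of $c_{h,j},c_{N,j}$ is exactly a match of the constant terms. I therefore plan to compare $G_I$ with $c_{h,j}$ times the Kottler metric (for the spatial part) and with $c_{N,j}$ times the Kottler lapse, rescaled so that the constants agree to first order. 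Two points need care.
\begin{itemize}
\item The products of constants $\varepsilon^2\cdot m/r_j$ must agree. This is where the specific form $(1+u)^4$, $(1-u)/(1+u)$ of~\eqref{AX-eq-kottler-radial-fields}, and the $Q_{H,j}$ factor in $\alpha_{K,j}$, enter. I would check this order by order, expecting the residual to be $\Ocal((m/r_{\rm in})^2)$.
\item The linear (gradient) term is not matched, so it leaves a scaled error of order $\varepsilon\, r_{\rm in}$.
\end{itemize}
The scaled difference is then bounded by $C(\varepsilon^{6-2\cdot5/3}+\varepsilon^{1+5/3}+\varepsilon^{2}\cdot\varepsilon^{4/3})$. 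The weighted error is this times $r_{\rm in}^{-\gamma}$, giving $\varepsilon^{8/3-5\gamma/3}$, which exceeds $p=(12-3\gamma)/5$ for $\gamma<0$. Smooth dependence of $S_3$ on the metric relative to the rescaled Kottler background, as in the previous proof, converts this into the bound for $S_3$.

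\emph{Collecting terms.} Taking the minimum of all exponents gives $p$, the outer cosmological truncation term. This proves~\eqref{AX-eq-improved-reference-error} with constants uniform in $D$ and in comparable masses, since only $|p_j|\le Cd$ and comparability are used.

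\emph{Energy estimate.} The energy estimate requires checking the inputs of Proposition~\ref{AX-prop-uniform-vector}:
\begin{itemize}
\item the Hardy inequality on collars with $L_j\asymp r_{\rm in}$ and $b_j\ge\beta_{\rm col}$;
\item the curvature bound $|\Ric|\le C_0+C_1\varepsilon\omega^{-2}$, which holds since the transition metrics remain $\Ocal(\varepsilon)$-close to flat in the scaled $C^2$ norm;
\item the de Sitter limit argument, which is unchanged.
\end{itemize}
These follow verbatim, with constants independent of the new radii.
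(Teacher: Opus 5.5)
Your proposal is correct and follows the paper's proof essentially step for step. It uses the same four regions, the same inner matching bound $C\bigl((m/r_{\rm in})^2+\varepsilon^2m/r_{\rm in}+mr_{\rm in}/d^2+dr_{\rm in}+r_{\rm in}^2\bigr)\le C\varepsilon^{8/3}$, the same intermediate and outer-transition exponents, and the same recheck of the Hardy, Ricci and Killing-limit inputs for the vector estimate.

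There are two harmless slips, neither of which affects the conclusion:
\begin{itemize}
\item The cross terms of order $\varepsilon^2m_j/r_j$ do not actually agree. The residuals are $2(1-c_{h,j})m_j/r_j$ in the metric and $(c_{N,j}-1)m_j/r_j$ in the lapse. Their size $\varepsilon^{10/3}$ is, however, already included in your bound.
\item The outer mass term $m\,r_{\rm out}^{-1-\gamma}$ has exponent exactly $p$, not larger. That balance with $r_{\rm out}^{4-\gamma}$ is precisely what fixes $r_{\rm out}=\varepsilon^{3/5}$.
\end{itemize}
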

\begin{proof}
\bse
The constants $c_{h,j},c_{N,j}$ are positive and differ from one by $\Ocal(\varepsilon^2)$. The resulting collars are exact Kottler collars with the same positive cosmological constant. Their horizon radius is $2m_j\sqrt{c_{h,j}}$ and their surface gravity is $c_{N,j}/(4m_j\sqrt{c_{h,j}})$. Condition~\eqref{AX-eq-kottler-collar-positivity} holds with $a=r_{\rm in}$ once $\varepsilon$ is small enough. The regular angular formula in Lemma~\ref{AX-lem-kottler-comparison} proves the horizon and pole regularity directly for this reference. Its local angular coordinate is determined by its own surface gravity.

Since $r_{\rm in}/d\to0$ and $d/r_{\rm out}\to0$, the inner supports are disjoint and avoid the annuli used for the axial fluxes. These annuli also lie in $r<r_{\rm out}$, where the outer cutoff equals one. The lapse and metric are positive where the cutoffs act, by the same convex-combination argument as before. The three horizons and their pole regularity therefore persist.

We give the error estimates on each region. On an inner transition, $r_j\asymp r_{\rm in}$. The companion potential and the cosmological quadratic term have been matched at $p_j$. Their variation on this annulus is bounded in every fixed scaled norm by $C(mr_{\rm in}/d^2+dr_{\rm in}+r_{\rm in}^2)$. Expansion of~\eqref{AX-eq-kottler-radial-fields} gives $h_{K,j}=(c_{h,j}+2m_j/r_j)\delta_X$ and $N_{K,j}=c_{N,j}-m_j/r_j$, up to scaled errors bounded by $C((m_j/r_{\rm in})^2+\varepsilon^2m_j/r_{\rm in}+r_{\rm in}^2)$. The radial anisotropic coefficient is $\Ocal(r_{\rm in}^2)$. Combining these bounds with the companion variation gives
\be
\label{AX-eq-improved-core-matching}
\begin{aligned}
&\|h_I-h_{K,j}\|_{C^{k+2,\alpha}_{r_{\rm in}}}
+\|N_I-N_{K,j}\|_{C^{k+2,\alpha}_{r_{\rm in}}}\\
&\leq C\bigl((m/r_{\rm in})^2+\varepsilon^2m/r_{\rm in}
+mr_{\rm in}/d^2+dr_{\rm in}+r_{\rm in}^2\bigr)
\leq C\varepsilon^{8/3}.
\end{aligned}
\ee
For example, the additional terms proportional to $m/r_j$ are $2(1-c_{h,j})m/r_j$ in the metric and $(c_{N,j}-1)m/r_j$ in the lapse; these give the second term on the right. The Taylor remainders of the exact Kottler factors give the first term; the terms containing $b_j-1$ and $Q_{H,j}$ are included in the final term. Derivatives of the fixed cutoffs, after multiplication by the corresponding powers of $r_{\rm in}$, remain bounded. Smooth dependence of the Ricci expression on the metric, its inverse and its first two derivatives therefore bounds the weighted Einstein error on each transition by $Cr_{\rm in}^{-\gamma}(\varepsilon^{8/3}+r_{\rm in}^2)$. On the exact Kottler collars, the Einstein tensor vanishes.

On the intermediate region, the fields coincide with those in Lemma~\ref{AX-lem-reference-error}. Its pointwise estimate, which follows from cancellation of the linearized static equations, still gives
\be
\label{AX-eq-improved-middle-error}
\|S_3(\widetilde G_\varepsilon)\|_{k,\alpha,\gamma-2;\omega;\mathrm{middle}} \leq C\bigl(m^2r_{\rm in}^{-2-\gamma} +d^4r_{\rm out}^{-\gamma}+r_{\rm out}^{4-\gamma}\bigr).
\ee
Here the extra final subscript restricts the supremum to that region. On the outer transition the intermediate fields differ from de Sitter by $\Ocal(m/r_{\rm out}+r_{\rm out}^4)$ in scaled norms. Since de Sitter has zero Einstein error, this contributes at most $C(mr_{\rm out}^{-1-\gamma}+r_{\rm out}^{4-\gamma})$. The outermost region is exactly de Sitter. Substituting the two radii, all contributions are bounded by
\be
\label{AX-eq-improved-error-exponents}
C\bigl(\varepsilon^{(8-5\gamma)/3}+\varepsilon^{(10-5\gamma)/3} +\varepsilon^{4-3\gamma/5}+\varepsilon^{(12-3\gamma)/5}\bigr) \leq C\varepsilon^p.
\ee
Indeed, $r_{\rm in}^{-\gamma}\varepsilon^{8/3}=m^2r_{\rm in}^{-2-\gamma}=\varepsilon^{(8-5\gamma)/3}$, while both outer-transition terms have exponent $p=(12-3\gamma)/5$. The other exponents are larger for $\gamma<0$. This proves the tensor error bound, including derivatives and H\"older seminorms. Comparable masses change only the constants.

For the vector estimate, we apply the proof of Proposition~\ref{AX-prop-uniform-vector} directly to this reference. On each Kottler collar, the factor $b_j$ is uniformly bounded below and the constants $\alpha_{K,j},c_{h,j},c_{N,j}$ are uniformly bounded above and below. The direct Hardy inequality and the radial measure comparisons therefore have uniform constants. In the remaining regions the gradient extension uses the new radius $r_{\rm in}$; its rescaled constants are independent of that radius, and the two extension annuli remain disjoint. The Ricci bound used in that proof has the same form $|\Ric|\leq C_0+C_1\varepsilon\omega^{-2}$: on the exact Kottler collars the Ricci tensor equals $3\widetilde G_\varepsilon$; on the inner transitions their scaled Ricci bound is $\Ocal(\varepsilon^{8/3})$; in the middle its quadratic error is bounded as above; and on the outer transition $m/r_{\rm out}=\varepsilon^{12/5}$. The cosmological terms have bounded unscaled Ricci norm and are absorbed in $C_0$. The integrated Korn identity and absorption therefore give the same preliminary inequality with an additional $L^2$ term. The outer metric is exactly de Sitter for $r\geq2r_{\rm out}\to0$. The Killing-field limit and the estimate excluding $L^2$ concentration near the centers are consequently identical. This proves the uniform energy estimate without comparing two different angular lifts.
\ese
\end{proof}

The exponent in~\eqref{AX-eq-improved-reference-error} concerns the unprojected Einstein tensor. The contracted Bianchi identity implies that the reference projection leaves this error unchanged, as proved in Proposition~\ref{AX-prop-projected-tensor-bound}. Corollary~\ref{AX-cor-improved-projection-quadratic} controls the nonlinear projection, and Theorem~\ref{AX-thm-physical-force-inverse} gives the inverse used in the construction.

\begin{figure}[!htbp]
\centering
\begin{tikzpicture}[x=1cm,y=1cm,line cap=round,line join=round,font=\small]
\path[use as bounding box] (-2.5,-3.35) rectangle (11.1,2.55);

\fill[blue!9] (0,0) circle[radius=2.2];
\fill[orange!23] (0,0) circle[radius=1.7];
\fill[green!7] (0,0) circle[radius=1.3];
\draw[blue!65!black,line width=.9pt] (0,0) circle[radius=2.2];
\draw[orange!65!black,line width=.5pt] (0,0) circle[radius=1.7];
\draw[orange!65!black,line width=.5pt] (0,0) circle[radius=1.3];

\foreach \a in {-.6,.6}{
  \fill[orange!40] (\a,0) circle[radius=.4];
  \fill[red!12] (\a,0) circle[radius=.25];
  \draw[orange!65!black,line width=.45pt] (\a,0) circle[radius=.4];
  \draw[red!60!black,line width=.45pt] (\a,0) circle[radius=.25];
  \fill[white] (\a,0) circle[radius=.095];
  \draw[red!75!black,line width=.9pt] (\a,0) circle[radius=.095];
  \fill[black] (\a,0) circle[radius=.017];
}

\draw[black!45,densely dashed,-{Stealth[length=1.6mm]}] (-2.3,0)--(2.48,0);
\node[anchor=north west] at (2.35,-.04) {$z$};
\draw[black!45,densely dashed] (-.6,.43)--(-.6,.82) (.6,.43)--(.6,.82);
\draw[{Stealth[length=1.5mm]}-{Stealth[length=1.5mm]}] (-.6,.76)--(.6,.76);
\node[above,inner sep=1.5pt] at (0,.76) {$d$};
\node at (-.6,-.63) {$p_-$};
\node at (.6,-.63) {$p_+$};
\node[above,blue!65!black] at (0,2.22) {$H_c:\ r=2$};
\node[fill=blue!9,inner sep=1.2pt] at (0,1.7) {$r=2r_{\rm out}$};
\node[fill=green!7,inner sep=1.2pt] at (0,1.3) {$r=r_{\rm out}$};

\fill[blue!9] (2.95,1.91) rectangle (3.24,2.16);
\draw[blue!65!black,line width=.45pt] (2.95,1.91) rectangle (3.24,2.16);
\node[anchor=west] at (3.45,2.07) {Exact de Sitter};
\node[anchor=west] at (3.45,1.66) {$(h_D,N_D),\qquad 2r_{\rm out}\leq r\leq2$};

\fill[orange!23] (2.95,.94) rectangle (3.24,1.19);
\draw[orange!65!black,line width=.45pt] (2.95,.94) rectangle (3.24,1.19);
\node[anchor=west] at (3.45,1.10) {Outer transition};
\node[anchor=west] at (3.45,.69) {$r_{\rm out}\leq r\leq2r_{\rm out}$};

\fill[green!7] (2.95,-.03) rectangle (3.24,.22);
\draw[green!45!black,line width=.45pt] (2.95,-.03) rectangle (3.24,.22);
\node[anchor=west] at (3.45,.13) {Intermediate fields $(h_I,N_I)$};
\node[anchor=west] at (3.45,-.28) {$r\leq r_{\rm out},\qquad r_\pm\geq2r_{\rm in}$};

\fill[orange!40] (2.95,-1.00) rectangle (3.24,-.75);
\draw[orange!65!black,line width=.45pt] (2.95,-1.00) rectangle (3.24,-.75);
\node[anchor=west] at (3.45,-.84) {Inner transitions};
\node[anchor=west] at (3.45,-1.25) {$r_{\rm in}\leq r_j\leq2r_{\rm in},\qquad j\in\{+,-\}$};

\fill[red!12] (2.95,-1.97) rectangle (3.24,-1.72);
\draw[red!60!black,line width=.45pt] (2.95,-1.97) rectangle (3.24,-1.72);
\node[anchor=west] at (3.45,-1.81) {Exact Kottler regions};
\node[anchor=west] at (3.45,-2.22) {$(h_{K,j},N_{K,j}),\qquad m_j/2\leq r_j\leq r_{\rm in}$};

\node at (4.3,-3.03) {$m_j\asymp\varepsilon^3\ll r_{\rm in}=\varepsilon^{5/3}\ll d\asymp\varepsilon\ll r_{\rm out}=\varepsilon^{3/5}\ll1$};
\end{tikzpicture}

\par\smallskip
\noindent\begin{minipage}[t]{.49\textwidth}
\centering\small
\textbf{Inner transition: $r_{\rm in}=\varepsilon^{5/3}$}\par\smallskip
\begin{tabular}{@{}ll@{}}
Matching term & Power of $\varepsilon$\\[2pt]
$(m/r_{\rm in})^2$ & $2(3-5/3)=8/3$\\[2pt]
$mr_{\rm in}/d^2$ & $3+5/3-2=8/3$\\[2pt]
$dr_{\rm in}$ & $1+5/3=8/3$\\[2pt]
$\varepsilon^2m/r_{\rm in},\ r_{\rm in}^2$ & $10/3$
\end{tabular}\par\smallskip
$\displaystyle r_{\rm in}^{-\gamma}\varepsilon^{8/3}=\varepsilon^{(8-5\gamma)/3}$
\end{minipage}\hfill
\begin{minipage}[t]{.49\textwidth}
\centering\small
\textbf{Outer transition: $r_{\rm out}=\varepsilon^{3/5}$}\par\smallskip
\begin{tabular}{@{}ll@{}}
Weighted error term & Power of $\varepsilon$\\[2pt]
$mr_{\rm out}^{-1-\gamma}$ & $3-\frac35(1+\gamma)$\\[4pt]
$r_{\rm out}^{4-\gamma}$ & $\frac35(4-\gamma)$
\end{tabular}\par\medskip
Both exponents equal $p=(12-3\gamma)/5$.\par\smallskip
The outer transition determines the global bound\\[2pt]
$\displaystyle\|S_3(\widetilde G_\varepsilon)\|_Y\leq C\varepsilon^p$.
\end{minipage}
\caption{Regions of the Kottler reference metric, before the global nonlinear correction, in reference coordinates with $\ell=1$. The transition radii are $r_{\rm in}$ and $r_{\rm out}$. Each transition annulus extends from its inner radius to twice that radius. The red inner circles are $r_j=m_j/2$. Across the inner annuli, the Kottler fields are joined to $(h_I,N_I)$. Across the outer annulus, the intermediate fields are joined to de Sitter, with cosmological horizon $r=2$. The lower panels derive the powers in~\eqref{AX-eq-improved-core-matching} and the weighted transition errors, using $m\asymp\varepsilon^3$ and $d\asymp\varepsilon$. The upper panel is schematic; distances and sizes are not drawn to scale.}
\label{AX-fig-gluing-scales}
\end{figure}

\begin{lemma}[Uniform Kottler collar bounds and convergence to Schwarzschild]
\label{AX-lem-kottler-comparison}
The Kottler collars~\eqref{AX-eq-kottler-fields} are regular when expressed in the local angular coordinates~\eqref{AX-eq-kottler-collar-identification}. For the local comparison, keep $r_{H,j}$ and $\kappa_j$ fixed and denote by $G_{j,Q}$ the lifted Kottler metric with cosmological coefficient $Q/r_{H,j}^2$. Its zero-coefficient member is written $G_{j,0}$ and corresponds to $Q=0$. The reference uses the member $Q=Q_{H,j}=\Lambda r_{H,j}^2$ for the prescribed $\Lambda$. For this reference member $Q=Q_{H,j}$, on an annulus $r_j\asymp t\geq m_j$ within the inserted collar, including a regular lifted chart when $t\asymp m_j$, one has
\be
\label{AX-eq-kottler-comparison-bounds}
\begin{aligned}
\|G_{j,Q}-G_{j,0}\|_{C^{k+2,\alpha}_t(G_{j,0})}&\leq C\Lambda t^2,\\
\|G_{j,Q}-G_{j,0}\|_{k+2,\alpha,\gamma;\omega;\{r_j\leq2r_{\rm in}\}}
&\leq C\varepsilon^{p_K},\qquad p_K\coloneqq\tfrac53(2-\gamma).
\end{aligned}
\ee
The norm in the first line measures tensor components relative to $G_{j,0}$; the second is restricted to the collar. Thus the $Q=0$ member is a local limiting model, and no second global comparison metric is needed. For $G=\widetilde G_\varepsilon$, the reference projection satisfies
\be
\label{AX-eq-kottler-projected-error}
\Pi_GS_3(G)=S_3(G),\qquad \|\Pi_GS_3(G)\|_Y\leq C\varepsilon^p.
\ee
The constants are uniform on the fixed parameter ranges.
\end{lemma}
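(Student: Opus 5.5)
The lemma has three parts: regularity in the lifted angular chart, the two comparison bounds, and the projection identity. I would prove them in that order.

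\emph{Regularity.} Write the Riemannian Kottler metric as $G_{K,j}=F_j^{-1}dR^2+N_{K,j}^2dt_E^2+R^2d\Omega^2$. Set $F_j(R)=(R-r_{H,j})\,g_j(R)$, where $g_j$ is analytic and positive on the collar; positivity follows from $b_j\geq\beta_{\rm col}$ and condition~\eqref{AX-eq-kottler-collar-positivity}. The coordinate $y_j=\sqrt{R/r_{H,j}-1}$ gives $R-r_{H,j}=r_{H,j}y_j^2$. Hence $F_j^{-1}dR^2=4r_{H,j}y_j^2dy_j^2/(r_{H,j}y_j^2g_j)\cdot r_{H,j}$, which is an analytic positive function of $y_j^2$ times $dy_j^2$. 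Similarly $N_{K,j}^2=\alpha_{K,j}^2r_{H,j}y_j^2g_j$. With $\vartheta_j=\kappa_jt_E$ of period $2\pi$, the identity $\kappa_j=\alpha_{K,j}F_j'(r_{H,j})/2$ makes the two leading coefficients of $dy_j^2$ and $y_j^2d\vartheta_j^2$ coincide at $y_j=0$. The metric is therefore $c(y_j^2)(dy_j^2+y_j^2d\vartheta_j^2)+\Ocal(y_j^4)$ with even analytic coefficients, so it is analytic in the Cartesian variables $(y_j\cos\vartheta_j,y_j\sin\vartheta_j)$. At the poles, combine this with the standard axis chart of $R^2d\Omega^2$ to obtain the simultaneous parity of Section~\ref{AX-section-boundaries}.

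\emph{Comparison bounds.} Fix $r_{H,j}$ and $\kappa_j$. The parameters $\mu_j(Q)=\tfrac{r_{H,j}}2(1-Q/3)$, $\alpha_{K,j}(Q)=2\kappa_jr_{H,j}/(1-Q)$ and $F_{j,Q}(R)=1-2\mu_j(Q)/R-QR^2/(3r_{H,j}^2)$ all depend analytically on $Q$ near $0$. One has $\partial_QF_{j,Q}=r_{H,j}/(3R)-R^2/(3r_{H,j}^2)$, which vanishes at $R=r_{H,j}$. Consequently $\partial_Q$ of the factor $g_{j,Q}$ is analytic and bounded by $C(R/r_{H,j})^2$. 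In the rescaled variable $R/t$ on an annulus $R\asymp t$, each scaled derivative of the difference $G_{j,Q}-G_{j,0}$, measured against $G_{j,0}$, is bounded by $CQ(t/r_{H,j})^2=C\Lambda t^2$. The same holds uniformly in the lifted chart for $t\asymp m_j$, since the regular form above depends analytically on $Q$. This gives the first line of~\eqref{AX-eq-kottler-comparison-bounds}.

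For the weighted norm, use $\omega\asymp r_j\asymp t$ on $r_j\leq2r_{\rm in}$. The sup of $(t)^{-\gamma}\Lambda t^2$ for $t\leq 2r_{\rm in}$ is attained at $t\asymp r_{\rm in}$, because $2-\gamma>0$. This gives $C r_{\rm in}^{2-\gamma}=C\varepsilon^{5(2-\gamma)/3}$, which is $p_K$.

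\emph{Projection.} For any metric, the contracted Bianchi identity gives $\delta_GS_3(G)=-\nabla^i(\Ric-\tfrac12\Scal\,G+3G)_{ij}=0$. The projection $\Pi_G$ fixes divergence-free tensors, as noted after~\eqref{AX-eq-projection}. The inverse $A_G^{-1}$ exists by Proposition~\ref{AX-prop-uniform-vector} and Lemma~\ref{AX-lem-improved-reference-error}, and is applied here to the zero source. Hence $\Pi_GS_3(G)=S_3(G)$. The bound then follows from Lemma~\ref{AX-lem-improved-reference-error}.

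\emph{Main obstacle.} The one delicate point is in the projection step. One must confirm that the divergence identity holds weakly across the horizon lifts with no boundary terms, so that $A_G^{-1}\delta_GS_3=0$ in the energy class. I would handle this by the regularity established in the first part: in each lifted chart $S_3(G)$ is smooth, so its divergence vanishes pointwise and the weak inverse of the zero functional is zero by uniqueness.
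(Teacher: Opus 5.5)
Your proposal is correct and follows essentially the same route as the paper. The shared steps are: the factorization $F_j(r_{H,j}x)=(1-x^{-1})b_Q(x)$, where your $g_j$ is $b_Q/R$; agreement of the coefficients of $dy_j^2$ and $y_j^2d\vartheta_j^2$ at $y_j=0$ via $\kappa_j=\alpha_{K,j}F_j'(r_{H,j})/2$; coefficient ratios $b_Q^{-1}$ and $b_Q/(1-Q)^2$ that differ from one by $\Ocal(Qx^2)=\Ocal(\Lambda t^2)$; the weighted bound $Cr_{\rm in}^{2-\gamma}$; and the contracted Bianchi identity for the projection. One factor is left implicit in your write-up: $(1-Q)^{-2}$ from $\alpha_{K,j}(Q)$. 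It contributes only $\Ocal(\Lambda r_{H,j}^2)\leq C\Lambda t^2$, since $t\geq m_j$ and $r_{H,j}\asymp m_j$.
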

\begin{proof}
\bse
We put $x\coloneqq R/r_{H,j}$ and take $Q=Q_{H,j}$. An exact factorization is
\be
\label{AX-eq-kottler-factorization}
F_j(r_{H,j}x)=(1-x^{-1})b_Q(x),\qquad b_Q(x)\coloneqq1-\frac Q3(x^2+x+1).
\ee
In the coordinates~\eqref{AX-eq-kottler-collar-identification}, the lifted metric is
\be
\label{AX-eq-kottler-regular-lift}
\frac{4r_{H,j}^2(1+y_j^2)}{b_Q(1+y_j^2)}dy_j^2 +r_{H,j}^2(1+y_j^2)^2d\Omega^2 +\frac{4r_{H,j}^2y_j^2b_Q(1+y_j^2)}{(1-Q)^2(1+y_j^2)}d\vartheta_j^2.
\ee
At $y_j=0$, the coefficients of $dy_j^2$ and $y_j^2d\vartheta_j^2$ agree. They are smooth in $y_j^2$, and their difference vanishes to second order. The metric is therefore smooth in $(y_j\cos\vartheta_j,y_j\sin\vartheta_j)$, also at the axial poles. Its $Q=0$ member has the same horizon radius and surface gravity. This family supplies the rescaled inner limit.

On an annulus $r_j\asymp t\geq m_j$, $R\asymp t$. The radial coefficient ratio between $G_{j,Q}$ and $G_{j,0}$ is $b_Q^{-1}$, and the coefficient of $dt_E^2$ ratio is $b_Q/(1-Q)^2$. Their differences from one, with any fixed number of scaled derivatives, are bounded by $C\Lambda t^2$, because $Qx^2=\Lambda R^2$. The regular angular formula proves the same estimate near the horizon with $t\asymp m_j$. This gives the first line of~\eqref{AX-eq-kottler-comparison-bounds}. Multiplication by $\omega^{-\gamma}$ and the bound $\omega\leq Cr_{\rm in}$ on the collar give $Cr_{\rm in}^{2-\gamma}=C\varepsilon^{p_K}$. These are local coefficient estimates, independent of a global tensor comparison.

Finally, the contracted Bianchi identity gives $\delta_GS_3(G)=0$. The definition~\eqref{AX-eq-projection} consequently gives $\Pi_GS_3(G)=S_3(G)$. The norm bound is Lemma~\ref{AX-lem-improved-reference-error}.
\ese
\end{proof}


\subsection{Axial fluxes of the approximation and the metric correction}
\label{AX-section-flux-proof}

We compute the leading axial fluxes from the intermediate fields and estimate the effect of the global correction. A closed-sphere identity cancels every linearized flux contribution, including those of spatial corrections that are not conformal. The numerical coefficients in the identities below are fixed by the curvature formulas and the chosen normalizations. In particular, the coefficient $8\pi$ is obtained by integrating the dipole term over a unit sphere. The annulus widths are the freely chosen constants $a_{\rm ann},b_{\rm ann}$ fixed above.

The leading axial flux therefore comes from the quadratic term of the approximation, while the correction contributes through its interaction with that approximation and through its own quadratic term. This is the reason that a correction of size $o(\varepsilon^2)$, rather than $o(\varepsilon^4)$, suffices on the separation scale.

We put $a_\varepsilon\coloneqq\varepsilon\ell$, $t\coloneqq\varepsilon^2$ and $x\coloneqq X/a_\varepsilon$. After absorbing the constant normalization of the auxiliary coordinate, we use a local auxiliary coordinate $\tau$ and define
\be
\label{AX-eq-flux-reference}
\begin{aligned}
&p_\pm^0\coloneqq\pm(D/2)e_z,\qquad U(x)\coloneqq|x-p_+^0|^{-1}+|x-p_-^0|^{-1},\\
&Q(x)\coloneqq|x|^2/2,\qquad G_t\coloneqq(1-t(U+Q))^2d\tau^2+(1+t(2U-Q))\delta.
\end{aligned}
\ee
Here $\delta$ is the Euclidean spatial metric. Around the sphere $\Sigma_D\coloneqq\{|x-p_+^0|=\vartheta D\}$, take the annular neighborhood $\mathcal A_D\coloneqq\{a_{\rm ann}\vartheta D<|x-p_+^0|<b_{\rm ann}\vartheta D\}$. The inequality $2b_{\rm ann}\vartheta<1$ and the compact positive range of $D$ keep these annuli uniformly separated from both centers after rescaling. All derivatives and norms in the following calculation are Euclidean, including the component in the auxiliary coordinate in the flat product.
\begin{proposition}[The leading axial flux and its change under a metric correction]
\label{AX-prop-flux-stability}
For a metric independent of $\tau$ $\widehat g=G_t+k$ on $\mathcal A_D$, with $\rho_k\coloneqq\|k\|_{C^2(\mathcal A_D)}$ sufficiently small, the vector flux satisfies
\be
\label{AX-eq-flux-stability}
\bigl|\int_{\Sigma_D}(S_{3t}(\widehat g))_{ij}n^i\,dA -8\pi t^2B(D)(e_z)_j\bigr| \leq C\bigl(t^3+t\rho_k+\rho_k^2\bigr).
\ee
The constant is uniform for $D\in I$ and sufficiently small $t\geq0$. The tensor and the flux have the conventions of~\eqref{AX-eq-projection} and~\eqref{AX-eq-spatial-flux}. In particular, if a metric $g$ has this form after the rescaling $X=a_\varepsilon x$ and $\rho_k\leq C\varepsilon^{2+\delta_{\rm fl}}$ for a fixed $0<\delta_{\rm fl}\leq2$, one has
\be
\label{AX-eq-physical-flux-proved}
\bigl|\operatorname{Flux}_+(S_\Lambda(g))-8\pi\varepsilon^4B(D)\bigr| \leq C\varepsilon^{4+\delta_{\rm fl}}.
\ee
Thus the axial flux expansion follows from the reference formula and the correction estimate on $\mathcal A_D$.
\end{proposition}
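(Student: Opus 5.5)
The starting point is the contracted Bianchi identity. For any metric $g$ independent of $\tau$ on the annulus, $\operatorname{div}_gS_{3t}(g)=0$. Write this divergence in Euclidean coordinates as $\partial_i(\sqrt{\det g}\,S^i{}_j)=\tfrac12\sqrt{\det g}\,(\partial_jg_{ab})S^{ab}$, where the right side is the Christoffel contribution from a lowered index. The $\tau$-derivatives vanish, so the Euclidean flux of the densitized tensor $\mathcal S_j{}^i=\sqrt{\det g}\,S^i{}_j$ through a closed sphere $\Sigma\subset\mathcal A_D$ changes with $\Sigma$ only through a bulk integral that is quadratic in perturbations. The flux in the statement uses Euclidean contractions rather than $\mathcal S$, but the two differ by terms of order (metric perturbation)$\times S$. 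Since $S=\Ocal(t+\rho_k)$ and the metric perturbation is $\Ocal(t+\rho_k)$, this difference is quadratic and fits inside the right side of~\eqref{AX-eq-flux-stability}, except for its $t^2$ part, which I will compute explicitly.

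\emph{Step 1: linear cancellation.} Expand $S_{3t}(G_t+k)=S^{(0)}+L(tG_1+k)+\mathcal Q(tG_1+k)+\dots$ around the flat product $\delta+d\tau^2$. Here $G_1$ is the first-order part of $G_t$, and the cosmological term $3t\,g$ carries its own factor of $t$. The background $S_{3t}(\delta+d\tau^2)=3t(\delta+d\tau^2)$ has zero axial flux because $\int_\Sigma n\,dA=0$. The key point is that at the flat product the linearized Einstein operator $L$ applied to any $\tau$-independent $C^2$ tensor $k$ is exactly divergence-free in the Euclidean sense, by the linearized Bianchi identity. Its flux over $\Sigma_D$ therefore equals its flux over a tiny sphere inside the annulus. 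Better still, $L k=\partial\Phi(k)$ can be written as the divergence of an antisymmetric superpotential $\Phi$, namely the Euclidean ADM/Abbott–Deser superpotential. Hence $\int_{\Sigma}(Lk)_{ij}n^i\,dA=0$ exactly for every closed sphere, by Stokes on the closed surface. This kills $tL(G_1)$, and $L(G_1)$ essentially vanishes anyway since $U$ and $Q$ solve the linearized equations. It also kills $L(k)$, which is the dangerous $\rho_k$ term. The cosmological cross term $3t\,k$ is $\Ocal(t\rho_k)$.

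\emph{Step 2: quadratic term of the reference.} Compute $\mathcal Q(G_1)$ for the conformally flat product built from $U+Q$. Using the identity of Step 1 at second order, its flux over $\Sigma_D$ reduces, up to $\Ocal(t^3)$, to a bilinear expression in the singular part $|x-p_+^0|^{-1}$ and the smooth part $|x-p_-^0|^{-1}+Q$ near $p_+^0$. The self-interaction of the monopole is spherically symmetric and gives zero axial flux. The smooth-squared part is divergence-free to quadratic order in a smooth region, so the flux can be moved inside and shrunk, which gives zero. Only the cross term survives. It is independent of the radius and, by a Gauss-law computation on the unit sphere, equals $4\pi$ times the monopole strength times twice the gradient of the smooth potential, that is, $8\pi$ times the negative axial derivative of $|x-p_-^0|^{-1}+Q$ at $p_+^0$. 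This equals $8\pi B(D)$, and it comes with the factor $t^2$. Fixing the coefficient $8\pi$ with the correct sign is the step I expect to be the main computational obstacle; I would check it against an exact Schwarzschild–de Sitter-type superposition.

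\emph{Step 3: remainders and rescaling.} The cross terms $\mathcal Q(tG_1,k)$ are $\Ocal(t\rho_k)$, $\mathcal Q(k,k)$ is $\Ocal(\rho_k^2)$, and the cubic terms are $\Ocal(t^3+\rho_k^3)$. All are uniform because $\mathcal A_D$ stays a fixed distance from both centers for $D\in I$. This proves~\eqref{AX-eq-flux-stability}. For~\eqref{AX-eq-physical-flux-proved}, undo the scaling $X=a_\varepsilon x$. Then $S_\Lambda$ scales by $a_\varepsilon^{-2}$ and $dA$ by $a_\varepsilon^2$, so the flux is scale invariant, with $t=\varepsilon^2$ and $S_{3t}$ matching $\Lambda a_\varepsilon^2=3\varepsilon^2$. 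Substituting $\rho_k\le C\varepsilon^{2+\delta_{\rm fl}}$ gives errors $\varepsilon^6+\varepsilon^{4+\delta_{\rm fl}}+\varepsilon^{4+2\delta_{\rm fl}}\le C\varepsilon^{4+\delta_{\rm fl}}$, using $\delta_{\rm fl}\le2$.
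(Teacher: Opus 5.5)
Your overall route is the paper's. Linear-at-flat terms have zero closed-sphere flux; your antisymmetric superpotential replaces the paper's cutoff extension of $k$ into the ball, and it also covers linear terms singular at the enclosed center, which Appendix~C treats separately. The $t^2$ flux of the reference is read off from a divergence-free quadratic tensor by shrinking $\Sigma_D$ to $p_+^0$. The correction is handled by a finite-increment bound.

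\textbf{The gap is in Step~2: the $t^2$ coefficient of $S_{3t}(G_t)$ is not $\mathcal Q(G_1)$.} Write $G_t=G_0+tG_1+t^2G_2$, with $G_1=-2(U+Q)\,d\tau^2+(2U-Q)\delta$ and $G_2=(U+Q)^2d\tau^2$. The $t^2$ coefficient is then $T^{(2)}\coloneqq L(G_2)+\mathcal Q(G_1)+3G_1$. The last term comes from $\Lambda_{\rm loc}=3t$ times the first-order metric. It is exactly of order $t^2$, so none of your error terms absorbs it. The term $L(G_2)$ is harmless by your Step~1. The spatial part of $3G_1$, however, is $(6U-3Q)\delta_{ij}$, and its flux is nonzero. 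With $U_r\coloneqq|x-p_-^0|^{-1}$ and $\rho\coloneqq\vartheta D$, the mean-value property gives $\int_{\Sigma_D}(6U-3Q)n_j\,dA=\tfrac{4\pi\rho^3}{3}\bigl(6\partial_jU_r(p_+^0)-3(p_+^0)_j\bigr)$. Its axial component is $-2\pi\rho^3(4D^{-2}+D)\neq0$.

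This term is also what your divergence claims require. Since $L(G_1)=-3G_0\neq0$, rather than ``essentially'' zero, the second-order Bianchi identity gives $\partial_iT^{(2)}_{ij}=0$ only for the full coefficient. The smooth-squared part of $\mathcal Q(G_1)$ alone has divergence $-\partial_j(6U_r-3Q)$. So it is not divergence-free, and its flux through $\Sigma_D$ is minus the nonzero quantity above. As written, Step~2 therefore yields a radius-dependent, wrong value.

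The repair is to run your decomposition on $T^{(2)}$:
\begin{itemize}
\item The smooth part together with $(6U_r-3Q)\delta_{ij}$ is divergence-free in the ball, so its flux is zero.
\item The monopole square has zero axial flux.
\item The cross part, including $6|x-p_+^0|^{-1}\delta_{ij}$ (which has zero flux on concentric spheres), is exactly conserved. Its small-sphere limit is the dipole term $2C_j/r^2$ with $C=\nabla(U_r+Q)(p_+^0)$, which gives $8\pi B(D)e_z$.
\end{itemize}
This is precisely the paper's explicit tensor $K$, for which $\partial_iK_{ij}=0$ is checked directly.

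Watch the sign in your wording as well. The flux is $8\pi$ times the axial derivative of $|x-p_-^0|^{-1}+Q$ at $p_+^0$, namely $D/2-D^{-2}=B(D)$. It is not $8\pi$ times the negative of that derivative.
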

\begin{proof}
\bse
\emph{1. Closed-sphere cancellation.} We first establish the closed-sphere cancellation. A prime denotes a metric variation with the cosmological coefficient held fixed; $S'_0$ denotes this variation at the flat metric and zero cosmological coefficient. Greek indices range over the four flat product coordinates, and $G_0=d\tau^2+\delta$. The first variation of the Christoffel coefficients is $\dot\Gamma^\gamma_{\alpha\beta}=\tfrac12(\del_\alpha k_{\beta\gamma}+\del_\beta k_{\alpha\gamma}-\del_\gamma k_{\alpha\beta})$. Taking its two contractions gives
\be
\label{AX-eq-flat-linear-tensor}
\begin{aligned}
(\Ric'_0[k])_{\alpha\beta}
&=\tfrac12\bigl(\del_\gamma\del_\alpha k_{\beta\gamma}
+\del_\gamma\del_\beta k_{\alpha\gamma}
-\Delta_4 k_{\alpha\beta}
-\del_\alpha\del_\beta\operatorname{tr}_0 k\bigr),\\
\Scal'_0[k]&=\del_\alpha\del_\beta k_{\alpha\beta}
-\Delta_4\operatorname{tr}_0 k,\qquad
S'_0[k]=\Ric'_0[k]-\tfrac12\Scal'_0[k]G_0.
\end{aligned}
\ee
Here $\Delta_4$ and $\operatorname{tr}_0$ are the flat four-dimensional Laplacian and trace. Commuting derivatives in the first line proves $\del_\alpha(\Ric'_0[k])_{\alpha\beta}=\tfrac12\del_\beta\Scal'_0[k]$, so $\del_\alpha(S'_0[k])_{\alpha\beta}=0$. Independence of $\tau$ removes the derivative in $\tau$ from this identity. We multiply $k$ by a spatial cutoff supported in $\mathcal A_D$ and equal to one near $\Sigma_D$, then extend its components by zero through the enclosed ball. The resulting tensor is $C^2$ on that ball and agrees with $k$ near the sphere. For a smooth tensor the divergence theorem gives the identity below. For a $C^2$ tensor, smooth approximation converges in $C^2$ near the sphere and proves the same identity:
\be
\label{AX-eq-flat-zero-flux}
\int_{\Sigma_D}(S'_0[k])_{ij}n^i\,dA=0.
\ee
Only the tensor in the linear calculation is extended. The black-hole metric need not extend through a horizon for this argument.
\ese

\emph{2. The quadratic flux of the reference.} The explicit calculation in Appendix~\ref{AX-appendix-quadratic-flux} gives
$\int_{\Sigma_D}(S_{3t}(G_t))_{ij}n^i\,dA=8\pi t^2B(D)(e_z)_j+\Ocal_I(t^3)$.
The closed-sphere cancellation removes the Hessian part of the quadratic tensor. The remaining dipole term integrates to $8\pi B(D)e_z$.

\bse
\emph{3. Stability under the nonlinear correction.} We finally transfer this calculation to a finite correction. The Einstein tensor is a smooth function of the metric, its inverse, and its first two derivatives on uniformly positive metrics. On $\mathcal A_D$, the background differs from $G_0$ in $C^2$ by $\Ocal_I(t)$. Its differentiated tensor map therefore satisfies
\be
\label{AX-eq-flux-finite-increment}
\begin{aligned}
&S_{3t}(G_t+k)-S_{3t}(G_t)-S'_0[k] =\int_0^1\bigl(S'_{3t,G_t+uk}-S'_{0,G_0}\bigr)[k]\,du,\\
&\bigl\|S_{3t}(G_t+k)-S_{3t}(G_t)-S'_0[k]\bigr\|_{C^0} \leq C(t\rho_k+\rho_k^2).
\end{aligned}
\ee
For sufficiently small $t,\rho_k$ the path remains positive, and uniform bounds on inverse-metric coefficients and their first two metric derivatives prove the last inequality directly from the local Ricci formula; the cosmological variation $3tk$ is included in the first product. Integration, followed by~\eqref{AX-eq-flat-zero-flux}, proves~\eqref{AX-eq-flux-stability}.

To return to the original coordinates, the spatial map is $X=a_\varepsilon x$ and the map in the auxiliary coordinate includes the same length factor after normalization. Denoting this coordinate map by $\Phi$, we have $\widehat g=a_\varepsilon^{-2}\Phi^*g$. Ricci curvature as a covariant tensor is unchanged by constant metric multiplication and $\Lambda a_\varepsilon^2=3t$. Thus we obtain
\be
\label{AX-eq-flux-scaling}
S_{3t}(\widehat g)=\Phi^*S_\Lambda(g),\qquad (S_{3t}(\widehat g))_{ij}=a_\varepsilon^2(S_\Lambda(g))_{ij}\circ\Phi,\qquad dA_X=a_\varepsilon^2dA_x.
\ee
The tensor and area factors cancel in the spatial flux, while the Euclidean normal is unchanged. If $\rho_k\leq C\varepsilon^{2+\delta_{\rm fl}}$, the three errors are bounded by $C(\varepsilon^6+\varepsilon^{4+\delta_{\rm fl}}+\varepsilon^{4+2\delta_{\rm fl}})\leq C_1\varepsilon^{4+\delta_{\rm fl}}$. This proves~\eqref{AX-eq-physical-flux-proved}. In particular, a weighted correction of norm $\Ocal(\varepsilon^\nu)$, with $2-\gamma<\nu\leq2-2\gamma$, gives $\delta_{\rm fl}=\nu+\gamma-2>0$ by~\eqref{AX-eq-correction-comparison}.
\ese
\end{proof}


\subsection{From the reduced Einstein equation to the Einstein equation}
\label{AX-section-Einstein-recovery}

We show that a projected solution satisfies the Einstein equation when the axial obstructions vanish. The contracted Bianchi identity reduces this question to an energy estimate for the auxiliary vector field and a bound for its flux contribution. We use the normalization $B_GV=\tfrac12\mathcal L_VG$. The factors $1/2$ in the identities below follow from this definition. The estimates involve constants depending on the uniform coefficient bounds and on the coercivity constant $C_K$; smallness of the metric correction is chosen relative to these constants.

The reduced Einstein equation still permits a term $B_GV$. Here the reduction consists in applying the projection $\Pi_G$ onto divergence-free tensors and retaining the two obstruction terms; a solution of this equation need not yet satisfy the full Einstein equation. The uniform vector estimate required in the proposition is proved in Section~\ref{AX-section-uniform-vector}. Integrated brackets denote the $L^2$ tensor pairing over one fixed integration period in $t_E$, with the indicated metric. The coefficient regularity and parity conditions are those of Section~\ref{AX-section-boundaries}.
\begin{proposition}[The Bianchi identity and removal of the auxiliary vector term]
\label{AX-lem-Einstein-recovery}
Fix a regular static auxiliary metric $g=G+q_g$, a scalar $\lambda\in\mathbb R$, a regular invariant symmetric tensor $\psi$ and a regular invariant spatial vector field orthogonal to the axial symmetry orbits $V$ satisfying
\be
\label{AX-eq-recovery-decomposition}
S_\Lambda(g)=\lambda\psi+B_GV,\qquad \delta_G\psi=0.
\ee
Assume also that $V$ obeys
\be
\label{AX-eq-recovery-vector-bound}
\|\nabla^GV\|_{L^2(G)}+\|\omega^{-1}V\|_{L^2(G)} \leq C_K\|B_GV\|_{L^2(G)}.
\ee
If $\eta=\|q_g\|_{C^{2,\alpha}_{0;\omega}}$ is sufficiently small in terms of $C_K$, one has
\be
\label{AX-eq-recovery-conclusion}
\|B_GV\|_{L^2(G)}\leq C|\lambda|\eta\|\psi\|_{L^2(G)}.
\ee
In particular, $\lambda=0$ implies $S_\Lambda(g)=0$. The constants are uniform whenever the vector constant and the coefficient comparisons are uniform. Assume, in addition, that the reference metric and its inverse have uniform scaled coefficient bounds on an enlarged interior annulus $\mathcal A_d$ containing the flux sphere, that $N_0$ is uniformly bounded above and below there, and that
\be
\label{AX-eq-recovery-local-test}
\|\psi\|_{L^2(G)}\leq C(P_E/m)^{1/2},\qquad d^2\|\psi\|_{C^{1,\alpha}_d(\mathcal A_d)}\leq C.
\ee
With $\eta_{\rm sep}\coloneqq\|q_g\|_{C^{2,\alpha}_d(\mathcal A_d)}$, one then has
\be
\label{AX-eq-recovery-local-flux}
|\operatorname{Flux}_+(B_GV)| \leq C|\lambda|\Bigl(\eta\sqrt{d/m}+\eta_{\rm sep}\Bigr).
\ee
\end{proposition}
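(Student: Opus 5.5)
The plan is to apply the divergence $\delta_g$ of the current metric to the decomposition~\eqref{AX-eq-recovery-decomposition} and use the contracted Bianchi identity $\delta_gS_\Lambda(g)=0$. This gives
\[
\delta_gB_GV=-\lambda\,\delta_g\psi=-\lambda(\delta_g-\delta_G)\psi,
\]
since $\delta_G\psi=0$. The difference $\delta_g-\delta_G$ is a first-order operator whose coefficients are bounded by the scaled $C^1$ norm of $q_g$. More precisely, $(\delta_g-\delta_G)T$ is a sum of terms $q_g*\nabla T$ and $\nabla q_g*T$, with pointwise bound $C\eta(|\nabla T|+\omega^{-1}|T|)$. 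Throughout, the volume forms and norms of $g$ and $G$ are equivalent up to factors $1+\Ocal(\eta)$.

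Next I pair this identity with $V$ in $L^2(G)$ and integrate by parts. The boundary terms at the horizon, axis and pole tubes vanish in the regular angular lifts by the parity conditions of Section~\ref{AX-section-boundaries}.

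On the left, $\langle\delta_gB_GV,V\rangle$ equals $\|B_GV\|^2_{L^2(G)}$ up to an error $\langle(\delta_g-\delta_G)B_GV,V\rangle$. After moving the derivative onto $V$, this error is bounded by
\[
C\eta\|B_GV\|\bigl(\|\nabla V\|+\|\omega^{-1}V\|\bigr)\le CC_K\eta\|B_GV\|^2.
\]
On the right, integrating by parts in the same way gives
\[
|\lambda\langle(\delta_g-\delta_G)\psi,V\rangle|\le C|\lambda|\eta\|\psi\|\bigl(\|\nabla V\|+\|\omega^{-1}V\|\bigr)\le CC_K|\lambda|\eta\|\psi\|\|B_GV\|.
\]
The point is to let the derivative fall on $V$, never on $\psi$. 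The weight $\omega^{-1}$ produced by the derivatives of $q_g$ is exactly what the Hardy part of~\eqref{AX-eq-recovery-vector-bound} controls. For $\eta$ small relative to $C_K$, the left error is absorbed and~\eqref{AX-eq-recovery-conclusion} follows. When $\lambda=0$ we get $B_GV=0$, hence $S_\Lambda(g)=\lambda\psi+B_GV=0$.

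For the flux bound, I write $\operatorname{Flux}_+(B_GV)$ as a volume integral over $\mathcal A_d$. I average over radii of spheres in the annulus using a radial cutoff whose derivative has size $d^{-1}$, supported in $\mathcal A_d$. This is legitimate because the flux sphere can be moved within the annulus, at the cost of the divergence of $B_GV$ there. Then
\[
\operatorname{Flux}_+(B_GV)=\int\chi'(r)\,(B_GV)(n,e_z)+\int\chi\,(\operatorname{div}B_GV)(e_z),
\]
with Euclidean quantities replaced by $G$-quantities up to bounded coefficients. I would apply the Bianchi relation again in the second term, obtaining $\lambda(\delta_g-\delta_G)\psi$ plus $(\delta_g-\delta_G)B_GV$.

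The $\lambda\psi$ contribution is estimated locally on $\mathcal A_d$ using $\eta_{\rm sep}$ and $d^2\|\psi\|_{C^{1,\alpha}_d}\le C$. The annulus has volume of order $d^3P_E$, and the factors are $d^{-1}$ from the derivatives and $d^{-2}$ from $\psi$. After the flux normalization, this yields a contribution $C|\lambda|\eta_{\rm sep}$. The first term and the $B_GV$ parts are bounded by Cauchy--Schwarz: the cutoff factor contributes $\|\chi'\|_{L^2}\sim d^{-1}(d^3P_E)^{1/2}$, multiplied by the global bound $\|B_GV\|\le C|\lambda|\eta\|\psi\|\le C|\lambda|\eta(P_E/m)^{1/2}$. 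The $P_E$ factors cancel against the normalization of the flux, which does not integrate in $t_E$, leaving $C|\lambda|\eta\sqrt{d/m}$.

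The main obstacle I expect is the bookkeeping of the $P_E$ normalization between the spatial flux and the four-dimensional $L^2$ norms, together with justifying the vanishing of boundary terms at the poles under the separate angular lifts.
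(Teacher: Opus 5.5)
Your proof of \eqref{AX-eq-recovery-conclusion} is correct. Pairing $\delta_gB_GV=-\lambda(\delta_g-\delta_G)\psi$ with $V$ in $L^2(G)$, and integrating the $q_g*\nabla$ parts by parts onto $V$, is the same Bianchi identity the paper uses. The paper pairs $S_\Lambda(g)$ with $B_gV=B_GV+\tfrac12\mathcal L_Vq_g$ in $L^2(g)$ and uses $\langle\psi,B_GV\rangle_G=0$; you only place the metric difference on the other factor.

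The flux bound \eqref{AX-eq-recovery-local-flux} has a genuine gap. In $\operatorname{Flux}_+(B_GV)=\int\chi'(B_GV)(n,e_z)+\int\chi\,\partial_i(B_GV)_{iz}$, the cutoff must equal one on $\Sigma_+$. Two terms in the second integral cause trouble:
\begin{itemize}
\item The Euclidean divergence $\partial_i(B_GV)_{iz}$ is not $-(\delta_GB_GV)_z$ ``up to bounded coefficients''. The difference is a first-order operator applied to $B_GV$, with coefficients built from $G-\delta$, the lapse and their derivatives, including the $t_E$ contraction.
\item The term $(\delta_g-\delta_G)B_GV$ that you get from the Bianchi relation contains $q_g*\nabla B_GV$.
\end{itemize}
Both are second derivatives of $V$, and Cauchy--Schwarz against $\|B_GV\|_{L^2}$ does not control them. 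Integrating them by parts does not help. Because $\chi=1$ on $\Sigma_+$, you pick up a boundary term $\int_{\Sigma_+}c\cdot B_GV\,dA$ with nonzero coefficients $c$. That is again a sphere integral of $B_GV$, and it needs trace or pointwise control that the energy bound does not supply. Any volume representation of a single-sphere flux uses the exact divergence of the integrand up to $\Sigma_+$, and here that divergence contains these derivative terms.

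The missing ingredient is interior elliptic regularity for the vector equation on the annulus. The operator $\delta_gB_G$ is a small perturbation of $\delta_GB_G$, whose principal symbol $\tfrac12(|\xi|_G^2V^\flat+\xi\,\xi(V))$ is uniformly positive. After rescaling by $d$ on nested annuli,
\[
\|V\|_{C^{1,\alpha}_d(\mathcal A'_d)}\le C\Bigl((P_Ed^3)^{-1/2}\|V\|_{L^2(G;\mathcal A_d)}+d^2\|\lambda(\delta_g-\delta_G)\psi\|_{C^{0,\alpha}_d(\mathcal A_d)}\Bigr).
\]
Since $\omega\asymp d$ on $\mathcal A_d$, the Hardy part of \eqref{AX-eq-recovery-vector-bound} gives $\|V\|_{L^2(G;\mathcal A_d)}\le Cd\,C_K\|B_GV\|_{L^2(G)}$. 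This needs the weighted bound on $V$ itself, not only on $B_GV$. The bound $d^2\|\psi\|_{C^{1,\alpha}_d}\le C$ gives $d^2\|\lambda(\delta_g-\delta_G)\psi\|_{C^{0,\alpha}_d}\le C|\lambda|\eta_{\rm sep}d^{-1}$. Then $|B_GV|\le Cd^{-1}\|V\|_{C^{1,\alpha}_d}$ on $\Sigma_+$, a sphere of area $\Ocal(d^2)$. This yields
\[
|\operatorname{Flux}_+(B_GV)|\le C\bigl(P_E^{-1/2}d^{1/2}\|B_GV\|_{L^2(G)}+|\lambda|\eta_{\rm sep}\bigr).
\]
Combining this with \eqref{AX-eq-recovery-conclusion} and \eqref{AX-eq-recovery-local-test} gives \eqref{AX-eq-recovery-local-flux}. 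With this estimate in hand, the radial averaging is unnecessary.
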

\begin{proof}
\bse
\emph{1. Global energy estimate.} Contracting the second Bianchi identity gives $\delta_gS_\Lambda(g)=0$. We excise small horizon and axis tubes before integrating against $V$. For fixed positive mass, all fields have bounded components in the regular local angular lifts. The tube area contains its vanishing polar radius, while the vector and tensor remain bounded. The boundary integrals tend to zero; the joint pole conditions give the same conclusion at the intersections of tubes. The boundary limits therefore justify the integrated identity $\langle S_\Lambda(g),B_gV\rangle_g=0$. Since $B_gV=B_GV+\tfrac12\mathcal L_Vq_g$, the decomposition gives
\be
\label{AX-eq-bianchi-pairing}
\|B_GV\|_{L^2(g)}^2 =-\bigl\langle B_GV,\tfrac12\mathcal L_Vq_g\bigr\rangle_g -\lambda\langle\psi,B_gV\rangle_g.
\ee
The weighted coefficient bound implies $|q_g|_G+\omega|\nabla^Gq_g|_G\leq C\eta$. The formula for a tensor Lie derivative consequently gives
\be
\label{AX-eq-recovery-Lie-estimate}
\bigl|\tfrac12\mathcal L_Vq_g\bigr|_G \leq C\eta\bigl(|\nabla^GV|_G+\omega^{-1}|V|_G\bigr),\qquad \bigl\|\tfrac12\mathcal L_Vq_g\bigr\|_{L^2(G)} \leq C C_K\eta\|B_GV\|_{L^2(G)}.
\ee
Small $\eta$ also compares the two inverse metrics, volume measures and tensor pairings, with relative error at most $C\eta$. The identity $\delta_G\psi=0$ and the same boundary argument give $\langle\psi,B_GV\rangle_G=0$. We subtract this zero pairing from $\langle\psi,B_gV\rangle_g$ and use the Lie estimate to obtain
\be
\label{AX-eq-recovery-absorption}
\begin{aligned}
&|\langle\psi,B_gV\rangle_g| \leq C(1+C_K)\eta\|\psi\|_{L^2(G)}\|B_GV\|_{L^2(G)},\\
&\|B_GV\|_{L^2(G)}^2 \leq C(1+C_K)\eta\|B_GV\|_{L^2(G)}^2\qquad \hspace{9mm}+C(1+C_K)|\lambda|\eta\|\psi\|_{L^2(G)}\|B_GV\|_{L^2(G)}.
\end{aligned}
\ee
Choose $\eta$ small enough to absorb the first term on the right into the left-hand side. If the remaining norm is positive, divide by it to obtain~\eqref{AX-eq-recovery-conclusion}. If it is zero, the same estimate holds without division. At $\lambda=0$, the decomposition therefore gives $S_\Lambda(g)=0$.
\ese

\bse
\emph{2. Interior vector estimate and sphere flux.} To estimate the sphere flux, we use the vector equation on the interior annulus. Applying $\delta_g$ to the decomposition and using the contracted Bianchi identity and $\delta_G\psi=0$ gives
\be
\label{AX-eq-recovery-local-vector-equation}
\delta_g B_GV=-\lambda(\delta_g-\delta_G)\psi,\qquad \|\lambda(\delta_g-\delta_G)\psi\|_{C^{0,\alpha}_d(\mathcal A_d)} \leq C|\lambda|\eta_{\rm sep}d^{-3}.
\ee
The last estimate follows by writing the two divergences as inverse-metric contractions of a first tensor derivative and connection terms. Their coefficient difference is bounded by $C\eta_{\rm sep}$, and their connection difference by $C\eta_{\rm sep}/d$. The bound for $\psi$ in~\eqref{AX-eq-recovery-local-test} controls both contributions, including the H\"older seminorm.

We shrink the annulus by a fixed proportion and denote the resulting neighborhood of the sphere by $\mathcal A'_d$. For a spatial covector $\xi$, the principal symbol at the reference metric is
\be
\label{AX-eq-recovery-reference-vector-symbol}
\sigma_2(\delta_G B_G)(\xi)V =\tfrac12\bigl(|\xi|_G^2V^{\flat_G}+\xi\,\xi(V)\bigr), \qquad \bigl\langle\sigma_2(\delta_G B_G)(\xi)V,V\bigr\rangle =\tfrac12\bigl(|\xi|_G^2|V|_G^2+|\xi(V)|^2\bigr).
\ee
Here the last pairing is the natural pairing between a one-form and a vector. The uniform bounds for the reference metric and its inverse give a positive lower bound for this symbol in the rescaled spatial coordinates. The coefficient difference between $\delta_gB_G$ and $\delta_GB_G$ is $\Ocal(\eta_{\rm sep})$. For sufficiently small $\eta_{\rm sep}$, rescaling $X=p_++dx$ consequently gives a uniformly strongly elliptic vector system on fixed nested annuli. Its interior estimate from $L^2$ to $C^{1,\alpha}$, followed by restoring the spatial volume and integration-length factors, gives
\be
\label{AX-eq-recovery-local-regularity}
\begin{aligned}
\|V\|_{C^{1,\alpha}_d(\mathcal A'_d)}
&\leq C\Bigl((P_Ed^3)^{-1/2}\|V\|_{L^2(G;\mathcal A_d)}
+d^2\|\delta_gB_GV\|_{C^{0,\alpha}_d(\mathcal A_d)}\Bigr)\\
&\leq C\Bigl(P_E^{-1/2}d^{-1/2}\|B_GV\|_{L^2(G)}
+|\lambda|\eta_{\rm sep}d^{-1}\Bigr).
\end{aligned}
\ee
Here $\omega\asymp d$ on $\mathcal A_d$ and~\eqref{AX-eq-recovery-vector-bound} give $\|V\|_{L^2(G;\mathcal A_d)}\leq Cd\|B_GV\|_{L^2(G)}$. For an invariant field, the volume factor is $P_Ed^3$: the integral in $t_E$ contributes $P_E$, the lapse contributes a uniformly bounded positive factor, and rescaling the spatial variables contributes $d^3$. The interior elliptic estimate uses only this annulus and requires no trace at a horizon.

In the same reference coordinates, $|B_GV|\leq Cd^{-1}\|V\|_{C^{1,\alpha}_d(\mathcal A'_d)}$ on the sphere, whose Euclidean area is of order $d^2$. We therefore obtain
\be
\label{AX-eq-recovery-flux-from-energy}
|\operatorname{Flux}_+(B_GV)| \leq C\Bigl(P_E^{-1/2}d^{1/2}\|B_GV\|_{L^2(G)} +|\lambda|\eta_{\rm sep}\Bigr).
\ee
Substitution of~\eqref{AX-eq-recovery-conclusion} and~\eqref{AX-eq-recovery-local-test} proves~\eqref{AX-eq-recovery-local-flux}.
\ese
\end{proof}
Proposition~\ref{AX-prop-uniform-vector} proves the global vector estimate for an explicit regular reference and sufficiently small scaled $C^1$ perturbations of it. Proposition~\ref{AX-prop-global-force-columns} constructs the obstruction tensors globally. The factor $\sqrt{d/m}$ in~\eqref{AX-eq-recovery-local-flux} is combined with the bounds for the obstruction tensors in the proof of Theorem~\ref{AX-thm-binary-existence}.


\section{\texorpdfstring{Kernels of the limiting linearized operators (linear analysis)}{Kernels of the limiting linearized operators (linear analysis)}}
\label{AX-section-two-masses}


\subsection{Barycenter, separation and the leading compatibility equations}
\label{AX-section-two-mass-geometry}

\paragraph{The barycenter and the separation}
We identify the kernels of the local limiting linearized operators and the two axial obstructions in the global linearized equation. Weighted estimates on the inner, intermediate and outer scales then control the differential tensor operator and its projection. We introduce the barycenter and separation for two unequal positive reference masses. Their leading equilibrium equations reduce to a barycenter condition and a scalar equation involving the total mass. The reflection-symmetric problem has one scalar separation parameter. With two different positive masses, the location of the pair relative to the cosmological center must also be selected. Write $m_+,m_->0$ for the reference masses. Their total mass and mass fraction are $\mathfrak m\coloneqq m_++m_-$ and $\theta\coloneqq m_+/\mathfrak m\in(0,1)$. For centers $p_\pm=z_\pm e_z$ satisfying $z_+>z_-$, define the separation and mass barycenter by
\be
\label{AX-eq-two-mass-coordinates}
d\coloneqq z_+-z_-,\qquad z_{\rm bar}\coloneqq\theta z_++(1-\theta)z_-,\qquad z_+=z_{\rm bar}+(1-\theta)d,\qquad z_-=z_{\rm bar}-\theta d.
\ee
The leading lapse is $N_{\rm lead}(X)\coloneqq1-m_+/|X-p_+|-m_-/|X-p_-|-|X|^2/(2\ell^2)$. After omitting the singular summand associated with each center, the negative axial derivatives of $N_{\rm lead}$ at $p_\pm$ are
\be
\label{AX-eq-two-mass-accelerations}
a_+\coloneqq\frac{z_+}{\ell^2}-\frac{m_-}{d^2},\qquad a_-\coloneqq\frac{z_-}{\ell^2}+\frac{m_+}{d^2},\qquad \theta a_++(1-\theta)a_-=\frac{z_{\rm bar}}{\ell^2},\qquad a_+-a_-=\frac d{\ell^2}-\frac{\mathfrak m}{d^2}.
\ee
The weighted identity follows because the two interaction terms cancel; the difference adds their masses. The weighted mean and difference of $(a_+,a_-)$ determine the pair uniquely for each $0<\theta<1$. The two leading balance equations are thus equivalent to $z_{\rm bar}=0$ and $d^3=\mathfrak m\ell^2$. These are reference-coordinate relations. They do not prescribe a proper distance or prove existence of the nonlinear static spacetime.

We recall the total-mass parameter and define the associated dimensionless coordinates by
\be
\label{AX-eq-total-mass-scale}
\varepsilon_{\mathrm{tot}}=(\mathfrak m/\ell)^{1/3},\qquad x\coloneqq X/(\varepsilon_{\mathrm{tot}}\ell),\qquad D_{\mathrm{tot}}\coloneqq d/(\varepsilon_{\mathrm{tot}}\ell),\qquad C_{\mathrm{tot}}\coloneqq z_{\rm bar}/(\varepsilon_{\mathrm{tot}}\ell).
\ee
The subscript prevents confusion with the equal-mass variables of Section~\ref{AX-section-scales}. When the masses are equal, $\varepsilon_{\mathrm{tot}}=2^{1/3}\varepsilon$. Passing from one normalization to the other therefore leaves all exponents in the estimates unchanged.


\paragraph{The two leading equilibrium equations}
The extension of the scalar separation argument requires two parameters and two corrected obstructions. The leading axial obstructions, divided by $\varepsilon_{\mathrm{tot}}/\ell$, are $C_{\mathrm{tot}}+(1-\theta)(D_{\mathrm{tot}}-D_{\mathrm{tot}}^{-2})$ and $C_{\mathrm{tot}}-\theta(D_{\mathrm{tot}}-D_{\mathrm{tot}}^{-2})$. Their weighted mean and difference define the leading map
\be
\label{AX-eq-two-mass-balance-map}
F_0(C,D)\coloneqq(C,D-D^{-2}),\qquad
v_0\coloneqq(0,1),\qquad
J_0\coloneqq DF_0(v_0)=
\begin{pmatrix}1&0\\0&3\end{pmatrix}.
\ee
In the two-mass construction, replace the notation $(C_{\mathrm{tot}},D_{\mathrm{tot}})$ by $(C,D)$. For arbitrary parameters $(C,D)$ near $(0,1)$, we set $q_+(C,D)\coloneqq(C+(1-\theta)D)e_z$, $q_-(C,D)\coloneqq(C-\theta D)e_z$ and $U_{\theta,C,D}(x)\coloneqq\theta/|x-q_+(C,D)|+(1-\theta)/|x-q_-(C,D)|$. For the corresponding intermediate metric, the dipole calculation~\eqref{AX-eq-flux-dipole} includes the coefficient of the singular term $\theta$ at the positive center and $1-\theta$ at the negative center. Consequently the leading spatial fluxes, divided by $8\pi\varepsilon_{\mathrm{tot}}^4$, are
\be
\label{AX-eq-two-mass-leading-fluxes}
\begin{aligned}
&f_+(C,D)=\theta\bigl(C+(1-\theta)(D-D^{-2})\bigr),\qquad f_-(C,D)=(1-\theta)\bigl(C-\theta(D-D^{-2})\bigr),\\
&f_++f_-=C,\qquad \frac{f_+}{\theta}-\frac{f_-}{1-\theta}=D-D^{-2}.
\end{aligned}
\ee
Indeed, near the positive center $\nabla U_{\theta,C,D}=-\theta n/r^2-(1-\theta)D^{-2}e_z+\Ocal(r)$, and the cosmological gradient is $(C+(1-\theta)D)e_z+rn$. The resulting flux is $8\pi\theta$ times the sum of the regular axial gradients at the positive center; exchanging the centers gives the second formula. The Hessian combination still has zero closed-sphere flux, by the proof of Proposition~\ref{AX-prop-flux-stability}. Thus the two combinations in~\eqref{AX-eq-two-mass-leading-fluxes} are the leading barycenter and separation equations.

\paragraph{Uniform axial flux estimates for unequal masses}
We extend Proposition~\ref{AX-prop-flux-stability} to the preceding family. We fix $0<\theta_0\leq1/2$, $c_{\rm bar}>0$, and a compact positive interval $I_{\rm fl}\coloneqq[D_{\rm lo},D_{\rm hi}]$, and allow $\theta\in[\theta_0,1-\theta_0]$, $|C|\leq c_{\rm bar}$ and $D\in I_{\rm fl}$. For $a\in\{+,-\}$, we define the sphere and its annular neighborhood by
\bse
\be
\label{AX-eq-two-mass-flux-annuli}
\Sigma_a\coloneqq\{|x-q_a|=\vartheta D\},\qquad \mathcal A_a\coloneqq\{a_{\rm ann}\vartheta D<|x-q_a|<b_{\rm ann}\vartheta D\}.
\ee
The normal $n_a$ points away from $q_a$. With $Q=|x|^2/2$ as in~\eqref{AX-eq-flux-reference}, the intermediate metric is
\be
\label{AX-eq-two-mass-flux-reference}
G_{t;\theta,C,D}\coloneqq(1-t(U_{\theta,C,D}+Q))^2d\tau^2 +(1+t(2U_{\theta,C,D}-Q))\delta.
\ee
For a symmetric tensor $k$ independent of $\tau$, we put $\rho_k\coloneqq\max_a\|k\|_{C^2(\mathcal A_a)}$, measuring components and derivatives in the flat product $d\tau^2+\delta$. There exist $t_{\rm fl},\rho_{\rm fl},C_{\rm stab}>0$, depending only on $\theta_0,c_{\rm bar},D_{\rm lo},D_{\rm hi},a_{\rm ann},b_{\rm ann}$ and $\vartheta$, such that, for $0\leq t\leq t_{\rm fl}$ and $\rho_k\leq\rho_{\rm fl}$, the metric $\widehat g\coloneqq G_{t;\theta,C,D}+k$ is positive on both annuli and its vector flux satisfies, for $a\in\{+,-\}$ and $j\in\{1,2,3\}$,
\be
\label{AX-eq-two-mass-uniform-flux-stability}
\Bigl|\int_{\Sigma_a}(S_{3t}(\widehat g))_{ij}n_a^i\,dA -8\pi t^2 f_a(C,D)(e_z)_j\Bigr| \leq C_{\rm stab}(t^3+t\rho_k+\rho_k^2).
\ee
\ese
Here $f_\pm$ are defined in~\eqref{AX-eq-two-mass-leading-fluxes}; the axial component of this vector flux is the functional used in~\eqref{AX-eq-spatial-flux}.

We justify the uniform constants directly. On $\mathcal A_a$, the distance to $q_a$ is at least $a_{\rm ann}\vartheta D_{\rm lo}$, the distance to the other center is at least $(1-b_{\rm ann}\vartheta)D_{\rm lo}$, and $|x|\leq c_{\rm bar}+(1+b_{\rm ann}\vartheta)D_{\rm hi}$. These inequalities uniformly bound $U_{\theta,C,D},Q$ and their first two derivatives, and give $G_{t;\theta,C,D}-(d\tau^2+\delta)=\Ocal(t)$ in $C^2$ on both annuli. Decreasing $t_{\rm fl},\rho_{\rm fl}$ if necessary makes every segment $G_{t;\theta,C,D}+u k$, $0\leq u\leq1$, uniformly positive. The reference expansion~\eqref{AX-eq-quadratic-flux-tensor} has a uniform $\Ocal(t^3)$ remainder, and its quadratic flux is $8\pi t^2 f_a(C,D)e_z$ by~\eqref{AX-eq-two-mass-leading-fluxes}. The closed-sphere identity~\eqref{AX-eq-flat-zero-flux} applies to each sphere without a reflection assumption. Finally, the finite-increment estimate~\eqref{AX-eq-flux-finite-increment} is uniform on these annuli, and the areas of the spheres are uniformly bounded. Integration gives~\eqref{AX-eq-two-mass-uniform-flux-stability}. The scaling identity~\eqref{AX-eq-flux-scaling} transfers it to the physical fluxes without an additional length factor.


\subsection{Obstruction tensors for the flat and de Sitter limiting equations}
\label{AX-section-two-mass-perturbation}

\paragraph{The two tensors obstructing solvability in the flat limit}
We identify the two axial point obstructions in the flat limit and their continuation in the de Sitter exterior. Their different far-field behavior determines the weighted vector estimates needed for the global tensor projection. We use Euclidean spatial vector fields and define $B_0V\coloneqq\tfrac12(\del_iV_j+\del_jV_i)$, $\delta_0T\coloneqq-\del_iT_{ij}$ and $A_0V\coloneqq\delta_0B_0V=-\tfrac12(\Delta V+\nabla\operatorname{div}V)$. The symmetric gradients of spatial vector fields are viewed as static four-tensors by adjoining zero component in the auxiliary coordinates. For a point $q$, $\delta_q^{\rm pt}$ denotes the Dirac measure at $q$. The matrix
\be
\label{AX-eq-flat-force-fundamental}
\Gamma_{ij}(x)\coloneqq\frac{1}{8\pi|x|} \Bigl(3\delta_{ij}+\frac{x_i x_j}{|x|^2}\Bigr)
\ee
will provide the two obstruction tensors and their exact distributional normalization.
\begin{lemma}[Two independent obstructions to the flat linearized Einstein equation]
\label{AX-lem-two-force-columns}
For distinct points $q_\pm$ and constant vectors $F_\pm$, the tensor
\be
\label{AX-eq-flat-force-columns}
\Psi\coloneqq B_0\bigl(\Gamma(\,\cdot-q_+)F_+ +\Gamma(\,\cdot-q_-)F_-\bigr)
\ee
is divergence free off the two points and satisfies
\be
\label{AX-eq-flat-force-source}
\delta_0\Psi=F_+\delta_{q_+}^{\rm pt}+F_-\delta_{q_-}^{\rm pt},\qquad \int_{\Sigma_\pm}\Psi_{ij}n^i\,dA=-(F_\pm)_j,
\ee
where each outward oriented sphere $\Sigma_\pm$ encloses its indicated point and excludes the other. If a static four-tensor $u$, smooth away from those points, satisfies $S'_0[u]=\Psi$ there and $|u|+r_\pm|\nabla u|\leq Cr_\pm^\gamma$ near them for some $\gamma>-1$, then $F_+=F_-=0$. In particular, the two independent axial point-obstruction tensors are linearly independent modulo the range of the flat linearized Einstein operator in this growth class.

For centers whose magnitudes are bounded by a fixed multiple of their separation $d$, the distant expansion is
\be
\label{AX-eq-flat-force-far-field}
\Psi(x)=B_0\bigl(\Gamma(x)(F_++F_-)\bigr) +\Ocal\!\bigl(d(|F_+|+|F_-|)|x|^{-3}\bigr).
\ee
The mode with nonzero total axial flux has a leading term of order $|x|^{-2}$ when $F_++F_-\neq0$, whereas a vanishing total axial flux removes that term.
\end{lemma}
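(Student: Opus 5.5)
The plan is to recognise $\Gamma$ as the fundamental solution of the flat vector operator $A_0=\delta_0B_0=-\tfrac12(\Delta+\nabla\operatorname{div})$. This is a Kelvin-type matrix, and the normalisation is the first thing to check. I would verify $A_0(\Gamma F)=F\,\delta_0^{\rm pt}$ distributionally by splitting it into two facts.

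First, off the origin I would compute $\Delta(\Gamma F)$ and $\nabla\operatorname{div}(\Gamma F)$ directly. Writing $\Gamma F=\frac{1}{8\pi}(3F/|x|+x(x\cdot F)/|x|^3)$, one has $\operatorname{div}(x(x\cdot F)/|x|^3)=(x\cdot F)/|x|^3$ and $\Delta(x(x\cdot F)/|x|^3)=-2F/|x|^3+6x(x\cdot F)/|x|^5$. Together with the standard identities for $1/|x|$, these show that the two contributions cancel, so $A_0(\Gamma F)=0$ off the origin.

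Second, I would obtain the point mass by integrating $-\del_i(B_0\Gamma F)_{ij}$ over a small ball, using the divergence theorem. Since $B_0(\Gamma F)$ is homogeneous of degree $-2$, the boundary integral is independent of the radius. A direct evaluation on the unit sphere then gives $\int_{|x|=r}(B_0\Gamma F)_{ij}n^i\,dA=-F_j$, using $\int_{S^2}n_in_j\,dA=\tfrac{4\pi}{3}\delta_{ij}$ and $\int_{S^2}n_in_jn_kn_l\,dA=\tfrac{4\pi}{15}(\delta_{ij}\delta_{kl}+\delta_{ik}\delta_{jl}+\delta_{il}\delta_{jk})$. By linearity and translation this proves both identities in~\eqref{AX-eq-flat-force-source}. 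For the flux over $\Sigma_\pm$, I would use that $\Psi$ is divergence free in the region between $\Sigma_\pm$ and a small sphere about $q_\pm$, so the flux equals the small-sphere flux $-(F_\pm)_j$. The other centre is excluded and contributes nothing.

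For the obstruction statement, I would strengthen the closed-sphere cancellation~\eqref{AX-eq-flat-zero-flux} to a pointwise divergence structure. The flat linearized Einstein tensor can be written as $(S'_0[u])_{\beta\alpha}=\del_\gamma P_{\beta\gamma\alpha}[u]$, where $P$ is a linear expression in first derivatives of $u$ that is antisymmetric in $(\beta,\gamma)$. This is the usual superpotential $P_{\beta\gamma\alpha}=\del_\delta H_{\beta\gamma\alpha\delta}[u]$, with $H$ carrying Riemann-type symmetries, obtained by regrouping~\eqref{AX-eq-flat-linear-tensor}. Independence of $\tau$ removes the $\tau$-derivatives, so $\int_{\Sigma}S'_0[u]_{ij}n^i\,dA=\int_\Sigma\del_kP_{ikj}n^i\,dA$. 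This is the flux of the curl of a vector field tangent-wise, and it vanishes on any closed sphere on which $u$ is $C^2$; Stokes' theorem then applies with no limiting process at the centres. Applying this with $\Sigma_\pm$ and $S'_0[u]=\Psi$ gives $-(F_\pm)_j=0$.

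The growth hypothesis $|u|+r_\pm|\nabla u|\leq Cr_\pm^\gamma$ with $\gamma>-1$ would enter only to make the statement meaningful as a distributional statement across the centres. It guarantees that $\nabla u\in L^1_{\rm loc}$, so $S'_0[u]$ defines a distribution. The pointwise identity off the centres already suffices for the flux argument. Linear independence modulo the range then follows: a combination $\Psi$ with $(F_+,F_-)\neq0$ lying in the range in this class would have nonzero flux on one of the spheres.

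For the far field, I would Taylor expand $\Gamma(x-q_\pm)=\Gamma(x)-q_\pm\cdot\nabla\Gamma(x)+\dots$ for $|x|\geq2\max|q_\pm|$. Since $|q_\pm|\leq Cd$ and $\nabla^j\Gamma=\Ocal(|x|^{-1-j})$, applying $B_0$ gives $B_0(\Gamma(x)(F_++F_-))$ plus $\Ocal(d(|F_+|+|F_-|)|x|^{-3})$. The leading term is homogeneous of degree $-2$ and nonzero exactly when $F_++F_-\neq0$, since its flux is $-(F_++F_-)$. I expect the main obstacle to be the bookkeeping in the superpotential form of $S'_0$, in particular confirming that the trace term $\tfrac12\Scal'_0G_0$ also fits the antisymmetric structure. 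I would check this by writing $S'_0$ as $\tfrac12\del_\gamma\del_\delta$ applied to the trace-reversed perturbation contracted with $\delta\delta-\delta\delta$ combinations.
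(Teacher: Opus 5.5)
Your proof is correct. The flux computation and the far-field expansion follow the paper, but your obstruction argument takes a different route.

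\textbf{The two obstruction arguments.} The paper uses the bound $|u|+r_\pm|\nabla u|\leq Cr_\pm^\gamma$ to show that the excision boundary terms are $\Ocal(r^{\gamma+1})$. Hence $S'_0[u]=\Psi$ holds as distributions on all of $\mathbb R^3$. It then applies the constant-coefficient linearized Bianchi identity to distributions: $0=\delta_0S'_0[u]=\delta_0\Psi=F_+\delta_{q_+}^{\rm pt}+F_-\delta_{q_-}^{\rm pt}$. You use only that, for static $u$, the flux of $S'_0[u]$ through a closed sphere vanishes whenever $u$ is $C^2$ near that sphere. You then compare this with the flux $-(F_\pm)_j$ of $\Psi$ on $\Sigma_\pm$.

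\textbf{What each buys.} Your argument is local near $\Sigma_\pm$ and shows, as you note, that the growth hypothesis is superfluous in the flat setting. The paper's formulation is the one reused for $\Psi_D$ in Lemma~\ref{AX-lem-deSitter-outer-force}. There the differentiated Bianchi identity $\delta_DDS_3(G_D)=0$ is applied to distributions, and the bound on $k$ is needed for the extension across the central circle.

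\textbf{Avoiding the superpotential bookkeeping.} The step you flag as the main obstacle can be bypassed with the paper's own closed-sphere cancellation~\eqref{AX-eq-flat-zero-flux}. Multiply $u$ by a cutoff equal to one near $\Sigma_\pm$ and vanishing near $q_\pm$. The product is smooth on the enclosed ball and agrees with $u$ near the sphere, so its flux is zero. If you prefer an explicit superpotential, $S'_0[u]_{\alpha\beta}=\del_\gamma\del_\delta H_{\alpha\gamma\beta\delta}$ holds with $H_{\alpha\gamma\beta\delta}=\tfrac12(\bar u_{\alpha\delta}\delta_{\gamma\beta}+\bar u_{\gamma\beta}\delta_{\alpha\delta}-\bar u_{\alpha\beta}\delta_{\gamma\delta}-\bar u_{\gamma\delta}\delta_{\alpha\beta})$, where $\bar u\coloneqq u-\tfrac12(\operatorname{tr}_0u)G_0$. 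This $H$ is antisymmetric in $(\alpha,\gamma)$, and the trace reversal absorbs the $\Scal'_0$ term automatically.

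\textbf{A sign slip.} The correct identity is $\Delta\bigl(x(x\cdot F)/|x|^3\bigr)=2F/|x|^3-6x(x\cdot F)/|x|^5$, the opposite sign to what you wrote. Only with this sign does it cancel against $\nabla\operatorname{div}(\Gamma F)$. The paper's decomposition $\Gamma_{ij}=2G_f\delta_{ij}-\del_i\del_jH_f$, with $G_f=(4\pi|x|)^{-1}$ and $H_f=|x|/(8\pi)$, gives $(A_0\Gamma)_{ij}=\delta_0^{\rm pt}\delta_{ij}$ distributionally at once. That avoids both this computation and your separate point-mass step.
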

\begin{proof}
\bse
The scalar functions $G_f(x)\coloneqq(4\pi|x|)^{-1}$ and $H_f(x)\coloneqq|x|/(8\pi)$ satisfy $\Delta G_f=-\delta_0^{\rm pt}$ and $\Delta H_f=G_f$ in the sense of distributions. The first identity follows by the outward flux of $\nabla G_f$ through a small sphere; the second follows from $\Delta|x|=2/|x|$ away from zero, with a vanishing inner flux. Direct differentiation gives $\Gamma_{ij}=2G_f\delta_{ij}-\del_i\del_jH_f$. Its divergence and Laplacian consequently satisfy
\be
\label{AX-eq-flat-force-Green-calculation}
\del_i\Gamma_{ij}=\del_jG_f,\qquad \Delta\Gamma_{ij}=2\Delta G_f\,\delta_{ij}-\del_i\del_jG_f,\qquad (A_0\Gamma)_{ij}=-\Delta G_f\,\delta_{ij}=\delta_0^{\rm pt}\delta_{ij}.
\ee
This proves the distributional divergence formula for $\Psi$. It also determines the sign of its flux. More explicitly, with $r=|x|$ and $n=x/r$, symmetrizing the derivative of $\Gamma F$ gives
\be
\label{AX-eq-flat-force-strain-explicit}
\bigl(B_0(\Gamma F)\bigr)_{ij} =\frac{1}{8\pi r^2}\bigl( -n_iF_j-n_jF_i+\delta_{ij}(n\cdot F)-3n_i n_j(n\cdot F)\bigr).
\ee
Contraction with $n^i$ gives $-(F_j+3n_j(n\cdot F))/(8\pi r^2)$. The identities $\int_{\mathbb S^2}1\,dA=4\pi$ and $\int_{\mathbb S^2}n_i n_j\,dA=(4\pi/3)\delta_{ij}$ give flux $-F_j$. The other center contributes zero through a sphere excluding it, by the divergence theorem. This proves~\eqref{AX-eq-flat-force-source}.

For the assertion about the range of the linearized Einstein operator, both $u$ and $\Psi$ are locally integrable near each point. In the second-order equation, integration by parts across a radius-$r$ excision leaves terms bounded by $C(r^{\gamma+1}+r^{\gamma+2})$ against a fixed smooth test tensor. They vanish because $\gamma>-1$. Thus $S'_0[u]=\Psi$ holds on all of $\mathbb R^3$ in the sense of distributions. The constant-coefficient linearized Bianchi identity proved from~\eqref{AX-eq-flat-linear-tensor} applies to distributions and gives
\be
\label{AX-eq-flat-force-independent-columns}
0=\delta_0S'_0[u]=\delta_0\Psi=F_+\delta_{q_+}^{\rm pt}+F_-\delta_{q_-}^{\rm pt}.
\ee
Tests supported near one point at a time show that both coefficients vanish. This proves linear independence modulo the range of $S'_0$, without imposing a coordinate condition on $u$.

Finally, $\Gamma$ is homogeneous of degree minus one and its first derivative of degree minus two. Taylor expansion at $x$, for $|x|$ larger than a fixed multiple of $d$, gives $B_0(\Gamma(x-q)F)=B_0(\Gamma(x)F)+\Ocal(d|F||x|^{-3})$. Addition proves~\eqref{AX-eq-flat-force-far-field}. The leading symmetric gradient is not zero for a nonzero total vector: on the ray parallel to that vector, formula~\eqref{AX-eq-flat-force-strain-explicit} has a nonzero radial component. Homogeneity gives decay of order $|x|^{-2}$ for this nonzero leading term.
\ese
\end{proof}

The barycenter mode in~\eqref{AX-eq-two-mass-leading-fluxes} has point-source coefficients proportional to $\theta e_z$ and $(1-\theta)e_z$, so their sum is nonzero. The separation mode has opposite point-source coefficients proportional to $\theta(1-\theta)e_z$, so their sum is zero. On the reflection-symmetric family, the latter is the permitted mode. Without midpoint reflection, the total axial flux may be nonzero, and the associated tensor has the slower decay $\Ocal(|x|^{-2})$. Its normalization and outer estimate must account for this decay. Thus the flat proof of independence of the obstruction tensors gives two tensors that are independent modulo the range of the flat linearized Einstein operator, while a global inverse estimate must still account for their two different decay rates. 

The difference is also visible in the flat weighted norm. In reference coordinates $X$, we denote by $\Psi_{\rm sum}$ the symmetric gradient in Lemma~\ref{AX-lem-two-force-columns} with point-source coefficients $(\theta e_z,(1-\theta)e_z)$ at $p_\pm$, and by $\Psi_{\rm sep}$ the symmetric gradient with point-source coefficients $(e_z,-e_z)$. We set $\omega_P(X)\coloneqq\min\{|X-p_+|,|X-p_-|\}$ and restrict the calculation to $\mathcal E\coloneqq\{\omega_P\geq m_{\rm cut},\ |X|\leq\ell\}$, with $0<m_{\rm cut}\ll d\ll\ell$. For $-1<\gamma<0$, we define the dimensionless model norm
\be
\label{AX-eq-two-force-model-norm}
\|\Psi\|^{\rm flat}_{\gamma-2}\coloneqq \sup_{\mathcal E}\ell^2(\omega_P/\ell)^{2-\gamma}|\Psi|_\delta.
\ee
For centers bounded by a fixed multiple of $d$, there are positive constants $c_0,K_0$, independent of the three lengths, such that
\be
\label{AX-eq-two-force-different-normalizations}
c_0\leq\|\Psi_{\rm sum}\|^{\rm flat}_{\gamma-2}\leq K_0,\qquad c_0(d/\ell)^{-\gamma}\leq\|\Psi_{\rm sep}\|^{\rm flat}_{\gamma-2} \leq K_0(d/\ell)^{-\gamma}.
\ee
We prove the bounds directly. Within a fixed multiple of $d$, each symmetric gradient is bounded by $K_0\omega_P^{-2}$, by~\eqref{AX-eq-flat-force-strain-explicit}. Multiplication by the weight gives an upper bound $K_0(d/\ell)^{-\gamma}$ there. Beyond that region, the sum of the axial fluxes gives $K_0|X|^{-2}$, while cancellation gives $K_0d|X|^{-3}$ for the separation field. Their weighted maxima are attained, respectively, at a radius comparable to $\ell$ and at a radius comparable to $d$. This proves both upper bounds. On a fixed proportional sphere about $p_+$, the separation flux is $-e_z$. The sphere area is of order $d^2$ and $\omega_P\asymp d$ there, so its largest symmetric gradient component is at least $c_0d^{-2}$; this gives the separation lower bound. For the sum field, we take $X=(\ell/2)e_z$. Its leading radial component is $-1/(2\pi|X|^2)$, and the error in~\eqref{AX-eq-flat-force-far-field} is $\Ocal(d|X|^{-3})$. For sufficiently small $d/\ell$, this component gives the other lower bound. The norm of the separation tensor tends to zero, whereas the norm of the tensor with nonzero total axial flux remains bounded below. Their normalizations in the weighted inverse estimate must therefore differ. Proposition~\ref{AX-prop-global-force-columns} below constructs the corresponding global obstruction tensors and proves their estimates.

\paragraph{A de Sitter tensor with nonzero total axial flux}
The sum of the axial fluxes has a regular continuation to the cosmological horizon on the outer model. We give its construction in the normalization $\ell=1$, so that $\Lambda=3$. In this paragraph we use the areal Cartesian coordinates $x=rn$, with $n\in\mathbb S^2$, and denote the unit round metric on $\mathbb S^2$ by $g_{\mathbb S^2}$. The de Sitter metric is
\be
\label{AX-eq-deSitter-outer-force-metric}
G_D=F\,dt_E^2+F^{-1}dr^2+r^2g_{\mathbb S^2}, \qquad F\coloneqq1-r^2,\qquad 0\leq r\leq1.
\ee
The auxiliary coordinate $t_E$ has period $2\pi$, giving the regular completion at $r=1$. At $r=0$ the auxiliary circle remains nonzero. All distributions below act on static fields per unit length in $t_E$; thus $\delta_0^{\rm pt}$ is a three-dimensional Dirac measure. We identify one-forms and vectors by $G_D$, and write $B_DV\coloneqq\tfrac12\mathcal L_VG_D$, $A_DV\coloneqq(\delta_DB_DV)^\sharp$ and $J_z\coloneqq e_z\delta_0^{\rm pt}$. The matrix $\Gamma$ is the flat fundamental solution in~\eqref{AX-eq-flat-force-fundamental}.

\begin{lemma}[Construction of a de Sitter tensor with prescribed axial flux]
\label{AX-lem-deSitter-outer-force}
There is a unique static spatial vector field orthogonal to the axial symmetry orbits $V_D$, regular in the angular lifts at the cosmological horizon and its poles, satisfying the following conditions:
\begin{itemize}
\item One has $A_DV_D=J_z$ in the sense of distributions, and, near $r=0$, for every integer $b\geq0$,
\be
\label{AX-eq-deSitter-outer-force-vector-asymptotic}
\bigl|D^b(V_D-\Gamma e_z)\bigr|\leq C_b r^{-b}.
\ee
Here $D$ denotes Cartesian differentiation of the spatial vector components.
\item The tensor $\Psi_D\coloneqq B_DV_D$ is regular at the cosmological horizon, is locally integrable at the central circle, and satisfies
\be
\label{AX-eq-deSitter-outer-force-tensor-asymptotic}
(\delta_D\Psi_D)^\sharp=J_z,\quad \bigl|D^b(\Psi_D-B_0(\Gamma e_z))\bigr|\leq C_b r^{-1-b},\quad \lim_{r\downarrow0}\int_{|x|=r}(\Psi_D)_{ij}n^i\,dA_0=-(e_z)_j.
\ee
In the pointwise bound, Cartesian components include the component in the auxiliary coordinate, and the flat symmetric gradient has zero component in the auxiliary coordinates.
\item The tensor $\Psi_D$ is not of the form $DS_3(G_D)[k]$ for a static tensor $k$ regular at the cosmological horizon and satisfying $|k|+r|Dk|\leq Cr^\gamma$ near the central circle for any $\gamma>-1$.
\end{itemize}
A prescribed total axial flux is represented by a scalar multiple of $\Psi_D$. The flux assertion is a limit at the central circle; it asserts no conservation of the Cartesian flux at finite radii in de Sitter.
\end{lemma}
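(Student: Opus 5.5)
We construct $V_D$ as a flat fundamental solution plus a regular correction, and obtain the correction from the energy estimate of Proposition~\ref{AX-prop-uniform-vector}, applied at a fixed positive scale on the de Sitter model. Fix a cutoff $\chi_0$ equal to one near $r=0$ and supported in $r<1/2$. We set $V_0\coloneqq\chi_0\,\Gamma e_z$, viewed as a spatial vector field orthogonal to the axial orbits. It is axially invariant because $\Gamma e_z$ is built from $n$ and $e_z$, and it has no $\varphi$ or $t_E$ component. Near the center, $G_D$ is the flat product plus $\Ocal(r^2)$. Hence $A_DV_0-A_0V_0$ has coefficients of size $r^2\cdot r^{-3}$ plus first-order terms of size $r\cdot r^{-2}$. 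Combined with $A_0(\Gamma e_z)=J_z$ from~\eqref{AX-eq-flat-force-Green-calculation}, this gives
\[
A_DV_0=J_z+f,\qquad |D^bf|\leq C_br^{-1-b}.
\]
The source $f$ is supported in $r<1/2$, and $f\in L^{6/5}$ locally. It therefore defines a bounded functional on the energy space through the Hardy inequality~\eqref{AX-eq-vector-weighted-Hardy}.

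We then solve $A_DW=-f$. Coercivity on regular invariant fields holds for the exact de Sitter region; it is the Killing-field step of Proposition~\ref{AX-prop-uniform-vector}, where the pole argument forces $V_0=0$. Together with the minimization argument after that proposition, this produces a unique weak $W$ in the completion. Elliptic regularity in the regular angular lifts at $r=1$ and its poles makes $W$ smooth there. Near $r=0$, we use a dyadic scaling argument: rescale the annulus $r\sim 2^{-k}$ to unit size and apply interior Schauder estimates to the source bound $r^{-1-b}$. This shows that $W$ is bounded with $|D^bW|\leq C_br^{1-b}$, up to a possible logarithm that we absorb using the $H^1$ bound. Defining $V_D\coloneqq V_0+W$ then yields $A_DV_D=J_z$ and~\eqref{AX-eq-deSitter-outer-force-vector-asymptotic}. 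For uniqueness, the difference of two solutions solves the homogeneous equation. By the asymptotic bound it has finite energy near the center, since $r^{-b}$ controls the difference at $b=0$ and the singular parts cancel. The energy identity then gives $B_D=0$, so the difference is a Killing field, which vanishes by the pole argument.

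For the tensor, we put $\Psi_D=B_DV_D$. The relation $(\delta_D\Psi_D)^\sharp=J_z$ is the same equation written for $\Psi_D$. Regularity at the horizon follows from the corresponding property of $V_D$. The bound $|D^b(\Psi_D-B_0(\Gamma e_z))|\leq C_br^{-1-b}$ follows from the estimate for $W$ together with the $\Ocal(r^2)$ and $\Ocal(r)$ differences between $B_D$ and $B_0$ acting on $\Gamma e_z$. Local integrability holds because $|\Psi_D|\lesssim r^{-2}$. The flux limit follows from~\eqref{AX-eq-flat-force-source}: the remainder is $\Ocal(r^{-1})$, integrated over an area $\Ocal(r^2)$, so it tends to zero.

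For the non-representability claim, we argue as in Lemma~\ref{AX-lem-two-force-columns}. Suppose $DS_3(G_D)[k]=\Psi_D$ with the stated growth bound. The linearized contracted Bianchi identity about the Einstein metric $G_D$ reads
\[
\delta_D\bigl(DS_3(G_D)[k]\bigr)=0.
\]
This holds exactly because $S_3(G_D)=0$. We extend it distributionally across the center by the excision argument, using $\gamma>-1$; the boundary terms are $\Ocal(r^{\gamma+1})$. At the horizon, no boundary term arises because $k$ is regular in the lifts. The identity then gives $J_z=0$, a contradiction. The final remark, that a prescribed flux is a multiple of $\Psi_D$, follows from linearity.

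The main obstacle is the regularity of the correction $W$ at the center. The source $f$ is only $\Ocal(r^{-1})$, and we must obtain the sharp bounds $C_br^{-b}$ for $V_D-\Gamma e_z$ rather than merely energy bounds. We would first try the rescaled Schauder argument on dyadic annuli, fed by the weighted $L^2$ control from the Hardy inequality. If a logarithm appears at $b=0$, we would remove it by subtracting one further explicit homogeneous term.
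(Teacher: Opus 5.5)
Your architecture matches the paper's: the cut-off Kelvin field $\chi_0\Gamma e_z$, a correction $W$ from the coercive energy form on the completed de Sitter model, rescaled Schauder estimates, and the linearized Bianchi exclusion. Your tensor bound, flux limit, uniqueness and non-representability steps go through once the central bound on $W$ is available. The gap is exactly at the step you flag as the main obstacle. You never establish that $W$ is bounded near $r=0$, which is the case $b=0$ of~\eqref{AX-eq-deSitter-outer-force-vector-asymptotic} and the input every higher-$b$ bound needs.

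The rescaled interior estimate on an annulus $\{r\asymp s\}$ only gives $s^b|D^bW|\leq C\bigl(s^{-3/2}\|W\|_{L^2(\{r\asymp s\})}+s\bigr)$. Your energy and Hardy control bound the first term by $Cs^{-1/2}\|r^{-1}W\|_{L^2(\{r\asymp s\})}$, which yields only $|W|=o(r^{-1/2})$, not $\Ocal(1)$. Annular estimates cannot propagate a bound inward for this system, and subtracting explicit homogeneous terms leaves the same problem for the remainder. Your intermediate claim $|D^bW|\leq C_br^{1-b}$ is also unjustified at $b=0$. Axial symmetry forces only $W(0)=c\,e_z$, and nothing makes $c$ vanish; the statement asks only for $C_br^{-b}$.

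The missing idea is elliptic regularity \emph{across} the centre, not on annuli avoiding it. In the static three-dimensional picture, $r=0$ is an interior regular point, and $A_D$ is a uniformly elliptic system with smooth coefficients there. The weak equation $A_DW=-f$ holds across it, once you extend it from invariant tests to arbitrary local tests by averaging over the two rotations and the reflections in $t_E$ and $\varphi$. Since $|f|\leq Cr^{-1}$, $f$ lies in $L^2$ (indeed in $L^p$ for every $p<3$), not merely in $L^{6/5}$. Local $H^2$ regularity and the three-dimensional embedding $H^2\subset C^0$ then make $W$ bounded. Only after that does your dyadic Schauder argument give $|D^bW|\leq C_br^{-b}$ for all $b$; this is how the paper closes the step.

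You should also verify explicitly that $A_D(\chi_0\Gamma e_z)$ has no extra point mass at the origin. The coefficient differences from $A_0$ are $\Ocal(r^2)$, $\Ocal(r)$ and $\Ocal(1)$, so the excision boundary terms vanish as the excision radius tends to zero.
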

\begin{proof}
\bse
\emph{1. The Kelvin field on the curved model.} We first calculate the error of the flat field $V_K\coloneqq\Gamma e_z$ on this curved model. In areal Cartesian coordinates,
\be
\label{AX-eq-deSitter-Cartesian-force-coefficients}
(h_D)_{ij}=\delta_{ij}+\frac{x_ix_j}{F}, \qquad N_D=\sqrt F, \qquad N_D\,d\mu_{h_D}=dx.
\ee
The determinant of $h_D$ is $1/F$, which proves the measure identity. For a spatial vector $V$ and $S\coloneqq B_{h_D}V$, the component in the auxiliary coordinate of $B_DV$ is $N_DV(N_D)$, and its mixed component in the auxiliary coordinates vanish. The warped connection therefore gives the spatial one-form identity
\be
\label{AX-eq-deSitter-force-warped-divergence}
\delta_DB_DV =\delta_{h_D}S-N_D^{-1}S(\nabla^{h_D}N_D,\cdot) +N_D^{-2}V(N_D)\,dN_D.
\ee
Indeed, the contraction in the auxiliary coordinate in the divergence is obtained from $\Gamma^i_{tt}=-N_D\nabla^iN_D$ and $\Gamma^t_{ti}=N_D^{-1}\del_iN_D$; it gives exactly the last two terms, while the spatial contraction gives $\delta_{h_D}S$.

We compute this field explicitly to fix both the singular order and the normalization. With $u\coloneqq\cos\theta$, its components are $8\pi V_K^r=4u/r$ and $8\pi V_K^u=3(1-u^2)/r^2$. We put $a_0\coloneqq4/F^2-4/(Fr^2)$ and $b_0\coloneqq(3r^2-1)/(Fr)$. Differentiation of the Lie derivative gives the nonzero components
\be
\label{AX-eq-deSitter-force-Kelvin-strain}
\begin{aligned}
8\pi(B_DV_K)_{tt}&=-4u,&
8\pi(B_DV_K)_{rr}&=u a_0,&
8\pi(B_DV_K)_{ru}&=b_0,\\
8\pi(B_DV_K)_{uu}&=\frac{u}{1-u^2},&
8\pi(B_DV_K)_{\varphi\varphi}&=u(1-u^2).
\end{aligned}
\ee
For example, the first component is $4uF'/(2r)=-4u$, and the mixed component is $\tfrac12(-6/r+4/(Fr))=b_0$. In the two angular components the radial derivative contributes $4ug_{\mathbb S^2}$, while the angular vector contributes $-3ug_{\mathbb S^2}$. These calculations also give the radial component by differentiating $F^{-1}$ and $4u/r$. Contraction in~\eqref{AX-eq-deSitter-force-warped-divergence} gives, with primes denoting radial derivatives,
\be
\label{AX-eq-deSitter-force-Kelvin-image-components}
\begin{aligned}
8\pi(A_DV_K)^r
&=-Fu\Bigl[Fa_0'+\Bigl(\frac{2F}{r}-3r\Bigr)a_0
-\frac{4r}{F^2}-\frac{2b_0}{r^2}-\frac2{r^3}\Bigr]
=-\frac{8u}{r},\\
8\pi(A_DV_K)^u
&=-\frac{1-u^2}{r^2}\Bigl[Fb_0'+\Bigl(\frac{2F}{r}-2r\Bigr)b_0+\frac1{r^2}\Bigr]
=-\frac{9(1-u^2)}{r^2}.
\end{aligned}
\ee
The first bracket simplifies to $8/(Fr)$ and the second to $9$, by substituting $F=1-r^2$ and the rational functions $a_0,b_0$. Returning to Cartesian vector components yields the exact formula
\be
\label{AX-eq-deSitter-force-Kelvin-image}
A_DV_K=\frac{(n\cdot e_z)n-9e_z}{8\pi r}, \qquad r>0.
\ee
All coordinate expressions extend across the axis because the initial Cartesian field is smooth there away from the origin.
\ese

\bse
\emph{2. The global correction and uniqueness.} Choose radii $0<\rho_{\rm cut}<R_{\rm cut}<1$ and a smooth radial cutoff $\chi$ that equals one on $r\leq\rho_{\rm cut}$ and zero on $r\geq R_{\rm cut}$. In the sense of distributions, we have
\be
\label{AX-eq-deSitter-force-cutoff-error}
A_D(\chi V_K)=J_z+f, \qquad f\in L^2(G_D), \qquad |D^bf|\leq C_b r^{-1-b}\quad(r<\rho_{\rm cut}).
\ee
The point coefficient is exactly that of $A_0V_K$. In the Cartesian expression for $A_D-A_0$, its second-order coefficients are $\Ocal(r^2)$, its first-order coefficients are $\Ocal(r)$, and its zero-order coefficients are bounded, by~\eqref{AX-eq-deSitter-Cartesian-force-coefficients}. Boundary terms for their action on $V_K=\Ocal(r^{-1})$, $DV_K=\Ocal(r^{-2})$ are therefore $\Ocal(r^2)$ and tend to zero. They produce no additional distribution at the origin. The ordinary function away from the origin is given by~\eqref{AX-eq-deSitter-force-Kelvin-image} where $\chi=1$, and is smooth on the cutoff annulus. Its square is integrable by the measure identity in~\eqref{AX-eq-deSitter-Cartesian-force-coefficients}.

We solve $A_DW=-f$ on the round four-sphere obtained by extending the de Sitter metric across the cosmological horizon: the map $(t_E,x)\mapsto(\sqrt{1-r^2}\cos t_E,\sqrt{1-r^2}\sin t_E,x)$ has induced metric $G_D$. For these spatial vector fields orthogonal to the axial symmetry orbits, the symmetric-gradient identity and compactness give
\be
\label{AX-eq-deSitter-outer-force-energy-inverse}
\|Y\|_{H^1(G_D)}\leq C\|B_DY\|_2, \qquad A_DW=-f, \qquad \|W\|_{H^1(G_D)}\leq C\|f\|_2.
\ee
We justify the removal of the lower norm in this fixed-metric estimate. The integrated identity gives $\|\nabla Y\|_2^2\leq2\|B_DY\|_2^2+3\|Y\|_2^2$. If the first inequality failed, compactness would give a nonzero Killing field orthogonal to the axial symmetry orbits. The pole argument in the proof of Proposition~\ref{AX-prop-uniform-vector} excludes such a field: its value and skew derivative vanish at a point fixed by the two rotations, and Killing prolongation makes it zero everywhere. The coercive form $\int|B_DY|^2$ now gives the unique weak solution $W$ in~\eqref{AX-eq-deSitter-outer-force-energy-inverse} by the same Hilbert-space argument used for the inverse of the operator on vector fields there.

Averaging local tests over the two rotations and the reflections in $t_E$ and $\varphi$ shows that this weak equation holds against arbitrary local vector fields. Near the center, staticity reduces the equation for $W$ to a uniformly elliptic three-dimensional system with smooth coefficients. The $L^2$ source gives local $H^2$ regularity, and the three-dimensional Sobolev embedding makes $W$ bounded. On balls of radius comparable to $r$ away from the center, the rescaled interior Schauder estimate and the bounds on $f$ then give $|D^bW|\leq C_b r^{-b}$ for every $b\geq0$. Near the cosmological horizon and its poles, $f$ is smooth in the regular four-dimensional lifts, so elliptic regularity makes $W$ smooth there. Consequently $V_D\coloneqq\chi V_K+W$ satisfies the first assertion. The connection difference from the flat product is $\Ocal(r)$ near the center. Differentiating the symmetric gradient therefore proves the second-line estimate in~\eqref{AX-eq-deSitter-outer-force-tensor-asymptotic}. This bound makes its sphere-flux error $\Ocal(r)$, and the exact flat flux~\eqref{AX-eq-flat-force-source} proves the flux limit. The distributional divergence follows directly from $A_DV_D=J_z$.

For uniqueness, the difference of two such fields is bounded, has derivative $\Ocal(r^{-1})$, and belongs to $H^1$ near the center. The distributional equation and the energy inequality in~\eqref{AX-eq-deSitter-outer-force-energy-inverse} make this difference zero. In particular, changing $\chi$ does not change $V_D$.
\ese

\emph{3. Exclusion from the Einstein image.} Finally, we suppose that $DS_3(G_D)[k]=\Psi_D$ off the central circle under the bound $|k|+r|Dk|\leq Cr^\gamma$, with $\gamma>-1$. Both tensors are locally integrable. The boundary terms in extending this second-order equation distributionally have largest order $r^{\gamma+1}$ and vanish for $\gamma>-1$. Since $S_3(G_D)=0$, differentiation of its contracted Bianchi identity gives the smooth-coefficient differential identity $\delta_DDS_3(G_D)=0$, also on distributions. It would follow that
\be
\label{AX-eq-deSitter-force-not-Einstein-image}
0=(\delta_DDS_3(G_D)[k])^\sharp =(\delta_D\Psi_D)^\sharp=J_z,
\ee
which is impossible. This proves the last assertion without a coordinate condition on $k$. Scalar multiplication gives any prescribed total axial flux.
\end{proof}

The field $\Psi_D$ describes the limit away from the shrinking horizons for a nonzero total axial flux. Section~\ref{AX-section-global-force-inverse} constructs the global inverse while keeping the two inner Kottler regions separate.

\paragraph{Dependence of the weighted vector inverse bound on the small masses}
The uniform energy estimate of Proposition~\ref{AX-prop-uniform-vector} cannot be replaced by a uniform inverse in every weighted H\"older norm. For the tensor source weight $\gamma-2$, the vector field in $B_GV$ naturally has weight $\beta\coloneqq\gamma-1$. The two-center point-source field gives a concrete obstruction in this weight.

\begin{proposition}[Growth of the weighted vector inverse constant as the masses vanish]
\label{AX-lem-vector-weight-loss}
For the admissible weight $\gamma$ of Table~\ref{AX-tab-admissible-exponents} and the equal-mass references of~\eqref{AX-eq-vector-reference-patching}, there are smooth reflection-invariant spatial vector fields orthogonal to the axial symmetry orbits $V_\varepsilon$, vanishing near every horizon, such that
\be
\label{AX-eq-vector-weight-loss}
\|V_\varepsilon\|_{k+2,\alpha,\beta;\omega}\geq c d^{-\gamma},\qquad \|A_{G_\varepsilon}V_\varepsilon\|_{k,\alpha,\beta-2;\omega}\leq C m^{-\gamma}.
\ee
Consequently any inverse estimate between these two weighted spaces has a constant at least $c(d/m)^{-\gamma}$, which is of order $\varepsilon^{2\gamma}$. This conclusion concerns the inverse of the operator on vector fields in weighted spaces, and is distinct from the tensor operator with two additional scalar unknowns in Theorem~\ref{AX-thm-physical-force-inverse}.
\end{proposition}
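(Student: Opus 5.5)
The natural test fields come from the separation mode of Lemma~\ref{AX-lem-two-force-columns}, with point sources $(e_z,-e_z)$ at $p_\pm$. I will set
\[
V_\varepsilon\coloneqq\chi_{\rm cut}\bigl(\Gamma(\cdot-p_+)-\Gamma(\cdot-p_-)\bigr)e_z ,
\]
where $\chi_{\rm cut}$ vanishes on the balls $r_\pm\le m$ and equals one on $r_\pm\ge2m$. It also vanishes for $r\ge r_{\rm out}$, say, or in any fixed neighbourhood of the cosmological horizon. This field is smooth, vanishes near every horizon, is invariant under rotation and is orthogonal to the orbits. Under $z\mapsto-z$ it changes sign in both the source coefficients and the axial component, so it is invariant for the reflection interchanging the horizons. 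I keep the truncation at $m$ rather than at $r_{\rm in}$: the singular size of $V_\varepsilon$ is then visible at the horizon scale, where the weight $\omega^{-\beta}$ is largest.

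For the lower bound I evaluate $V_\varepsilon$ on a sphere $|X-p_+|=\vartheta d$. There $\omega\asymp d$, and the flux identity~\eqref{AX-eq-flat-force-source} shows that $|\nabla V_\varepsilon|\ge c d^{-2}$ somewhere on the sphere. The derivative term in the scaled norm is $\omega\,|\nabla V|\,(\omega/\ell)^{-\beta}$, so its weighted contribution is at least
\[
c\,d\cdot d^{-2}\cdot d^{1-\gamma}=c\,d^{-\gamma}.
\]
This gives the first inequality in~\eqref{AX-eq-vector-weight-loss}. If it is more convenient, the same bound follows directly from the lower estimate for $\Psi_{\rm sep}$ in~\eqref{AX-eq-two-force-different-normalizations}.

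For the upper bound I split $A_{G_\varepsilon}V_\varepsilon$ into three pieces. On the region $2m\le r_\pm$, $r\le r_{\rm out}$ we have $A_0V_\varepsilon=0$, so only $(A_{G_\varepsilon}-A_0)V_\varepsilon$ remains. Its coefficients are controlled by the scaled $C^2$ distance from the flat product, which is at most $C\bigl(m/\omega+\varepsilon^2\bigr)$ up to the bounds used for~\eqref{AX-eq-vector-reference-Ricci}. Combining this with $|V|\lesssim \omega^{-1}$ gives
\[
|\omega^2 A_{G_\varepsilon}V|\lesssim \bigl(m/\omega+\varepsilon\bigr)\omega^{-1}.
\]
After weighting by $\omega^{-\beta+2}\cdot\omega^{-2}$ this becomes $(m/\omega+\varepsilon)\,\omega^{-\gamma}$, and its supremum is at most $C m^{-\gamma}$. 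On the cutoff annulus $r_\pm\asymp m$, the cutoff derivatives give $|\omega^2A V|\lesssim m^{-1}$. The weighted size there is $m^{-1}m^{1-\gamma}=m^{-\gamma}$. The outer cutoff region contributes only a bounded quantity. The Hölder seminorms follow by the usual scaled-chart argument, since every coefficient bound holds with any fixed number of scaled derivatives. The ratio of the two bounds is then $c(d/m)^{-\gamma}$, which is of order $\varepsilon^{2\gamma}$ because $d/m\asymp\varepsilon^{-2}$. Any inverse estimate between the two weighted spaces therefore has a constant at least this large.

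The main obstacle is the estimate on the inner collar. Near $r_\pm\sim m$ the reference is exact Kottler and is not close to flat, since $m/\omega\sim1$ there. The bound $|\omega^2 A V|\lesssim\omega^{-1}$ then has to come from the scaled Kottler geometry itself, using uniform coefficient bounds in the rescaled chart as in Lemma~\ref{AX-lem-kottler-comparison}. It cannot come from cancellation. This is exactly why that region yields $m^{-\gamma}$ rather than something smaller. Because the cutoff removes the horizon, I do not need the regular angular lifts for this part of the argument.
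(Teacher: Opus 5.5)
Your test field, the reflection check and the lower bound are fine. Bounding $\omega|\nabla V_\varepsilon|$ on $\Sigma_+$ through the flux identity is an acceptable substitute for the paper's evaluation of $W$ at $X=2de_z$.

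The gap is in the upper bound away from the two centers. Since $\gamma<0$, the target $Cm^{-\gamma}=C\varepsilon^{3|\gamma|}$ tends to zero. Your statement that the outer cutoff region ``contributes only a bounded quantity'' therefore does not prove the second inequality in~\eqref{AX-eq-vector-weight-loss}: every region must contribute $\Ocal(m^{-\gamma})$. The crude bound $|V_\varepsilon|\lesssim\omega^{-1}$ that you use throughout cannot deliver this.
\begin{itemize}
\item On an outer cutoff annulus of radius $r_c$, the best that bound gives for the cutoff-derivative terms is $\omega^{2-\beta}|A_{G_\varepsilon}V_\varepsilon|\lesssim r_c^{2-\beta}r_c^{-3}=r_c^{-\gamma}$. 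This exceeds $m^{-\gamma}$ by the unbounded factor $(r_c/m)^{-\gamma}$; for $r_c=r_{\rm out}$ it is $\varepsilon^{-3\gamma/5}$ against $\varepsilon^{-3\gamma}$.
\item If you instead cut off at a fixed radius, the crude bound also fails on the de Sitter part $r\gtrsim r_{\rm out}$. There the scaled deviation from the flat product is of order $r^2$ rather than $\varepsilon$, which gives an $\Ocal(1)$ contribution.
\end{itemize}

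The missing ingredient is the dipole cancellation of the separation field. Its point-source coefficients $(e_z,-e_z)$ sum to zero, so $|D^jW|\leq C_jd\,r^{-2-j}$ for $r\geq C_*d$, by~\eqref{AX-eq-flat-force-far-field}. With a cutoff $\chi_f$ at fixed radii, as in the paper, this gives $C(md^{-1-\gamma}+d)$ on $C_*d\leq r\leq R_{\rm cut}$ and $\Ocal(d)$ from the derivatives of $\chi_f$ (Appendix~\ref{AX-appendix-weighted-loss}).

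You then still need $d\asymp\varepsilon\leq C\varepsilon^{-3\gamma}$, that is $3|\gamma|\leq1$. This is where admissibility of $\gamma$ ($\gamma>-2/27$) enters, and your argument never uses it. A comparison of the same kind is implicitly needed for your term $\varepsilon\,\omega^{-\gamma}$.

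By contrast, the inner collar, which you single out as the main obstacle, only requires bounded scaled coefficients (the lapse stays bounded below for $r_j\geq m$), and it gives $m^{-\gamma}$ at once. Relatedly, $(\omega/\ell)^{-\beta}=\omega^{1-\gamma}$ is smallest, not largest, at the horizon scale. Truncating at $r_j\sim m$ is nonetheless the right choice, precisely because the weighted source size $\omega^{-\gamma}$ of a cutoff annulus is smallest there.
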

\begin{proof}
We use the matrix $\Gamma$ in Lemma~\ref{AX-lem-two-force-columns} and set
\be
\label{AX-eq-weighted-loss-example}
W(X)\coloneqq\bigl(\Gamma(X-p_+)-\Gamma(X-p_-)\bigr)e_z,\qquad V_\varepsilon\coloneqq\chi_f(|X|)\eta(r_+/m)\eta(r_-/m)W(X).
\ee
Here $\eta$ is smooth, zero on $[0,2]$ and one on $[4,\infty)$; the fixed radial cutoff $\chi_f$ is one on $[0,\rho_{\rm cut}]$ and zero on $[R_{\rm cut},\infty)$, using the preceding fixed radii. The field vanishes on neighborhoods of all three horizons and extends smoothly across the axis. The covariance of $\Gamma$ under orthogonal transformations gives the axial and midpoint reflection symmetries. At $X=2d e_z$, all cutoffs equal one for small $\varepsilon$. Since $\Gamma(te_z)e_z=e_z/(2\pi|t|)$, its value there is a fixed nonzero multiple of $d^{-1}e_z$. The metric is uniformly close to the flat product and $\omega\asymp d$ at that point. This proves the first inequality.

The estimate for $A_{G_\varepsilon}V_\varepsilon$ is verified in three regions in Appendix~\ref{AX-appendix-weighted-loss}. Dividing the two estimates gives the required lower bound on any inverse constant.
\end{proof}

The weighted estimates must therefore treat the point-source fields separately or retain their explicit mass dependence when estimating the projection and nonlinear terms. The energy estimate proves injectivity of the vector operator and permits removal of the auxiliary vector term when the obstruction coefficients vanish. It does not by itself give a uniform inverse in the weighted H\"older spaces. This proposition also explains why constructing the inverse of the global tensor operator requires an estimate joining the separate models: the two centers approach each other and produce a mode whose weighted norm is largest at distances comparable to their separation.


\subsection{Global estimates for the tensor equation in Bianchi coordinates}
\label{AX-section-global-differential-tensor}

\paragraph{A global tensor estimate with inner and outer remainder terms}
We prove a global differential tensor estimate by separating the inner collars, the intermediate region and the de Sitter exterior. The axial displacement mode in the kernel of the inner limiting operator is removed by one centering condition at each horizon. We first establish a global estimate which prevents a tensor from concentrating in the regions between the horizons. Its remaining terms are supported in fixed rescalings of the Kottler collars and in the outer region. This is an estimate for the differential tensor operator in a specified coordinate condition, before the nonlocal projection is imposed.

For this purpose, $G$ denotes the Kottler--de Sitter reference of Lemma~\ref{AX-lem-improved-reference-error}, with $\Lambda=3$. For a symmetric covariant tensor $v$, introduce the one-form $\mathfrak b_Gv\coloneqq\delta_Gv+\tfrac12d\operatorname{tr}_Gv$. Our convention for the rough Laplacian is $\nabla^*\nabla=-\operatorname{tr}_G\nabla^2$. The curvature action is $\mathcal R_G(v)(X,Y)\coloneqq\sum_{a,b}v(e_a,e_b)G(R(e_a,X)Y,e_b)$ in an orthonormal frame. We define
\be
\label{AX-eq-gauged-tensor-operator}
\mathcal P_Gv \coloneqq D(\Ric-3g)_G[v]+B_G(\mathfrak b_Gv)^\sharp =\tfrac12\bigl(\nabla^*\nabla v+\Ric_G\circ v+v\circ\Ric_G-2\mathcal R_G(v)\bigr)-3v.
\ee
The products with the Ricci tensor mean contraction of one index with the Ricci endomorphism, and $\sharp$ raises the one-form index using $G$. This identity follows by differentiating the connection covariantly: $\dot\Gamma^a_{bc}=\tfrac12G^{ad}(\nabla_bv_{cd}+\nabla_cv_{bd}-\nabla_dv_{bc})$. In $\dot\Ric_{bc}=\nabla_a\dot\Gamma^a_{bc}-\nabla_c\dot\Gamma^a_{ba}$, moving the two divergence derivatives next to $v$ produces $-B_G(\mathfrak b_Gv)^\sharp$. The commutators on its two covariant indices give $\tfrac12(\Ric_G\circ v+v\circ\Ric_G-2\mathcal R_G(v))$, and the remaining contracted second derivative is $\tfrac12\nabla^*\nabla v$. Differentiation of $-3g$ gives the last term in~\eqref{AX-eq-gauged-tensor-operator}.

\paragraph{Inverting the tensor operator on the de Sitter region}
The outer remainder in the global estimate can be removed. We first give the tensor calculation on the auxiliary de Sitter metric. In areal radius $R$, its metric is $G_D=dR^2/(1-R^2)+R^2d\Omega^2+(1-R^2)dt_E^2$. When $t_E$ has period $2\pi$, the map $(R,n,t_E)\mapsto(Rn,\sqrt{1-R^2}\cos t_E,\sqrt{1-R^2}\sin t_E)$ extends the metric regularly to the unit round sphere $\mathbb S^4\subset\mathbb R^5$. The point $R=0$ in the spatial quotient corresponds to a circle, denoted by $\mathcal Z$, in this completion.

\begin{lemma}[Invertibility on de Sitter and extension across the central circle]
\label{AX-lem-deSitter-tensor-inverse}
On the unit round $\mathbb S^4$, the operator $\mathcal P_{G_D}$ in~\eqref{AX-eq-gauged-tensor-operator} has a bounded inverse on symmetric tensors. For every smooth source $T$, its solution satisfies
\be
\label{AX-eq-deSitter-inverse}
\|v\|_{L^2}\leq\|T\|_{L^2},\qquad \|v\|_{C^{k+2,\alpha}}\leq C_{k,\alpha}\|T\|_{C^{k,\alpha}}, \qquad \mathcal P_{G_D}v=T.
\ee
The inverse preserves the static and axial symmetries. Moreover, if a tensor solves $\mathcal P_{G_D}v=0$ off $\mathcal Z$ and satisfies $|v|+R|\nabla v|\leq CR^\gamma$ near $\mathcal Z$, with $\gamma>-1$, then it extends smoothly across $\mathcal Z$ and is zero.
\end{lemma}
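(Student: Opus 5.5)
The plan is a Bochner-type positivity argument on the round $\mathbb S^4$, followed by a removable-singularity argument near $\mathcal Z$. On the unit sphere one has $R(X,Y)Z=G(Y,Z)X-G(X,Z)Y$ and $\Ric_G=3G$. Substituting into~\eqref{AX-eq-gauged-tensor-operator} gives $\Ric\circ v+v\circ\Ric=6v$ and $\mathcal R_G(v)=(\operatorname{tr}_Gv)G-v$. Hence
\[
\mathcal P_{G_D}v=\tfrac12\nabla^*\nabla v+v-(\operatorname{tr}_Gv)G .
\]
Split $v=v^\circ+\tfrac14(\operatorname{tr}v)G$ into trace-free and pure-trace parts; the operator commutes with this splitting. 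On trace-free tensors it is $\tfrac12\nabla^*\nabla+1\geq1$, so $\|v^\circ\|_{L^2}\leq\|T^\circ\|_{L^2}$. On the trace part, with $f=\operatorname{tr}v$, it becomes $\tfrac12\Delta f-3f$ in the sign convention where $\Delta\geq0$, giving $-\tfrac14\cdot 6\,\cdot$ the multiplier $\tfrac12\Delta-3$. The first nonzero eigenvalue of $\Delta$ on $\mathbb S^4$ is $4$, and the spectrum is $j(j+3)$. Thus $\tfrac12\Delta-3$ takes the values $-3,-1,2,6,\dots$ and is never zero, so the operator is invertible on traces as well.

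The constant in the $L^2$ bound needs care. For the trace part the multiplier with smallest modulus is $-1$, at the first spherical harmonics. Since $|f|^2=4|\tfrac14 fG|^2$, the factor is consistent, and the inequality $\|v\|_{L^2}\leq\|T\|_{L^2}$ follows with constant one from the spectral decomposition of each part. Self-adjointness and the fact that the spectrum is bounded away from zero (the distance is at least $1$) give the bounded inverse. Standard Schauder theory for the elliptic operator $\mathcal P_{G_D}$ with principal part $\tfrac12\nabla^*\nabla$ then gives the $C^{k+2,\alpha}$ estimate, by bootstrapping from the $L^2$ bound. Uniqueness, combined with the commuting of $\mathcal P_{G_D}$ with the isometries preserving $t_E$-rotation, $\varphi$-rotation and the reflections, shows the inverse preserves the static and axial symmetries: for any such isometry $\phi$, $\phi^*v$ solves the equation with source $\phi^*T=T$.

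For the removable-singularity statement, $\mathcal Z$ is a circle, of codimension $3$ in $\mathbb S^4$, and $R$ is comparable to distance from $\mathcal Z$. The plan is to show that $v$ solves $\mathcal P_{G_D}v=0$ distributionally on all of $\mathbb S^4$. Pair against a smooth test tensor after excising a tube of radius $r$ around $\mathcal Z$. The boundary terms are bounded by the tube area $\asymp r^2$ times $|\nabla v|+|v|$, which is $\Ocal(r^{\gamma-1})$. This gives $\Ocal(r^{\gamma+1})\to0$ because $\gamma>-1$. Also $v\in L^1_{\rm loc}$, since $R^\gamma$ is integrable in three normal dimensions.

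Elliptic regularity for distributions then makes $v$ smooth. Here $v\in L^2$ requires $2\gamma>-3$, which holds, and local $H^2$ regularity follows by iteration. Once smooth, $v$ lies in the kernel of the invertible operator, so $v=0$.

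The main obstacle I expect is checking the curvature-term algebra and the sign conventions in the trace reduction. In particular, I need to confirm that no eigenvalue $j(j+3)$ equals $6$ and that the trace-free part is coercive. A minor second point is the distributional extension: the first-order terms must be paired correctly, including the auxiliary angular directions, but they only improve the count.
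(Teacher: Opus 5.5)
Your proposal is correct and follows essentially the same route as the paper. The shared steps are:
\begin{itemize}
\item the curvature reduction to $\tfrac12\nabla^*\nabla v+v-(\operatorname{tr}v)G$;
\item coercivity on the trace-free part and the scalar gap $|\tfrac12 l(l+3)-3|\geq1$ on the trace;
\item Schauder bootstrapping from the $L^2$ bound, and symmetry preservation via uniqueness;
\item the $\Ocal(r^{\gamma+1})$ boundary-term argument across the codimension-three circle, followed by distributional elliptic regularity and the zero kernel.
\end{itemize}
The paper additionally proves the eigenvalue list $l(l+3)$ and its completeness via harmonic polynomials, where you cite it. The garbled phrase ``$-\tfrac14\cdot 6\cdot$ the multiplier'' in your trace computation should simply be deleted, since the correct multiplier on $f=\operatorname{tr}v$ is $\tfrac12\Delta_+-3$, as you otherwise state.
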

\begin{proof}
\bse
The round curvature gives $\Ric=3G_D$ and $\mathcal R(v)=(\operatorname{tr}v)G_D-v$. Thus the trace and trace-free parts are preserved, and
\be
\label{AX-eq-deSitter-operator-splitting}
\mathcal P_{G_D}v=\tfrac12\nabla^*\nabla v+v-(\operatorname{tr}v)G_D, \qquad \operatorname{tr}(\mathcal P_{G_D}v) =\bigl(\tfrac12\Delta_+-3\bigr)\operatorname{tr}v,
\ee
where $\Delta_+\coloneqq-\Delta_{G_D}$ on scalars. On trace-free tensors the quadratic form is $\tfrac12\|\nabla v\|_2^2+\|v\|_2^2$. This quadratic form is coercive. Minimize half this quadratic form minus the pairing with the trace-free part of $T$. The minimizer is the unique weak solution. Testing the equation against $v$ gives $\|v\|_2\leq\|T\|_2$ on the trace-free subspace.

We give the scalar calculation to check the possible trace obstruction. The restrictions of harmonic homogeneous polynomials of degree $l$ on $\mathbb R^5$ have $\Delta_+$ eigenvalue $l(l+3)$. Indeed, the polar formula for the Euclidean Laplacian is $\del_r^2+4r^{-1}\del_r+r^{-2}\Delta_{\mathbb S^4}$, and substitution of $r^lY$ proves the assertion. Every homogeneous polynomial is a sum of terms $r^{2j}H_{l-2j}$ with harmonic homogeneous $H_{l-2j}$. This follows by subtracting its traces recursively, using
\be
\label{AX-eq-sphere-harmonic-decomposition}
\Delta_{\mathbb R^5}(r^{2j}H_k) =2j(2k+2j+3)r^{2j-2}H_k.
\ee
All coefficients for $j\geq1$ are nonzero, so induction on the degree supplies the decomposition. Polynomial restrictions are dense in $L^2(\mathbb S^4)$. A continuous function on the sphere extends continuously to a surrounding cube: use radial projection, with a cutoff near the origin. The multivariate Bernstein polynomials of this extension converge uniformly. Uniform continuity controls the terms evaluated near the chosen point; the binomial variance bound makes the total weight of the other terms tend to zero. Restricting these polynomials to the sphere proves the density assertion. The decomposition~\eqref{AX-eq-sphere-harmonic-decomposition} therefore supplies a complete orthogonal family of scalar eigenspaces.

For every integer $l\geq0$, the trace eigenvalue satisfies
\be
\label{AX-eq-deSitter-trace-gap}
\bigl|\tfrac12l(l+3)-3\bigr|\geq1.
\ee
At $l=0,1,2$ the respective eigenvalues are $-3,-1,2$; from that point onward they increase. Expand $\operatorname{tr}T$ in these spaces and divide each component by its nonzero eigenvalue. This gives the unique trace solution, whose $L^2$ norm is at most $\|\operatorname{tr}T\|_2$. The resulting series also converges in $H^1$, because $l(l+3)$ divided by the square of the eigenvalue remains bounded. Orthogonality of the trace and trace-free tensor parts proves the first estimate in~\eqref{AX-eq-deSitter-inverse}. Local elliptic regularity gives a smooth solution for smooth data. The interior Schauder estimate on a finite cover of the sphere, together with this $L^2$ estimate and interpolation, gives the H\"older bound in~\eqref{AX-eq-deSitter-inverse}. Every isometry commutes with the operator; applying it to the equation and using the proved zero kernel shows that the inverse preserves its imposed symmetries.

Finally, a radius-$R$ tube about $\mathcal Z$ has boundary measure of order $R^2$. Integrating the equation against a fixed smooth test tensor on the complement leaves boundary terms bounded by $C(R^{\gamma+1}+R^{\gamma+2})$. They tend to zero for $\gamma>-1$, while the tensor is locally integrable. Hence the extended tensor solves the same homogeneous elliptic equation globally in the sense of distributions. Distributional elliptic regularity, obtained locally by the interior estimates for its scalar principal part, makes it smooth. Its trace and trace-free energy calculation above give $v=0$.
\ese
\end{proof}

\paragraph{A global tensor bound with a remainder on the inner collars}
We use the weighted norms of~\eqref{AX-eq-weighted-norm} on tensors, with covariant scaled derivatives and the reference regular angular lifts. For fixed $R>1$ and $r_0>0$, the two remainder terms, localized on the inner collars and in the outer region, are
\be
\label{AX-eq-tensor-model-remainders}
\mathcal C_\varepsilon(v)\coloneqq \max_{j=\pm}\;\sup_{m/2\leq r_j\leq2Rm}m^{-\gamma}|v|_G,\qquad \mathcal O_\varepsilon(v)\coloneqq \sup_{r\geq r_0/2}|v|_G.
\ee
For unequal comparable masses, each inner supremum and its weight use that collar's reference mass $m_j$. Tensor fields are smooth in the reference lifts, including the axes and horizons, and invariant under the two circle actions.

\begin{proposition}[A uniform tensor bound up to terms on the inner collars]
\label{AX-prop-global-tensor-remainders}
For fixed $-1<\gamma<0$, $k\geq0$ and $0<\alpha<1$, there exist $R$, $r_0$ and $C$, independent of sufficiently small $\varepsilon$, such that
\be
\label{AX-eq-global-tensor-remainder-estimate}
\|v\|_{k+2,\alpha,\gamma;\omega} \leq C\Bigl(\|\mathcal P_Gv\|_{k,\alpha,\gamma-2;\omega} +\mathcal C_\varepsilon(v)+\mathcal O_\varepsilon(v)\Bigr).
\ee
The outer remainder can in fact be removed, so that
\be
\label{AX-eq-global-tensor-core-estimate}
\|v\|_{k+2,\alpha,\gamma;\omega} \leq C\Bigl(\|\mathcal P_Gv\|_{k,\alpha,\gamma-2;\omega} +\mathcal C_\varepsilon(v)\Bigr).
\ee
These estimates are uniform in the compact separation interval and for comparable positive masses. The constant in these estimates remains bounded as $\varepsilon\to0$ on the regions between the Kottler collars and the outer region. The inner remainder $\mathcal C_\varepsilon(v)$ must be estimated under the inner orthogonality conditions before an inverse bound follows.
\end{proposition}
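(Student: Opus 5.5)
The estimate~\eqref{AX-eq-global-tensor-remainder-estimate} will come from a scaled Schauder estimate combined with a blow-up (contradiction) argument. Estimate~\eqref{AX-eq-global-tensor-core-estimate} then follows by using Lemma~\ref{AX-lem-deSitter-tensor-inverse} to absorb $\mathcal O_\varepsilon$.

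The first step is local. The operator $\mathcal P_G$ in~\eqref{AX-eq-gauged-tensor-operator} has scalar principal part $\tfrac12\nabla^*\nabla$. On each chart of radius $c\omega(x)$, rescaled to unit size and taken in the regular angular lifts, the rescaled metric is uniformly close in $C^{k+2,\alpha}$ to one of three models: a rescaled Kottler metric (by Lemma~\ref{AX-lem-kottler-comparison}, close to Schwarzschild), the flat product, or de Sitter. Interior and boundary-free Schauder estimates in these lifts therefore give
\[
\|v\|_{k+2,\alpha,\gamma;\omega}\leq C\bigl(\|\mathcal P_Gv\|_{k,\alpha,\gamma-2;\omega}+\sup_x(\omega(x))^{-\gamma}|v(x)|_G\bigr),
\]
after normalizing $\ell=1$. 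It thus suffices to bound the weighted $C^0$ norm $|v|_\gamma\coloneqq\sup\omega^{-\gamma}|v|$ by the right side of~\eqref{AX-eq-global-tensor-remainder-estimate}.

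The second step is a contradiction argument. Suppose there are sequences $\varepsilon_i\to0$ (with $D_i$ and mass ratios converging in their compact ranges) and tensors $v_i$ with $|v_i|_\gamma=1$, while $\|\mathcal P_Gv_i\|$, $\mathcal C_{\varepsilon_i}(v_i)$ and $\mathcal O_{\varepsilon_i}(v_i)$ tend to zero. Choose points $x_i$ with $\omega(x_i)^{-\gamma}|v_i(x_i)|\geq1/2$. Because of $\mathcal O$, we have $r(x_i)<r_0/2$. Because of $\mathcal C$, we also have $r_j(x_i)>2Rm$ for both $j$. Rescale by $\lambda_i\coloneqq\omega(x_i)$ and set $\tilde v_i\coloneqq\lambda_i^{-\gamma}v_i(\lambda_i\,\cdot)$. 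The Schauder bound gives uniform $C^{k+2,\alpha}$ control on compact sets away from the rescaled centers, so by Arzel\`a--Ascoli a subsequence converges. The possible limits are the following.
\begin{itemize}
\item If $\lambda_i/m\to\infty$ and $\lambda_i/d\to0$, the limit is a flat tensor on $\mathbb R^3\setminus\{0\}$ (times the auxiliary line), bounded by $C|x|^\gamma$.
\item If $\lambda_i\asymp d$, the limit is a flat tensor on $\mathbb R^3$ minus two points, bounded by $C\omega_P^\gamma$.
\item If $d\ll\lambda_i\ll r_0$, the limit is again flat, bounded by $C|x|^\gamma$.
\end{itemize}
In each case the limit solves $S'_0$-type gauged equations, namely the flat version of $\mathcal P$, which is $\tfrac12\Delta_4$ componentwise on static tensors. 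It also has a nonzero value at a unit point. Removable singularities apply since $\gamma>-1$, by the argument in Lemma~\ref{AX-lem-two-force-columns}. The limit is therefore harmonic on all of $\mathbb R^3$ and bounded by $C|x|^\gamma$ with $\gamma<0$, so Liouville forces it to vanish, a contradiction.

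The obstacle is the case where $x_i$ approaches the collars with $\lambda_i\asymp Rm$ but $r_j(x_i)>2Rm$. Here the weight $|x|^\gamma$ with $\gamma<0$ decays, so the region near the collar is not excluded by Liouville alone. I would handle this by choosing $R$ large first. A harmonic function bounded by $|x|^\gamma$ on $|x|\geq1$ with small values at $|x|=1$ is controlled, by a maximum principle comparison with $|x|^{\gamma}$ and the $|x|^{-1}$ barrier, by its boundary data on $|x|=R$. Together with $\mathcal C\to0$, this contradicts $|\tilde v_i(x_i)|\geq1/2$. The same comparison is needed near $r=r_0$ against $\mathcal O$.

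To remove $\mathcal O$, I would apply~\eqref{AX-eq-global-tensor-remainder-estimate} and then bound $\mathcal O_\varepsilon(v)$ by a second contradiction. Normalize so that $\mathcal O=1$ while the other two terms tend to zero. On $r\geq r_0/4$ the metric is exactly de Sitter for small $\varepsilon$, and $v_i$ converges smoothly there. Away from $\mathcal Z$ the limit is bounded by $C R^\gamma$, since the weighted norm is bounded. The limit solves $\mathcal P_{G_D}v=0$ off $\mathcal Z$, so Lemma~\ref{AX-lem-deSitter-tensor-inverse} shows it vanishes. This contradicts $\mathcal O=1$, giving~\eqref{AX-eq-global-tensor-core-estimate}. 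Uniformity in $D$ and in the mass ratios is built into the choice of sequences.
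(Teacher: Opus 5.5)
Your proposal is correct in outline, but it reaches the weighted $C^0$ bound by a different route from the paper. The paper does no blow-up at this stage. On the whole neck $\{r<r_0,\ r_\pm>Rm\}$ it combines the Kato inequality $(-\Delta_G-2|\mathcal V_G|)|v|\leq 2|\mathcal P_Gv|$ with one explicit barrier, $w=r_+^\gamma+r_-^\gamma$. Since $-\Delta_0w\geq c_\gamma\rho^{\gamma-2}$ for $-1<\gamma<0$, choosing $R$ large and $r_0$ small makes $w$ a strict supersolution of $-\Delta_G-2|\mathcal V_G|$ despite the curvature and cutoff errors. The maximum principle for $|v|/w$ then bounds $\sup\omega^{-\gamma}|v|$ by the source plus the boundary values, which are exactly $\mathcal C_\varepsilon$ and $\mathcal O_\varepsilon$, in one step and with no case analysis of scales. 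Your Schauder step and your removal of $\mathcal O_\varepsilon$ through Lemma~\ref{AX-lem-deSitter-tensor-inverse} coincide with the paper's.

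Your compactness argument replaces the barrier by Liouville on the flat limits, and that part is fine. The two ends, however, are exactly where the limits are not flat, and you phrase your treatment of them for harmonic functions. At the Schwarzschild scale the limit solves $\mathcal P_Sz=0$; at scale comparable to $r_0$ it solves $\mathcal P_{G_D}v=0$. A comparison with the harmonic $|x|^{-1}$ leaves no room to absorb the curvature term, and a pointwise maximum principle for a tensor needs Kato's inequality in any case.

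You can close these cases in two ways. The first is the paper's mechanism: Kato's inequality plus the strict power barrier $|x|^\gamma$, corrected by a $\delta$-multiple of a neighbouring power at the unbounded or singular end, as in Appendix~\ref{AX-appendix-model-proof-1}. The second is simpler within your framework. Since $\mathcal C_{\varepsilon_i}(v_i)\to0$, the Schwarzschild limit vanishes on the open set $K_S$, so unique continuation for this system with diagonal principal part makes it zero for any fixed $R$. The de Sitter-scale limit is killed outright by Lemma~\ref{AX-lem-deSitter-tensor-inverse}, so your blow-up can in fact yield~\eqref{AX-eq-global-tensor-core-estimate} directly, without the intermediate $\mathcal O_\varepsilon$ estimate. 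That scale regime ($\lambda_i\not\to0$) is missing from your list of cases and should be stated explicitly.

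The trade-off is this. The barrier proof is explicit, treats all intermediate scales uniformly, and is reused on the Schwarzschild end in the appendix. The blow-up needs less coefficient bookkeeping, but its constants are non-explicit and it requires an exhaustive enumeration of scale regimes.
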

\begin{proof}
\bse
\emph{1. Coefficient bounds and a positive barrier.} We first work on the region $\mathcal A_\varepsilon\coloneqq\{r<r_0,\ r_\pm>Rm\}$. For small $\varepsilon$, its two inner boundary spheres are disjoint and lie inside the outer sphere. With $\rho=\min(r_+,r_-)$, the reference formulas and differentiated cutoffs give, for each fixed derivative order $j$,
\be
\label{AX-eq-neck-coefficient-smallness}
\begin{aligned}
\rho^j|D^j(G-G_0)|&\leq C_j(m/\rho+r^2+d^2),\\
\rho^2(|\operatorname{Rm}_G|_G+3)&\leq C(m/\rho+r^2+d^2)
\leq C(R^{-1}+r_0^2+d^2).
\end{aligned}
\ee
Here the flat product coordinates are regular because the lapse is bounded below on this region; the axis is treated in spatial Cartesian coordinates. On the exact Kottler collars, the factorization~\eqref{AX-eq-kottler-factorization} adds terms with scaled bounds $C\Lambda r_j^2$ to the inner limiting coefficients; differentiation gives the same inequalities. On the intermediate region, they follow by differentiating $m/r_++m/r_-$ and $r^2/2$. On either transition, a derivative of a cutoff has inverse transition scale, which is comparable to $\rho$ there. The constant adjustments on the inner regions are $\Ocal(d^2)$, and on the outer transition the mass contribution is $\Ocal(m/\rho)$. These observations prove the first line. The coordinate curvature formula contains second metric derivatives and quadratic first derivatives, so it gives the second line. The additional constant $3$ is controlled because $\rho\leq r+Cd$.

We use the positive scalar weight $w\coloneqq r_+^\gamma+r_-^\gamma$. Its size is comparable to $\rho^\gamma$. All its differentiated components satisfy the corresponding scaled bounds. The Euclidean identity $\Delta r^\gamma=\gamma(\gamma+1)r^{\gamma-2}$ gives
\be
\label{AX-eq-neck-positive-barrier}
-\Delta_0w=-\gamma(\gamma+1)(r_+^{\gamma-2}+r_-^{\gamma-2}) \geq c_\gamma\rho^{\gamma-2}.
\ee
For $-1<\gamma<0$, this constant is positive. For invariant scalars, $\Delta_G=\Delta_h+N^{-1}\langle\nabla N,\nabla\,\cdot\,\rangle_h$. The coefficient estimate therefore bounds $(\Delta_G-\Delta_0)w$ by $C(R^{-1}+r_0^2+d^2)\rho^{\gamma-2}$. Decompose the operator as $\mathcal P_G=\tfrac12\nabla^*\nabla+\mathcal V_G$. The endomorphism in~\eqref{AX-eq-gauged-tensor-operator} is bounded in norm by $C(|\operatorname{Rm}_G|+3)$. Choose $b_\gamma\in(0,c_\gamma)$ first. Next take $R$ large enough and $r_0$ and $\varepsilon$ small enough that the coefficient error is at most $c_\gamma-b_\gamma$ and the following inequality holds:
\be
\label{AX-eq-neck-barrier-potential}
\bigl(-\Delta_G-2|\mathcal V_G|\bigr)w \geq b_\gamma\rho^{\gamma-2}.
\ee
\ese

\bse
\emph{2. Weighted maximum estimate.} Where $v\neq0$, differentiate the norm and use $|d|v||\leq|\nabla v|$ to obtain $-\Delta_G|v|\leq\langle\nabla^*\nabla v,v\rangle/|v|$. The norm of the tensor consequently satisfies
\be
\label{AX-eq-neck-tensor-Kato}
\bigl(-\Delta_G-2|\mathcal V_G|\bigr)|v| \leq2|\mathcal P_Gv|.
\ee
Suppose the maximum of $|v|/w$ on $\overline{\mathcal A_\varepsilon}$ is positive and occurs in the interior. At such a point, $v\neq0$. The gradient of the quotient is zero at this maximum, and its Laplacian is nonpositive. Apply the product rule to $|v|=w(|v|/w)$ and use~\eqref{AX-eq-neck-barrier-potential} together with~\eqref{AX-eq-neck-tensor-Kato}. The maximum is then bounded by $C\|\mathcal P_Gv\|_{0,0,\gamma-2;\omega}$. A boundary maximum is bounded by the inner and outer suprema in~\eqref{AX-eq-tensor-model-remainders}. The zero-maximum case is immediate. Since $\omega\asymp\rho$ on this region and $w\asymp\rho^\gamma$, we obtain
\be
\label{AX-eq-neck-Czero-estimate}
\|v\|_{0,0,\gamma;\omega;\mathcal A_\varepsilon} \leq C\Bigl(\|\mathcal P_Gv\|_{0,0,\gamma-2;\omega} +\mathcal C_\varepsilon(v)+\mathcal O_\varepsilon(v)\Bigr).
\ee
Outside this region, the two localized remainder terms directly control the weighted $C^0$ norm: $\omega\asymp m_j$ on each fixed rescaled neighborhood $m_j/2\leq r_j\leq2Rm_j$, and $\omega$ is bounded below on the fixed outer region. These comparisons have constants depending on the chosen $R,r_0$, which remain fixed.
\ese

\emph{3. Derivative estimates and the outer limit.} We finish with the derivative estimates. The principal part of $\mathcal P_G$ is the scalar Laplacian on each tensor component. After scaling a coordinate ball by $\omega$, its leading coefficients are uniformly elliptic with bounded $C^{k,\alpha}$ norms, and its lower-order tensor coefficients are uniformly bounded. Near an inner horizon, this follows by scaling the exact Kottler collar by its reference mass and using its regular angular lift. The parameter $Q_j=\Lambda r_{H,j}^2$ tends to zero, so the scaled coefficient bounds are uniform. Near the outer horizon it follows from the fixed de Sitter lift. On the remaining balls it follows from the coefficient estimates and the cutoff formulas. There is no boundary in a horizon lift. The componentwise interior Schauder estimate, with bounded first-order coupling terms, gives
\be
\label{AX-eq-global-tensor-Schauder}
\|v\|_{k+2,\alpha,\gamma;\omega} \leq C\Bigl(\|\mathcal P_Gv\|_{k,\alpha,\gamma-2;\omega} +\|v\|_{0,0,\gamma;\omega}\Bigr).
\ee
For the coupling estimate, apply the scalar interior estimate to each component on nested balls. The first-order terms are bounded by $C\|v\|_{k+1,\alpha}$ and are then controlled by the interpolation inequality $\|v\|_{k+1,\alpha}\leq\eta\|v\|_{k+2,\alpha}+C_\eta\|v\|_0$. Choosing $\eta$ small absorbs these terms; the rescaled balls and a cover with bounded overlap give this estimate. These are the interior estimates and interpolation inequalities of~\cite[Chapter 6]{LS-GilbargTrudinger}. Combining it with the $C^0$ bound proves~\eqref{AX-eq-global-tensor-remainder-estimate}.

To remove the outer term, we argue by contradiction. If~\eqref{AX-eq-global-tensor-core-estimate} fails as $\varepsilon\to0$, there are $\varepsilon_i\to0$ and tensors $v_i$ with weighted norm one for which both terms on its right tend to zero. The estimate already proved forces $\mathcal O_{\varepsilon_i}(v_i)\geq c>0$. On every fixed region $r\geq r_*>0$, the reference is exactly de Sitter for large $i$. The local estimates and compactness therefore give, after a diagonal subsequence, a limit $v_0$ solving $\mathcal P_{G_D}v_0=0$ off the central circle. The uniform weighted bounds imply $|v_0|+r|\nabla v_0|\leq Cr^\gamma$ near that circle. Lemma~\ref{AX-lem-deSitter-tensor-inverse} gives $v_0=0$. Convergence is uniform on the compact outer region, including its regular horizon lifts, contradicting the lower bound on $\mathcal O_{\varepsilon_i}$. This proves~\eqref{AX-eq-global-tensor-core-estimate} for all sufficiently small $\varepsilon$.
\end{proof}

\paragraph{Orthogonality conditions on the two inner collars}
The inner remainder can be replaced by finitely many linear functionals expressing orthogonality to the kernel of the inner limiting operator. This gives a uniform global estimate for the differential operator. The decaying kernel of the inner limiting operator is classified below.

We denote by $G_S$ the $Q=0$ endpoint of the unit-mass rescaled Kottler family, completed at its horizon with auxiliary coordinate period $8\pi$, and by $\mathcal P_S\coloneqq\tfrac12(\nabla^*\nabla-2\mathcal R_{G_S})$ its zero-$\Lambda$ operator. In the isotropic coordinates $y$, its horizon is $|y|=1/2$. We use weight $\rho_S\coloneqq1+|y|$, regular angular lifts and the tensor symmetries of a single inner collar. The decaying kernel of $\mathcal P_S$ is
\be
\label{AX-eq-Schwarzschild-model-kernel}
\mathcal K_S\coloneqq \{w:\ \mathcal P_Sw=0,\quad \|w\|_{k+2,\alpha,\gamma;\rho_S}<+\infty,\quad w\text{ is regular at the horizon}\}.
\ee
For a sufficiently large fixed $R$, the compact set $K_S\coloneqq\{1/2\leq|y|\leq2R\}$ contains the horizon. The proof in Appendix~\ref{AX-appendix-model-proof-1} shows that $n_S\coloneqq\dim\mathcal K_S$ is finite and that the $L^2(K_S,G_S)$ pairing is positive definite on this kernel. We choose an orthonormal basis $w_1,\ldots,w_{n_S}$ for that pairing; if the kernel is zero, the following lists of functionals are empty.

On the $j$th exact Kottler collar, the coordinate map
\be
\label{AX-eq-core-rescaling-map}
\Phi_{j,\varepsilon}(y,\tau) \coloneqq\Bigl(p_j+m_jy,\frac{\sqrt{c_{h,j}}m_j}{c_{N,j}}\tau\Bigr), \qquad \Phi_{j,\varepsilon}^*G=c_{h,j}m_j^2\widehat G_{Q_j},\qquad Q_j\coloneqq\Lambda r_{H,j}^2
\ee
is understood in the regular local angular lift. The normalized metric $\widehat G_{Q_j}$ is smooth at the horizon and converges to $G_S$ in every smooth norm on each fixed compact subset as $Q_j\to0$. We put $\widehat v_j\coloneqq(c_{h,j}m_j^2)^{-1}\Phi_{j,\varepsilon}^*v$ and define
\be
\label{AX-eq-complete-core-centering}
\ell_{j,a,\varepsilon}(v)\coloneqq \int_{K_S}\langle m_j^{-\gamma}\widehat v_j,w_a\rangle_{G_S}\,d\mu_{G_S}, \qquad 1\leq a\leq n_S.
\ee
The factor $m_j^{-\gamma}$ is required by the weighted tensor norm. The inner orthogonality (or centering) conditions are $\ell_{j,a,\varepsilon}(v)=0$ for both inner collars and every indicated index. The proof is given in Appendix~\ref{AX-appendix-model-proof-1}.

\begin{proposition}[A uniform tensor estimate under the two orthogonality conditions]
\label{AX-prop-completely-centered-inverse-bound}
The space $\mathcal K_S$ is finite dimensional. With the centering functionals above, the global reference satisfies
\be
\label{AX-eq-completely-centered-global-bound}
\|v\|_{k+2,\alpha,\gamma;\omega} \leq C\Bigl(\|\mathcal P_Gv\|_{k,\alpha,\gamma-2;\omega} +\sum_{j=\pm}\sum_{a=1}^{n_S}|\ell_{j,a,\varepsilon}(v)|\Bigr)
\ee
for all sufficiently small $\varepsilon$, with $C$ independent of $\varepsilon$ and of the comparable mass ratio. On the domain defined by the inner orthogonality conditions, $\mathcal P_G$ is therefore injective, has closed range, and its inverse onto that range is uniformly bounded in these weighted norms.
\end{proposition}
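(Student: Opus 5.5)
The approach is to deduce the estimate from Proposition~\ref{AX-prop-global-tensor-remainders}, namely the core estimate~\eqref{AX-eq-global-tensor-core-estimate}, by showing that the inner remainder $\mathcal C_\varepsilon(v)$ is controlled by $\|\mathcal P_Gv\|_{k,\alpha,\gamma-2;\omega}$ plus the centering functionals, up to an absorbable term. The argument is by contradiction and compactness on the rescaled Kottler collars. Finite dimensionality of $\mathcal K_S$ comes first. On $G_S$, elements of $\mathcal K_S$ decay like $|y|^\gamma$ with $\gamma<0$, and they are smooth in the regular angular lift at the horizon. The maximum-principle argument of Proposition~\ref{AX-prop-global-tensor-remainders}, Step~2, uses the barrier $\rho_S^\gamma$, which is positive for $-\Delta_{G_S}-2|\mathcal V_S|$ on $|y|\geq R$ with $R$ large. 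This bounds $\sup\rho_S^{-\gamma}|w|$ by $\sup_{K_S}|w|$, and Schauder estimates upgrade the bound to the full weighted norm. The unit ball of $\mathcal K_S$ in the weighted norm is therefore compact in $C^0(K_S)$ by Arzel\`a--Ascoli, so the space is finite dimensional. If $w\in\mathcal K_S$ vanishes on $K_S$, unique continuation for the diagonal-principal-part system gives $w=0$. This proves that the $L^2(K_S)$ pairing is definite.

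For the main estimate, suppose it fails along $\varepsilon_i\to0$, with masses and separations in the admissible compact ranges. Then there are tensors $v_i$ with $\|v_i\|_{k+2,\alpha,\gamma;\omega}=1$, such that $\|\mathcal P_{G}v_i\|\to0$ and $\ell_{j,a,\varepsilon_i}(v_i)\to0$ for all $j,a$. By~\eqref{AX-eq-global-tensor-core-estimate}, $\mathcal C_{\varepsilon_i}(v_i)\geq c>0$. After passing to a subsequence, this lower bound is attained on one collar $j$. Rescale by $\Phi_{j,\varepsilon_i}$ to get $\widehat w_i\coloneqq m_j^{-\gamma}\widehat v_{j}$. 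The weighted norm is invariant under this rescaling because $\omega\asymp m_j\rho_S$ on the rescaled collar. Hence $\rho_S^{-\gamma}|\widehat w_i|$ and its scaled derivatives are uniformly bounded on the growing domains $\{|y|\lesssim r_{\rm in}/m_j\}$. These domains exhaust the Schwarzschild exterior, since $r_{\rm in}/m_j\to\infty$.

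The rescaled metrics $\widehat G_{Q_j}$ converge smoothly to $G_S$ on compacts, by Lemma~\ref{AX-lem-kottler-comparison}. Beyond the collar, the metric is close to flat at scale $\rho$, by~\eqref{AX-eq-neck-coefficient-smallness}. The rescaled operators converge to $\mathcal P_S$: the $-3v$ term and the $\Ric$ terms scale as $m_j^2$ and drop out. Arzel\`a--Ascoli then yields a limit $w_\infty$ with $\mathcal P_Sw_\infty=0$, regular at the horizon, with finite $\rho_S$-weighted norm, so $w_\infty\in\mathcal K_S$. The centering conditions pass to the limit by smooth convergence on $K_S$, which gives $\langle w_\infty,w_a\rangle_{L^2(K_S)}=0$ for all $a$. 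Thus $w_\infty=0$. On the other hand, $\sup_{K_S}|\widehat w_i|$ is essentially $\mathcal C_{\varepsilon_i}(v_i)\geq c$, and uniform convergence on $K_S$ gives $\sup_{K_S}|w_\infty|\geq c$, a contradiction. Uniformity in the mass ratio and in $D$ follows because these parameters range over compact sets and can be included in the subsequence.

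The main obstacle is making sure the limit really lies in the decaying class with no loss. The weighted bounds give only $|\widehat w_i|\leq C\rho_S^\gamma$ uniformly, and the limit inherits exactly this bound. I would still check that the far-field contributions at the collar edge, where the companion center and the $d^2$ constants enter, do not produce a nonzero limit on $K_S$. This reduces to the contradiction hypothesis together with local convergence, so no concentration at infinity in $y$ affects the sup on $K_S$. Once the estimate holds, the final assertions follow directly. Injectivity on the centered domain is immediate from the estimate. Closed range follows because a Cauchy sequence of images has preimages that are Cauchy by the estimate, and the bound on the inverse is the estimate itself.
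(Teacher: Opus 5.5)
Your proposal follows the paper's proof essentially step for step: the core estimate~\eqref{AX-eq-global-tensor-core-estimate} forces $\mathcal C_{\varepsilon_i}(v_i)\geq c$ on one collar, rescaling by $\Phi_{j,\varepsilon}$ gives a nonzero limit in $\mathcal K_S$ that the centering conditions annihilate, and finite dimensionality comes from a weighted barrier bound by $\sup_{K_S}|w|$ together with compactness.

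The one point to tighten is the maximum principle on the unbounded Schwarzschild end, where $|w|/\rho_S^\gamma$ need not attain its supremum. The paper handles this with the barrier $|y|^\gamma+\delta|y|^{\gamma'}$, where $\gamma<\gamma'<0$, and then lets $\delta\downarrow0$. Your unique-continuation argument for definiteness of the $L^2(K_S)$ pairing is a valid substitute for the paper's appeal to that weighted estimate.
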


\paragraph{The tensor obtained by displacing the Schwarzschild center}
The inner orthogonality conditions are necessary even on a single Schwarzschild model. In areal coordinates, with $F(R)\coloneqq1-2/R$, we set $\varphi(R,\theta)\coloneqq(R-1)\cos\theta$. The scalar Laplacian on invariant fields gives
\be
\label{AX-eq-Schwarzschild-displacement-harmonic}
\Delta_{G_S}\varphi =\frac1{R^2}\del_R\bigl(R^2F\del_R\varphi\bigr) +\frac1{R^2}\Delta_{\mathbb S^2}\varphi =\frac{(2R-2)-2(R-1)}{R^2}\cos\theta=0.
\ee
The function is smooth at the horizon: $R-2$ is a smooth even function of the normal distance, and $\cos\theta$ is smooth on its sphere. Its gradient $V\coloneqq\nabla\varphi$ approaches the axial translation at infinity. The tensor $w_{\rm disp}\coloneqq\mathcal L_VG_S=2\Hess_{G_S}\varphi$ is regular and has size $\Ocal(R^{-2})$ with its scaled derivatives, because $\varphi=R\cos\theta-\cos\theta$ and the Schwarzschild coefficients have differentiated asymptotic expansions in powers of $R^{-1}$. It is nonzero on the horizon sphere, where its tangential part is $-2\cos\theta\,d\Omega^2$. The Ricci variation along this Lie derivative is $\mathcal L_V\Ric_{G_S}=0$. Moreover, contraction of the covariant derivatives of the Hessian gives
\be
\label{AX-eq-Schwarzschild-displacement-gauge}
\operatorname{tr}w_{\rm disp}=2\Delta\varphi=0, \qquad \mathfrak b_{G_S}w_{\rm disp} =-d\Delta\varphi-2\Ric_{G_S}(\nabla\varphi,\,\cdot\,)=0.
\ee
The gauged variation identity then gives $\mathcal P_Sw_{\rm disp}=0$, and its decay places it in $\mathcal K_S$. The following calculation proves that this mode exhausts the kernel in the prescribed symmetry class. The proof is given in Appendix~\ref{AX-appendix-model-proof-2}.

\begin{proposition}[The Schwarzschild displacement tensor spans the decaying kernel]
\label{AX-prop-Schwarzschild-kernel-classification}
For the metric $G_S=F\,d\tau^2+F^{-1}dR^2+R^2g_{\mathbb S^2}$, where $F\coloneqq1-2/R$, $R\geq2$, $g_{\mathbb S^2}$ is the unit round metric, and $\tau$ has period $8\pi$, one has the following classification. In the class of static, axisymmetric tensors with orthogonal symmetry orbits, regular at the horizon, the kernel in~\eqref{AX-eq-Schwarzschild-model-kernel} is
\be
\label{AX-eq-Schwarzschild-full-kernel}
\mathcal K_S=\operatorname{span}\{w_{\rm disp}\}, \qquad w_{\rm disp}=2\Hess_{G_S}((R-1)\cos\theta).
\ee
This assertion holds for every weight $-1<\gamma<0$. With this auxiliary period, $\|w_{\rm disp}\|_{L^2(G_S)}^2=128\pi^2$.
\end{proposition}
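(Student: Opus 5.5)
The kernel will be classified by decomposing in angular harmonics on the sphere and reducing to radial ODE systems. Three facts will be used throughout. First, $\mathcal P_S$ is the gauged linearized Ricci operator at the Ricci-flat metric $G_S$, since $\mathcal P_G$ in~\eqref{AX-eq-gauged-tensor-operator} reduces to $\tfrac12(\nabla^*\nabla-2\mathcal R_{G_S})$ when $\Ric=0$. Second, $G_S$ is spherically symmetric and static. Third, the symmetry class (static, axisymmetric, orthogonal orbits, reflection in $\tau$ and $\varphi$) restricts $w$ to the components $w_{\tau\tau},w_{RR},w_{R\theta},w_{\theta\theta},w_{\varphi\varphi}$, depending on $(R,\theta)$.

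The first step is to apply the trace and Bianchi structure. Taking the trace of $\mathcal P_Sw=0$ gives $\Delta_{G_S}\operatorname{tr}w=0$. Applying $\mathfrak b_{G_S}$ and the linearized Bianchi identity gives the vector equation $\mathfrak b(B(\mathfrak b w)^\sharp)=0$, that is, $\tfrac12(\nabla^*\nabla-\Ric)$ acting on the one-form $\mathfrak b w$, which here is $\tfrac12\nabla^*\nabla$. With the decay $|w|\le C\rho_S^\gamma$, $\gamma<0$, regularity at the horizon in the $8\pi$-periodic lift, and compactness of the completed horizon region (a Riemannian $\mathbb R^2\times S^2$ end), integrate $|\nabla\operatorname{tr}w|^2$ and $|\nabla\mathfrak b w|^2$ by parts. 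The boundary terms at infinity are $O(R^{2\gamma-1})\to0$, and the horizon is interior in the lift. This gives $\operatorname{tr}w=$ const, hence $0$ by decay, and likewise $\mathfrak b w=0$. Thus $w$ is a transverse-traceless decaying solution of the linearized Einstein equation $D\Ric_{G_S}[w]=0$ in harmonic gauge.

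The second step is to write $w$ as the sum of a Lie derivative and a gauge-invariant perturbation. Expand in Legendre harmonics $P_l(\cos\theta)$, using the polar (even) sector only, since the reflection in $\varphi$ excludes the axial modes. For $l\ge2$, use the Regge--Wheeler/Zerilli gauge-invariant combinations. The static Zerilli equation has solutions behaving like $R^{l+1}$ or $R^{-l}$ at infinity and like the regular or logarithmically singular branch at the horizon. The regular-at-horizon solution is the hypergeometric (Legendre-type) one growing like $R^{l+1}$, which is incompatible with decay. Hence the gauge-invariant part vanishes, and $w=\mathcal L_XG_S$ modulo a pure-gauge term. Harmonic gauge then forces $\nabla^*\nabla X=0$ with $\operatorname{div}X=0$. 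For $l=0$, use Birkhoff: the perturbation is a mass change plus gauge, and a mass change is $O(R^{-1})$ but not regular with fixed period $8\pi$ (the surface gravity shifts), or is excluded by the trace condition combined with decay. For $l=1$, the only static polar perturbation is a gauge (displacement).

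The third step is to solve the gauge equation $\Delta X^\flat=0$ mode by mode. With $X=\nabla\varphi$ and $\Delta\varphi=0$, the regular decaying-$\nabla^2\varphi$ solutions are $\varphi=(R-1)P_1$; the other branches are singular at the horizon or give growing Hessians. The remaining vector modes are divergence-free but not gradients, and the corresponding radial ODE has no regular decaying solution by the same Frobenius comparison at $R=2$ and $R=\infty$. Finally, $\|w_{\rm disp}\|^2_{L^2}$ is computed explicitly from~\eqref{AX-eq-Schwarzschild-displacement-gauge}: $\Hess\varphi$ has components $O(R^{-2})$, and integrating $4|\Hess\varphi|^2R^2\,dR\,d\Omega\,d\tau$ over $\tau\in[0,8\pi]$ yields $128\pi^2$.

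The main obstacle is the $l\ge2$ and vector-gauge ODE analysis: one must show that the horizon-regular branch is exactly the growing one, which requires connection formulas for the hypergeometric or Legendre functions in $R/2-1$. I would first try a maximum-principle or Wronskian monotonicity argument on the radial ODE. The $l=0$ exclusion also needs care with the fixed period.
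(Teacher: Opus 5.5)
Your outline has the same architecture as the paper's proof: trace and Bianchi reduction, Legendre decomposition in the polar sector, elimination of the non-gauge part by a radial ODE, classification of the harmonic gauge fields, and an explicit norm. As written, however, several steps fail or are missing.

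\emph{The trace step fails on half of the weight range.} With $|\operatorname{tr}w|\leq CR^\gamma$, $|\nabla\operatorname{tr}w|\leq CR^{\gamma-1}$ and sphere area of order $R^2$, the boundary term in $\int|\nabla\operatorname{tr}w|^2$ is $\Ocal(R^{2\gamma+1})$, not $\Ocal(R^{2\gamma-1})$. It therefore does not vanish for $-1/2\leq\gamma<0$; your bound is correct only for the one-form $\mathfrak b w$. The paper instead applies the maximum principle to the bounded harmonic function $\operatorname{tr}w$, which tends to zero at infinity and meets no boundary in the $8\pi$-periodic lift.

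\emph{The decisive $l\geq2$ step is left open,} and it needs no connection formulas. In Regge--Wheeler gauge the problem reduces to $(R(R-2)H')'-(L_l+4/(R(R-2)))H=0$, with $H(2)=0$ coming from equality of the two normal coefficients at the horizon. The explicit polynomial $H_l(u)=(u^2-1)P_l''(u)$, $u=R-1$, is the regular solution; it is positive and grows like $u^l$. The constant Wronskian $(u^2-1)(H'H_l-HH_l')$ vanishes at $u=1$, so every regular solution is a multiple of $H_l$ and decay forces $H=0$. Alternatively, multiplying by $H/R$ gives a positive identity whose boundary term at infinity is $\Ocal(R^{2\gamma})$. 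The factor $1/R$ is needed exactly when $\gamma\geq-1/2$, which is the same issue as in your trace step.

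\emph{You never control the gauge vector.} Statements such as ``$w=\mathcal L_XG_S$ modulo pure gauge'' or ``for $l=1$ every static polar perturbation is gauge'' are only local. Your Step 3, however, discards the other harmonic branches because they are singular at the horizon, which presupposes that $X$ is smooth in the periodic lift.

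This matters. For $l=1$, in the gauge $K=j=0$ the residual vector $z=c/(R-2)$, $x=cR/(R-2)$ generates the locally pure-gauge solution $A=2c(R-2)^{-2}$, $B=-3A$. The paper excludes it only by first constructing a \emph{regular} gauge vector: the integral formula with $z(2)=-K(2)/2$, giving $x=\Ocal(R-2)$. Horizon smoothness then forces the remainder $A=C(R-2)^{-2}$ to vanish. For $l\geq2$ the paper likewise checks that $z=G/2$, $x=F(j-\tfrac12R^2G')$ is smooth at the horizon and is $\Ocal(R^{\gamma+1})$ at infinity.

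Your $l=0$ argument has the same defect. Regularity at fixed period excludes nothing by itself, because the scaling mode $cG_S$ is a mass change combined with a local gauge transformation and is perfectly regular with period $8\pi$. What works is to write the component as $cG_S+\mathcal L_XG_S$ with $X$ a smooth radial field, use decay to get $c=0$, and then use the trace equation $(R^2X^R)'=0$ together with $X^R(2)=0$ to get $X=0$.

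Finally, closedness of $X^\flat$ in Step 3 is algebraic. The radial component of $\delta dX^\flat=0$ is $-L_lqY_l/R^2$ with $q=(R^2z)'-F^{-1}x$, so $q=0$. Hence $X=\nabla(pY_l)$ with $(R(R-2)p')'=L_lp$, and no separate ODE argument for ``non-gradient modes'' is needed.
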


The proof is given in Appendix~\ref{AX-appendix-model-proof-2}. Orthogonality to the Schwarzschild displacement mode therefore imposes one condition on each inner collar. Under midpoint reflection the two conditions are paired, so only one independent condition remains. The estimate~\eqref{AX-eq-completely-centered-global-bound} controls the differential tensor operator on the domain satisfying these orthogonality conditions; the next section transfers it to the reduced Einstein equation.


\subsection{Estimates for the reduced Einstein equation}
\label{AX-section-projected-tensor-estimates}

\paragraph{A bound for the weighted vector inverse in terms of the masses}
We estimate the nonlocal projection defined by the inverse of the operator on vector fields and apply it to the global tensor equation. The resulting linear and quadratic estimates retain the powers of the mass parameter needed for the nonlinear contraction. The energy estimate for the auxiliary vector field also gives a quantitative weighted upper bound. This bound is not sharp, but its power is sufficient for the linearized Bianchi argument. We use the admissible weight $\gamma$, $\beta\coloneqq\gamma-1$, and $m\asymp\varepsilon^3$ in the normalization $\ell=1$. All fields and sources satisfy the regularity, symmetry and orthogonality conditions used to define $A_G$.

\begin{lemma}[A weighted vector inverse bound with explicit mass dependence]
\label{AX-lem-vector-weight-upper}
For the Kottler--de Sitter reference, the inverse supplied by the energy estimate for the auxiliary vector field satisfies
\be
\label{AX-eq-vector-weight-upper}
\|A_G^{-1}f\|_{k+2,\alpha,\beta;\omega} \leq C m^{\beta+1/2}\|f\|_{k,\alpha,\beta-2;\omega} \leq C\varepsilon^{3\gamma-3/2}\|f\|_{k,\alpha,\beta-2;\omega}.
\ee
Consequently, for $k\geq1$, the divergence-free projection obeys
\be
\label{AX-eq-weighted-projection-upper}
\|\Pi_GT\|_{k,\alpha,\gamma-2;\omega} \leq C\varepsilon^{3\gamma-3/2}\|T\|_{k,\alpha,\gamma-2;\omega}.
\ee
These estimates give a sufficient power of the mass parameter for the argument below.
\end{lemma}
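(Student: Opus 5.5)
We will derive the weighted bound from the $L^2$ energy estimate of Proposition~\ref{AX-prop-uniform-vector}, through duality and local Schauder theory, without attempting any sharp weighted inverse. The plan has three steps. First we control the energy of $V\coloneqq A_G^{-1}f$ by $\|f\|_{k,\alpha,\beta-2;\omega}$ with an explicit power of $m$. Then we convert that energy into pointwise weighted bounds on charts of radius $c\omega$. Finally we deduce the projection bound from the formula $\Pi_G=I-B_GA_G^{-1}\delta_G$.

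\emph{Step 1 (energy).} Testing $A_GV=f$ against $V$ gives $\|B_GV\|_2^2=\langle f,V\rangle$. Cauchy--Schwarz with the weight $\omega$ and~\eqref{AX-eq-uniform-vector-proved} then yield
\be
\|B_GV\|_2^2\leq\|\omega f\|_2\,\|\omega^{-1}V\|_2\leq C\|\omega f\|_2\|B_GV\|_2 .
\ee
Hence $\|\nabla V\|_2+\|\omega^{-1}V\|_2\leq C\|\omega f\|_2$. Pointwise $|f|\leq\omega^{\beta-2}\|f\|_{k,\alpha,\beta-2;\omega}$, and the volume element is comparable to $\omega^2\,d\omega$ on each dyadic shell, up to the bounded lapse factor and the fixed period in $t_E$. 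This gives
\be
\|\omega f\|_2^2\lesssim\int\omega^{2\beta-2}\,dX\,\|f\|^2\lesssim m^{2\beta+1}\|f\|^2 .
\ee
The integral is dominated by the inner collars, because $2\beta-2+3=2\gamma-1<0$. This is the source of the power $m^{\beta+1/2}$.

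\emph{Step 2 (local regularity).} On a ball of radius $c\omega(x)$ we rescale to unit size. The system $A_G$ becomes uniformly strongly elliptic there; its symbol was computed in~\eqref{AX-eq-recovery-reference-vector-symbol}, and its coefficients are bounded by the reference estimates~\eqref{AX-eq-neck-coefficient-smallness}. In regular angular lifts this holds also at the horizons and poles. The interior $L^2\to C^{k+2,\alpha}$ estimate then gives
\be
\|V\|_{C^{k+2,\alpha}_\omega(B_{c\omega})}\leq C\Bigl(\omega^{-3/2}\|V\|_{L^2(B_{2c\omega})}+\omega^{2}\|f\|_{C^{k,\alpha}_\omega(B_{2c\omega})}\Bigr).
\ee
Here $\omega^{-3/2}$ is the volume normalization, where the $t_E$ factor is fixed. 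We bound $\|V\|_{L^2(B_{2c\omega})}\leq C\omega\|\omega^{-1}V\|_2$, use Step 1, and multiply by the weight $\omega^{-\beta}$. The first term becomes $\omega^{-\beta-1/2}m^{\beta+1/2}\|f\|$. This is at most $Cm^{\beta+1/2}\cdot m^{-\beta-1/2}\cdot\omega^{-\beta-1/2}$ times the correct power, with worst case where $\omega$ is smallest. Precisely, since $-\beta-1/2=1/2-\gamma>0$, the factor $\omega^{-\beta-1/2}$ is largest at $\omega\asymp1$, where it is bounded. The first term is therefore at most $Cm^{\beta+1/2}\|f\|$. The second term is bounded by $\|f\|_{k,\alpha,\beta-2;\omega}$ directly. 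Since $m^{\beta+1/2}\geq1$, both terms are controlled by the stated bound, and $m\asymp\varepsilon^3$ gives the power $\varepsilon^{3\gamma-3/2}$.

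\emph{Step 3 (projection).} We apply Step 2 with $f=\delta_GT$, which lies in the weight-$(\beta-2)$ space with $k-1$ derivatives; this is why $k\geq1$ is required. Then $B_G$ maps weight $\beta$ to weight $\beta-1=\gamma-2$ with one derivative lost, so $\|B_GA_G^{-1}\delta_GT\|$ is bounded at order $k$ in the source norm, after readjusting the derivative count in Step 2. Adding the identity term proves~\eqref{AX-eq-weighted-projection-upper}.

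The main obstacle is Step 2. The local estimate must be uniform on the shrinking Kottler collars and at the poles, in the regular lifts with their own angular periods. The global $L^2$ norm must also be converted to a local one without losing more than the volume factor. I would handle this exactly as in the proof of~\eqref{AX-eq-recovery-local-regularity}, using rescaled Kottler coefficients that converge to $G_S$.
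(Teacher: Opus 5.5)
Your proposal is correct and follows the paper's proof essentially step for step. The steps match as follows:
\begin{itemize}
\item The energy bound $\|\nabla V\|_2+\|\omega^{-1}V\|_2\le C\|\omega f\|_2\le Cm^{\beta+1/2}\|f\|_{k,\alpha,\beta-2;\omega}$, obtained from $2\beta+1<0$.
\item The rescaled local $L^2\to C^{k+2,\alpha}$ estimate with volume factor $\omega^{-3/2}$. The paper writes this factor explicitly as $(P_E\omega^3)^{-1/2}$ and checks it also in the lifted inner-horizon charts, where the regular $t_E$ period is $\asymp m$.
\item The inequality $\omega^{-\beta-1/2}m^{\beta+1/2}+1\le 2m^{\beta+1/2}$.
\item The projection bound, obtained by applying the vector estimate at order $k-1$ to $\delta_GT$.
\end{itemize}
The only blemish is the garbled sentence in Step 2 beginning ``This is at most $Cm^{\beta+1/2}\cdot m^{-\beta-1/2}\cdots$''. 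You should delete it, since the ``Precisely'' sentence that follows already gives the correct argument.
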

\begin{proof}
\bse
We write $V=A_G^{-1}f$, and integrate over an interval in $t_E$ of any fixed length $P_E$. The weak vector equation, the energy estimate and Cauchy--Schwarz give
\be
\label{AX-eq-vector-source-energy}
\|\nabla V\|_2+\|\omega^{-1}V\|_2 \leq C\|\omega f\|_2 \leq C P_E^{1/2}m^{\beta+1/2}\|f\|_{k,\alpha,\beta-2;\omega}.
\ee
For the second inequality, $|\omega f|\leq\omega^{\beta-1}\|f\|$, while the spatial measure per unit length in $t_E$ is $N\,d\mu_h$. The reference coefficients bound this measure above by a fixed multiple of Euclidean spatial measure. On each radial region centered at $p_j$, its integral is therefore bounded by $C\int_{cm}^C r^{2\beta}\,dr\leq C m^{2\beta+1}$, since $2\beta+1<0$. The remaining fixed region contributes a bounded amount. This proves the second inequality in~\eqref{AX-eq-vector-source-energy} uniformly for comparable masses.

On nested coordinate neighborhoods at scale $\omega$, the local elliptic estimate for the invariant vector system gives
\be
\label{AX-eq-vector-local-Ltwo-Schauder}
\begin{aligned}
\|V\|_{C^{k+2,\alpha}_{\omega}}
&\leq C\Bigl((P_E\omega^3)^{-1/2}\|V\|_{L^2(\mathrm{larger\ neighborhood})}
+\omega^2\|f\|_{C^{k,\alpha}_{\omega}}\Bigr)\\
&\leq C\Bigl(P_E^{-1/2}\omega^{-1/2}\|\omega^{-1}V\|_2
+\omega^\beta\|f\|_{k,\alpha,\beta-2;\omega}\Bigr).
\end{aligned}
\ee
Away from horizons the volume factor is the spatial factor $\omega^3$ times $P_E$, with lapse bounded below in the relevant rescaled charts. At an inner horizon, a regular period of $t_E$ has size comparable to $m$, a lifted ball has four-dimensional volume comparable to $m^4$, and $\omega\asymp m$. For invariant fields, converting its $L^2$ integral to an arbitrary integration length in $t_E$ gives exactly the same factor $P_E m^3$. The fixed outer horizon has the corresponding estimate at scale one. Thus the local constants are uniform, also at poles. The estimate follows from interior elliptic regularity and the Schauder estimate on these fixed rescaled charts; the principal symbol of $A_G$ is uniformly positive on vectors.

After multiplication by $\omega^{-\beta}$, the energy bound gives a factor which satisfies
\be
\omega^{-\beta-1/2}m^{\beta+1/2}+1\leq2m^{\beta+1/2}, \qquad -\beta-1/2>0,\quad \omega\leq1,\quad m\leq1.
\ee
Taking the supremum therefore proves~\eqref{AX-eq-vector-weight-upper}. Existence and fixed-parameter regularity follow from the previously proved weak inverse and local regularity; approximation gives the same bound for weighted H\"older data. Finally, $\delta_G$ maps the tensor source space to weight $\gamma-3=\beta-2$ with one fewer derivative, while $B_G$ maps weight $\beta$ to weight $\gamma-2$. Applying the vector estimate at order $k-1$ to the definition of $\Pi_G$ proves~\eqref{AX-eq-weighted-projection-upper}.
\ese
\end{proof}

\paragraph{Estimating the effect of the divergence-free projection}
We now pass from the differential operator to the projected linearization of the Einstein tensor. We denote by $Y_\varepsilon$ and $X_\varepsilon$ the weighted H\"older spaces $C^{k,\alpha}_{\gamma-2;\omega}$ and $C^{k+2,\alpha}_{\gamma;\omega}$, respectively, with $k\geq1$, in the regular static tensor class. These are spaces of covariant metric variations; applying the estimate in a regular coefficient chart of Section~\ref{AX-section-boundaries} requires that chart to identify its domain with this tensor domain. For this statement the domain consists of tensors satisfying both $\mathfrak b_Gv=0$ and the inner orthogonality conditions~\eqref{AX-eq-complete-core-centering}. We define
\be
\label{AX-eq-projected-linear-tensor}
L_\varepsilon v\coloneqq\Pi_GD S_3(G)[v], \qquad \mathcal T_G T\coloneqq T-\tfrac12(\operatorname{tr}_GT)G.
\ee
The map $\mathcal T_G$ replaces a tensor by its trace reversal in dimension four; applying it twice gives back the original tensor.

\begin{proposition}[A uniform estimate for the linearized reduced Einstein equation]
\label{AX-prop-projected-tensor-bound}
On this domain, one has
\be
\label{AX-eq-projected-tensor-bound}
\|v\|_{X_\varepsilon}\leq C\|L_\varepsilon v\|_{Y_\varepsilon}
\ee
for the Kottler--de Sitter references and all sufficiently small $\varepsilon$, with a constant independent of $\varepsilon$. The operator is injective with closed range, and its inverse onto that range has this uniform bound. In addition, one has
\be
\label{AX-eq-initial-error-projection}
\Pi_GS_3(G)=S_3(G),\qquad \|\Pi_GS_3(G)\|_{Y_\varepsilon}\leq C\varepsilon^{p}.
\ee
\end{proposition}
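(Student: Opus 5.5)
The plan is to compare $L_\varepsilon$ with the gauged operator $\mathcal P_G$ of~\eqref{AX-eq-gauged-tensor-operator} and then invoke Proposition~\ref{AX-prop-completely-centered-inverse-bound}. On the domain, $\mathfrak b_Gv=0$, so $D(\Ric-3g)_G[v]=\mathcal P_Gv$. The linearized Einstein tensor is the trace reversal of the linearized Ricci term plus a zero-order correction: $DS_3(G)[v]=\mathcal T_G\mathcal P_Gv+E_\varepsilon v$, where $E_\varepsilon$ collects the terms containing $\Ric_G-3G$ and $\Scal_G-12$. Since $\mathcal T_G$ is an involution with uniformly bounded coefficients, $\|\mathcal P_Gv\|_Y\le C\|\mathcal T_G\mathcal P_Gv\|_Y$. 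The curvature defects are bounded by $S_3(G)$, so Lemma~\ref{AX-lem-improved-reference-error} gives $\|E_\varepsilon v\|_Y\le C\varepsilon^{p}\|v\|_X$.

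The core of the argument is to show that the projection does not destroy this control. Write $\Pi_G=I-B_GA_G^{-1}\delta_G$. Then
\[
L_\varepsilon v=DS_3(G)[v]-B_GA_G^{-1}\delta_G DS_3(G)[v].
\]
Differentiating the contracted Bianchi identity $\delta_gS_3(g)=0$ at $g=G$ gives
\[
\delta_GDS_3(G)[v]=-(D\delta)_G[v]\,S_3(G),
\]
a first-order bilinear term in $v$ and the reference error. Its $\beta-2$ norm is therefore $\le C\varepsilon^{p}\|v\|_X$. Lemma~\ref{AX-lem-vector-weight-upper} then bounds the correction by
\[
\|B_GA_G^{-1}\delta_GDS_3(G)[v]\|_Y\le C\varepsilon^{p+3\gamma-3/2}\|v\|_X .
\]
By~\eqref{AX-eq-exponent-relations}, $p+3\gamma-3/2=p_{\rm lin}-3\gamma$, and $p_{\rm lin}>s_{\rm inv}>0$ in Table~\ref{AX-tab-admissible-exponents}. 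This exponent is positive, which is precisely the role of the condition ``absorption of the error in the linearized Bianchi identity''.

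Combining the two previous paragraphs,
\[
\|v\|_X\le C\bigl(\|L_\varepsilon v\|_Y+\varepsilon^{p_{\rm lin}-3\gamma}\|v\|_X\bigr),
\]
where the centering functionals vanish on the domain. For small $\varepsilon$ the last term is absorbed, which proves~\eqref{AX-eq-projected-tensor-bound}. Injectivity follows at once. Closed range follows because a uniform lower bound for a bounded operator gives closed range, via Cauchy sequences in the image. For~\eqref{AX-eq-initial-error-projection}, the contracted Bianchi identity gives $\delta_GS_3(G)=0$, so $\Pi_GS_3(G)=S_3(G)$, and the norm bound is Lemma~\ref{AX-lem-improved-reference-error}; this is also recorded in Lemma~\ref{AX-lem-kottler-comparison}.

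The main obstacle is the Bianchi step: the identity $\delta_GDS_3(G)[v]=-(D\delta)_G[v]S_3(G)$ must be justified in the weighted spaces, with the correct weights and one lost derivative (which is why $k\ge1$). Two further points need care. First, the integrations by parts underlying $A_G^{-1}$ must have vanishing boundary terms in the regular angular lifts at the horizons and poles, as in Proposition~\ref{AX-lem-Einstein-recovery}. Second, the bilinear term must be measured with weight $\gamma-3$: the factor $v$ carries weight $\gamma$ and $\nabla S_3$ carries $\gamma-3$, and their product gains $(\omega/\ell)^{\gamma}$. That gain costs at most a factor $\varepsilon^{3\gamma}$, which is already accounted for in the exponent $p_{\rm lin}$.
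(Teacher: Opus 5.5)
This is essentially the paper's proof: you trace-reverse to reach $\mathcal P_G$ plus a zero-order curvature term, control the projection correction through the linearized contracted Bianchi identity and Lemma~\ref{AX-lem-vector-weight-upper}, and absorb using Proposition~\ref{AX-prop-completely-centered-inverse-bound}, with $\Pi_GS_3(G)=S_3(G)$ coming from the unvaried identity.

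Your exponent bookkeeping needs one correction. Each product of $v$ (weight $\gamma$) with $S_3(G)$, $\nabla S_3(G)$ or $\Ric_G-3G$ reaches the target weight only after an extra factor $\omega^{\gamma}\leq Cm^{\gamma}=C\varepsilon^{3\gamma}$. Consequently your zero-order term is $\Ocal(\varepsilon^{p+3\gamma})\|v\|_{X_\varepsilon}$ rather than $\Ocal(\varepsilon^{p})\|v\|_{X_\varepsilon}$. Likewise the Bianchi source is $\Ocal(\varepsilon^{p+3\gamma})\|v\|_{X_\varepsilon}$ in weight $\gamma-3$. The projection correction is therefore $\Ocal(\varepsilon^{p+6\gamma-3/2})=\Ocal(\varepsilon^{p_{\rm lin}})$, not $\Ocal(\varepsilon^{p_{\rm lin}-3\gamma})$. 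Your closing remark notices this cost, but your displays omit it. Since $p+3\gamma=\tfrac{12}{5}(1+\gamma)>0$ and $p_{\rm lin}>s_{\rm inv}>0$, the absorption and the conclusion are unaffected.
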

\begin{proof}
\bse
We put $E_G\coloneqq\Ric_G-3G$ and recall the approximation exponent $p$ from~\eqref{AX-eq-exponent-relations}. Since $S_3(G)=\mathcal T_GE_G$, differentiation in dimension four gives
\be
\label{AX-eq-Einstein-Ricci-linear-relation}
D S_3(G)[v] =\mathcal T_GD(\Ric-3g)_G[v]+\mathcal C_G(v), \qquad \mathcal C_G(v)\coloneqq\tfrac12\langle v,E_G\rangle_GG -\tfrac12(\operatorname{tr}_GE_G)v.
\ee
Indeed, the derivative of $\operatorname{tr}_g E_g$ is $\operatorname{tr}_GD E_G[v]-\langle v,E_G\rangle_G$. The reference error bound holds at every fixed derivative order, so the product estimate and $\omega^\gamma\leq C m^\gamma=C\varepsilon^{3\gamma}$ give
\be
\label{AX-eq-reference-times-tensor}
\|\mathcal C_G(v)\|_{Y_\varepsilon} \leq C\varepsilon^{p+3\gamma}\|v\|_{X_\varepsilon}.
\ee

Differentiating the contracted Bianchi identity $\delta_gS_3(g)=0$ at $G$ yields
\be
\label{AX-eq-linearized-Bianchi-error}
\delta_GD S_3(G)[v]=-D\delta_G[v]S_3(G),\qquad \|D\delta_G[v]S_3(G)\|_{k-1,\alpha,\gamma-3;\omega} \leq C\varepsilon^{p+3\gamma}\|v\|_{X_\varepsilon}.
\ee
To see the estimate, the divergence variation is a sum of an inverse-metric variation times $\nabla S_3(G)$ and connection variations times $S_3(G)$. These terms have sizes $|v||\nabla S_3(G)|$ and $|\nabla v||S_3(G)|$. Their weights are $2\gamma-3$; conversion to the indicated weight costs at most $m^\gamma$. Differentiating these products proves the same assertion through order $k-1$.

We set $V\coloneqq A_G^{-1}\delta_GD S_3(G)[v]$. By the definition of the projection, $D S_3(G)[v]=L_\varepsilon v+B_GV$. Lemma~\ref{AX-lem-vector-weight-upper} and~\eqref{AX-eq-linearized-Bianchi-error} give
\be
\label{AX-eq-small-projection-on-linear-image}
\|B_GV\|_{Y_\varepsilon} \leq C\varepsilon^{p+6\gamma-3/2}\|v\|_{X_\varepsilon} =C\varepsilon^{p_{\rm lin}}\|v\|_{X_\varepsilon}.
\ee
On the coordinate domain, $D(\Ric-3g)_G[v]=\mathcal P_Gv$. Trace reversal of~\eqref{AX-eq-Einstein-Ricci-linear-relation}, followed by the global estimate under the inner orthogonality conditions, consequently gives
\be
\label{AX-eq-projected-bound-absorption}
\|v\|_{X_\varepsilon} \leq C\|\mathcal P_Gv\|_{Y_\varepsilon} \leq C\|L_\varepsilon v\|_{Y_\varepsilon} +C\bigl(\varepsilon^{p_{\rm lin}}+\varepsilon^{p+3\gamma}\bigr)\|v\|_{X_\varepsilon}.
\ee
The last term is absorbed for small $\varepsilon$, proving~\eqref{AX-eq-projected-tensor-bound}. Injectivity and closed range follow as in Proposition~\ref{AX-prop-completely-centered-inverse-bound}, because the coordinate and centering conditions define a closed subspace. Finally, the unvaried contracted Bianchi identity gives $\delta_GS_3(G)=0$, so the projection leaves this tensor exactly unchanged. Lemma~\ref{AX-lem-improved-reference-error} proves the last estimate.
\ese
\end{proof}

\paragraph{A quadratic estimate for corrections added to the reference metric}
There is also a direct nonlinear estimate in these tensor norms. For $g=G+v$, we define $\mathcal Q_G(v)\coloneqq S_3(G+v)-S_3(G)-D S_3(G)[v]$. On a ball where $\varepsilon^{3\gamma}\|v\|_{X_\varepsilon}$ and $\varepsilon^{3\gamma}\|w\|_{X_\varepsilon}$ are sufficiently small, the inverse metrics are uniformly comparable and the local Ricci formula gives
\be
\label{AX-eq-unprojected-quadratic-weight}
\|\mathcal Q_G(v)-\mathcal Q_G(w)\|_{Y_\varepsilon} \leq C\varepsilon^{3\gamma} (\|v\|_{X_\varepsilon}+\|w\|_{X_\varepsilon})\|v-w\|_{X_\varepsilon}.
\ee
We justify the weight and the constant. In a chart scaled by $\omega$, Taylor's formula for the metric, inverse metric and its first two derivatives bounds the second variation of $S_3$ by $C\omega^{-2}$ times the two scaled $C^{k+2,\alpha}$ tensor norms. Each tensor factor contributes its weight $\omega^\gamma$. Multiplying by the source weight $\omega^{2-\gamma}$ leaves $\omega^\gamma\leq C\varepsilon^{3\gamma}$. Integration of this second variation along the segment between $v$ and $w$ proves the inequality; the reference charts have uniform constants by the preceding coefficient estimates. Applying~\eqref{AX-eq-weighted-projection-upper} now gives
\be
\label{AX-eq-projected-quadratic-proved}
\|\Pi_G(\mathcal Q_G(v)-\mathcal Q_G(w))\|_{Y_\varepsilon} \leq C\varepsilon^{6\gamma-3/2} (\|v\|_{X_\varepsilon}+\|w\|_{X_\varepsilon})\|v-w\|_{X_\varepsilon}.
\ee
This proves the quadratic power used in the contraction scheme for corrections added to the reference metric in the specified regular coordinate domain. Section~\ref{AX-section-force-surjectivity} constructs the two additional tensors needed for surjectivity and improves this quadratic estimate by using a vector weight above the decay exponent of the point-source potential.


\section{\texorpdfstring{Solvability with two additional scalar unknowns (linear analysis)}{Solvability with two additional scalar unknowns (linear analysis)}}
\label{AX-section-force-surjectivity}


\subsection{Fredholm index and the two inner orthogonality conditions}
\label{AX-section-centered-range}

We determine the range of the linearized operator under the inner orthogonality conditions and construct two divergence-free tensors representing its cokernel, the quotient of the source space by its range. We add these tensors to the equation with two unknown scalar coefficients and invert the resulting \emph{linear operator with two additional scalar unknowns} for use in the nonlinear correction. We compute the Fredholm index of the projected operator at fixed positive masses. Independence of the two centering conditions then determines the cokernel of the restricted operator. We first work at a fixed positive value of $\varepsilon$. Constants in the index argument may depend on this value; the final inverse estimate uses the uniform constant already obtained in Proposition~\ref{AX-prop-projected-tensor-bound}.

We write $\mathcal D_\varepsilon\coloneqq\{v\in C^{k+2,\alpha}_{\gamma;\omega}:\mathfrak b_Gv=0\}$ for the domain before imposing the inner orthogonality conditions, and $\mathcal Y_\varepsilon\coloneqq\{F\in C^{k,\alpha}_{\gamma-2;\omega}:\delta_GF=0\}$ for its divergence-free target. All tensors have the prescribed symmetries and are regular in their local angular lifts. Write $\mathcal D_\varepsilon^0$ for the subspace defined by the inner orthogonality conditions, and $n_S\coloneqq\dim\mathcal K_S$ as in~\eqref{AX-eq-Schwarzschild-model-kernel}.

\begin{proposition}[Index zero and a two-dimensional obstruction to solvability]
\label{AX-prop-centered-index}
For every fixed, sufficiently small $\varepsilon>0$, the operator $L_\varepsilon:\mathcal D_\varepsilon\to\mathcal Y_\varepsilon$, before imposing the inner orthogonality conditions, has the Fredholm property and index zero. In the class with both inner regions without midpoint reflection, the inner orthogonality conditions on the two inner regions are independent and
\be
\label{AX-eq-centered-cokernel-dimension}
\dim\bigl(\mathcal Y_\varepsilon/L_\varepsilon\mathcal D_\varepsilon^0\bigr)=2n_S=2.
\ee
In a reflection-invariant class, the same conclusion holds with the number of independent compatible paired centerings in place of $2n_S$.
\end{proposition}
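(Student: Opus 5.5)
To prove Proposition~\ref{AX-prop-centered-index}, I would first establish the Fredholm property of $L_\varepsilon:\mathcal D_\varepsilon\to\mathcal Y_\varepsilon$ at fixed $\varepsilon>0$ and then compute the index by deformation. At fixed positive masses, $\omega$ is bounded above and below on the compact reference domain. The weighted spaces then coincide with ordinary H\"older spaces in the regular angular lifts. Since $M$ has no boundary in these lifts, the problem is a closed elliptic problem.

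On $\mathcal D_\varepsilon$ the gauge condition $\mathfrak b_Gv=0$ holds. Hence $DS_3(G)[v]=\mathcal T_G\mathcal P_Gv+\mathcal C_G(v)$ by~\eqref{AX-eq-Einstein-Ricci-linear-relation}, where $\mathcal P_G$ is a Laplace-type operator. On a closed manifold, $\mathcal P_G$ is Fredholm of index zero on $C^{k+2,\alpha}\to C^{k,\alpha}$, because it is formally self-adjoint with scalar principal symbol. I would compare $L_\varepsilon$ with the map $\Pi_G\mathcal T_G\mathcal P_G$ restricted to the gauge slice. The identification of $\mathcal D_\varepsilon$ with $\mathcal Y_\varepsilon$ uses two decompositions:
\[
C^{k+2,\alpha}=\mathcal D_\varepsilon\oplus\{B_GV\},\qquad C^{k,\alpha}=\mathcal Y_\varepsilon\oplus\{B_GV\}.
\]
Both come from invertibility of the vector operators $A_G$ and $\mathfrak b_GB_G$. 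The first is invertible by Proposition~\ref{AX-prop-uniform-vector}. The second differs from $A_G$ by a compact curvature term, and an analogous energy/Killing argument should exclude its kernel. Using these splittings, I would show that $v\mapsto(L_\varepsilon v,\ \text{gauge part})$ is a compact perturbation of the full operator $DS_3(G)+B_G\mathfrak b_G$. This full operator equals $\mathcal T_G\mathcal P_G+\mathcal C_G$, which has index zero. The two direct-sum splittings have matching complements, both equal to $\{B_GV\}$, so the index of $L_\varepsilon$ is zero. The lower-order terms $\mathcal C_G$ and the Bianchi error $B_GA_G^{-1}D\delta_G[v]S_3(G)$ are compact at fixed $\varepsilon$ by Rellich--Arzel\`a--Ascoli in H\"older spaces, and compact perturbations preserve the index.

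Next, Proposition~\ref{AX-prop-projected-tensor-bound} shows that $L_\varepsilon$ is injective on $\mathcal D_\varepsilon^0$. Index zero on $\mathcal D_\varepsilon$ then gives
\[
\dim\operatorname{coker}(L_\varepsilon|_{\mathcal D_\varepsilon})=\dim\ker(L_\varepsilon|_{\mathcal D_\varepsilon}).
\]
Since $\mathcal D_\varepsilon^0$ has codimension at most $2n_S$ and meets $\ker L_\varepsilon$ trivially, we get $\dim\ker\leq 2n_S$. Conversely,
\[
\dim(\mathcal Y_\varepsilon/L_\varepsilon\mathcal D_\varepsilon^0)=\dim\operatorname{coker}+\operatorname{codim}_{L_\varepsilon\mathcal D_\varepsilon}(L_\varepsilon\mathcal D^0_\varepsilon)=\dim\ker+(\operatorname{codim}\mathcal D_\varepsilon^0-\dim\ker)=\operatorname{codim}\mathcal D_\varepsilon^0,
\]
where the second step uses injectivity on $\mathcal D^0_\varepsilon$. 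It therefore suffices to show that the $2n_S$ centering functionals $\ell_{j,a,\varepsilon}$ are linearly independent on $\mathcal D_\varepsilon$. By Proposition~\ref{AX-prop-Schwarzschild-kernel-classification}, $n_S=1$, which yields the value $2$.

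For the independence, I would construct test tensors $v_\pm\in\mathcal D_\varepsilon$ with $\ell_{j,1,\varepsilon}(v_i)\approx\delta_{ij}\|w_{\rm disp}\|^2_{L^2(K_S)}$. Take $v_\pm$ to be the rescaled displacement tensor $w_{\rm disp}$, pulled back by $\Phi_{\pm,\varepsilon}^{-1}$, multiplied by $m_\pm^{\gamma}$, and cut off on $r_\pm\sim\sqrt{m_\pm}$, for example. Then project onto the gauge slice by subtracting $B_GV$ with $V$ solving $\mathfrak b_GB_GV=\mathfrak b_G(\chi w)$. The gauge-fixing correction is supported effectively where the cutoff acts and is small on $K_S$. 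Convergence $\widehat G_{Q_j}\to G_S$ then makes the $2\times2$ matrix $(\ell_{j,1,\varepsilon}(v_i))$ a small perturbation of a positive multiple of the identity. Its supports are disjoint, so it is invertible without any reflection assumption.

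In the reflection-invariant class I would run the same argument restricted to invariant tensors. The index is again zero, by averaging the splittings over the symmetry group. The centering functionals pair up, and the count becomes the number of independent paired functionals.

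The main obstacle is the index computation: making rigorous that restricting to the gauge slice and projecting onto divergence-free targets preserves index zero. This requires invertibility of the second vector operator $\mathfrak b_GB_G$, not merely of $A_G$, and careful matching of the two complements. If invertibility fails, I would instead use a homotopy in the operator $\Pi_G\mathcal T_G\mathcal P_G+t(\text{lower order})$, combined with a direct computation of the index of $\Pi_G\mathcal T_G\mathcal P_G:\mathcal D_\varepsilon\to\mathcal Y_\varepsilon$ via the formal adjoint with respect to the $L^2(G)$ pairing. That operator is self-adjoint up to compact terms once both the domain and the target are identified with transverse tensors.
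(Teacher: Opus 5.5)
Your reduction of the cokernel count to the rank of the centering functionals on $\mathcal D_\varepsilon$ is correct; it uses injectivity on $\mathcal D_\varepsilon^0$ plus index zero, and it is the paper's finite-dimensional argument. The gap is where you suspect it, and it also affects your test-tensor construction. The domain splitting $C^{k+2,\alpha}=\mathcal D_\varepsilon\oplus\{B_GV\}$ requires $\mathfrak b_GB_G$ to be invertible, and your justification for this is false. One has $\mathfrak b_GB_GV=A_GV+\tfrac12 d(\operatorname{div}V)$, so the two operators differ by a \emph{second-order} term, not a compact curvature term. Indeed $\mathfrak b_GB_G=\tfrac12(\nabla^*\nabla-\Ric_G)$ has principal symbol $\tfrac12|\xi|^2$ rather than $\tfrac12(|\xi|^2+\xi\otimes\xi)$. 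Its quadratic form is $\|B_GV\|_2^2-\tfrac12\|\operatorname{div}V\|_2^2$, which is indefinite: on the round $\mathbb S^4$ model it is negative on the admissible invariant field $\nabla(R\cos\theta)$. So no energy/Killing argument gives invertibility. Relatedly, the operator you compare with is $\mathcal T_G\mathcal P_G+\mathcal C_G=DS_3(G)+\mathcal T_GB_G\mathfrak b_G$, not $DS_3(G)+B_G\mathfrak b_G$. Also, ``matching complements'' does not by itself give index zero; you still need the index of the complement-to-complement block.

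The missing idea is the trace-reversal identity $\mathfrak b_G\mathcal T_G=\delta_G$, together with $\mathcal P_G\mathcal T_G=\mathcal T_G\mathcal P_G$ (see~\eqref{AX-eq-trace-reversal-commutation}). It gives $\mathcal D_\varepsilon=\mathcal T_G(\ker\delta_G)$, with complement $\mathcal T_GB_G(\cdot)$ governed by $\mathfrak b_G\mathcal T_GB_G=A_G$, which Proposition~\ref{AX-prop-uniform-vector} inverts. It also conjugates $L_\varepsilon$ to $\Pi_G\mathcal P_G+K_G$ on divergence-free tensors, with $K_G$ of order zero. The paper then shows that $\Pi_G\mathcal P_G$ on $\ker\delta_G$ is self-adjoint with compact resolvent, using a coercive constrained form and the Stokes-type regularity system~\eqref{AX-eq-constrained-pressure-system}. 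Hence it has index zero, and $K_G$ is a compact perturbation.

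Your block argument also works with this complement. The block $v\mapsto A_G^{-1}\delta_GDS_3(G)[v]=-A_G^{-1}D\delta_G[v]S_3(G)$ is compact at fixed $\varepsilon$. The complement-to-complement block is $\mathfrak b_GB_G$ modulo compact operators; it is Laplace-type and formally self-adjoint, hence Fredholm of index zero. No invertibility is needed, and a triangular factorization gives $\operatorname{ind}L_\varepsilon=0$.

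The same correction repairs your test tensors. Project with $I-\mathcal T_GB_GA_G^{-1}\mathfrak b_G$. Since $\chi w$ is trace free, the correction is $\mathcal T_GB_GW_j$ with $W_j$ as in~\eqref{AX-eq-global-force-tensor-construction}. Its relative $L^2$ size is $\Ocal(L^{-1/2})$ by~\eqref{AX-eq-force-projection-tail}, so the centering matrix is nearly diagonal. Take the cutoff at $Lm_j\ll r_{\rm in}$; your choice $\sqrt{m_j}\asymp\varepsilon^{3/2}$ lies outside the exact Kottler collar. The paper instead proves independence by duality. A vanishing combination has representative $B_GU$ with $U$ Killing, hence zero, off the collars, and the Poincar\'e--energy estimate~\eqref{AX-eq-centering-independence-energy} then forces it to vanish.
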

\begin{proof}
\bse
\emph{1. Trace reversal and the constrained form.} We first establish the relation between the conditions $\mathfrak b_Gv=0$ and $\delta_Gq=0$. Trace reversal is selfadjoint and satisfies $\mathcal T_G^2=I$ in dimension four. Direct contraction of~\eqref{AX-eq-gauged-tensor-operator} gives
\be
\label{AX-eq-trace-reversal-commutation}
\begin{aligned}
&\mathfrak b_G(\mathcal T_Gq)=\delta_Gq,\qquad \operatorname{tr}_G(\mathcal P_Gq)=\bigl(\tfrac12\nabla^*\nabla-3\bigr)\operatorname{tr}_Gq,\\
&\mathcal P_G(fG)=\bigl(\tfrac12\nabla^*\nabla f-3f\bigr)G, \qquad \mathcal P_G\mathcal T_G=\mathcal T_G\mathcal P_G.
\end{aligned}
\ee
The Ricci contractions cancel the trace of the curvature action in the second identity. The same cancellation on a tensor $fG$ gives the third. Thus, for a divergence-free tensor $q$, the variation identity~\eqref{AX-eq-Einstein-Ricci-linear-relation} gives
\be
\label{AX-eq-transverse-index-operator}
L_\varepsilon(\mathcal T_Gq)=\Pi_G\mathcal P_Gq+K_Gq, \qquad K_Gq\coloneqq\Pi_G\mathcal C_G(\mathcal T_Gq).
\ee
The operator $K_G$ has order zero: multiplication by the smooth reference error is followed by the bounded projection $\Pi_G$.

We justify the Sobolev spaces and compact embedding used below without choosing a single period of $t_E$ regular at all horizons. We integrate invariant fields over an interval in $t_E$ of any fixed length $P_E$ with measure $P_EN\,d\mu_h$. In a regular angular lift at a horizon of gravity $\kappa$, this integral is $P_E\kappa/(2\pi)$ times the corresponding smooth four-dimensional integral. The factor is positive and constant on that horizon. The same comparison holds at a pole in the product of the two regular angular planes. For $s=1,2$, we define $H^s$ as the completion of the smooth invariant fields regular in these lifts, using the global measure $P_EN\,d\mu_h$ and the covariant derivatives of orders at most $s$. A finite invariant partition of unity and a finite collection of these charts give equivalent fixed-parameter Sobolev norms. Compactness of $H^1\hookrightarrow L^2$ follows by applying compactness in each smooth bounded chart and extracting a finite diagonal subsequence. All fields are invariant, and averaging local tests over the circle actions and their reflection symmetries gives the same weak equations as unrestricted tests in a regular lift. For smooth fields, integration by parts on a region with the horizon tubes of radius $\delta$ removed has limiting boundary contribution zero: the tube measure per unit length in $t_E$ is $\Ocal(\delta)$, while the fields and their first covariant derivatives are bounded in the regular lift. At a pole we use the product of the two regular rotation planes. Density then extends the identities to their $H^1$ and $H^2$ domains.

We write $H\coloneqq\ker\delta_G\subset L^2$ and $V\coloneqq H\cap H^1$. The inverse of the operator on vector fields shows that $\Pi_G$ is the orthogonal projection onto $H$: $B_G$ is the formal adjoint of $\delta_G$, the decomposition $T=\Pi_GT+B_GA_G^{-1}\delta_GT$ is orthogonal, and the integration by parts has zero limiting contribution at the regular angular axes. The tensor operator has the form $\mathcal P_G=\tfrac12\nabla^*\nabla+\mathcal V_G$, where $\mathcal V_G$ is a symmetric endomorphism with a finite bound for this fixed $\varepsilon$. For $c>\|\mathcal V_G\|_\infty+1$, its constrained form satisfies
\be
\label{AX-eq-transverse-coercive-form}
a_c(q,z)\coloneqq\tfrac12(\nabla q,\nabla z)_{L^2} +(\mathcal V_Gq,z)_{L^2}+c(q,z)_{L^2},\quad a_c(q,q)\geq\tfrac12\|\nabla q\|_{L^2}^2+\|q\|_{L^2}^2, \quad q,z\in V.
\ee
Minimize $a_c(q,q)/2-(F,q)$ on the Hilbert space $V$ for each $F\in H$. Completing the square in the inner product $a_c$ identifies the minimizer with the Riesz representative and proves uniqueness of the weak solution. The solution operator is symmetric and sends bounded subsets of $H$ to bounded subsets of $V$. The embedding proved above makes it compact on $H$.
\ese

\bse
\emph{2. Regularity and the index before the inner orthogonality conditions.} We verify its regularity and domain. The divergence $\delta_G:H^1\to L^2$ is onto in this class of spatial vector fields orthogonal to the axial symmetry orbits, since $B_GA_G^{-1}$ is a right inverse and the fixed-parameter vector regularity maps $L^2$ into $H^2$. The functional $(\mathcal P_G+c)q-F\in H^{-1}$ vanishes on $V$. Composing it with this right inverse gives an $L^2$ functional and hence a one-form annihilating both Killing directions $p\in L^2$ such that
\be
\label{AX-eq-constrained-pressure-system}
(\mathcal P_G+c)q=F+B_Gp^\sharp, \qquad \delta_Gq=0.
\ee
Here we identify vector fields and one-forms by $G$ when applying $A_G$. Covariant commutation gives
\be
\label{AX-eq-divergence-tensor-commutator}
\delta_G\mathcal P_Gq =\tfrac12\nabla^*\nabla(\delta_Gq)+R_1q, \qquad A_Gp^\sharp=(R_1q)^\sharp,
\ee
where $R_1$ is first order with smooth coefficients. Indeed, the third derivatives cancel, and $[\nabla,\nabla^2]q$ is a sum of curvature times $\nabla q$ and a differentiated curvature tensor times $q$. Differentiating the potential in $\mathcal P_G$ has the same form. The second equation follows by taking divergence in~\eqref{AX-eq-constrained-pressure-system}. Since $q\in H^1$, its right side belongs to $L^2$. The inverse of the operator on regular vector fields gives $p\in H^2$, and the tensor equation then gives $q\in H^2$. The local elliptic estimates apply in the smooth lifts; the weak equations extend across the axes by the averaged-test argument. Iteration proves the corresponding higher Sobolev regularity.

The same argument gives the H\"older domain. For smooth data and $k\geq1$, the vector estimate, the tensor Schauder estimate, and interpolation give
\be
\label{AX-eq-constrained-fixed-parameter-Schauder}
\|p\|_{C^{k+1,\alpha}}\leq C_\varepsilon\|q\|_{C^{k,\alpha}},\qquad \|q\|_{C^{k+2,\alpha}}\leq C_\varepsilon \bigl(\|F\|_{C^{k,\alpha}}+\|q\|_{L^2}\bigr).
\ee
For the first inequality we apply the vector estimate at order $k-1$ to the first-order right side. For the second, the tensor estimate first produces the lower norm $\|q\|_{C^{k,\alpha}}$ through $B_Gp^\sharp$; interpolation absorbs it, as well as the intermediate $C^0$ norm, into a small multiple of the left side and an $L^2$ term. Approximation in the regular charts, followed by convergence at a smaller H\"older exponent, proves the same estimate for $C^{k,\alpha}$ data. The uniform bound passes to the full exponent by the defining difference quotients. These are fixed-parameter estimates. Since $\omega$ is positive on the compact region for fixed $\varepsilon$, the weighted and unweighted norms are equivalent here.

Consequently the constrained operator $A\coloneqq\Pi_G\mathcal P_G$, with domain $H\cap H^2$, has a compact selfadjoint inverse after addition of $c$. Its kernel is finite dimensional and its range is closed: both assertions follow by the usual compactness contradiction applied to $I-c(A+c)^{-1}$. Orthogonality and selfadjointness identify the cokernel with the kernel, so its index is zero. We can see directly that the perturbation in~\eqref{AX-eq-transverse-index-operator} preserves this index. With $R\coloneqq(A+c)^{-1}$, we have
\be
\label{AX-eq-index-compact-factorization}
(A+K_G)R=I+(K_G-c)R.
\ee
The operator $(K_G-c)R$ is compact. Approximating it in operator norm by a finite-rank operator with error less than one factors the right side into an invertible operator and an identity plus a finite-rank operator. The latter has equal kernel and cokernel dimensions, as is seen by splitting off the finite-dimensional range of its finite-rank part. Thus its index is zero. The regularity argument applies also to $A+K_G$ and its adjoint, because $K_G$ preserves every Sobolev order. It follows that the same kernel, cokernel, and index occur in the regular H\"older spaces. Trace reversal therefore shows that $L_\varepsilon$ has Fredholm index zero.
\ese

\bse
\emph{3. Independence of the centering conditions.} We next show that the model centerings are independent. Proposition~\ref{AX-prop-Schwarzschild-kernel-classification} proves that every element of $\mathcal K_S$ is divergence-free and trace free, and that $n_S=1$.

Under the rescaling~\eqref{AX-eq-core-rescaling-map}, the functional $q\mapsto\ell_{j,a,\varepsilon}(\mathcal T_Gq)$ has an $L^2(G)$ representative supported on $K_j\coloneqq\Phi_{j,\varepsilon}(K_S)$. After division by its nonzero scaling constant, its pullback equals $w_a+\Ocal(Q_j)$ in every smooth norm on $K_S$. This follows from $\widehat G_{Q_j}=G_S+\Ocal(Q_j)$ on that fixed compact region, including trace reversal and the volume density. The positive definite model Gram matrix implies that, for every linear combination $\sigma_j$ of these representatives,
\be
\|\delta_G\sigma_j\|_{L^2(K_j)} \leq C Q_jm_j^{-1}\|\sigma_j\|_{L^2(K_j)}.
\ee
The divergence in this estimate is taken in the interior of $K_j$; the representatives are not differentiated across their outer boundary.

We suppose that a combination of the centering functionals vanishes on all divergence-free tensors. Its representative $H_0$ is orthogonal to $H$, and the vector energy inverse gives $H_0=B_GU$. Smooth divergence-free tensors are dense in $H$: we approximate in $L^2$ and apply the orthogonal projection, which preserves smoothness. Outside the two sets $K_j$, the field $U$ is Killing. It vanishes on the outer de Sitter region by the symmetry and orthogonality conditions on $U$. The Killing prolongation equation then gives $U=0$ on the connected complement of $K_+\cup K_-$. In particular its outer trace on each $K_j$ is zero.

The scalar Dirichlet Poincar\'e inequality on the fixed rescaled collar, applied to $|U|$, gives $\|U\|_{L^2(K_j)}\leq Cm_j\|\nabla U\|_{L^2(K_j)}$, with a constant uniform for small $Q_j$. Since $\Ric_G=\Lambda G$ on $K_j$, the integrated symmetric-gradient identity gives
\be
2\|B_GU\|_{L^2(K_j)}^2 =\|\nabla U\|_{L^2(K_j)}^2+\|\operatorname{div}_GU\|_{L^2(K_j)}^2 -\Lambda\|U\|_{L^2(K_j)}^2.
\ee
For small $Q_j$ the last term is absorbed, and hence $\|U\|_{L^2(K_j)}\leq Cm_j\|B_GU\|_{L^2(K_j)}$. In the interior $B_GU=\sigma_j$, and integration by parts gives
\be
\label{AX-eq-centering-independence-energy}
\|\sigma_j\|_{L^2(K_j)}^2 =\langle\delta_G\sigma_j,U\rangle_{L^2(K_j)} \leq C Q_j\|\sigma_j\|_{L^2(K_j)}^2.
\ee
The outer trace is zero and the regular horizon contributes no boundary term. Absorption and independence of the finite model basis make every coefficient zero. This proves rank $2n_S$ on the Bianchi domain. The same argument with paired representatives proves the rank in the reflection-invariant class.

Finally, restriction of a Fredholm operator to a domain of codimension $r$ reduces its index by $r$. To check this finite-dimensional assertion, we choose a complement of dimension $r$ to the restricted domain; adjoining its basis vectors recovers the original domain and increases the index by $r$. Equivalently, kernel, range, and quotient fit into the exact sequence obtained by applying the operator to this decomposition. Proposition~\ref{AX-prop-projected-tensor-bound} makes the restricted kernel zero and its range closed. Its cokernel therefore has dimension $r$, which proves~\eqref{AX-eq-centered-cokernel-dimension}.
\ese
\end{proof}


\subsection{Global tensors representing the two obstructions}
\label{AX-section-global-force-columns}

\paragraph{Constructing global divergence-free tensors with prescribed axial fluxes}
We construct two global divergence-free tensors from the displacement fields on the Kottler collars. Their axial fluxes identify the two obstructions, and their outer limit describes the part carried by the de Sitter region. The displacement mode of the inner limit defines regular fields on each exact Kottler collar. Lemma~\ref{AX-lem-deSitter-outer-force} identifies the de Sitter field carrying the total axial flux in the outer limit. We work with comparable reference masses $m_\pm\asymp\varepsilon^3$, separation $d\asymp\varepsilon$, and $\ell=1$. All $L^2$ pairings are divided by one integration length in $t_E$; the invariant volume is $N\,d\mu_h$.

On the unit inner limit, we put $\varphi=(R-1)\cos\theta$, $V_S=\nabla_{G_S}\varphi$ and $w=2\Hess_{G_S}\varphi$. We transport its vector field and define
\be
U_j\coloneqq m_j^{-1}(\Phi_{j,\varepsilon})_*V_S, \qquad w_j\coloneqq2m_jB_GU_j.
\ee
The fields are smooth at the horizon in its regular angular lift. In reference radial coordinates the spatial radial coefficient acquires the factor $b_{Q_j}^{-1}$ and the lapse acquires $\sqrt{b_{Q_j}}/(1-Q_j)$. Their product in the volume density is the constant $(1-Q_j)^{-1}$. Since $\operatorname{div}_{G_S}V_S=0$, this proves
\be
\operatorname{div}_GU_j=0, \qquad\operatorname{tr}_Gw_j=0.
\ee
We denote by $w_j^0$ the transport of $c_{h,j}m_j^2w$ from the local $Q=0$ model. Differentiating the exact collar coefficients gives, for every fixed $b\geq0$,
\be
\begin{aligned}
|U_j-e_z|&\leq Cm_j/r_j,&
|\nabla^b(A_GU_j)|&\leq C_b\Lambda r_j^{-b},\\
|\nabla^b(w_j-w_j^0)|&\leq C_b\Lambda m_jr_j^{1-b},&
|\nabla^bw_j^0|&\leq C_bm_j^2r_j^{-2-b}.
\end{aligned}
\ee
These estimates hold for $r_j\geq m_j$ and in the regular scaled lift for $r_j\asymp m_j$. For the vector estimate, the coefficient difference from $Q=0$ and its scaled derivatives are $\Ocal(\Lambda r_j^2)$; we apply it to the limiting solution $A_{G_S}V_S=0$. The estimate for $w_j-w_j^0$ follows from $w_j=m_j\mathcal L_{U_j}G$.

We choose a smooth decreasing radial function $\chi$ which is one on $[0,1]$ and zero on $[2,\infty)$. For large $L$, possibly depending on $\varepsilon$, we require $4Lm_j$ to remain in each exact Kottler collar, $Lm_j/d\to0$, and $\Lambda m_j^2L^3\to0$. We define the localized tensors and their vector corrections by
\be
\label{AX-eq-global-force-tensor-construction}
\begin{aligned}
q_j&\coloneqq m_j^{-2}\chi(r_j/(Lm_j))w_j
=\frac2{m_j}\chi(r_j/(Lm_j))B_GU_j,\\
W_j&\coloneqq A_G^{-1}\delta_Gq_j,
\qquad \psi_j\coloneqq\Pi_Gq_j=q_j-B_GW_j.
\end{aligned}
\ee
The tensors $q_j$ are trace free and $\delta_G\psi_j=0$. Replacing $L$ by $R_j=m_j^{-\zeta}$ on the $j$th collar, with $0<\zeta<4/9$, satisfies these scale conditions.

\begin{proposition}[Two global tensors with an invertible matrix of axial fluxes]
\label{AX-prop-global-force-columns}
For $L$ sufficiently large and $\varepsilon$ sufficiently small under the preceding scale conditions, the two-by-two matrix $\mathsf F_{ij}\coloneqq\operatorname{Flux}_i(\psi_j)$ on the two separation spheres is invertible, with uniformly bounded inverse. The tensors in~\eqref{AX-eq-global-force-tensor-construction} satisfy
\be
\label{AX-eq-global-force-tensor-bounds}
\|\psi_j\|_2\leq Cm_j^{-1/2}, \qquad d^2\|\psi_j\|_{C^{1,\alpha}_d(\mathcal U_i)} \leq C\sqrt{d/m_j},
\ee
where $\mathcal U_i$ is a fixed enlarged neighborhood of either flux sphere. Bounded linear combinations therefore give divergence-free tensors $\widehat\psi_j$ with $\operatorname{Flux}_i(\widehat\psi_j)=\delta_{ij}$. In the equal-mass reflection-symmetric class, their difference supplies the single tensor associated with separation.
\end{proposition}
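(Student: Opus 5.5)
The plan is to compute each flux $\operatorname{Flux}_i(\psi_j)=\operatorname{Flux}_i(q_j)-\operatorname{Flux}_i(B_GW_j)$ separately, and to show that the leading matrix is an explicit invertible constant matrix up to $o(1)$ errors.

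\emph{First term.} The tensor $q_j$ is supported in $r_j\leq 2Lm_j$, and $Lm_j/d\to0$. Hence $q_j$ vanishes on both separation spheres, and $\operatorname{Flux}_i(q_j)=0$. The whole flux is therefore carried by $-B_GW_j$, where $A_GW_j=\delta_Gq_j$.

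\emph{Identifying $\delta_Gq_j$.} Since $q_j=(2/m_j)\chi B_GU_j$, we get
\[
\delta_Gq_j=\frac2{m_j}\bigl(\chi A_GU_j-B_GU_j(\nabla\chi,\cdot)\bigr).
\]
The collar bounds give $|A_GU_j|\leq C\Lambda$, and $B_GU_j=\Ocal(m_j r_j^{-2})$ on the cutoff annulus $r_j\asymp Lm_j$. The first piece has total mass $\Ocal(\Lambda L^3m_j^2)\to0$. For the second piece, I will pair with constant axial vectors and use the Schwarzschild displacement normalization. The flat analogue of $w_{\rm disp}$ is $2\Hess$ of $\varphi\approx R\cos\theta-\cos\theta$, whose $-\cos\theta$ part is a dipole potential. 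Its sphere flux against $e_z$ is a fixed nonzero multiple $c_S$, computable from $\int_{\mathbb S^2}n_in_j=4\pi\delta_{ij}/3$. So $\delta_Gq_j$ is an approximate point source $c_Se_z\delta^{\rm pt}_{p_j}$ (at the rescaled level), up to errors that vanish in the dual norm as $L\to\infty$ and $\varepsilon\to0$.

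\emph{Comparison with the model field.} I then compare $W_j$ with $c_S$ times the Kelvin field. Near the pair I use $\Gamma(\cdot-p_j)e_z$ from Lemma~\ref{AX-lem-two-force-columns}. For the outer part I glue to $c_S\,\theta_j$-independent multiples of $V_D$ from Lemma~\ref{AX-lem-deSitter-outer-force}, which carries the total flux. Writing $W_j=c_S(\chi_{\rm near}\Gamma(\cdot-p_j)e_z+\dots)+Z_j$, the remainder equation for $Z_j$ has a small source. I will control $Z_j$ on the annuli $\mathcal U_i$ by the energy estimate Proposition~\ref{AX-prop-uniform-vector} together with the interior elliptic argument of Proposition~\ref{AX-lem-Einstein-recovery}, step 2, which gives the $C^{1,\alpha}_d$ bound from $L^2$ with the factor $d^{1/2}$. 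The explicit flux identity~\eqref{AX-eq-flat-force-source} then gives
\[
\operatorname{Flux}_i(-B_GW_j)=c_S\,\delta_{ij}+o(1),
\]
because the sphere around $p_i$ encloses only $p_i$. The curvature correction to the flat flux on a sphere of radius $\vartheta d$ is $\Ocal(d)$, as in~\eqref{AX-eq-deSitter-outer-force-tensor-asymptotic}. Thus $\mathsf F=c_S I+o(1)$, which is invertible with uniformly bounded inverse.

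\emph{The bounds~\eqref{AX-eq-global-force-tensor-bounds}.} We have $\|q_j\|_2^2\sim m_j^{-2}\int_{r_j\gtrsim m_j}m_j^2r_j^{-4}r_j^2\,dr\sim m_j^{-1}$. Since $\Pi_G$ is an $L^2$-orthogonal projection, $\|\psi_j\|_2\leq\|q_j\|_2\leq Cm_j^{-1/2}$. For the local bound, $\|B_GW_j\|_2\leq\|q_j\|_2$, and the interior estimate on $\mathcal U_i$, where $q_j=0$, gives
\[
d^2\|\psi_j\|_{C^{1,\alpha}_d}\leq Cd^{1/2}\|B_GW_j\|_2\leq C\sqrt{d/m_j}.
\]
Finally, $\widehat\psi_j\coloneqq\sum_k(\mathsf F^{-1})_{kj}\psi_k$ gives $\operatorname{Flux}_i(\widehat\psi_j)=\delta_{ij}$. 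In the reflection class, the difference $\psi_+-\psi_-$ is invariant and gives the single separation tensor.

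\emph{Main obstacle.} The hard step is the $o(1)$ control of $Z_j$ on the separation annuli. The energy bound alone is too lossy: the $L^2$ norm of the approximate source is large, of order $m^{-1/2}$. I would handle this by subtracting the explicit near and outer Kelvin/de Sitter fields first, so that the residual source is supported where its dual energy norm is small. If that proves insufficient, I would instead compute the flux via the divergence theorem. On the ball inside $\Sigma_i$ one has $\delta_G\psi_j=0$, so the flux of $\psi_j$ over $\Sigma_i$ equals its flux over a small sphere $r_i=3Lm_i$. The error from curvature terms in the non-conserved Cartesian flux is estimated by $\Lambda d^2\to0$. On the small sphere, the flux can be read off directly as $c_S\delta_{ij}+o(1)$ from the collar structure.
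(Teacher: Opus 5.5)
\textbf{There is a genuine gap in the flux computation.} Several parts of your proposal are correct and agree with the paper: the $L^2$ bound via the orthogonal projection $\Pi_G$, the interior estimate giving $d^2\|\psi_j\|_{C^{1,\alpha}_d(\mathcal U_i)}\leq C\sqrt{d/m_j}$, the vanishing of $q_j$ on both separation spheres, and the normalization and reflection step. What neither of your routes establishes is the central claim $\operatorname{Flux}_i(\psi_j)=c_S\delta_{ij}+o(1)$.

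\emph{Primary route.} No explicit Kelvin or de Sitter model makes the residual source small in the energy-dual norm. The source $\delta_Gq_j$ lives on $r_j\asymp\rho_j\coloneqq Lm_j$, and a point force is not in $H^{-1}$ in three dimensions. Even after matching the total force, the remainder has energy of order $\rho_j^{-1}$, concentrated at scale $\rho_j$. The interior estimate then bounds its flux at scale $d$ only by $C\sqrt{d/\rho_j}$, which diverges under the scale condition $Lm_j/d\to0$. Removing this loss would require the exact, nonexplicit inner solution.

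\emph{Fallback route.} The sentence ``the flux can be read off directly from the collar structure'' is exactly the missing step. On the sphere $r_i=3Lm_i$ one has $\psi_j=-B_GW_j$, which is nonlocal. Conservation from the horizon writes its flux against $U_i$ as an explicit $q_i$-contribution (present only for $j=i$) plus $\pm\int_{\{r_i<s\}}B_GW_j:B_GU_i$. With only your bound $\|B_GW_j\|_2\leq\|q_j\|_2\leq Cm_j^{-1/2}$, and $\|B_GU_i\|_{L^2(\{r_i\leq3Lm_i\})}\leq Cm_i^{1/2}$, this term is $\Ocal(1)$, not $o(1)$. So neither the diagonal value nor the smallness of the off-diagonal entries follows. (Also, in isotropic coordinates $\varphi=y^3+\Ocal(r^{-1})$, so the $r^{-2}$ part of $w_{\rm disp}$ comes from the Schwarzschild connection acting on $\nabla\varphi$. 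A flat computation built on the term $-\cos\theta$ gives the wrong constant; the correct value is $8\pi$.)

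\textbf{What the paper does instead.} It evaluates the fluxes at the cutoff scale through pairings, using four ingredients.
\begin{enumerate}
\item \emph{Projection tail.} Test $A_GW_j=\delta_Gq_j$ against $W_j$ and use Proposition~\ref{AX-prop-uniform-vector}, which controls $\|\omega^{-1}W_j\|_2$. Since the cutoff source sits where $\omega\asymp\rho_j$, this gives $\|B_GW_j\|_2\leq C\|\omega\,\delta_Gq_j\|_2\leq Cm_j^{-1/2}L^{-1/2}$; the bulk term is absorbed by $\Lambda m_j^2L^3\to0$. This gain of $L^{-1/2}$ over $\|q_j\|_2$ is the estimate you are missing.
\item \emph{Orthogonality of $\Pi_G$.} It gives $\langle\psi_j,q_j\rangle=\|q_j\|_2^2-\|B_GW_j\|_2^2$ and, for $i\neq j$, $\langle\psi_i,q_j\rangle=-\langle B_GW_i,B_GW_j\rangle$. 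Together with $m_j\|q_j\|_2^2\to16\pi$, the Gram matrix is determined.
\item \emph{Pairing-to-flux identity.} Testing $\delta_GT=0$ against $\chi U_j$ gives an exact identity writing $\tfrac{m_j}{2}\langle T,q_j\rangle$ as the cutoff average of the flux of $T$ against $U_j$. This converts the Gram matrix into fluxes at scale $Lm_j$.
\item \emph{Outward transport to $r_j=\vartheta d$.} Replacing $U_j$ by $e_z$ costs $\Ocal(L^{-1/2})$. The failure of $e_z$ to be Killing costs $\|\psi_i\|_2\|B_Ge_z\|_{L^2}$ over the neck, where $|B_Ge_z|\lesssim m_j/r_j^2+d^2/r_j$, so this error is $\Ocal(L^{-1/2}+\varepsilon)$. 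The dominant piece is the mass term $m_j/r_j^2$, not the $\Lambda d^2$ you cite, and it is what forces $L$ to be large. Passing from $N\,dA_h$ and $n_h$ to the Euclidean quantities costs $\Ocal(\varepsilon^2\sqrt{d/m_i})=\Ocal(\varepsilon)$ by your local bound.
\end{enumerate}
Together these give $\operatorname{Flux}_j(\psi_i)=\tfrac{m_j}{2}\langle\psi_i,q_j\rangle+\Ocal(L^{-1/2}+\varepsilon)$, hence $\mathsf F\to8\pi I$.
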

\begin{proof}
\bse
\emph{1. Projection and the Gram matrix.} The inverse of the operator on vector fields and integration by parts show that $\Pi_G$ is the orthogonal $L^2$ projection onto the divergence-free tensors in the prescribed invariant class. More explicitly, $\langle T,B_GV\rangle=\langle\delta_GT,V\rangle$, and regularity in the angular lifts makes the boundary contributions at all horizons vanish. The range of $B_G$ is closed by the vector estimate. Thus $\|\psi_j\|_2\leq\|q_j\|_2\leq Cm_j^{-1/2}$.

We next estimate $B_GW_j$. The exact product formula is
\be
\delta_Gq_j =-\frac2{m_j}(B_GU_j)(\nabla\chi_j,\,\cdot\,) +\frac2{m_j}\chi_jA_GU_j, \qquad \chi_j\coloneqq\chi(r_j/(Lm_j)).
\ee
The cutoff source has the previous size $Cr_j^{-3}$; the second term is bounded by $C\Lambda/m_j$. We put $\rho_j=Lm_j$. The vector estimate, tested against $W_j$, gives
\be
\label{AX-eq-force-projection-tail}
\|\nabla W_j\|_2+\|\omega^{-1}W_j\|_2 \leq C\bigl(\rho_j^{-1/2}+\Lambda m_j^{-1}\rho_j^{5/2}\bigr) =C m_j^{-1/2}L^{-1/2}(1+\Lambda m_j^2L^3) \leq C m_j^{-1/2}L^{-1/2}.
\ee
For the term proportional to $\Lambda$, $\omega\asymp r_j$ away from the fixed scaled horizon region and the volume per unit length in $t_E$ is comparable to $r_j^2dr_jd\Omega$; its squared norm is bounded by $C\Lambda^2m_j^{-2}\rho_j^5$. The fixed scaled horizon region obeys the same bound.

The supports of $q_+$ and $q_-$ are disjoint. The orthogonality of $\Pi_G$, together with~\eqref{AX-eq-force-projection-tail}, gives
\be
\label{AX-eq-force-column-Gram}
\langle\psi_j,q_j\rangle=\|\psi_j\|_2^2 =\|q_j\|_2^2-\|B_GW_j\|_2^2,\qquad |\langle\psi_i,q_j\rangle|\leq C(m_im_j)^{-1/2}L^{-1},\qquad i\ne j.
\ee
For the off-diagonal bound we use the sharper identity $\langle\Pi_Gq_i,\Pi_Gq_j\rangle=-\langle B_GW_i,B_GW_j\rangle$. The constants $c_{h,j}$ and $c_{N,j}$ tend to one, and using the local $Q=0$ coefficient formulas gives
\be
\label{AX-eq-force-column-diagonal-limit}
m_j\|q_j\|_2^2 =c_{h,j}^{3/2}c_{N,j}\,a_L +\Ocal(\Lambda m_j^2L^{5/2}+\Lambda^2m_j^4L^5), \qquad a_L\coloneqq\int \chi(|y|/L)^2|w|_{G_S}^2\,d\nu_S \longrightarrow a_\infty>0.
\ee
Here $d\nu_S$ is Schwarzschild volume per unit length in the model coordinate $\tau$. The error follows from $m_j^{-2}(w_j-w_j^0)=\Ocal(\Lambda r_j/m_j)$ and the constant volume-density ratio. The integral defining $a_L$ converges because $w=\Ocal(|y|^{-2})$. Its value can also be computed. In isotropic Cartesian coordinates, with $r=|y|$ and $n=y/r$, we have $\varphi=y^3+\Ocal(r^{-1})$, and differentiation of the Schwarzschild factors gives
\be
\label{AX-eq-displacement-force-normalization}
w_{ij}=\frac{2}{r^2}(\delta_{i3}n_j+\delta_{j3}n_i-\delta_{ij}n_3)+\Ocal(r^{-3}), \qquad \int_{\{r=s\}}w_{ij}n^i\delta^j_3\,dA_0=8\pi+\Ocal(s^{-1}).
\ee
Since $w=2B_{G_S}V_S$ and $\delta_{G_S}w=0$, integration by parts gives $a_\infty=2\lim_{s\to\infty}F_{V_S}(w;s)=16\pi$. The model lapse and area factors tend to their Euclidean values, and $V_S=e_z+\Ocal(s^{-1})$, so these replacements do not change the limit. In particular, the diagonal entries multiplied by $m_j$ are bounded below independently of sufficiently large $L$ and sufficiently small $\varepsilon$.
\ese

\bse
\emph{2. Detection by the axial fluxes.} We next connect this positive pairing to an axial flux. For a divergence-free tensor $T$, we define
\be
\label{AX-eq-force-weighted-flux}
F_{j,U}(T;s)\coloneqq \int_{\{r_j=s\}}N\,T(n_h,U_j)\,dA_h.
\ee
Using $q_j=(2/m_j)\chi(r_j/(Lm_j))B_GU_j$ and testing $\delta_GT=0$ against the regular compactly supported vector $\chi U_j$ gives the exact identity
\be
\label{AX-eq-force-pairing-flux-identity}
\frac{m_j}{2}\langle T,q_j\rangle =-\int_0^\infty\frac{d}{ds}\chi(s/(Lm_j))\, F_{j,U}(T;s)\,ds.
\ee
The derivative of the cutoff has integral $-1$, so the right-hand side is an average of the indicated flux over its cutoff annulus. Positivity of $\chi$ and monotonicity can be imposed in choosing the cutoff.

The flux varies little between that annulus and the sphere $r_j=4Lm_j$. The divergence identity gives the difference as the volume integral of $T:B_GU_j$. On this part of the exact Kottler collar, $|B_GU_j|\leq C(m_j/r_j^2+\Lambda r_j)$. Hence, for $T=\psi_i$,
\be
\label{AX-eq-force-core-transport}
\begin{aligned}
 |F_{j,U}(\psi_i;s_2)-F_{j,U}(\psi_i;s_1)|
 &\leq \|\psi_i\|_2\|B_GU_j\|_{L^2(\{Lm_j\leq r_j\leq4Lm_j\})}\\
 &\leq C\bigl(L^{-1/2}+\Lambda m_j^2L^{5/2}\bigr)
 \leq C L^{-1/2}.
\end{aligned}
\ee
The last inequality uses $\Lambda m_j^2L^3\to0$. Comparable masses are used in the mass estimates. At $r_j=4Lm_j$, the difference $U_j-e_z$ is $\Ocal(L^{-1})$. The source of the equation for $W_i$ vanishes on a neighborhood of that sphere. Scaled interior estimates for $A_GW_i=0$, together with~\eqref{AX-eq-force-projection-tail} or its weaker bound without $L^{-1/2}$, give $|\psi_i|\leq C m_i^{-1/2}(Lm_j)^{-3/2}$. Multiplying by the sphere area and by $|U_j-e_z|$ bounds the resulting flux change by $CL^{-1/2}$.

We transport the flux against $e_z$ to the sphere $r_j=\vartheta d$ used in the axial flux calculation. The reference coefficient estimate on this annulus implies
\be
\label{AX-eq-force-neck-transport}
\|B_Ge_z\|_{L^2(\{4Lm_j\leq r_j\leq\vartheta d\})}^2 \leq C(m_j/L+d^5).
\ee
Indeed, its pointwise bound is $C(m_j/r_j^2+d^2/r_j)$, and the measure is comparable to Euclidean spatial volume. Squaring and integrating gives this inequality. The divergence identity and the first bound in~\eqref{AX-eq-global-force-tensor-bounds} therefore show that this transport changes the flux by at most $C(L^{-1/2}+d^{5/2}m_i^{-1/2})=C(L^{-1/2}+\varepsilon)$. Local vector estimates at scale $d$ give the second bound in~\eqref{AX-eq-global-force-tensor-bounds}. Since $G-G_0=\Ocal(\varepsilon^2)$ on these annuli, replacing $N\,dA_h$, the unit normal, and the metric contractions by their Euclidean counterparts changes the flux by at most $C\varepsilon^2\sqrt{d/m_i}=C\varepsilon$.

Combining these estimates with~\eqref{AX-eq-force-pairing-flux-identity} gives
\be
\label{AX-eq-global-force-matrix}
\operatorname{Flux}_j(\psi_i) =\frac{m_j}{2}\langle\psi_i,q_j\rangle +\Ocal(L^{-1/2}+\varepsilon).
\ee
Its diagonal entries tend to $a_\infty/2=8\pi$ as first $L$ becomes large and then $\varepsilon$ becomes small; its off-diagonal entries tend to zero. These bounds prove invertibility and uniform boundedness of the inverse. They apply also to $L=m^{-\zeta}$, or to comparable mass-dependent choices $R_j=m_j^{-\zeta}$, because the supports of $q_j$ remain inside the exact Kottler collars and have radii $o(d)$. Multiplication by the inverse matrix gives the normalized obstruction tensors. In the equal-mass class with the reflection interchanging the inner horizons, that reflection exchanges the two obstruction tensors with a change of sign because it reverses $e_z$; their difference is invariant and has opposite axial fluxes on the two spheres.
\ese
\end{proof}

The local bound in~\eqref{AX-eq-global-force-tensor-bounds} contributes $C|\lambda|\eta_{\rm sep}\sqrt{d/m}$ to the estimate for the auxiliary vector term, while the energy contribution is $C|\lambda|\eta\sqrt{d/m}$. Since $\eta_{\rm sep}\leq\eta$, their sum is bounded by $C|\lambda|\eta\sqrt{d/m}$. These bounds suffice for the normalization of the axial fluxes and the proof of the Einstein equation. Theorem~\ref{AX-thm-physical-force-inverse} below shows that the two obstruction tensors span a complement of the range obtained after imposing the two inner orthogonality conditions.

\paragraph{The de Sitter limit of the two global obstruction tensors}
The global obstruction tensors have the same limiting field on each fixed outer region. Their sum and difference therefore have different limits, although their two fluxes around the inner horizons remain independent. We prove this directly on the Kottler--de Sitter references. We use the growing cutoffs $R_j\coloneqq m_j^{-\zeta}$ and their radii in the reference coordinates $\rho_j\coloneqq m_jR_j=m_j^{1-\zeta}$ in~\eqref{AX-eq-global-force-tensor-construction}. The two masses satisfy $m_j\asymp\varepsilon^3$, their separation satisfies $d\asymp\varepsilon$, and $|p_j|\leq Cd$. The outer field $V_D$ and its symmetric gradient $\Psi_D$ are those of Lemma~\ref{AX-lem-deSitter-outer-force}. We identify its areal coordinates with the isotropic reference coordinates by $x=X/(1+|X|^2/4)$, leaving the auxiliary coordinate $t_E$ fixed. This map has identity differential at the center.

\begin{lemma}[Convergence of the global obstruction tensors to the de Sitter field]
\label{AX-lem-global-columns-outer-limit}
For every compact subset $K$ of the completed de Sitter model away from its central circle, every integer $k\geq0$ and every $0<\alpha<1$, the global fields in~\eqref{AX-eq-global-force-tensor-construction} satisfy
\be
\label{AX-eq-global-columns-outer-limit}
\|W_j-8\pi V_D\|_{C^{k+1,\alpha}(K)} +\|\psi_j+8\pi\Psi_D\|_{C^{k,\alpha}(K)} \leq C_{K,k,\alpha}\varepsilon^{3\zeta/2},\qquad j\in\{+,-\}.
\ee
The estimate holds for sufficiently small $\varepsilon$ depending on $K$, uniformly on the prescribed compact ranges of mass fractions and geometric parameters. The norms at the cosmological horizon and its poles use the regular de Sitter lifts. In particular, for arbitrary real coefficients $\lambda_j$, one has
\be
\label{AX-eq-global-columns-outer-sum-separation}
\bigl\|\lambda_+\psi_++\lambda_-\psi_- +8\pi(\lambda_++\lambda_-)\Psi_D\bigr\|_{C^{k,\alpha}(K)} \leq C_{K,k,\alpha}\varepsilon^{3\zeta/2}(|\lambda_+|+|\lambda_-|).
\ee
Thus the sum of the axial fluxes has the de Sitter limit in~\eqref{AX-eq-global-columns-outer-sum-separation}, while a vanishing total axial flux cancels this outer term. The assertion concerns the obstruction tensors, rather than the nonlinear Einstein metrics.
\end{lemma}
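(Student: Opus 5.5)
The plan is to compare $W_j$ with $8\pi V_D$ through the equation each satisfies. By Lemma~\ref{AX-lem-deSitter-outer-force}, $A_DV_D=J_z$, while $W_j=A_G^{-1}\delta_Gq_j$ with $\delta_Gq_j$ supported in $r_j\leq2\rho_j$, $\rho_j=m_j^{1-\zeta}$. First I will show that $\delta_Gq_j$ converges to $-8\pi J_z$ as a distribution acting on vector fields smooth near the center. The sign is fixed by $\psi_j=q_j-B_GW_j$ and the required limit $\psi_j\to-8\pi\Psi_D$, so that $W_j\to8\pi V_D$ and $\psi_j\to-B_D(8\pi V_D)$ away from the center.

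To compute the limit, test $\delta_Gq_j$ against a smooth invariant vector field $Y$ and use the exact product formula from the proof of Proposition~\ref{AX-prop-global-force-columns}. The cutoff term equals $-(2/m_j)$ times the average over the cutoff annulus of the flux $\int N\,B_GU_j(n,Y)\,dA$, by the identity~\eqref{AX-eq-force-pairing-flux-identity}. I will replace $Y$ by $Y(p_j)$ at cost $\Ocal(\rho_j\|DY\|)$ times an $\Ocal(m_j\rho_j^{-2}\cdot\rho_j^2/m_j)$ flux. With $Y(p_j)$ fixed, the Schwarzschild normalization~\eqref{AX-eq-displacement-force-normalization}, $2m_j^{-1}\cdot m_j\cdot8\pi/2=8\pi$ after the rescaling, gives $8\pi\,e_z\cdot Y(p_j)$ with error $\Ocal(1/R_j)=\Ocal(m_j^{\zeta})$. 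The $\chi_jA_GU_j$ term is bounded by $\Lambda m_j^{-1}\rho_j^3$. Since $p_j\to0$ with $|p_j|\leq Cd$, I get
\[
\langle\delta_Gq_j,Y\rangle+8\pi\,e_z\cdot Y(0)=\Ocal\bigl(\varepsilon+m_j^{\zeta}+m_j^{-1}\rho_j^{3}\bigr)\|Y\|_{C^1}.
\]

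Next comes the comparison of the inverses. I will set $Z_j\coloneqq W_j-8\pi\chi_cV_D$, where $\chi_c$ is the fixed cutoff near the cosmological region and the exact de Sitter region of $G$ contains $r\geq2r_{\rm out}$. Then $A_GZ_j$ equals the difference of the two sources, plus a term $(A_G-A_D)(8\pi V_D)$ supported where $G\neq G_D$, namely $r\leq2r_{\rm out}$. This last term is $\Ocal(r^{-1})$ with small coefficients there, as in~\eqref{AX-eq-deSitter-force-cutoff-error}. I will then bound $Z_j$ in the energy space of Proposition~\ref{AX-prop-uniform-vector} by duality. The right side pairs with $H^1$ vector fields, not $C^1$ ones, so the point-source difference has to be handled differently. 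The idea is to subtract, before pairing, the explicit local field $8\pi\chi\Gamma e_z$ centered at $p_j$ on the Kottler collar. Writing $U_j$ as $e_z+\Ocal(m_j/r_j)$ turns the residual into an $L^2$-controlled remainder with powers $\rho_j^{1/2}$, $R_j^{-1/2}$, $\varepsilon$, and $\varepsilon^{3/5\cdot\frac12}$ for the middle region. This then yields $\|\omega^{-1}Z_j\|_{2}+\|\nabla Z_j\|_2\leq C\varepsilon^{3\zeta/2}$. Here $3\zeta/2$ arises as $(m_j^{\zeta})^{1/2}=R_j^{-1/2}$, consistent with the $L^{-1/2}$ losses in~\eqref{AX-eq-force-projection-tail} and~\eqref{AX-eq-force-core-transport}, and it dominates the other remainders because $\zeta<4/9$.

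Finally, on a compact set $K$ away from the central circle, $G=G_D$ for small $\varepsilon$ and $A_DZ_j=0$ on a neighborhood of $K$. Interior Schauder estimates in the regular de Sitter lifts, as in Lemma~\ref{AX-lem-deSitter-tensor-inverse}, convert the $L^2$ bound into the $C^{k+1,\alpha}(K)$ bound. Then $\psi_j=-B_GW_j$ on $K$, since $q_j$ vanishes there, gives the tensor estimate. The coordinate identification $x=X/(1+|X|^2/4)$ is exact on the de Sitter region. Linearity gives~\eqref{AX-eq-global-columns-outer-sum-separation}. I expect the main obstacle to be the duality step: making the mismatch between the Kottler point source and the flat point source $J_z$ small in the dual energy norm. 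The source is not bounded in $H^{-1}$ uniformly. The plan is to carry the singular Kelvin parts explicitly on both sides so that only their smooth-coefficient difference is paired.
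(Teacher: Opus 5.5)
Your middle step fails. The bound $\|\nabla Z_j\|_2+\|\omega^{-1}Z_j\|_2\le C\varepsilon^{3\zeta/2}$ for $Z_j=W_j-8\pi\chi_cV_D$ is false, and no bookkeeping of Kelvin parts can make a global energy comparison small.

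\begin{itemize}
\item The field $V_D$ has the singularity $\Gamma e_z$ at the central point, where $|\nabla V_D|\sim r^{-2}$ is not square integrable. Near each small horizon it is close to the nonzero vector $V_D(p_j)$, so it has a nonzero normal component there and lies outside the regular class of Proposition~\ref{AX-prop-uniform-vector}.
\item The field $W_j$ is not bounded in energy at all. Its source sits at $p_j$ on the scale $\rho_j$, and $\|\nabla W_j\|_2+\|\omega^{-1}W_j\|_2$ is of order $\rho_j^{-1/2}$, cf.~\eqref{AX-eq-force-projection-tail}.
\item A comparison field singular at $0$ rather than at $p_j$ leaves a dipole remainder of size $d|x|^{-3}$ on $|x|\gtrsim d$. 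Its squared energy is of order $d^{-1}$.
\item If you recenter the Kelvin field at $p_j$ and smear it at scale $\rho_j$, the Kottler mass term still leaves a residual. Its size is $\|\omega(A_G-A_0)(\Gamma e_z)\|_2\sim m_j\rho_j^{-3/2}=m_j^{(3\zeta-1)/2}$, which does not tend to zero for $\zeta\le1/3$. That range is allowed by~\eqref{AX-eq-admissible-parameter-family}.
\end{itemize}

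The energy norm is dominated by the neighbourhood of $p_j$, which is exactly where $W_j$ is not close to $8\pi V_D$. The convergence holds only on fixed outer regions, so it must be proved in a norm that sees only those regions.

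The missing idea is duality with smooth de Sitter test fields. This puts all the singular behaviour on the side of $W_j$, where only its divergent energy bound is needed.

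\begin{itemize}
\item Take a smooth source $F_{\mathrm o}$ supported away from the central circle and solve $A_D\phi=F_{\mathrm o}$.
\item Transport $\phi$ to the reference by replacing it with $\phi^z(p_j)U_j$ on a ball of radius a fixed multiple of $m_j$ about each $p_j$. This makes the test field regular in the horizon lifts, so integrating by parts against $A_GW_j=\delta_Gq_j$ is legitimate.
\item Because $\phi$ is smooth, the residual $\mathcal R_{\varepsilon,\phi}=A_G\phi_\varepsilon-A_D\phi$ satisfies $\|\omega\mathcal R_{\varepsilon,\phi}\|_2\le C\sqrt{\mathfrak m}\,\|\phi\|_{C^2}$.
\item Then $\langle W_j,F_{\mathrm o}\rangle=\langle\delta_Gq_j,\phi_\varepsilon\rangle-\langle W_j,\mathcal R_{\varepsilon,\phi}\rangle$. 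The first term is your step-one computation. The second is at most $C\rho_j^{-1/2}\sqrt{\mathfrak m}=\Ocal(\varepsilon^{3\zeta/2})$, and this product is where the rate actually comes from.
\end{itemize}

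The result is a local negative-order Sobolev bound for $W_j-8\pi V_D$ on the outer region. Your final step (interior regularity on $K$, where $G=G_D$, $q_j=0$ and $\psi_j=-B_DW_j$) then upgrades it to the stated H\"older bounds.

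Finally, correct the sign in your first step. Since $\chi$ is decreasing, the cutoff term of $\langle\delta_Gq_j,Y\rangle$ is $+(2/m_j)$ times a positively weighted average of the fluxes of $B_GU_j$ against $Y$. Hence $\langle\delta_Gq_j,Y\rangle\to+8\pi Y^z(0)$. With your sign, $W_j$ would converge to $-8\pi V_D$, contradicting the conclusion you want.
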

\begin{proof}
\bse
\emph{1. Transported tests and their equation error.} All pairings are per unit length in $t_E$. We first transport a smooth spatial test vector field orthogonal to the axial symmetry orbits $\phi$ from the de Sitter model to the reference with its two small horizons. We assume that $A_D\phi=0$ on a fixed central coordinate neighborhood $U$. We choose a sufficiently large fixed $L_0$, and smooth radial cutoffs $\eta_j$ equal to one for $r_j\leq L_0m_j$ and zero for $r_j\geq2L_0m_j$. These sets are disjoint and lie inside the exact Kottler collars. Axial invariance gives $\phi(p_j)=a_j e_z$, where $a_j\coloneqq\phi^z(p_j)$. We set
\be
\label{AX-eq-outer-matching-test-transport}
\phi_\varepsilon\coloneqq\phi+\sum_{j=\pm}\eta_j(a_jU_j-\phi), \qquad \mathcal R_{\varepsilon,\phi}\coloneqq A_G\phi_\varepsilon-A_D\phi.
\ee
Here $U_j$ is the regular transported displacement field, with $2B_GU_j=m_j^{-1}w_j$ and $|\nabla^b(A_GU_j)|\leq C_b\Lambda r_j^{-b}$ on the exact Kottler collar. Consequently $\phi_\varepsilon$ is regular in every reference angular lift, including at the small horizons and their poles. It equals $\phi$ on the cutoff annulus $\rho_j\leq r_j\leq2\rho_j$, because $\rho_j/m_j\to\infty$. The error $\mathcal R_{\varepsilon,\phi}$ is bounded by $C\Lambda\|\phi\|_{C^2(U)}$ on $r_j\leq L_0m_j$ and vanishes outside $|X|<2r_{\rm out}$, where $r_{\rm out}=\varepsilon^{3/5}$.

We prove its required norm bound. On the two cutoff annuli, rescaling by $m_j$ gives $|\mathcal R_{\varepsilon,\phi}|\leq Cm_j^{-2}\|\phi\|_{C^2(U)}$. Away from these annuli and the smaller Kottler collars, the differentiated coefficient estimate~\eqref{AX-eq-neck-coefficient-smallness} gives
\be
\label{AX-eq-outer-matching-test-error-pointwise}
|\mathcal R_{\varepsilon,\phi}| \leq C\Bigl(1+\sum_{j=\pm}\frac{m_j}{r_j^3}\Bigr)\|\phi\|_{C^2(U)}.
\ee
We justify the power in this estimate. The operator $A_G$ has coefficients depending on the metric and its first two derivatives, and differentiates its vector argument at most twice. If $\rho=\min(r_+,r_-)$, the coefficient difference from the de Sitter operator is therefore bounded by $C\rho^{-2}(m/\rho+r^2+d^2)$ when applied to a field of bounded $C^2$ norm. Here $m\asymp m_j$ and $r=|X|$. When $\rho\leq d$, we have $r\leq C'd$ and $d^2\rho\leq Cm$, since $m\asymp d^3$. The entire expression is then bounded by $Cm/\rho^3$. When $\rho\geq d$, the inequality $r\leq\rho+Cd$ gives $r^2+d^2\leq C\rho^2$, so the remaining cosmological contribution is bounded. This proves~\eqref{AX-eq-outer-matching-test-error-pointwise}, including both reference transitions. The lapse is bounded below on this region, and the measure is uniformly comparable to spatial Euclidean measure.

We write $\mathfrak m=m_++m_-$. On a cutoff annulus the weighted $L^2$ bound is $Cm_j^{1/2}\|\phi\|_{C^2(U)}$. On its complement, $\omega\leq r_j$ and $\omega\leq r+Cd$ give
\be
\label{AX-eq-outer-matching-test-dual-bound}
\|\omega\mathcal R_{\varepsilon,\phi}\|_2 \leq C\Bigl(r_{\rm out}^{5/2}+\sum_{j=\pm}\sqrt{m_j}\Bigr)\|\phi\|_{C^2(U)} \leq C\sqrt{\mathfrak m}\,\|\phi\|_{C^2(U)}.
\ee
Indeed, the square of the constant-term contribution is bounded by $C\int_0^{2r_{\rm out}}(r+d)^2r^2\,dr\leq Cr_{\rm out}^5$, while each singular contribution is bounded by $Cm_j^2\int_{L_0m_j}^{C}r_j^{-2}\,dr_j\leq Cm_j$. On $r_j\leq L_0m_j$, the additional weighted $L^2$ contribution is at most $C\Lambda m_j^{5/2}\|\phi\|_{C^2(U)}$, which has smaller order than the right-hand side of~\eqref{AX-eq-outer-matching-test-dual-bound}. Finally $r_{\rm out}^{5/2}=\varepsilon^{3/2}\asymp\sqrt{\mathfrak m}$.
\ese

\bse
\emph{2. Normalization of the source.} We next normalize the action of the source on this test. On a sphere of radius $s$ in the exact Kottler collar, the finite-radius formula of the inner limit gives
\be
\label{AX-eq-outer-matching-exact-core-flux}
m_j^{-2}\int_{r_j=s}N_Gw_j(n_h,e_z)\,dA_h =8\pi c_{h,j}^{3/2}c_{N,j} \Bigl(1-\frac{2m_j}{3s}+\frac{m_j^2}{12s^2}\Bigr) +\Ocal(\Lambda s^3/m_j).
\ee
Here $n_h$ points toward increasing $r_j$, and $c_{h,j},c_{N,j}$ are the exact constants on the inner regions. We derive its principal term on the unit $Q=0$ model. In isotropic radius $s$, we put $v(s)\coloneqq2\log(1+1/(2s))$ and $f(s)\coloneqq s+1/(4s)$, so that the harmonic function is $f(s)\cos\theta$ and the spatial metric is $e^{2v}\delta$. The difference of its connection from the flat connection, evaluated on $(n,e_z)$, is $v'e_z$. Hence the sphere average of $\Hess_h(f\cos\theta)(n,e_z)$ is
\be
\label{AX-eq-outer-matching-core-flux-calculation}
\frac13\Bigl[f''+2\Bigl(\frac{f'}s-\frac f{s^2}\Bigr) -v'\Bigl(f'+\frac{2f}s\Bigr)\Bigr] =\frac{6s-1}{3s^2(2s+1)}.
\ee
We used the averages of $\cos^2\theta$ and $1-\cos^2\theta$, which are $1/3$ and $2/3$. Multiplication by $2$ for $w=2\Hess\varphi$, by the sphere area, and by $N_Se^v=1-1/(4s^2)$ gives $8\pi(1-2/(3s)+1/(12s^2))$. The spatial scaling of the Schwarzschild metric contributes $c_{h,j}^{3/2}$, its lapse scaling contributes $c_{N,j}$, and replacing $s$ by $s/m_j$ gives the principal term in~\eqref{AX-eq-outer-matching-exact-core-flux}. The Kottler coefficient change contributes $\Ocal(\Lambda s^2)$ to the flux of $m_j^{-2}w_j^0$. The difference $m_j^{-2}(w_j-w_j^0)=\Ocal(\Lambda s/m_j)$ contributes $\Ocal(\Lambda s^3/m_j)$; since $s\geq m_j$, this bounds both errors.

The divergence of $q_j$ contains its cutoff source and its bulk source. Coarea and the equality $\phi_\varepsilon=\phi$ on the cutoff annulus give
\be
\label{AX-eq-outer-matching-source-pairing}
\langle\delta_Gq_j,\phi_\varepsilon\rangle =-m_j^{-2}\int_{\rho_j}^{2\rho_j}\frac{\chi'(s/\rho_j)}{\rho_j} \int_{r_j=s}N_Gw_j(n_h,\phi)\,dA_h\,ds +\frac2{m_j}\int\chi_j\langle A_GU_j,\phi_\varepsilon\rangle\,d\nu_G.
\ee
The term proportional to $\Lambda$ has absolute value at most $C\Lambda\rho_j^3m_j^{-1}\|\phi\|_{C^2(U)}$, since $\phi_\varepsilon$ is uniformly bounded. On the cutoff annulus $s\asymp\rho_j$, the replacement of $\phi$ by $\phi(p_j)$ has error at most $Cs\|\phi\|_{C^1(U)}$. The flux estimate, $-\int_{\rho_j}^{2\rho_j}\chi'(s/\rho_j)\,ds/\rho_j=1$, $c_{h,j},c_{N,j}=1+\Ocal(\varepsilon^2)$ and $|p_j|\leq Cd$ therefore prove
\be
\label{AX-eq-outer-matching-source-Dirac-limit}
\bigl|\langle\delta_Gq_j,\phi_\varepsilon\rangle-8\pi\phi^z(0)\bigr| \leq C(d+\varepsilon^2+R_j^{-1}+\rho_j+\Lambda\rho_j^3/m_j)\|\phi\|_{C^2(U)}.
\ee
\ese

\bse
\emph{3. Duality and elliptic regularity.} We now take an arbitrary smooth spatial vector source orthogonal to the axial symmetry orbits $F_{\mathrm o}$ supported in a fixed region away from the central circle, and solve $A_D\phi=F_{\mathrm o}$ by the fixed de Sitter energy inverse of~\eqref{AX-eq-deSitter-outer-force-energy-inverse}. Fixed-metric elliptic estimates give $\|\phi\|_{C^2(U)}\leq C\|F_{\mathrm o}\|_{H^4(G_D)}$. The support of $F_{\mathrm o}$ lies where $G=G_D$ for sufficiently small $\varepsilon$. Selfadjointness, $A_GW_j=\delta_Gq_j$, and~\eqref{AX-eq-outer-matching-test-transport} then give
\be
\label{AX-eq-outer-matching-duality}
\langle W_j,F_{\mathrm o}\rangle_{G_D} =\langle\delta_Gq_j,\phi_\varepsilon\rangle -\langle W_j,\mathcal R_{\varepsilon,\phi}\rangle_G.
\ee
All horizon boundary terms vanish because both transported fields are regular in their reference lifts. The energy estimate~\eqref{AX-eq-force-projection-tail} bounds $\|\omega^{-1}W_j\|_2$ by $C\rho_j^{-1/2}$. Combining the last identity with~\eqref{AX-eq-outer-matching-test-dual-bound} and~\eqref{AX-eq-outer-matching-source-Dirac-limit} yields
\be
\label{AX-eq-outer-matching-distribution-rate}
\begin{aligned}
\bigl|\langle W_j,F_{\mathrm o}\rangle_{G_D}-8\pi\phi^z(0)\bigr|
&\leq C\Bigl(\sqrt{\frac{\mathfrak m}{\rho_j}}
+d+\varepsilon^2+R_j^{-1}+\rho_j+\Lambda\rho_j^3/m_j\Bigr)\|F_{\mathrm o}\|_{H^4(G_D)}\\
&\leq C\varepsilon^{3\zeta/2}\|F_{\mathrm o}\|_{H^4(G_D)}.
\end{aligned}
\ee
The powers are $\sqrt{\mathfrak m/\rho_j}=\Ocal(\varepsilon^{3\zeta/2})$, $R_j^{-1}=\Ocal(\varepsilon^{3\zeta})$, $\rho_j=\Ocal(\varepsilon^{3(1-\zeta)})$ and $d=\Ocal(\varepsilon)$. The additional power is $\Lambda\rho_j^3/m_j=\Ocal(\varepsilon^{6-9\zeta})=o(\varepsilon^{3\zeta/2})$, since $\zeta<4/9<4/7$. The distributional equation for $V_D$ gives $\langle V_D,F_{\mathrm o}\rangle=\phi^z(0)$, with the same normalization of $t_E$. Thus~\eqref{AX-eq-outer-matching-distribution-rate} is a local $H^{-4}$ bound for $W_j-8\pi V_D$. We enlarge the fixed test neighborhoods to be invariant under the two rotations and their reflections. Averaging test fields under these isometries then extends this bound from tests by spatial vector fields orthogonal to the axial symmetry orbits to arbitrary local tests.

On each fixed outer region, $A_D(W_j-8\pi V_D)=0$ for all sufficiently small $\varepsilon$. Interior elliptic regularity gives bounds of order $\varepsilon^{3\zeta/2}$ for every derivative on fixed smaller outer regions. We recall the estimate used in this passage: for nested fixed neighborhoods $K_0\Subset K_1$ and any real $s$, localization and the Fourier estimate for the frozen elliptic symbol give
\be
\label{AX-eq-outer-matching-elliptic-upgrade}
\|Y\|_{H^{s+2}(K_0)} \leq C_s\Bigl(\|A_DY\|_{H^s(K_1)}+\|Y\|_{H^{s+1}(K_1)}\Bigr).
\ee
Freezing the smooth coefficients on sufficiently small charts absorbs their principal oscillation; derivatives of the cutoffs and coefficients have lower order and give the last term. This argument applies also to negative $s$, since multiplication by a fixed smooth coefficient is bounded on the relevant Sobolev spaces. Starting with $s=-5$ and using a nested finite sequence of neighborhoods raises the $H^{-4}$ bound one derivative at a time. Sobolev embedding then gives every required H\"older bound. At a horizon or pole these are interior estimates in the regular lift. Finally $q_j=0$ and $G=G_D$ on these outer regions, so $\psi_j=-B_DW_j$ there. This proves~\eqref{AX-eq-global-columns-outer-limit}. The triangle inequality proves~\eqref{AX-eq-global-columns-outer-sum-separation}.
\ese
\end{proof}

The same outer limit does not make the two obstruction tensors dependent on the full spatial region. Their separate fluxes around the inner horizons remain independent by Proposition~\ref{AX-prop-global-force-columns}, and the Gram matrix~\eqref{AX-eq-force-column-Gram} remains nondegenerate. The de Sitter limit depends only on the sum of the two axial fluxes. Their difference remains detectable by the two separate flux integrals at the scale of the separation and in the Kottler collars. These statements require neither equal masses nor cancellation of the sum before the equation with two obstruction terms is solved.


\subsection{Solving the linear equation with two additional scalar unknowns}
\label{AX-section-global-force-inverse}

\paragraph{A vector inverse estimate at weights greater than minus one}
We prove surjectivity by adjoining the two obstruction tensors to the operator restricted by the inner orthogonality conditions. A weighted vector estimate controls the part introduced by the projection outside the cutoff supports, and the inverse bounds record the mass dependence needed in the nonlinear estimates. The exponent $-1$ is the decay rate of the flat point-source potential. Choosing a fixed weight just above this exponent gives an inverse estimate whose constant grows as an arbitrarily small negative power of the mass. This estimate applies to the vector equation; its consequence for the obstruction tensors retains the resulting power of the reference mass.

\begin{lemma}[A uniform inverse estimate for the operator on vector fields at weights greater than minus one]
\label{AX-lem-vector-above-force-weight}
For $0<\sigma_{\rm v}<1/2$ and $\beta\coloneqq-1+\sigma_{\rm v}$, the inverse of the operator on regular spatial vector fields orthogonal to the axial symmetry orbits on the Kottler--de Sitter references satisfies
\be
\label{AX-eq-vector-above-force-weight}
\|V\|_{k+2,\alpha,\beta;\omega} \leq C_{\sigma_{\rm v}}\|A_GV\|_{k,\alpha,\beta-2;\omega}.
\ee
The constant is independent of sufficiently small $\varepsilon$ and comparable positive masses. The regularity and symmetry conditions are those used in the uniform energy estimate for the auxiliary vector field.
\end{lemma}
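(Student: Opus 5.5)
The plan is to prove the weighted estimate by a blow-up (contradiction) argument built on three ingredients. The first is the local Schauder estimate on charts of radius $c\omega$, as in the proof of Lemma~\ref{AX-lem-vector-weight-upper}. The second is a weighted $C^0$ bound. The third is classification of the kernels of the limiting vector operators in the three limiting regimes (Kottler/Schwarzschild at scale $m_j$, flat at scale $d$ and in the intermediate region, de Sitter at scale one). First, the rescaled interior elliptic estimate for the uniformly elliptic system $A_G$ in the regular angular lifts gives
\[
\|V\|_{k+2,\alpha,\beta;\omega}\leq C\bigl(\|A_GV\|_{k,\alpha,\beta-2;\omega}+\sup\omega^{-\beta}|V|\bigr),
\]
with constants uniform in $\varepsilon$, since the scaled coefficients are uniformly bounded by~\eqref{AX-eq-neck-coefficient-smallness} and Lemma~\ref{AX-lem-kottler-comparison}. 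It therefore suffices to bound $\sup\omega^{-\beta}|V|$ by $\|A_GV\|_{k,\alpha,\beta-2;\omega}$.

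Suppose the $C^0$ bound fails. Then there are $\varepsilon_i\to0$ (or a fixed limiting $\varepsilon$, handled by the fixed-parameter injectivity of Proposition~\ref{AX-prop-uniform-vector}) and fields $V_i$ with $\sup\omega^{-\beta}|V_i|=1$ and $\|A_GV_i\|\to0$. Choose points $x_i$ with $\omega(x_i)^{-\beta}|V_i(x_i)|\geq1/2$, rescale by $\omega(x_i)$, and renormalize by $\omega(x_i)^{-\beta}$. The Schauder bound gives $C^{k+2,\alpha}$ compactness on compacta, and a limit $V_\infty\neq0$ is extracted. It solves the homogeneous limiting equation with growth $|V_\infty|\leq C\rho^{\beta}$, where $\rho$ is the limiting weight, and satisfies the corresponding regularity and symmetry. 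Depending on the relative size of $\omega(x_i)$, there are four cases.
\begin{itemize}
\item $\omega(x_i)\asymp m_j$: the limit lives on the Schwarzschild model with the $8\pi$ period.
\item $m_j\ll\omega(x_i)\ll d$, or $d\ll\omega(x_i)\ll1$: the limit is on $\mathbb R^3$ minus a point, with the flat product metric.
\item $\omega(x_i)\asymp d$: the limit is on $\mathbb R^3$ minus two points.
\item $\omega(x_i)\asymp1$: the limit lies on the de Sitter model off the central circle.
\end{itemize}

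In each case the limit must vanish, and the weight $\beta>-1$ is exactly what makes this true. Near a punctured point, $|V_\infty|\leq Cr^{-1+\sigma_{\rm v}}$ is locally $H^1$-removable, in the sense that the boundary terms are $\Ocal(r^{\sigma_{\rm v}})$. Hence $V_\infty$ extends distributionally with no point source; the coefficient of $\Gamma$ in Lemma~\ref{AX-lem-two-force-columns} is excluded precisely because $\Gamma$ has weight $-1$. In the flat cases, $A_0V_\infty=0$ on all of $\mathbb R^3$ with $|V_\infty|\leq C|x|^{\beta}$ at infinity and $\beta<0$. Each Cartesian component then satisfies an elliptic constant-coefficient system whose decaying solutions are zero, by Liouville's theorem for the biharmonic components, since $\Delta\operatorname{div}V=0$ and $\Delta V=-\nabla\operatorname{div}V$. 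In the de Sitter case, removability across $\mathcal Z$ together with the energy inverse~\eqref{AX-eq-deSitter-outer-force-energy-inverse} gives $V_\infty=0$. In the Schwarzschild case, $V_\infty$ is regular at the horizon and decays like $|y|^{\beta}$ with $\beta<0$, so the model energy identity applies ($|\nabla V|\in L^2$ since $2\beta-2<-3$). Integrating by parts with $\Ric_{G_S}=0$ gives $\|B V_\infty\|_2^2\geq\tfrac12\|\nabla V_\infty\|_2^2$, hence $\nabla V_\infty=0$, and decay forces $V_\infty=0$. Note that the displacement field $V_S\to e_z$ has weight $0>\beta$ and is thus excluded.

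The main obstacle is the case where the maximum point escapes, or is located, in a transition annulus between regimes. There, neither model captures $V_i$, and the usual worry is that mass leaks to a neighbouring scale. I would handle this by the standard device of choosing $x_i$ as near-maximizers of $\omega^{-\beta}|V_i|$. The limit then attains its weighted supremum at an interior point, and the Liouville statements above use only the global weighted bound, so every regime including the punctured flat spaces is covered. The remaining care is to check that the cosmological and Kottler coefficient errors $\Ocal(\Lambda t^2)$ and $\Ocal(m/\rho)$ vanish in every rescaled limit. This is uniform for comparable masses and $|p_j|\leq C\varepsilon$, and it yields~\eqref{AX-eq-vector-above-force-weight} with $C_{\sigma_{\rm v}}$ independent of $\varepsilon$.
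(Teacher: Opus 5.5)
Your proposal is correct and follows essentially the same route as the paper. A uniform local Schauder estimate on $\omega$-scaled charts reduces the lemma to a weighted $C^0$ bound, which is then proved by a blow-up contradiction. The possible limits are the Schwarzschild model at the mass scale, flat space minus one or two points at intermediate scales, and de Sitter off the central circle at scale one. They are excluded using $\beta>-1$ for removability and $\beta<-1/2$ for the vanishing boundary terms in the energy identity.

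The differences are minor. In the Schwarzschild case you conclude from the integrated Bochner identity $2\|BV\|_2^2=\|\nabla V\|_2^2+\|\operatorname{div}V\|_2^2$, whereas the paper uses the maximum principle for $|V|^2$. In the punctured flat and de Sitter cases you should state explicitly that the distributionally extended field is smooth by elliptic regularity. The bound $|V|\leq Cr^{\beta}$ with $\beta<-1/2$ does not by itself place the field in $H^1$, so your phrase ``$H^1$-removable'' is misleading. The energy inverse applies only after this regularity step.
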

\begin{proof}
\bse
The principal symbol of $A_G$ is $\tfrac12(|\xi|^2\operatorname{Id}+\xi\otimes\xi)$ in orthonormal coordinates, so it is uniformly positive. The coefficients on the scaled regular charts have uniform bounds. Away from a horizon we can unwrap the auxiliary coordinate $t_E$ when the chart radius exceeds a local period of $t_E$ at an inner horizon; invariance makes the estimates independent of that choice. Near a horizon we use the regular angular coordinates associated with that horizon. Thus these local estimates require no single global period of $t_E$. The interior estimates on these charts give
\be
\label{AX-eq-vector-uniform-local-estimate}
\|V\|_{k+2,\alpha,\beta;\omega} \leq C\Bigl(\|A_GV\|_{k,\alpha,\beta-2;\omega} +\|V\|_{0,0,\beta;\omega}\Bigr).
\ee
We justify the system estimate by its frozen principal operator. In a static interior chart its three-dimensional inverse is the Kelvin matrix computed in Section~\ref{AX-section-two-mass-perturbation}. In a regular four-dimensional lift, the fundamental solution matrix of $-\tfrac12(\Delta I+\nabla\operatorname{div})$ is
\be
\label{AX-eq-four-dimensional-vector-kernel}
\Gamma^{(4)}_{ij}(x)=\frac{3\delta_{ij}+2n_in_j}{8\pi^2|x|^2},\qquad n\coloneqq x/|x|.
\ee
Indeed, the inverse Fourier symbol is $2|\xi|^{-2}I-|\xi|^{-4}\xi\otimes\xi$. Equivalently, $\Gamma^{(4)}_{ij}=2\Phi\delta_{ij}+\del_i\del_j\Psi$, where $\Phi=1/(4\pi^2|x|^2)$ and $\Psi=-\log|x|/(8\pi^2)$ satisfy $-\Delta\Phi=\delta_0^{\rm pt}$ and $-\Delta\Psi=\Phi$, with the Dirac measure at the origin. Differentiating $\log|x|$ gives the fundamental solution matrix in~\eqref{AX-eq-four-dimensional-vector-kernel}. Its second derivatives have homogeneity $-4$ and zero annular mean; this cancellation follows by the divergence theorem and the equal homogeneity of the two boundary integrals. In the convolution for a second derivative, subtraction of the source value at the evaluation point leaves the integrable radial bound $r^{\alpha-1}$. To compare two evaluation points a distance $h$ apart, integration on $r\leq2h$ gives $Ch^\alpha$, while the derivative bound on this matrix on $r>2h$ gives $Ch\int_{2h}^1r^{\alpha-2}\,dr\leq Ch^\alpha$. These estimates prove the local H\"older bound for constant coefficients; the same argument works for the three-dimensional Kelvin matrix. Freezing coefficients on sufficiently small scaled charts and absorbing their oscillation gives the variable-coefficient estimate. Interpolation controls the lower-order terms. Regular invariant vectors are estimated on the complete local angular lift, so the horizons require no boundary estimate here.

We suppose that the desired bound fails along $\varepsilon_i\to0$. Normalizing the weighted norm to one and using~\eqref{AX-eq-vector-uniform-local-estimate}, we can choose points where the weighted $C^0$ norm is bounded below, while the source norms tend to zero. Local compactness on scaled charts gives a nonzero homogeneous limit. There are three possible locations for this limit.
\begin{itemize}
\item If the scale is comparable to one reference mass, the limit is a regular solution of $A_{G_S}V=0$ on the complete unit Schwarzschild model, with $|V|+r|\nabla V|\leq C(1+r)^\beta$. Integration by parts on a radius-$R$ exhaustion gives
\be
\label{AX-eq-Schwarzschild-vector-energy-decay}
\int_{\{r<R\}}|B_{G_S}V|^2 =\int_{\{r=R\}}(B_{G_S}V)(n,V), \qquad \bigl|\int_{\{r=R\}}(B_{G_S}V)(n,V)\bigr| \leq C R^{2\beta+1}\longrightarrow0.
\ee
The measure is taken per unit length in the model coordinate $\tau$. The fixed surface of the auxiliary circle action contributes no boundary term. Thus $B_{G_S}V=0$. Ricci flatness and the Killing equation give $\Delta|V|^2=2|\nabla V|^2\geq0$. Decay at infinity allows us to apply the maximum principle, which gives $V=0$. This contradicts its normalization.
\item If the scale tends to zero but is large compared with the reference masses, the limit is the flat static vector operator on $\mathbb R^3$ with either one or two points removed. A second center may escape to infinity or coincide with the first; both cases give a single excluded point. Near every finite excluded point, the estimate is $|V|+r|DV|\leq Cr^\beta$, and $\beta>-1$ makes all boundary terms in the distributional equation tend to zero, at largest order $r^{\beta+1}$. Hence $A_0V=0$ on all of $\mathbb R^3$. The limiting weighted bound gives decay at infinity. Taking the divergence yields $\Delta\operatorname{div}V=0$, whose bounded decaying solution is zero. The vector equation then gives $\Delta V=0$ componentwise, so the mean-value property and decay give $V=0$.
\item If the scale stays bounded below, the limit is the regular de Sitter vector equation off its central circle. Its codimension is three, and the same boundary calculation gives removability because $\beta>-1$. The resulting smooth global field satisfies $A_{G_D}V=0$. Integration by parts gives $B_{G_D}V=0$, and the argument excluding Killing fields orthogonal to the axial symmetry orbits in Lemma~\ref{AX-lem-deSitter-outer-force} gives $V=0$.
\end{itemize}
These cases exhaust the ratios of the local scale to the reference masses and the separation. In the intermediate case, the scaled cutoff coefficients tend to the flat coefficients, since $m/\omega\to0$ and $d^2+\omega^2\to0$; a cutoff transition produces no additional limit. In the outer case the metric is exactly de Sitter on every fixed region away from the shrinking Kottler collars. The nonzero normalized limit is therefore impossible in every case, which proves the estimate. The previously established weak inverse and fixed-parameter regularity give existence for regular data, so the estimate is an inverse bound rather than only a homogeneous uniqueness statement.
\ese
\end{proof}

For the localized tensors $q_j$ in~\eqref{AX-eq-global-force-tensor-construction}, the cutoff parameter $L$ may be fixed or may tend to infinity, provided the scale conditions in the construction hold. Its cutoff source has size $m_j^{-3}L^{-3}$, and the bulk divergence is bounded by $C\Lambda/m_j$. Consequently, we have
\be
\label{AX-eq-force-source-near-critical-weight}
\begin{aligned}
 \|\delta_Gq_j\|_{k,\alpha,\beta-2;\omega}
 &\leq C\bigl((m_jL)^{-\sigma_{\rm v}}+\Lambda m_j^{-1}(m_jL)^{3-\sigma_{\rm v}}\bigr)\\
 &\leq C(m_jL)^{-\sigma_{\rm v}},\\
 \|q_j\|_{k+1,\alpha,\gamma-2;\omega}
 &\leq C(m_jL)^{-\gamma}.
\end{aligned}
\ee
The same powers hold for scaled derivatives and H\"older seminorms, by the differentiated collar bounds and the fixed cutoff profile. The bulk source is absorbed because $\Lambda m_j^2L^3\to0$. Since $B_G$ lowers the weight by one, the preceding lemma gives a bound for $B_GW_j$ in weight $\beta-1=-2+\sigma_{\rm v}$. On the region $\omega\leq C$, conversion to weight $\gamma-2$ costs only the factor $\omega^{\sigma_{\rm v}-\gamma}\leq C$. We therefore obtain
\be
\label{AX-eq-force-column-small-weight-loss}
\|\psi_j\|_{k,\alpha,\gamma-2;\omega} \leq C\big((m_jL)^{-\gamma}+(m_jL)^{-\sigma_{\rm v}}\big) \leq C m_j^{-\sigma_{\rm v}}.
\ee
Thus an obstruction tensor with prescribed axial flux has norm at most $C\varepsilon^{-3\sigma_{\rm v}}$. Its contribution to the inverse of the linear operator with two additional scalar unknowns also depends on the centering functional and on the distinction between the total flux and the separation obstruction.

\begin{corollary}[Bounds for the divergence-free projection and the quadratic error]
\label{AX-cor-improved-projection-quadratic}
For $k\geq1$, the weighted projection satisfies
\be
\label{AX-eq-improved-weighted-projection-upper}
\|\Pi_GT\|_{k,\alpha,\gamma-2;\omega} \leq C_{\sigma_{\rm v}} m^{\gamma-\sigma_{\rm v}} \|T\|_{k,\alpha,\gamma-2;\omega}.
\ee
Its norm is therefore $\Ocal(\varepsilon^{-s_{\rm inv}})$. For the affine regular tensor neighborhood and the remainder $\mathcal Q_G$ defined above, one has
\be
\label{AX-eq-improved-projected-quadratic}
\|\Pi_G(\mathcal Q_G(v)-\mathcal Q_G(w))\|_{Y_\varepsilon} \leq C\varepsilon^{-s_{\rm quad}} (\|v\|_{X_\varepsilon}+\|w\|_{X_\varepsilon}) \|v-w\|_{X_\varepsilon}.
\ee
\end{corollary}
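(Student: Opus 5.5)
The plan is to repeat the argument of Lemma~\ref{AX-lem-vector-weight-upper}, with the energy bound replaced by the weighted vector inverse of Lemma~\ref{AX-lem-vector-above-force-weight} at $\beta=-1+\sigma_{\rm v}$. By definition, $\Pi_GT=T-B_GA_G^{-1}\delta_GT$. The first step is to bound $\delta_GT$ in the vector source space of weight $\beta-2=-3+\sigma_{\rm v}$ and order $k-1$. Since $\delta_G$ lowers the weight by one, $\delta_GT$ has weight $\gamma-3$ with norm at most $C\|T\|_{k,\alpha,\gamma-2;\omega}$.

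Converting from weight $\gamma-3$ to weight $\beta-2$ means multiplying the pointwise scaled quantity by $(\omega/\ell)^{-(\beta-2)+(\gamma-3)}=\omega^{\gamma-\sigma_{\rm v}}$. Because $\gamma-\sigma_{\rm v}<0$, this factor is at most $C(\min\omega)^{\gamma-\sigma_{\rm v}}\leq Cm^{\gamma-\sigma_{\rm v}}$, using $\omega\geq cm$ for comparable masses. Lemma~\ref{AX-lem-vector-above-force-weight}, applied at order $k-1$, then gives
\[
\|A_G^{-1}\delta_GT\|_{k+1,\alpha,\beta;\omega}\leq C_{\sigma_{\rm v}}m^{\gamma-\sigma_{\rm v}}\|T\|_{k,\alpha,\gamma-2;\omega}.
\]
Next, $B_G$ maps weight $\beta$ to weight $\beta-1=-2+\sigma_{\rm v}$. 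Going back to weight $\gamma-2$ costs the factor $\omega^{\sigma_{\rm v}-\gamma}\leq C$, since $\omega\leq C$ and $\sigma_{\rm v}-\gamma>0$. This no-loss conversion is the same one used just before the corollary for $\psi_j$. Adding back the identity term, which is bounded because $m^{\gamma-\sigma_{\rm v}}\geq1$, proves~\eqref{AX-eq-improved-weighted-projection-upper}. With $m\asymp\varepsilon^3$, the bound reads $\varepsilon^{-3(\sigma_{\rm v}-\gamma)}=\varepsilon^{-s_{\rm inv}}$.

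For the quadratic estimate, I would combine the unprojected bound~\eqref{AX-eq-unprojected-quadratic-weight}, which carries the factor $\varepsilon^{3\gamma}$, with the projection bound just proved. This gives the factor $\varepsilon^{3\gamma}\varepsilon^{-3(\sigma_{\rm v}-\gamma)}=\varepsilon^{-3(\sigma_{\rm v}-2\gamma)}=\varepsilon^{-s_{\rm quad}}$, exactly as in~\eqref{AX-eq-improved-projected-quadratic}. The smallness hypothesis defining the affine neighborhood, namely that $\varepsilon^{3\gamma}\|v\|_{X_\varepsilon}$ and $\varepsilon^{3\gamma}\|w\|_{X_\varepsilon}$ are small, is carried over unchanged. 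The quadratic remainder lies in $C^{k,\alpha}$, so the projection is applied at the same order $k\geq1$. This requires $\delta_G$ of the remainder to lie in $C^{k-1,\alpha}$, which holds.

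The main obstacle is the weight conversion in the first step. One must check that the loss is incurred only where $\omega\asymp m$, so that it produces the power $m^{\gamma-\sigma_{\rm v}}$ rather than something worse. One must also check that Lemma~\ref{AX-lem-vector-above-force-weight} applies to sources $\delta_GT$ for general $T$, and not only to the smooth sources used in its proof. For the second point I would use density of regular sources together with the weak inverse from the energy estimate, passing to a limit at a slightly smaller H\"older exponent as in Proposition~\ref{AX-prop-centered-index}.
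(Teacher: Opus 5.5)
Your proposal is correct and follows the paper's proof essentially step for step. It makes the same conversion of $\delta_GT$ from weight $\gamma-3$ to $\sigma_{\rm v}-3$ at cost $m^{\gamma-\sigma_{\rm v}}$ via $\omega\geq cm$, applies the uniform vector inverse at $\beta=-1+\sigma_{\rm v}$, uses the lossless return to weight $\gamma-2$ via $\omega\leq C$, absorbs the identity term since $m^{\gamma-\sigma_{\rm v}}\geq1$, and obtains the factor $m^{\gamma}\cdot m^{\gamma-\sigma_{\rm v}}\asymp\varepsilon^{-s_{\rm quad}}$ for the quadratic estimate.
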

\begin{proof}
\bse
Since $\delta_G$ lowers the source weight by one, conversion from weight $\gamma-3$ to $\beta-2=\sigma_{\rm v}-3$ costs at most $Cm^{\gamma-\sigma_{\rm v}}$. After applying the uniform inverse of the operator on vector fields, $B_G$ maps to weight $\sigma_{\rm v}-2$, which embeds in weight $\gamma-2$ with a uniform constant because $\sigma_{\rm v}>\gamma$ and $\omega$ is bounded above. Explicitly,
\be
\label{AX-eq-projection-weight-conversion}
\begin{aligned}
 \|\delta_GT\|_{k-1,\alpha,\sigma_{\rm v}-3;\omega}
 &\leq C m^{\gamma-\sigma_{\rm v}}\|T\|_{k,\alpha,\gamma-2;\omega},\\
 \|B_GA_G^{-1}\delta_GT\|_{k,\alpha,\gamma-2;\omega}
 &\leq C_{\sigma_{\rm v}}\|\delta_GT\|_{k-1,\alpha,\sigma_{\rm v}-3;\omega}.
\end{aligned}
\ee
The formula $\Pi_G=I-B_GA_G^{-1}\delta_G$ proves the first assertion; its identity term is absorbed because $m^{\gamma-\sigma_{\rm v}}\geq1$ for small $m$. The source weight conversion uses the lower bound $\omega\geq cm$, and the output conversion uses the upper bound $\omega\leq C$, so their constants are uniform.

The unprojected quadratic estimate already proved in the affine tensor chart has loss $m^\gamma\asymp\varepsilon^{3\gamma}$. Applying the first assertion gives the total factor
\be
\label{AX-eq-quadratic-weight-conversion}
m^{\gamma-\sigma_{\rm v}}m^\gamma=m^{2\gamma-\sigma_{\rm v}}\asymp\varepsilon^{-s_{\rm quad}}.
\ee
This proves the second assertion. Applying the same projection estimate to the differentiated Bianchi bound improves the difference between the linearized Einstein tensor and its projection to $C\varepsilon^{p+3\gamma-s_{\rm inv}}\|v\|_{X_\varepsilon}=C\varepsilon^{p_{\rm cor}+3\gamma}\|v\|_{X_\varepsilon}$.
\ese
\end{proof}

\paragraph{Solvability after adding the two obstruction tensors}
The regular displacement fields on the Kottler collars determine the two divergence-free tensors adjoined to obtain a surjective operator. We use the tensors $q_j$ and $\psi_j=\Pi_Gq_j$ from~\eqref{AX-eq-global-force-tensor-construction}, with the growing cutoff radii $R_j=m_j^{-\zeta}$ of Lemma~\ref{AX-lem-global-columns-outer-limit} in place of a fixed $L$. The supports of $q_j$ are contained in the reference-coordinate balls $r_j\leq2m_jR_j=2m_j^{1-\zeta}$, which lie inside the exact Kottler collars for sufficiently small $\varepsilon$. The parameters $\gamma,\sigma_{\rm v},\zeta$ satisfy Table~\ref{AX-tab-admissible-exponents}. The two masses are comparable, and we write $m\asymp m_j\asymp\varepsilon^3$ in estimates which are unchanged by a fixed comparability factor. Pairings in this argument are divided by the auxiliary coordinate period.

\begin{theorem}[Existence, uniqueness and bounds for the linear equation with two additional unknowns]
\label{AX-thm-physical-force-inverse}
Fix $0<f_*<8\pi$. On the tensor domain defined by the Bianchi coordinate condition and the centering conditions on both Kottler collars, the operator
\be
\label{AX-eq-global-force-augmented-map}
\mathbb L_\varepsilon(v,\lambda) \coloneqq L_\varepsilon v+\lambda_+\psi_++\lambda_-\psi_-
\ee
is bijective onto $\mathcal Y_\varepsilon$ for every sufficiently small $\varepsilon$. Its solution for a source $f$ satisfies
\be
\label{AX-eq-global-force-inverse-loss}
\|v\|_{X_\varepsilon}\leq C\varepsilon^{-s_{\rm inv}}\|f\|_{Y_\varepsilon}, \qquad |\lambda|\leq C\varepsilon^{3\gamma}\|f\|_{Y_\varepsilon}.
\ee
The constants are uniform for comparable positive masses and for the fixed compact range of separations. For the same growing cutoffs, the matrix of axial fluxes of Proposition~\ref{AX-prop-global-force-columns} satisfies
\be
\label{AX-eq-physical-force-matrix-rate}
\|\mathsf F-8\pi I\|_{\mathbb R^2\to\mathbb R^2}\leq C\varepsilon^{3\zeta/2}, \qquad |\mathsf F\lambda|\geq f_*|\lambda|\quad(\lambda\in\mathbb R^2),
\ee
for sufficiently small $\varepsilon$. In the equal-mass reflection-invariant class, the corresponding operator with one obstruction coefficient is bijective with the same bounds. Replacing $\psi_+$ and $\psi_-$ by the linear combinations $\widehat\psi_j$ satisfying $\operatorname{Flux}_i(\widehat\psi_j)=\delta_{ij}$ preserves these inverse estimates.
\end{theorem}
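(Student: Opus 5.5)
The proof splits into the algebra of the augmented operator, the flux matrix, and the quantitative inverse bounds. For bijectivity I would combine Proposition~\ref{AX-prop-centered-index} with Proposition~\ref{AX-prop-projected-tensor-bound}. On the centered domain $\mathcal D_\varepsilon^0$, $L_\varepsilon$ is injective with closed range of codimension two in $\mathcal Y_\varepsilon$. It then suffices to show that $\psi_+,\psi_-$ span a complement of $L_\varepsilon\mathcal D_\varepsilon^0$. Suppose $L_\varepsilon v+\lambda_+\psi_++\lambda_-\psi_-=0$. I would test against the axial fluxes. Every element of the range $L_\varepsilon\mathcal D_\varepsilon^0$ has the form $DS_3(G)[v]-B_GV$. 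I would show that its flux functionals vanish up to an error that is small relative to $|\lambda|$, and that the $\psi_j$ have invertible flux matrix $\mathsf F$ (Proposition~\ref{AX-prop-global-force-columns}). Together these force $\lambda=0$, and injectivity gives $v=0$. Surjectivity then follows from the dimension count $2n_S=2$.

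Rather than using fluxes of $L_\varepsilon v$ directly, I would pair the equation with the localized tensors $q_j$. Since $\Pi_G$ is the orthogonal projection, $\langle L_\varepsilon v,q_j\rangle=\langle DS_3(G)[v],\psi_j\rangle$. On the exact Kottler collar, $q_j=(2/m_j)\chi B_GU_j$ is built from the Schwarzschild displacement mode. Integration by parts against the linearized Bianchi identity, together with the centering condition $\ell_{j,1,\varepsilon}(v)=0$ and the kernel classification of Proposition~\ref{AX-prop-Schwarzschild-kernel-classification}, should make $\langle L_\varepsilon v,q_j\rangle$ small. It is controlled by $\varepsilon^{3\gamma}$ times a small factor coming from $Q_j$ and the cutoff annulus. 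Meanwhile $\frac{m_j}{2}\langle\psi_i,q_j\rangle=\mathsf F_{ji}+\Ocal(L^{-1/2}+\varepsilon)$ by \eqref{AX-eq-global-force-matrix}. Pairing $\mathbb L_\varepsilon(v,\lambda)=f$ with $\frac{m_j}{2}q_j$ therefore gives $\mathsf F\lambda=\frac{m}{2}\langle f,q\rangle+\text{small}\cdot\|v\|$. The weighted bound $\|q_j\|$ in a dual norm is of order $m^{-1}\cdot m^{\gamma}$ relative to $Y_\varepsilon$, by the computation \eqref{AX-eq-force-source-near-critical-weight}. This yields $|\lambda|\leq C\varepsilon^{3\gamma}\|f\|_{Y_\varepsilon}$ plus a term in $\|v\|_{X_\varepsilon}$ that I must show is absorbable. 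With $\lambda$ controlled, I would write $L_\varepsilon v=f-\sum\lambda_j\psi_j$ and invoke \eqref{AX-eq-projected-tensor-bound} together with $\|\psi_j\|_Y\leq Cm^{-\sigma_{\rm v}}$ from \eqref{AX-eq-force-column-small-weight-loss}. This gives $\|v\|_X\leq C(1+\varepsilon^{3\gamma-3\sigma_{\rm v}})\|f\|=C\varepsilon^{-s_{\rm inv}}\|f\|_Y$, which is the claimed exponent.

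For the rate \eqref{AX-eq-physical-force-matrix-rate}, I would redo the error budget of Proposition~\ref{AX-prop-global-force-columns} with $L=R_j=m_j^{-\zeta}$. The errors are as follows. The Gram correction $\|B_GW_j\|^2m_j$ is $\Ocal(L^{-1})$, but I would bound it better using the weight-$\beta$ estimate of Lemma~\ref{AX-lem-vector-above-force-weight}. The off-diagonal entries, core transport and neck transport are $\Ocal(L^{-1/2}+\varepsilon)$, and $a_L-a_\infty=\Ocal(L^{-1})$. Both $L^{-1/2}=m^{\zeta/2}\asymp\varepsilon^{3\zeta/2}$ and $\varepsilon$ are bounded by $C\varepsilon^{3\zeta/2}$, since $3\zeta/2<1$. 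Then $|\mathsf F\lambda|\geq(8\pi-C\varepsilon^{3\zeta/2})|\lambda|\geq f_*|\lambda|$. The reflection-invariant case uses the difference tensor. The $\widehat\psi_j$ variant follows because $\mathsf F^{-1}$ is uniformly bounded.

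The main obstacle is the step showing that the flux (or $q_j$-pairing) of $L_\varepsilon v$ is negligible with only the centering condition available. The test tensor lives at scale $m^{1-\zeta}$, where the weighted norm of $v$ carries the factor $m^{\gamma}$. Any loss must be absorbed by the smallness from $Q_j$, from the cutoff annulus, and from the projection error $\varepsilon^{p_{\rm cor}+3\gamma}$ of Corollary~\ref{AX-cor-improved-projection-quadratic}. I would first try to show that on the model $G_S$, $\langle\mathcal P_Sv,w_{\rm disp}\rangle$ reduces to a boundary term at $|y|=R_j$ that decays like $R_j^{\gamma-1}$, which is where the inequality $\zeta(1-\gamma)>\sigma_{\rm v}-\gamma$ of Table~\ref{AX-tab-admissible-exponents} should enter. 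The remaining Kottler correction would be bounded using $2-\zeta(2+\gamma)>\sigma_{\rm v}-\gamma$.
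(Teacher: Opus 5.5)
Your proposal follows the paper's proof. The steps are the same: the a priori bound $\|v\|_{X_\varepsilon}\le C(\|f\|_{Y_\varepsilon}+m^{-\sigma_{\rm v}}|\lambda|)$; control of $\lambda$ by pairing with the localized tensors $q_j$ (the paper inverts the Gram matrix $\langle\psi_i,q_j\rangle=\langle\Pi_Gq_i,\Pi_Gq_j\rangle$ directly rather than passing through $\mathsf F$); absorption via $\zeta(1-\gamma)>\sigma_{\rm v}-\gamma$, $2-\zeta(2+\gamma)>\sigma_{\rm v}-\gamma$ and $p_{\rm lin}>s_{\rm inv}$; surjectivity from the codimension-two count; and the flux-matrix rate from rerunning Proposition~\ref{AX-prop-global-force-columns} with $R_j=m_j^{-\zeta}$.

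One correction: the centering conditions play no role in making $\langle\mathcal P_Gv,q_j\rangle$ small. The paper uses formal selfadjointness, $\langle\mathcal P_Gv,q_j\rangle=\langle v,\mathcal P_Gq_j\rangle$, where $\mathcal P_Gq_j$ reduces to the cutoff commutator at $r_j\asymp m_jR_j$ plus the cosmological term $\frac{2}{m_j}\chi_jB_G(A_GU_j)^\sharp$. This is exactly the boundary-term mechanism in your last paragraph; the centering enters only through the a priori bound on $v$.
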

\begin{proof}
\bse
\emph{1. A priori estimate and the Gram matrix.} We first estimate any solution of~\eqref{AX-eq-global-force-augmented-map}. The obstruction tensor estimate~\eqref{AX-eq-force-column-small-weight-loss} and the tensor bound under the inner orthogonality conditions give
\be
\label{AX-eq-force-inverse-first-bound}
\|v\|_{X_\varepsilon} \leq C\bigl(\|f\|_{Y_\varepsilon}+m^{-\sigma_{\rm v}}|\lambda|\bigr).
\ee
We next obtain a stronger estimate for the two coefficients by testing against the cut-off displacement tensors themselves. The trace-free property of $q_j$ gives $\langle\mathcal T_G\psi_i,q_j\rangle=\langle\psi_i,q_j\rangle$. Since $\Pi_G$ is orthogonal, the coefficient matrix is the positive Gram matrix
\be
\label{AX-eq-force-inverse-Gram}
\mathsf H_{ji}\coloneqq\langle\psi_i,q_j\rangle =\langle\Pi_Gq_i,\Pi_Gq_j\rangle.
\ee
The disjoint supports of $q_+$ and $q_-$, the inner rescaling and the projection-tail estimate~\eqref{AX-eq-force-projection-tail} imply
\be
\label{AX-eq-force-inverse-Gram-bounds}
\mathsf H_{jj}\geq c m_j^{-1}, \qquad |\mathsf H_{ji}|\leq C(m_im_j)^{-1/2}(R_iR_j)^{-1/2}\quad(i\ne j).
\ee
For the first inequality, $m_j\|q_j\|_2^2$ tends to the strictly positive $L^2$ norm of the displacement tensor per unit length in the model coordinate $\tau$, while $m_j\|B_GW_j\|_2^2\leq C/R_j$. For the second, orthogonality gives $\mathsf H_{ji}=-\langle B_GW_i,B_GW_j\rangle$. Thus the matrix obtained by multiplying row $j$ by $m_j$ has a uniformly bounded inverse. We use comparability of the masses when expressing this statement with a single $m$.
\ese

\bse
\emph{2. Pairings with the localized displacement tensors.} The linearized Bianchi estimates in~\eqref{AX-eq-small-projection-on-linear-image} and~\eqref{AX-eq-reference-times-tensor} rewrite the equation as
\be
\label{AX-eq-force-inverse-differential-equation}
\mathcal P_Gv=\mathcal T_Gf-\sum_i\lambda_i\mathcal T_G\psi_i+E_v, \qquad \|E_v\|_{Y_\varepsilon}\leq C\varepsilon^{p_{\rm lin}}\|v\|_{X_\varepsilon}.
\ee
The projection acts as the identity on the prescribed obstruction tensors in this equation because they are divergence-free. Pairing with $q_j$ therefore gives the same positive matrix~\eqref{AX-eq-force-inverse-Gram}.

We estimate the other pairings explicitly. The tensor $q_j$ has size at most $Cr_j^{-2}$, with the corresponding scaled derivatives, outside a fixed multiple of $m_j$, and size at most $Cm_j^{-2}$ in the remaining fixed scaled collar. Since $\gamma<1$, integration of $r_j^{\gamma-2}$ from $m_j$ to $2m_jR_j$ gives
\be
\label{AX-eq-force-inverse-source-pairing}
|\langle f,q_j\rangle| \leq C m_j^{\gamma-1}\|f\|_{Y_\varepsilon}.
\ee
The metric measure per unit length in $t_E$ is comparable to $r_j^2\,dr_j\,d\Omega$ away from the fixed surface of the auxiliary circle action, and the scaled integral over a compact rescaled Kottler collar has the same power. The estimate also applies to $E_v$ using~\eqref{AX-eq-force-inverse-differential-equation}.

For the differential pairing, formal selfadjointness and the Einstein equation on the exact Kottler collar give
\be
\label{AX-eq-force-inverse-kernel-test}
\begin{gathered}
\langle\mathcal P_Gv,q_j\rangle=\langle v,\mathcal P_Gq_j\rangle,\qquad
\chi_j\coloneqq\chi(r_j/(m_jR_j)),\\
\mathcal P_Gq_j=m_j^{-2}[\mathcal P_G,\chi_j]w_j
+\frac2{m_j}\chi_jB_G(A_GU_j)^\sharp.
\end{gathered}
\ee
There is no outer boundary term because $q_j$ is compactly supported in the exact Kottler collar, and no horizon term in the regular lift. The commutator is supported where $r_j\asymp m_jR_j$ and has size at most $Cr_j^{-4}$. Indeed, its two terms contain $m_j^{-2}\nabla\chi_j\nabla w_j$ and $m_j^{-2}(\nabla^2\chi_j)w_j$, and $|w_j|\leq Cm_j^2r_j^{-2}$. The tensor variation has size at most $C r_j^\gamma\|v\|_X$ there. The annular volume is of order $r_j^3$, so the commutator pairing is bounded by $C(m_jR_j)^{\gamma-1}\|v\|_X$. The term in the region where the cutoff is constant follows from covariance: $D(\Ric-\Lambda g)_G[w_j]=0$, while $\operatorname{tr}_Gw_j=0$ gives $\mathcal P_Gw_j=2m_jB_G(A_GU_j)^\sharp$. Thus it has size at most $C\Lambda/(m_jr_j)$. Its pairing is bounded by $C\Lambda m_j^{-1}(m_jR_j)^{\gamma+2}\|v\|_X$, because $\gamma>-1$. We have proved
\be
\label{AX-eq-force-inverse-differential-pairing}
|\langle\mathcal P_Gv,q_j\rangle| \leq C\bigl(m_j^{\gamma-1}R_j^{\gamma-1} +\Lambda m_j^{\gamma+1}R_j^{\gamma+2}\bigr)\|v\|_{X_\varepsilon}.
\ee
\ese

\bse
\emph{3. Coefficient bounds and absorption.} We invert the Gram matrix. Writing $a\coloneqq m^{-\gamma}|\lambda|$ and $R\coloneqq m^{-\zeta}$, the preceding estimates give
\be
\label{AX-eq-force-inverse-coefficient-bound}
a\leq C\Bigl(\|f\|_{Y_\varepsilon} +\bigl(R^{\gamma-1}+\Lambda m^2R^{\gamma+2}+\varepsilon^{p_{\rm lin}}\bigr) \|v\|_{X_\varepsilon}\Bigr).
\ee
This estimate is the point where the Schwarzschild displacement mode controls the obstruction coefficients. Substitution into~\eqref{AX-eq-force-inverse-first-bound} leaves an absorbable multiple of $\|v\|_X$. In fact, $R=m^{-\zeta}$ gives
\be
\label{AX-eq-force-inverse-absorption-powers}
\begin{aligned}
&m^{\gamma-\sigma_{\rm v}}\asymp\varepsilon^{-s_{\rm inv}},\qquad m^{\gamma-\sigma_{\rm v}}R^{\gamma-1}=m^{\gamma-\sigma_{\rm v}+\zeta(1-\gamma)},\\
&m^{\gamma-\sigma_{\rm v}}\Lambda m^2R^{\gamma+2}=\Lambda m^{\gamma-\sigma_{\rm v}+2-\zeta(2+\gamma)},\qquad m^{\gamma-\sigma_{\rm v}}\varepsilon^{p_{\rm lin}}\asymp\varepsilon^{p_{\rm lin}-s_{\rm inv}}.
\end{aligned}
\ee
The three coefficients multiplying the correction norm after substitution tend to zero. For the new cosmological term, the admissible inequalities imply $\gamma-\sigma_{\rm v}+2-\zeta(2+\gamma)>44/45-56\gamma/45>0$. After absorption, the estimate reads $\|v\|_X\leq C \varepsilon^{-s_{\rm inv}}\|f\|_Y$, as required for the first bound in~\eqref{AX-eq-global-force-inverse-loss}. Substitution back into~\eqref{AX-eq-force-inverse-coefficient-bound} gives $a\leq C\|f\|_Y$. Therefore $|\lambda|\leq C m^\gamma\|f\|_Y\leq C\varepsilon^{3\gamma}\|f\|_Y$, proving the second bound.
\ese

\bse
\emph{4. Surjectivity and normalization of the flux matrix.} It remains to establish surjectivity. Proposition~\ref{AX-prop-Schwarzschild-kernel-classification} gives $n_S=1$, and Proposition~\ref{AX-prop-centered-index} gives index $-2$ for the tensor operator under the inner orthogonality conditions. Extend $L_\varepsilon$ to $\mathcal D_\varepsilon^0\times\mathbb R^2$ by $(v,\lambda)\mapsto L_\varepsilon v$. This adds $\mathbb R^2$ to the kernel and leaves the range unchanged, so the index goes from $-2$ to zero. The added map $(v,\lambda)\mapsto\lambda_+\psi_++\lambda_-\psi_-$ has finite rank. The compact-factorization argument of Proposition~\ref{AX-prop-centered-index} therefore leaves the index equal to zero. The estimate already proved makes the kernel of the linear operator with two additional scalar unknowns zero. Its index is zero, so its cokernel is zero, proving surjectivity. The solution is therefore unique, and the preceding estimates give its bounded inverse. Restricting to the reflection class gives one independent paired centering and one paired obstruction tensor; the same argument applies with index $-1$ before augmentation.

We also verify the quantitative normalization of the axial fluxes for these growing cutoffs. The displacement components in~\eqref{AX-eq-Schwarzschild-displacement-components} give a tail integral bounded by $C/R_j$ per unit length in the model coordinate $\tau$. Thus the integral $a_{R_j}$ in~\eqref{AX-eq-force-column-diagonal-limit} satisfies $|a_{R_j}-16\pi|\leq C/R_j$. The additional terms in~\eqref{AX-eq-force-column-diagonal-limit} satisfy $m_j^2R_j^{5/2}=o(R_j^{-1})$ and $m_j^4R_j^5=o(R_j^{-1})$, since $\zeta<4/9<4/7$ and $\zeta<2/3$. The projection-tail estimate and $c_{h,j},c_{N,j}=1+\Ocal(\varepsilon^2)$ therefore give
\be
\label{AX-eq-growing-cutoff-Gram-normalization}
m_j\mathsf H_{jj}=16\pi+\Ocal(R_j^{-1}+\varepsilon^2), \qquad |m_j\mathsf H_{ji}| \leq C(R_iR_j)^{-1/2}\quad(i\ne j).
\ee
Here the masses remain comparable, as required in the off-diagonal estimate. The exact pairing-to-flux identity and its transport estimates, proved in Proposition~\ref{AX-prop-global-force-columns}, give
\be
\label{AX-eq-growing-cutoff-force-normalization}
\mathsf F_{ji} =\frac{m_j}{2}\mathsf H_{ji} +\Ocal(R_j^{-1/2}+\varepsilon).
\ee
Since $R_j=m_j^{-\zeta}\asymp\varepsilon^{-3\zeta}$, the diagonal error is $\Ocal(\varepsilon^{3\zeta}+\varepsilon^2)$ and the transport error is $\Ocal(\varepsilon^{3\zeta/2}+\varepsilon)$. The restriction $0<\zeta<4/9$ makes both $\Ocal(\varepsilon^{3\zeta/2})$. This proves~\eqref{AX-eq-physical-force-matrix-rate}. For sufficiently small $\varepsilon$, the matrix error is at most $8\pi-f_*$, giving $|\mathsf F\lambda|\geq f_*|\lambda|$ and $\|\mathsf F^{-1}\|\leq f_*^{-1}$.
\ese
\end{proof}

\begin{corollary}[The inverse used to correct the Kottler reference metric]
\label{AX-cor-kottler-inverse}
Fix the improved Kottler--de Sitter reference $G=\widetilde G_\varepsilon$. Use the coordinate condition $\mathfrak b_Gw=0$, the two centerings~\eqref{AX-eq-complete-core-centering}, the target $\ker\delta_G$ and the obstruction tensors $\psi_j$. The operator
\be
\label{AX-eq-kottler-linearization}
L_K\coloneqq L_\varepsilon=\Pi_GDS_3(G),\qquad (w,\lambda)\longmapsto L_Kw-\sum_j\lambda_j\psi_j
\ee
is an isomorphism onto that target. Denote its inverse components by $\mathcal R_G^w,\mathcal R_G^\lambda$. They satisfy
\be
\label{AX-eq-kottler-inverse}
\|\mathcal R_G^wF\|_X\leq C\varepsilon^{-s_{\rm inv}}\|F\|_Y, \qquad |\mathcal R_G^\lambda F|\leq C\varepsilon^{3\gamma}\|F\|_Y.
\ee
\end{corollary}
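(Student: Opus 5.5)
This corollary is a specialization of Theorem~\ref{AX-thm-physical-force-inverse} to the improved reference $G=\widetilde G_\varepsilon$. The plan is to check that this reference meets every hypothesis used in the chain of linear estimates, and then to read off the inverse.

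\emph{Step 1: the reference satisfies the hypotheses.} Lemma~\ref{AX-lem-improved-reference-error} gives three regular nondegenerate horizons, regular axes and the Einstein error $\Ocal(\varepsilon^p)$. It also gives the uniform energy estimate of Proposition~\ref{AX-prop-uniform-vector} in the reference's own angular lifts. The collars are exact Kottler collars, with constants $c_{h,j},c_{N,j}=1+\Ocal(\varepsilon^2)$, on radii $r_j\le r_{\rm in}=\varepsilon^{5/3}$. Because $\zeta<4/9$, the cutoff supports $r_j\le 2m_j^{1-\zeta}$ lie inside these collars. The rescaling~\eqref{AX-eq-core-rescaling-map} therefore converges to $G_S$, and the centering functionals~\eqref{AX-eq-complete-core-centering} are defined.

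\emph{Step 2: the linear estimates hold for this reference.} Propositions~\ref{AX-prop-global-tensor-remainders}, \ref{AX-prop-completely-centered-inverse-bound} and~\ref{AX-prop-projected-tensor-bound}, together with Lemmas~\ref{AX-lem-vector-weight-upper} and~\ref{AX-lem-vector-above-force-weight}, were all stated for ``the Kottler--de Sitter reference of Lemma~\ref{AX-lem-improved-reference-error}''. Their proofs use only the coefficient bounds~\eqref{AX-eq-neck-coefficient-smallness} and exact de Sitter away from $r\le 2r_{\rm out}$. The first requirement holds because $r_{\rm in}\ll d\ll r_{\rm out}$. The second holds because $r_{\rm out}\to0$. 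So these results apply directly to $G$.

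\emph{Step 3: invert.} The operator in the corollary is $\mathbb L_\varepsilon(w,-\lambda)$, since the sign of the obstruction coefficients is absorbed by $\lambda\mapsto-\lambda$. Theorem~\ref{AX-thm-physical-force-inverse} gives bijectivity onto $\mathcal Y_\varepsilon=\ker\delta_G\cap Y_\varepsilon$ and the bounds~\eqref{AX-eq-global-force-inverse-loss}. Bijectivity defines $\mathcal R_G^w$ and $\mathcal R_G^\lambda$, and the bounds are exactly~\eqref{AX-eq-kottler-inverse}. Boundedness of the inverse needs no separate open-mapping argument, because the a priori estimate is explicit. The target is $\ker\delta_G$ because $L_K$ maps into the range of $\Pi_G$ and the $\psi_j$ are divergence-free.

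\emph{Main obstacle.} The step I expect to need the most care is confirming that the Fredholm index computation in Proposition~\ref{AX-prop-centered-index} carries over at each fixed small $\varepsilon$. In particular, the cokernel count $2n_S=2$ requires the centering conditions to be independent, which uses $\Ric_G=\Lambda G$ on the sets $K_j$. This holds because each $K_j$ lies in an exact Kottler collar, as noted in Step 1.
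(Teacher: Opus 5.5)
Your proposal is correct and follows the paper's own argument. The paper likewise observes that Theorem~\ref{AX-thm-physical-force-inverse} applies directly to $G=\widetilde G_\varepsilon$, and that replacing $\lambda$ by $-\lambda$ turns its inverse into $(\mathcal R_G^w,\mathcal R_G^\lambda)$ with the bounds~\eqref{AX-eq-kottler-inverse}. Your Steps~1--2 only make explicit the hypothesis checks the paper leaves implicit: the cutoff supports stay inside the exact Kottler collars because $\zeta<4/9$, and the centering-independence argument uses $\Ric_G=\Lambda G$ on $K_j$.
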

\begin{proof}
Theorem~\ref{AX-thm-physical-force-inverse} applies directly to $G$. Changing the sign of the coefficient component gives the inverse in~\eqref{AX-eq-kottler-inverse}. In particular, we have
\be
\label{AX-eq-kottler-operator-perturbation}
L_K\mathcal R_G^wF-\sum_j(\mathcal R_G^\lambda F)_j\psi_j=F.
\ee
All operators, regular angular lifts and coordinate conditions in this identity refer to the same global metric $G$.
\end{proof}

\paragraph{Horizon angular coordinates and the space of regular tensors}
The tensor domain in Theorem~\ref{AX-thm-physical-force-inverse} uses the fixed regular angular lifts of the reference. Varying a surface gravity therefore requires a corresponding change of these identifications.

This restriction has a direct tensor test. In spatial Gaussian distance $s$, a horizon with surface gravity $\kappa$ has leading normal metric $ds^2+\kappa^2s^2dt_E^2$. In the fixed reference coordinates $\theta_{E,0}\coloneqq\kappa_0t_E$, $x=s\cos\theta_{E,0}$, $y=s\sin\theta_{E,0}$, a variation of surface gravity with the Gaussian radial coefficient fixed gives
\be
\label{AX-eq-horizon-angle-tensor-test}
\left.\frac{d}{da}\right|_{a=0} \Bigl(ds^2+\frac{(\kappa_0+a\dot\kappa)^2}{\kappa_0^2}s^2d\theta_{E,0}^2\Bigr) =\frac{2\dot\kappa}{\kappa_0} \frac{(-y\,dx+x\,dy)^2}{x^2+y^2}.
\ee
We used $s^2d\theta_{E,0}^2=(-y\,dx+x\,dy)^2/(x^2+y^2)$. If $\dot\kappa\neq0$, the coefficient of $dy^2$ has different limits along $y=0$ and $x=0$, so this variation is not continuous at the origin. Thus smoothness in this fixed tensor lift prescribes the leading angular coefficient. A parametrization allowing independent horizon gravities must instead separate these variations from the smooth tensor unknown, or identify parameter-dependent regular lifts before differentiating.


\section{Nonlinear existence of the static solutions}
\label{AX-section-comparable-mass-existence}


\subsection{Selecting the barycenter and the separation}
\label{AX-section-two-parameter-selection}

We solve the reduced nonlinear Einstein equation and select the barycenter and separation so that both obstruction coefficients vanish. The construction applies to every sufficiently small pair of comparable positive reference masses, including unequal masses. We select the barycenter and separation from a continuous map whose leading term is explicit. Opposite signs on the faces of a parameter square give an interior zero without a differentiated estimate.

We recall the leading map $F_0(C,D)=(C,D-D^{-2})$ from~\eqref{AX-eq-two-mass-balance-map}. We use the maximum norm on $\mathbb R^2$ and its induced matrix norm, and define the parameter square $\mathcal Q\coloneqq\{(C,D):|C|\leq r_0,\ |D-1|\leq r_0\}$, where $r_0\in(0,1)$ is fixed. We choose $q_0\in(0,1)$ and require $C_h r_0<q_0$, with $C_h\coloneqq2(1-r_0)^{-4}$.
\begin{lemma}[Existence of a nearby zero of the perturbed compatibility equations]
\label{AX-prop-two-mass-balance}
Fix $0<\theta_0\leq\theta\leq1-\theta_0$. Assume that the continuous maps $F_{\varepsilon_{\mathrm{tot}},\theta}$ on $\mathcal Q$ satisfy the following estimate with fixed constants $A>0$ and $\tau>0$:
\be
\label{AX-eq-two-mass-error}
F_{\varepsilon_{\mathrm{tot}},\theta}=F_0+E_{\varepsilon_{\mathrm{tot}},\theta},\qquad \sup_{\mathcal Q}|E_{\varepsilon_{\mathrm{tot}},\theta}|\leq A\varepsilon_{\mathrm{tot}}^{\tau}.
\ee
For all sufficiently small $\varepsilon_{\mathrm{tot}}>0$, the following conclusions hold:
\begin{itemize}
\item There is a zero $(C_{\varepsilon_{\mathrm{tot}},\theta},D_{\varepsilon_{\mathrm{tot}},\theta})$ of the map in the interior of $\mathcal Q$. Every zero in $\mathcal Q$ satisfies
\be
\label{AX-eq-two-mass-root-bound}
|C_{\varepsilon_{\mathrm{tot}},\theta}|+|D_{\varepsilon_{\mathrm{tot}},\theta}-1| \leq2A\varepsilon_{\mathrm{tot}}^{\tau},\qquad z_{\rm bar}=\Ocal(\varepsilon_{\mathrm{tot}}^{1+\tau}\ell),\qquad d=(\mathfrak m\ell^2)^{1/3}\bigl(1+\Ocal(\varepsilon_{\mathrm{tot}}^{\tau})\bigr).
\ee
The constants are uniform in the prescribed mass interval.
\item If the maps are $C^1$ and also satisfy $\sup_{\mathcal Q}\|DE_{\varepsilon_{\mathrm{tot}},\theta}\|\leq A\varepsilon_{\mathrm{tot}}^{\tau}$, this zero is unique in $\mathcal Q$. Joint continuity of the maps in the parameters gives a continuous selected zero.
\end{itemize}
\end{lemma}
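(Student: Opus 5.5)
The plan is a degree argument (Poincaré--Miranda) on the square $\mathcal Q$ for existence, followed by a quantitative inverse function bound for localization and uniqueness. The leading map $F_0(C,D)=(C,D-D^{-2})$ is decoupled: its first component is $C$, and its second component $g(D)\coloneqq D-D^{-2}$ is strictly increasing, with $g(1)=0$ and $g'(D)=1+2D^{-3}$. On $|D-1|\leq r_0$ one has $g'\geq1$, so $|g(D)|\geq|D-1|$. Hence on the faces $C=\pm r_0$ the first component of $F_0$ equals $\pm r_0$, and on the faces $D=1\pm r_0$ the second component has sign $\pm$ and modulus at least $r_0$. Once $A\varepsilon_{\mathrm{tot}}^\tau<r_0$, the perturbation $E$ cannot change these signs. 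The Poincaré--Miranda theorem, or equivalently invariance of the Brouwer degree along the homotopy $F_0+sE$, $s\in[0,1]$, which has no zero on $\partial\mathcal Q$, then gives a zero in the interior. Only continuity of $F_{\varepsilon_{\mathrm{tot}},\theta}$ is used here.

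For the localization bound, take any zero $(C,D)\in\mathcal Q$. Then $|C|=|E_1|\leq A\varepsilon_{\mathrm{tot}}^\tau$ and $|D-1|\leq|g(D)|=|E_2|\leq A\varepsilon_{\mathrm{tot}}^\tau$, so the sum is at most $2A\varepsilon_{\mathrm{tot}}^\tau$. Unwinding~\eqref{AX-eq-total-mass-scale} gives $z_{\rm bar}=C\varepsilon_{\mathrm{tot}}\ell=\Ocal(\varepsilon_{\mathrm{tot}}^{1+\tau}\ell)$ and $d=D\varepsilon_{\mathrm{tot}}\ell$. Since $\varepsilon_{\mathrm{tot}}\ell=(\mathfrak m\ell^2)^{1/3}$, this yields the stated formula for $d$. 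All constants depend only on $A,\tau,r_0$, so they are uniform in $\theta\in[\theta_0,1-\theta_0]$.

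For uniqueness under the $C^1$ hypothesis, note that $DF_0=\operatorname{diag}(1,g'(D))$, with $1\leq g'\leq C_h$ on $\mathcal Q$; indeed $2(1-r_0)^{-3}+1\leq C_h$ for $r_0<1$. The inverse therefore satisfies $\|DF_0^{-1}\|\leq1$ in the maximum norm. If two zeros $x,y\in\mathcal Q$ existed, the convexity of $\mathcal Q$ would give
\[
0=F_0(x)-F_0(y)+E(x)-E(y)=\Bigl(\int_0^1DF_0(y+s(x-y))\,ds\Bigr)(x-y)+\Ocal\bigl(A\varepsilon_{\mathrm{tot}}^\tau|x-y|\bigr).
\]
The averaged matrix is again diagonal with entries in $[1,C_h]$, so it is invertible with norm of the inverse at most $1$. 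This forces $|x-y|\leq A\varepsilon_{\mathrm{tot}}^\tau|x-y|$, hence $x=y$ for small $\varepsilon_{\mathrm{tot}}$.

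For continuity of the selected zero in the parameters, take a convergent sequence of parameters. The corresponding unique zeros lie in the compact set $\mathcal Q$, and every limit point is a zero of the limiting map by joint continuity. Uniqueness then identifies all limit points, so the zeros converge.

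The main obstacle is modest: checking that the face signs survive the perturbation, together with the bookkeeping for the scaling. The role of $q_0$ and $C_h$ appears only as a margin for Lipschitz control if one prefers a contraction formulation $x=x-DF_0(1,0)^{-1}F(x)$ in place of the degree argument.
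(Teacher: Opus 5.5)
Your proof is correct. It follows the paper for existence and localization and takes a different route for uniqueness and continuity.

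\textbf{Existence.} You use the same face-sign mechanism as the paper. The paper proves the planar Poincar\'e--Miranda statement by hand: it clips and smooths the map, then compares the integrals of the angular one-form over $\partial\mathcal Q$ for $\widetilde F$ and for $F_0$ along a boundary homotopy, obtaining $2\pi$. You cite Poincar\'e--Miranda or homotopy invariance of the Brouwer degree, which is the same argument packaged as a theorem. The localization bound and the rescaling to $z_{\rm bar}$ and $d$ are identical to the paper's.

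\textbf{Uniqueness.} The paper runs a contraction for $\mathcal T(v)=v-J_0^{-1}F(v)$ with the frozen Jacobian $J_0=\operatorname{diag}(1,3)$. Because $\|I-J_0^{-1}DF_0\|\leq C_h|D-1|$, this needs the standing restriction $C_hr_0<q_0<1$ on the size of the square. In return it gives a convergent iteration and a Lipschitz bound, with constant $(1-q_0)^{-1}$, for the zero in terms of the difference of the maps. You instead exploit that $F_0$ is decoupled. The averaged Jacobian $\int_0^1DF_0$ is diagonal with entries at least $1$, so its inverse has max-norm at most $1$. Uniqueness on all of $\mathcal Q$ then needs only $A\varepsilon_{\mathrm{tot}}^\tau<1$, with no restriction on $r_0$, which is simpler and slightly more general.

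\textbf{Continuity.} Your argument (compactness of $\mathcal Q$, joint continuity, and uniqueness identifying all limit points) gives only qualitative continuity rather than the paper's Lipschitz dependence. That is all the lemma asserts, and all that is used later.

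\textbf{A harmless slip.} The parenthetical bound $1+2(1-r_0)^{-3}\leq C_h=2(1-r_0)^{-4}$ is equivalent to $C_hr_0\geq1$. It is therefore false under the paper's standing assumption $C_hr_0<q_0<1$; for instance, as $r_0\to0$ the left side tends to $3$ while $C_h$ tends to $2$. Nothing depends on it, since your inversion uses only the lower bound $g'\geq1$. Also, the frozen Jacobian in your closing remark should be evaluated at $(C,D)=(0,1)$, not $(1,0)$.
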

Figure~\ref{AX-fig-parameter-square} uses the abbreviation $F=(F_1,F_2)$ for $F_{\varepsilon_{\mathrm{tot}},\theta}$. The arrows show the strict sign inequalities used in the proof.

\begin{figure}[!htbp]
\centering
\begin{tikzpicture}[
  x=2.2cm,y=2.2cm,font=\small,
  line cap=round,line join=round,
  >={Stealth[length=2.0mm,width=1.5mm]}
]
\path[use as bounding box] (-2.45,-1.55) rectangle (2.45,1.55);
\fill[blue!3] (-1,-1) rectangle (1,1);
\draw[black!75,line width=.7pt] (-1,-1) rectangle (1,1);
\node[anchor=north west,inner sep=1pt] at (-.86,.88) {$\mathcal Q$};

\draw[->,black!50,line width=.4pt] (-1.70,0) -- (1.70,0)
  node[anchor=west,black,inner sep=3pt] {$C$};
\draw[->,black!50,line width=.4pt] (0,-1.35) -- (0,1.37)
  node[anchor=south,black,inner sep=2pt] {$D$};
\draw[black!65,line width=.4pt] (-1,-.035)--(-1,.035) (1,-.035)--(1,.035)
  (-.035,-1)--(.035,-1) (-.035,1)--(.035,1);
\node[below,fill=white,inner sep=1.5pt] at (-1,-.045) {$-r_0$};
\node[below,fill=white,inner sep=1.5pt] at (1,-.045) {$r_0$};
\node[left,fill=white,inner sep=1.5pt] at (-.045,1) {$1+r_0$};
\node[left,fill=white,inner sep=1.5pt] at (-.045,-1) {$1-r_0$};

\draw[->,blue!65!black,line width=.8pt] (-1,.55)--(-1.33,.55);
\node[left,blue!65!black,inner sep=3pt] at (-1.36,.55) {$F_1<0$};
\draw[->,blue!65!black,line width=.8pt] (1,.55)--(1.33,.55);
\node[right,blue!65!black,inner sep=3pt] at (1.36,.55) {$F_1>0$};
\draw[->,blue!65!black,line width=.8pt] (.60,1)--(.60,1.28);
\node[above,blue!65!black,inner sep=3pt] at (.60,1.29) {$F_2>0$};
\draw[->,blue!65!black,line width=.8pt] (.60,-1)--(.60,-1.28);
\node[below,blue!65!black,inner sep=3pt] at (.60,-1.29) {$F_2<0$};

\draw[black,line width=.7pt,fill=white] (0,0) circle[radius=2.25pt];
\node[anchor=north east,align=right,inner sep=1pt] at (-.10,-.10)
  {$(0,1)$\\$F_0=0$};
\fill[red!70!black] (.25,-.25) circle[radius=2.25pt];
\draw[red!70!black,line width=.4pt] (.28,-.27)--(.97,-.446);
\draw[red!70!black,line width=.4pt] (1.03,-.462)--(1.18,-.50);
\node[anchor=west,align=left,inner sep=2pt,fill=white,text=red!70!black]
  at (1.22,-.52) {a zero of $F$\\(schematic)};
\end{tikzpicture}
\caption{Selection of the barycenter and separation. The arrow on a face of $\mathcal Q$ gives the sign of the corresponding component of $F$. These strict signs ensure the existence of an interior zero. The open point represents the zero $(0,1)$ of $F_0$. The solid point indicates a possible location of a zero of $F$.}
\label{AX-fig-parameter-square}
\end{figure}

\begin{proof}
\bse
We set $e_\varepsilon\coloneqq A\varepsilon_{\mathrm{tot}}^{\tau}$ and use the fixed half-width $r_0$ of $\mathcal Q$. The second component of $F_0$ is $h_b(D)\coloneqq D-D^{-2}$, with $h_b(1)=0$ and $h_b'(D)=1+2D^{-3}\geq1$. On the two faces $C=\pm r_0$, the first component of $F_{\varepsilon_{\mathrm{tot}},\theta}$ has the corresponding strict sign when $e_\varepsilon<r_0$. On the faces $D=1\pm r_0$, its second component also has the corresponding strict sign, because $|h_b(1\pm r_0)|\geq r_0$. These face signs give a zero by the following planar argument, whose proof also covers continuous maps.

We argue by contradiction and assume that no zero exists. Compactness then gives $\min_{\mathcal Q}|F_{\varepsilon_{\mathrm{tot}},\theta}|>0$. We extend the map to a neighborhood of $\mathcal Q$ by projecting each coordinate to its closed interval, and we smooth this extension by convolution. A sufficiently close smooth approximation $\widetilde F$ is nonzero throughout $\mathcal Q$ and retains the strict face signs. On $\mathbb R^2\setminus\{0\}$, we define the closed one-form
\be
\label{AX-eq-two-mass-winding-form}
\omega_{\rm ang}\coloneqq\frac{y_1\,dy_2-y_2\,dy_1}{y_1^2+y_2^2},\qquad d\omega_{\rm ang}=0.
\ee
The derivative identity follows by differentiating the numerator and denominator. Stokes' theorem gives $\int_{\del\mathcal Q}\widetilde F^*\omega_{\rm ang}=0$, because $\widetilde F$ is nonzero on all of $\mathcal Q$. On the boundary, the interpolation $(1-s)F_0+s\widetilde F$ is nonzero for every $0\leq s\leq1$: on each face its indicated component retains the same strict sign. Applying Stokes' theorem to this interpolation on the boundary cylinder gives
\be
\label{AX-eq-two-mass-winding-contradiction}
0=\int_{\del\mathcal Q}\widetilde F^*\omega_{\rm ang} =\int_{\del\mathcal Q}F_0^*\omega_{\rm ang}=2\pi.
\ee
For the last equality, $F_0$ maps the oriented boundary to that of the rectangle $[-r_0,r_0]\times[h_b(1-r_0),h_b(1+r_0)]$. This rectangle surrounds the origin, and the polar angle increases once through $2\pi$ along its positively oriented boundary; equivalently, integration of the one-form on its four segments gives that value. The contradiction proves existence. The strict face signs exclude boundary zeros.

At any zero, the error estimate gives $|C|\leq e_\varepsilon$ and $|h_b(D)|\leq e_\varepsilon$. The lower derivative bound on $h_b$ gives $|D-1|\leq e_\varepsilon$, proving the bound for the selected parameters in~\eqref{AX-eq-two-mass-root-bound}. Multiplication by $\varepsilon_{\mathrm{tot}}\ell$ and the identity $\varepsilon_{\mathrm{tot}}\ell=(\mathfrak m\ell^2)^{1/3}$ give the estimates for $z_{\rm bar}$ and $d$.

Under the additional derivative hypothesis, we consider $\mathcal T(v)\coloneqq v-J_0^{-1}F_{\varepsilon_{\mathrm{tot}},\theta}(v)$, with $v_0=(0,1)$ and $J_0=\operatorname{diag}(1,3)$ as in~\eqref{AX-eq-two-mass-balance-map}. Its fixed points are the zeros. For $|D-1|\leq r\leq r_0$, the mean value theorem gives
\be
\label{AX-eq-two-mass-contraction-derivative}
\begin{aligned}
\|I-J_0^{-1}DF_0(C,D)\|
&=\frac23|D^{-3}-1|
\leq C_h|D-1|\leq C_hr,\\
|\mathcal T(v_0)-v_0|&\leq e_\varepsilon,\qquad
\sup_{|v-v_0|\leq r}\|D\mathcal T(v)\|\leq C_hr+e_\varepsilon.
\end{aligned}
\ee
For $e_\varepsilon\leq\min\{(1-q_0)r_0,q_0-C_hr_0\}$, the radius-$R_\varepsilon$ square centered at $v_0$, with $R_\varepsilon\coloneqq e_\varepsilon/(1-q_0)$, lies in $\mathcal Q$. The derivative bound on the full square is at most $q_0<1$, and
\be
\label{AX-eq-two-mass-contraction-ball}
|\mathcal T(v)-v_0|\leq e_\varepsilon+q_0|v-v_0|\leq R_\varepsilon.
\ee
Iteration converges in this square. The contraction bound on $\mathcal Q$ gives uniqueness there. If the parameters vary continuously, the same bound estimates the difference between their fixed points by $1/(1-q_0)$ times the difference of the corresponding maps, evaluated at either fixed point. Joint continuity on the compact square therefore gives continuity of the selected zero. All constants are uniform in $\theta$ because the normalization~\eqref{AX-eq-two-mass-leading-fluxes} is uniformly invertible on the prescribed positive mass interval.
\ese
\end{proof}


\subsection{Construction of the exact static solutions}
\label{AX-section-unconditional-existence}

We correct the Kottler--de Sitter reference with the inverse of the linear operator with two additional scalar unknowns and then select the two geometric parameters. Vanishing axial fluxes eliminate the obstruction coefficients, and the vector energy identity gives the Einstein equation. We use the Kottler reference of Lemma~\ref{AX-lem-kottler-comparison}, the global obstruction tensors of Proposition~\ref{AX-prop-global-force-columns}, and the inverse of the linear operator with two additional scalar unknowns of Corollary~\ref{AX-cor-kottler-inverse}. We write the corrected auxiliary metric as the sum of the reference metric and a symmetric tensor. Smoothness of this tensor in the Cartesian coordinates associated with the reference angular coordinates preserves the three reference surface gravities; the reference masses and geometric parameters remain free parameters of the construction.

\begin{theorem}[Existence of axisymmetric static solutions with two comparable small masses]
\label{AX-thm-binary-existence}
Fix $\Lambda>0$ and $0<\theta_0\leq1/2$, and set $\ell=\sqrt{3/\Lambda}$. There exists $\varepsilon_0>0$ such that, whenever $m_+,m_->0$ satisfy
\be
\label{AX-eq-binary-existence-parameters}
\mathfrak m= m_++m_-,\qquad \varepsilon_{\mathrm{tot}}=(\mathfrak m/\ell)^{1/3}<\varepsilon_0, \qquad \theta= m_+/\mathfrak m\in[\theta_0,1-\theta_0],
\ee
One obtains a smooth axisymmetric static solution on a compact spatial region diffeomorphic to a closed ball with two disjoint open balls removed. Its lapse is positive in the interior and vanishes exactly on its three boundary components. All three horizons are nondegenerate, the axis is regular, and each horizon admits a regular local extension. The conclusions are as follows:
\begin{itemize}
\item In the reference coordinates, the two inner centers have separation $d$ and mass barycenter $z_{\rm bar}$ satisfying
\be
\label{AX-eq-binary-existence-balance}
d=(\mathfrak m\ell^2)^{1/3}\bigl(1+\Ocal(\varepsilon_{\mathrm{tot}}^{p_{\rm bal}})\bigr), \qquad z_{\rm bar}=\Ocal(\ell\varepsilon_{\mathrm{tot}}^{1+p_{\rm bal}}).
\ee
\item Relative to the selected Kottler--de Sitter reference $G$, the auxiliary metric has the form $g=G+w$. The norms, coordinate condition, projection and correction sizes are all defined using $G$. One has
\be
\label{AX-eq-binary-existence-correction}
\|w\|_{X_{\varepsilon_{\mathrm{tot}}}}\leq C\varepsilon_{\mathrm{tot}}^{p_{\rm cor}}, \qquad \eta\leq C\varepsilon_{\mathrm{tot}}^{p_{\rm cor}+3\gamma}, \qquad \eta_{\rm sep}\leq C\varepsilon_{\mathrm{tot}}^{p_{\rm cor}+\gamma}.
\ee
\item If $A_+,A_-$ denote the inner horizon areas, then $A_j=16\pi m_j^2(1+\Ocal(\varepsilon_{\mathrm{tot}}^2))$. The parameters $m_j$ specify the reference scales; these area estimates identify their geometric meaning. For $m_+=m_-$, the construction can also impose the reflection interchanging the two inner horizons.
\end{itemize}
The exponents are defined in~\eqref{AX-eq-exponent-relations}. The constants may depend on the fixed admissible triple and are uniform on the prescribed interval of mass fractions. A constant rescaling of the lapse can impose $N_{\max}=1$.
\end{theorem}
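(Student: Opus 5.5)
The proof will be a contraction for fixed parameters, followed by selection of the two centers and recovery of the Einstein equation.

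\emph{Fixed-point step.} Fix the admissible triple $(\gamma,\sigma_{\rm v},\zeta)$. For each $(C,D)\in\mathcal Q$, let $G=G_{C,D}=\widetilde G_\varepsilon$ be the improved Kottler--de Sitter reference with centers $q_\pm(C,D)$ scaled by $\varepsilon_{\mathrm{tot}}\ell$. Write $g=G+w$ and solve the reduced equation
\[
\Pi_GS_3(G+w)=\textstyle\sum_j\lambda_j\psi_j,
\]
with $w$ in the domain fixed by $\mathfrak b_Gw=0$ and the two centerings. Since $\Pi_GS_3(G)=S_3(G)$ (Lemma~\ref{AX-lem-kottler-comparison}), this is equivalent to
\[
L_Kw-\textstyle\sum_j\lambda_j\psi_j=-S_3(G)-\Pi_G\mathcal Q_G(w),
\]
that is, $(w,\lambda)=\mathcal R_G(-S_3(G)-\Pi_G\mathcal Q_G(w))$ by Corollary~\ref{AX-cor-kottler-inverse}. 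On the ball $\|w\|_X\leq 2C\varepsilon^{p-s_{\rm inv}}$ I will combine three bounds:
\begin{itemize}
\item the inverse bound $\varepsilon^{-s_{\rm inv}}$;
\item the source bound $\varepsilon^p$ from Lemma~\ref{AX-lem-improved-reference-error};
\item the quadratic Lipschitz factor $\varepsilon^{-s_{\rm quad}}\|w\|_X$ from Corollary~\ref{AX-cor-improved-projection-quadratic}.
\end{itemize}
The contraction constant is then of order $\varepsilon^{p-2s_{\rm inv}-s_{\rm quad}}\to0$, which is exactly the condition $p>2s_{\rm inv}+s_{\rm quad}$ in Table~\ref{AX-tab-admissible-exponents}. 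This gives $\|w\|_X\leq C\varepsilon^{p_{\rm cor}}$ and $|\lambda|\leq C\varepsilon^{3\gamma+p}$. The estimates for $\eta$ and $\eta_{\rm sep}$ then follow from \eqref{AX-eq-correction-comparison}. The condition $p_{\rm cor}+3\gamma>2>0$ makes $\eta$ small, so Proposition~\ref{AX-prop-uniform-vector} applies to $g$ by its stability clause. I also need the fixed point to depend continuously on $(C,D)$. I will get this by pulling back all references to a common parameter-independent reference domain through a smooth family of diffeomorphisms moving the centers. The inverse then depends continuously on the parameters, and the contraction is uniform.

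\emph{Recovery and selection of the centers.} The reduced equation means $S_3(g)=\sum\lambda_j\psi_j+B_GV$ with $V=A_G^{-1}\delta_GS_3(g)$, which is the form \eqref{AX-eq-recovery-decomposition} with $\lambda\psi\coloneqq\sum\lambda_j\psi_j$. Applying Proposition~\ref{AX-lem-Einstein-recovery} together with the bounds \eqref{AX-eq-global-force-tensor-bounds} gives
\[
|\operatorname{Flux}_i(B_GV)|\leq C|\lambda|\eta\sqrt{d/m}=o(|\lambda|).
\]
Hence
\[
\operatorname{Flux}_i(S_\Lambda(g))=(\mathsf F\lambda)_i+o(|\lambda|).
\]
On the other side, Proposition~\ref{AX-prop-flux-stability}, in its two-mass form \eqref{AX-eq-two-mass-uniform-flux-stability}, applies on the flux annuli, where $G$ equals the intermediate fields. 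The correction there satisfies $\rho_k\lesssim\eta_{\rm sep}\leq\varepsilon^{p_{\rm cor}+\gamma}$, and this is $o(\varepsilon^2)$ by $p_{\rm cor}+3\gamma>2$. This yields
\[
\operatorname{Flux}_\pm=8\pi\varepsilon^4\bigl(f_\pm(C,D)+\Ocal(\varepsilon^{p_{\rm bal}})\bigr).
\]
Next define $F_{\varepsilon,\theta}(C,D)$ as the weighted mean and difference of $\varepsilon^{-4}(8\pi)^{-1}(\mathsf F\lambda)$ in the combinations \eqref{AX-eq-two-mass-leading-fluxes}, suitably renormalized. Then $F_{\varepsilon,\theta}=F_0+\Ocal(\varepsilon^{\tau})$ uniformly and continuously on $\mathcal Q$. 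Lemma~\ref{AX-prop-two-mass-balance} gives a zero, at which $\mathsf F\lambda=o(|\lambda|)$. Invertibility of $\mathsf F$ from \eqref{AX-eq-physical-force-matrix-rate} then forces $\lambda=0$, and Proposition~\ref{AX-lem-Einstein-recovery} gives $S_\Lambda(g)=0$. The root bound gives \eqref{AX-eq-binary-existence-balance}. Staticity of $g$ follows from its invariance, so $g=-$Wick form of $(h,N)$, and $(h,N)$ solve \eqref{LS-eq-static}.

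\emph{Remaining properties and the main obstacle.} Smallness of $w$ in the reference lifts preserves several properties: positivity of the lapse, its vanishing exactly on the three horizons, the surface gravities and hence nondegeneracy, and axis and pole regularity. Proposition~\ref{AX-lem-local-horizon-extension} then gives analytic extensions. The area formula comes from $r_{H,j}=2m_j\sqrt{c_{h,j}}$, with $c_{h,j}=1+\Ocal(\varepsilon^2)$ and $\eta$ small. The equal-mass reflection is obtained by running the whole argument in the reflection class.

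I expect the main obstacle to be the continuity and uniformity in $(C,D)$ of the corrected solution and of $\lambda$. The references, domains, cutoff supports and inverse all move with the centers, while Lemma~\ref{AX-prop-two-mass-balance} only needs continuity. I would first try to handle this through the transport by diffeomorphisms described above, combined with uniqueness of the fixed point.
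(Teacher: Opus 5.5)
Your outline matches the paper's proof in every main step. It uses the same contraction on the domain cut out by $\mathfrak b_Gw=0$ and the two centerings, with the same exponent bookkeeping. It also uses the same two-mass flux stability on the separation annuli, the same recovery bound $|\operatorname{Flux}_a(B_GV)|\leq C|\lambda|\eta\sqrt{d/m}$, and invertibility of $\mathsf F$ to force $\lambda=0$.

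\textbf{The gap: continuity in $(C,D)$.} This is the step you single out, and it fails as you describe it. Diffeomorphisms moving the centers do not identify the tensor spaces at different parameters.
\begin{itemize}
\item The reference surface gravities $\kappa_j=c_{N,j}/(4m_j\sqrt{c_{h,j}})$ depend on $(C,D)$ through \eqref{AX-eq-improved-reference-constants}.
\item The Cartesian regularity defining the domain is imposed in the lift with angle $\kappa_jt_E$. By \eqref{AX-eq-horizon-angle-tensor-test}, a change of $\kappa_j$ is not even continuous in a fixed lift. So a spatial pullback of a tensor regular at $p$ is in general not regular at $p_*$.
\item A diffeomorphism compatible with the static Killing field can rescale $t_E$ only by one global constant. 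That cannot absorb the two independent ratios $\kappa_\pm(p)/\kappa_\pm(p_*)$ while the de Sitter horizon stays fixed.
\end{itemize}
The paper composes $\Psi_p^*$ with the map $J_p^{-1}$ of \eqref{AX-eq-binary-continuous-angular-identifications}, which is not induced by any diffeomorphism. It divides the $dt_E^2$ coefficient by $a_p^2$, where $a_p$ equals the constant $\kappa_j(p)/\kappa_j(p_*)$ near $H_j$ and equals one near $H_c$.

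Even then, the constrained domain and the target $\ker\delta_G$ move with $G$. The paper identifies them through the continuous projections $Q_G-Z_G\ell_GQ_G$ and $\Pi_G$ of \eqref{AX-eq-binary-domain-projections}, using that nearby projections have isomorphic ranges. Only then are the augmented inverse, and hence the fixed point, continuous in the parameters. Without this, or a fully worked compactness--uniqueness substitute (which you only allude to), Lemma~\ref{AX-prop-two-mass-balance} has no continuous map to act on.

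\textbf{A smaller point: the selection map.} Suppose you build the selection map from $\mathsf F\lambda$. Then at its zero $\mathsf F\lambda=0$ exactly; your phrase ``at which $\mathsf F\lambda=o(|\lambda|)$'' belongs to the version built from the actual fluxes. But proving $F_{\varepsilon,\theta}=F_0+\Ocal(\varepsilon^\tau)$ then requires $\varepsilon^{-4}\operatorname{Flux}_\pm(B_GV)\to0$ uniformly on $\mathcal Q$. The contraction bound $|\lambda|\leq C\varepsilon^{p+3\gamma}$ does not give this, since $(p+3\gamma)+(p_{\rm cor}+3\gamma-1)-4=(39\gamma-1)/5-3\sigma_{\rm v}<0$.

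You must first deduce $(f_*-o(1))|\lambda|\leq|\mathbf F|\leq C\varepsilon^4$ from your relation $\operatorname{Flux}_i(S_\Lambda(g))=(\mathsf F\lambda)_i+o(|\lambda|)$. The error is then $\Ocal(\varepsilon^{p_{\rm cor}+3\gamma-1})=o(\varepsilon^{p_{\rm bal}})$ because $\gamma>-1/2$, so \eqref{AX-eq-binary-existence-balance} keeps the exponent $p_{\rm bal}$. The paper avoids this detour by applying the lemma to the actual fluxes \eqref{AX-eq-binary-actual-flux-map}, which flux stability controls directly. It invokes \eqref{AX-eq-binary-force-coefficient-detection} only at the zero.
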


The smallness threshold is chosen after the admissible exponents, the compact mass-fraction range and the fixed geometric margins. The estimates below involve a finite number of strictly positive error and absorption exponents, so all required inequalities hold for sufficiently small $\varepsilon_{\mathrm{tot}}$. Section~\ref{AX-section-fixed-reference-uniqueness} distinguishes uniqueness of the projected correction from the still open uniqueness of the selected barycenter and separation.

\begin{proof}
\bse
\emph{1. Regular tensor domain and projected estimates.} We work first with $\ell=1$ and abbreviate $\varepsilon_{\mathrm{tot}}$ to $\varepsilon$. The geometric parameters $(C,D)$ range over the fixed square $\mathcal Q$ of Lemma~\ref{AX-prop-two-mass-balance}, and the centers are $p_\pm=\varepsilon q_\pm(C,D)$. We write $G=\widetilde G_\varepsilon$ for the improved Kottler--de Sitter reference. The reference has exact Kottler fields on the two inner collars and exact de Sitter fields near the outer horizon. Its horizon parameters are given by~\eqref{AX-eq-kottler-horizon-parameters}--\eqref{AX-eq-kottler-fields}; the matching constants $c_{h,j},c_{N,j}$ are $1+\Ocal(\varepsilon^2)$ uniformly on the parameter square. The improved inner and outer transition radii are $r_{\rm in}=\varepsilon^{5/3}$ and $r_{\rm out}=\varepsilon^{3/5}$. Since $r_{\rm in}/d\to0$ and $d/r_{\rm out}\to0$, the inner cutoffs vanish on both enlarged separation annuli and the outer cutoff equals one there. The reference is therefore exactly the original intermediate metric on these annuli, by Lemma~\ref{AX-lem-improved-reference-error}; its axial flux calculation is unchanged.

We specify the nonlinear domain. Its elements $w$ satisfy $\mathfrak b_Gw=0$ and the centering conditions~\eqref{AX-eq-complete-core-centering} on the two inner Kottler collars, with the prescribed static and axial symmetries. We use the affine auxiliary metric $g\coloneqq G+w$. Its relative change obeys $\eta\leq C\varepsilon^{3\gamma}\|w\|_X$ by~\eqref{AX-eq-correction-comparison}. If it is small, this metric is positive definite in every reference lift. Near a horizon, the auxiliary circle action is the rotation of the two-dimensional plane normal to $H_j$ in the reference Riemannian extension, with angle $\theta_{E,j}=\kappa_jt_E$. A smooth invariant tensor preserves this action and its fixed set. More explicitly, regularity and rotation invariance give
\be
\label{AX-eq-affine-horizon-gravity}
g_{\theta_{E,j}\theta_{E,j}}=s^2a(s^2,x),\qquad h_{ss}=a(s^2,x)+\Ocal(s^2),\qquad h_{sa}=\Ocal(s),\qquad N=\kappa_js\sqrt{a(s^2,x)},
\ee
where $a(0,x)>0$. The factor $\kappa_j$ in the lapse comes from $d\theta_{E,j}=\kappa_jdt_E$. Since $h^{ss}=a(0,x)^{-1}$ at $s=0$, the last formula gives $|dN|_h=\kappa_j$ on that horizon. Thus the affine domain preserves its positive surface gravity as well as its fixed set. The same argument in the axial plane preserves the regular axis. At a pole, simultaneous invariance under these two rotations gives smoothness in the product of their Cartesian coordinates. Away from the fixed sets, small relative change of the coefficient of $dt_E^2$ preserves its positivity. Consequently the affine domain preserves exactly the three horizons and all the required horizon, axis and pole regularity conditions.

We solve the reduced Einstein equation with two undetermined obstruction coefficients. The projection, tensor domain, obstruction tensors, linearization and nonlinear remainder are all defined by $G$. We write
\be
\label{AX-eq-binary-kottler-remainder}
E_\varepsilon\coloneqq\Pi_GS_3(G)=S_3(G),\qquad \mathcal Q_G(w)\coloneqq S_3(G+w)-S_3(G)-DS_3(G)[w].
\ee
The inverse of Corollary~\ref{AX-cor-kottler-inverse} solves $L_\varepsilon w-\sum_j\lambda_j\psi_j=F$. Its components and Lemma~\ref{AX-lem-improved-reference-error} give
\be
\label{AX-eq-binary-composed-estimates}
\|\mathcal R_G^wF\|_X\leq C\varepsilon^{-s_{\rm inv}}\|F\|_Y, \qquad |\mathcal R_G^\lambda F|\leq C\varepsilon^{3\gamma}\|F\|_Y, \qquad \|E_\varepsilon\|_Y\leq C\varepsilon^{p}.
\ee
The rescaled coefficients of $G$ have the uniform bounds proved above. Taylor's formula gives the unprojected quadratic estimate~\eqref{AX-eq-unprojected-quadratic-weight}. Applying the fixed projection $\Pi_G$ yields
\be
\label{AX-eq-binary-improved-quadratic}
\|\Pi_G(\mathcal Q_G(w)-\mathcal Q_G(z))\|_Y \leq C\varepsilon^{-s_{\rm quad}}(\|w\|_X+\|z\|_X)\|w-z\|_X.
\ee
Here the factor $\varepsilon^{-s_{\rm quad}}$ is the product of the factor $\varepsilon^{3\gamma}$ in the unprojected quadratic estimate and the factor $\varepsilon^{-s_{\rm inv}}$ in the projection bound, as in Corollary~\ref{AX-cor-improved-projection-quadratic}.
\ese

\bse
\emph{2. Contraction and the corrected family.} The metric component is a fixed point of
\be
\label{AX-eq-binary-affine-contraction}
\mathcal A_G(w)\coloneqq-\mathcal R_G^w\bigl(E_\varepsilon+\Pi_G\mathcal Q_G(w)\bigr), \qquad r_\varepsilon\coloneqq K\varepsilon^{p_{\rm cor}}.
\ee
Here $C_0$ bounds $\|\mathcal R_G^wE_\varepsilon\|_X/\varepsilon^{p_{\rm cor}}$ and $p_{\rm cor}=p-s_{\rm inv}$. We let $C_1$ bound the composition of the inverse of the tensor operator with the projected quadratic remainder after factoring out $\varepsilon^{-s_{\rm inv}-s_{\rm quad}}$. We fix $q\in(0,1)$ and $K>0$ satisfying $C_0/K+q/2<1$. On the radius-$r_\varepsilon$ ball, we obtain
\be
\label{AX-eq-binary-contraction-margin}
\begin{aligned}
\|\mathcal A_G(w)\|_X&\leq C_0\varepsilon^{p_{\rm cor}}+C_1\varepsilon^{-s_{\rm inv}-s_{\rm quad}}r_\varepsilon^2,\\
\|\mathcal A_G(w)-\mathcal A_G(z)\|_X
&\leq2C_1K\varepsilon^{p-2s_{\rm inv}-s_{\rm quad}}\|w-z\|_X.
\end{aligned}
\ee
The inequality $p>2s_{\rm inv}+s_{\rm quad}$ permits $\varepsilon$ to be chosen so that $2C_1K\varepsilon^{p-2s_{\rm inv}-s_{\rm quad}}\leq q$. The norm bound divided by $r_\varepsilon$ is then at most $C_0/K+q/2<1$. Thus the map preserves the ball and has contraction factor at most $q$. Its iterates from zero converge geometrically to the unique fixed point in this ball. We then put $\lambda=-\mathcal R_G^\lambda(E_\varepsilon+\Pi_G\mathcal Q_G(w))$. The two components of the same inverse give
\be
\label{AX-eq-binary-projected-family}
\Pi_GS_3(G+w)=\sum_{j=1}^2\lambda_j\psi_j, \qquad \|w\|_X\leq C\varepsilon^{p_{\rm cor}}, \qquad |\lambda|\leq C\varepsilon^{p+3\gamma}.
\ee
The projected quadratic term has size $\Ocal(\varepsilon^{2p_{\rm cor}-s_{\rm quad}})=o(\varepsilon^p)$, because $p>2s_{\rm inv}+s_{\rm quad}$. The coefficient bound follows by applying the coefficient component of the inverse of the linear operator with two additional scalar unknowns, whose norm is $\Ocal(\varepsilon^{3\gamma})$. Since $g-G=w$, the comparisons~\eqref{AX-eq-correction-comparison} give $\eta\leq C\varepsilon^{p_{\rm cor}+3\gamma}$ and $\eta_{\rm sep}\leq C\varepsilon^{p_{\rm cor}+\gamma}$ directly. In particular, $p_{\rm cor}+3\gamma>2$ places this ball in the space of positive metrics with the prescribed Cartesian regularity.
\ese

\bse
\emph{3. Continuity in the geometric parameters.} We justify continuity of the corrected family in $(C,D)$ for each fixed $\varepsilon>0$. We fix a base parameter $p_*=(C_*,D_*)$ and denote a nearby parameter by $p$. Smooth axis-preserving spatial maps $\Psi_p$ identify the coordinates near the inner horizons and equal the identity near the outer horizon. In the base spatial region, we choose disjoint smooth invariant cutoffs $\chi_j$, equal to one near the inner horizons and zero near the outer horizon. If $\kappa_j(p)$ is the corresponding reference gravity, we define
\be
\label{AX-eq-binary-continuous-angular-identifications}
a_p\coloneqq\exp\Bigl(\sum_{j=\pm}\chi_j\log\frac{\kappa_j(p)}{\kappa_j(p_*)}\Bigr), \qquad J_p(q_{\rm sp}+q_{tt}dt_E^2)\coloneqq q_{\rm sp}+a_p^2q_{tt}dt_E^2, \qquad I_p\coloneqq J_p^{-1}\Psi_p^*.
\ee
Here $q_{\rm sp}$ is the restriction of the tensor to the spatial directions. Near each inner horizon $a_p$ is constant. Dividing the coefficient of $dt_E^2$ by its square changes the angular coordinate $\kappa_j(p)t_E$ to $\kappa_j(p_*)t_E$ and preserves the matching of the radial and angular normal coefficients. Thus $I_p$ identifies the regular tensor spaces, including their pole conditions; it is the identity near the outer horizon. Its derivatives in space and in $p$ have finite bounds on a fixed positive-mass neighborhood, since its nonconstant factors are supported in the interior. The conjugated differential operators include derivatives of $a_p$ on that support and consequently depend continuously on $p$. The spaces of spatial vector fields orthogonal to the axial symmetry orbits and their dual one-forms are identified by spatial pullback: their normal coefficients have the same odd or even regular expansions independently of the angular coordinate. The energy inverse $A_G^{-1}$ is continuous in these fixed spaces by the inverse identity. Identifying vectors and one-forms by $G$, we define the projection
\be
\label{AX-eq-binary-domain-projections}
Q_G\coloneqq I-\mathcal T_G B_GA_G^{-1}\mathfrak b_G, \qquad \mathfrak b_G\mathcal T_GB_GA_G^{-1}=I.
\ee
The projection $Q_G$ therefore depends continuously on the parameters and has range $\ker\mathfrak b_G$. The identity uses $\mathfrak b_G=\delta_G\mathcal T_G$ and $\mathcal T_G^2=I$. By Proposition~\ref{AX-prop-centered-index}, the centering functionals are independent on this range. At a fixed parameter value, we choose two tensors in that range whose centering matrix is the identity. Applying $Q_G$ to their transported values gives a continuous family whose centering matrix remains invertible nearby. Inverting this finite matrix gives a continuous right inverse $Z_G$ for the centering functionals on $\ker\mathfrak b_G$. Hence $Q_G-Z_G\ell_GQ_G$ is a continuous projection onto the space satisfying both $\mathfrak b_Gw=0$ and the two centering conditions. The target is similarly identified by the continuous projection $\Pi_G$. Two projections $P_0,P$ that are sufficiently close in norm have isomorphic ranges. Indeed, $PP_0+(I-P)(I-P_0)$ is close to the identity and hence invertible by the geometric series; it intertwines the projections. These constructions identify both the domain and target with fixed spaces locally. The Kottler coefficients depend smoothly on $(C,D)$ in these identified charts. The linear operator with two additional scalar unknowns is continuous in those identifications, so its inverse is continuous by $T^{-1}-S^{-1}=T^{-1}(S-T)S^{-1}$. Comparing the two uniformly contracting fixed-point maps now proves continuity of $w$, $\lambda$, and the fluxes of the corrected metrics.
\ese

\bse
\emph{4. Selection of a zero of the two fluxes.} We apply the axial flux calculation to the corrected metric. We write $F_+,F_-$ for the two axial spatial fluxes of $S_3(g)$, with the orientations already fixed in~\eqref{AX-eq-spatial-flux}. On the separation annuli, the difference from the original intermediate metric has size $\Ocal(\varepsilon^{p_{\rm cor}+\gamma})$. We choose $c_{\rm bar}\geq r_0$ and $[1-r_0,1+r_0]\subset I_{\rm fl}$, so that the entire parameter square lies in the range of~\eqref{AX-eq-two-mass-uniform-flux-stability}. With $t=\varepsilon^2$, that estimate and the scaling identity~\eqref{AX-eq-flux-scaling} give
\be
\label{AX-eq-binary-corrected-two-fluxes}
\frac{F_\pm}{8\pi\varepsilon^4} =f_\pm(C,D)+\Ocal(\varepsilon^{p_{\rm bal}}).
\ee
Dividing $t^3+t\rho+\rho^2$ by $t^2$ gives $\Ocal(\varepsilon^2+\rho/\varepsilon^2+(\rho/\varepsilon^2)^2)$. Here $\rho=\Ocal(\varepsilon^{p_{\rm cor}+\gamma})$ and $p_{\rm bal}=p_{\rm cor}+\gamma-2>0$. Also $p_{\rm bal}<2$, since $p=(12-3\gamma)/5$, $\gamma>-1/4$ and $s_{\rm inv}>0$. This proves the error estimate. The leading functions $f_\pm$ are those of~\eqref{AX-eq-two-mass-leading-fluxes}, uniformly for the comparable masses. The continuous map
\be
\label{AX-eq-binary-actual-flux-map}
\mathfrak F_\varepsilon(C,D)\coloneqq \frac1{8\pi\varepsilon^4} \Bigl(F_++F_-,\frac{F_+}{\theta}-\frac{F_-}{1-\theta}\Bigr) =(C,D-D^{-2})+\Ocal(\varepsilon^{p_{\rm bal}})
\ee
has a zero in the interior of $\mathcal Q$ by Lemma~\ref{AX-prop-two-mass-balance}. That lemma also gives $|C|+|D-1|\leq C\varepsilon^{p_{\rm bal}}$ at every such zero. The defining relations $z_{\rm bar}=\varepsilon C$, $d=\varepsilon D$ give the balance estimates in the normalization $\ell=1$.
\ese

\bse
\emph{5. Energy control of the auxiliary vector field.} We apply the energy argument of Proposition~\ref{AX-lem-Einstein-recovery} to both obstructions. The reduced Einstein equation gives
\be
\label{AX-eq-binary-recovery-decomposition}
S_3(g)=T_\lambda+B_GV, \qquad T_\lambda\coloneqq\sum_{j=1}^2\lambda_j\psi_j, \qquad V\coloneqq A_G^{-1}\delta_GS_3(g).
\ee
We integrate over a fixed interval in $t_E$ of length $P_E>0$. The obstruction tensors are divergence-free with respect to $G$, so $\delta_GT_\lambda=0$ and
\be
\label{AX-eq-binary-selected-source-energy}
\langle T_\lambda,B_GV\rangle_{L^2(G)}=0, \qquad \|T_\lambda\|_{L^2(G)} \leq C|\lambda|(P_E/m)^{1/2}.
\ee
The tensor bounds of Proposition~\ref{AX-prop-global-force-columns} give the second estimate. The first follows by the integration by parts in Proposition~\ref{AX-lem-Einstein-recovery}, with vanishing tube contributions in the regular lifts.

The Bianchi pairing~\eqref{AX-eq-bianchi-pairing} uses the change of symmetric gradient
\be
\label{AX-eq-binary-gauge-metric-change}
B_gV=B_GV+\tfrac12\mathcal L_Vw, \qquad \|\tfrac12\mathcal L_Vw\|_{L^2(G)} \leq C\eta\Bigl(\|\nabla^GV\|_{L^2(G)} +\|\omega^{-1}V\|_{L^2(G)}\Bigr).
\ee
The bound is~\eqref{AX-eq-recovery-Lie-estimate}. The uniform vector estimate holds for the improved reference in its own angular lifts, and the inverse metrics and volume measures differ by $\Ocal(\eta)$. Thus~\eqref{AX-eq-recovery-absorption}, with $E\coloneqq\|B_GV\|_{L^2(G)}$, gives
\be
\label{AX-eq-binary-gauge-energy-absorption}
E^2\leq C\eta E^2+C\eta E\|T_\lambda\|_{L^2(G)}, \qquad E\leq C\eta\|T_\lambda\|_{L^2(G)} \leq C|\lambda|\eta(P_E/m)^{1/2}.
\ee
Since $\eta=\Ocal(\varepsilon^{p_{\rm cor}+3\gamma})$, absorption and division by $E$ when positive give the second inequality; it also holds for $E=0$.
\ese

\bse
\emph{6. Local flux estimates and vanishing of the coefficients.} The sphere flux is estimated on the enlarged separation annuli, where $\omega\asymp d$, the reference lapse is bounded below, and the rescaled coefficients have uniform bounds. The obstruction tensors satisfy
\be
\label{AX-eq-binary-selected-source-annulus}
d^2\|T_\lambda\|_{C^{1,\alpha}_d} \leq C|\lambda|\sqrt{d/m}.
\ee
Applying $\delta_g$ to~\eqref{AX-eq-binary-recovery-decomposition} and using $\delta_GT_\lambda=0$ gives the exact vector equation and its source estimate
\be
\label{AX-eq-binary-gauge-annular-equation}
\delta_gB_GV=-(\delta_g-\delta_G)T_\lambda, \qquad d^3\|\delta_gB_GV\|_{C^{0,\alpha}_d} \leq C|\lambda|\eta_{\rm sep}\sqrt{d/m}.
\ee
The divergence comparison in~\eqref{AX-eq-recovery-local-vector-equation} now carries the factor $\sqrt{d/m}$ from~\eqref{AX-eq-binary-selected-source-annulus}. The interior estimate~\eqref{AX-eq-recovery-local-regularity}, on a smaller annulus containing either sphere, gives
\be
\label{AX-eq-binary-gauge-annular-estimate}
\|V\|_{C^{1,\alpha}_d} \leq C\Bigl(P_E^{-1/2}d^{-1/2}E +|\lambda|\eta_{\rm sep}d^{-1}\sqrt{d/m}\Bigr).
\ee
The rescaling and volume factors are those of~\eqref{AX-eq-recovery-local-regularity}; the additional $\sqrt{d/m}$ affects only the source term. Both annuli lie in the positive-lapse interior.

The sphere estimate~\eqref{AX-eq-recovery-flux-from-energy} and~\eqref{AX-eq-binary-gauge-energy-absorption} therefore give, for $a\in\{+,-\}$,
\be
\label{AX-eq-binary-recovery-flux-loss}
|\operatorname{Flux}_a(B_GV)| \leq C|\lambda|(\eta+\eta_{\rm sep})\sqrt{d/m} \leq C\varepsilon^{p_{\rm cor}+3\gamma-1}|\lambda|.
\ee
Here the integration-length factors cancel, $d/m\asymp\varepsilon^{-2}$, and $\eta_{\rm sep}=\Ocal(\varepsilon^{p_{\rm cor}+\gamma})\leq C\eta$. The exponent $p_{\rm cor}+3\gamma-1$ is positive.

We set $\mathbf F\coloneqq(F_+,F_-)$ and use the matrix of axial fluxes $\mathsf F$ of Theorem~\ref{AX-thm-physical-force-inverse}, with its lower bound $f_*$. We fix $e_*\in(0,f_*)$. Taking the two fluxes in~\eqref{AX-eq-binary-recovery-decomposition} gives
\be
\label{AX-eq-binary-force-coefficient-detection}
\mathbf F=\mathsf F\lambda+ \bigl(\operatorname{Flux}_+(B_GV),\operatorname{Flux}_-(B_GV)\bigr), \qquad (f_*-C\varepsilon^{p_{\rm cor}+3\gamma-1})|\lambda|\leq|\mathbf F|.
\ee
For sufficiently small $\varepsilon$, we have $C\varepsilon^{p_{\rm cor}+3\gamma-1}\leq e_*$, and therefore $(f_*-e_*)|\lambda|\leq|\mathbf F|$. The selected zero of~\eqref{AX-eq-binary-actual-flux-map} has $\mathbf F=0$, since its two defining linear combinations form an invertible map when $0<\theta<1$. Both coefficients therefore vanish. The energy estimate gives $B_GV=0$, and~\eqref{AX-eq-binary-recovery-decomposition} gives $S_3(g)=0$. Conversely, $\lambda=0$ gives the same conclusion by that energy estimate and hence makes both axial fluxes zero. The zero of the axial flux map and the vanishing of the two coefficients are therefore equivalent throughout this small corrected family.
\ese

\emph{7. The static equations and regularity.} Taking the trace of $S_3(g)=0$ gives $\Scal_g=12$, and hence $\Ric_g=3g$. Because $g=h+N^2dt_E^2$, the warped Ricci formulas give the full static equations for $(h,N)$. The affine-domain argument already established the three nondegenerate horizons and the regular axis. Smoothness follows at fixed positive $\varepsilon$ in every regular lift: the equation $\Ric(G+w)-3(G+w)+B_G(\mathfrak b_Gw)^\sharp=0$ is a quasilinear elliptic system for $w$ when its relative size is small. Its principal symbol is a small perturbation of that of one half of the connection Laplacian on symmetric two-tensors at $G$. Differentiating in a smaller chart and applying the local Schauder estimate upgrades regularity one derivative at a time. The smooth reference and fixed chart coefficients permit iteration; the circle symmetries preserve the reconstructed spatial metric and odd lapse at a horizon. The local Riemannian extensions across the horizons therefore satisfy Proposition~\ref{AX-lem-local-horizon-extension}, which proves the local analytic horizon extensions. On the static interior, the Riemannian metric $g=h+N^2dt_E^2$ is Einstein, hence analytic in harmonic coordinates by DeTurck--Kazdan~\cite[Theorem~5.2, p.~258]{LS-DeTurckKazdan}. Its Killing field and orthogonal static sections are analytic, giving analytic spatial data in adapted coordinates. This analyticity holds across the transition radii of the reference as well: the corrected solution satisfies the Einstein equation throughout these regions.

The $j$th Kottler reference horizon has area $4\pi r_{H,j}^2=16\pi c_{h,j}m_j^2$. The relative induced-metric correction is $\Ocal(\eta)$, and $c_{h,j}=1+\Ocal(\varepsilon^2)$, whereas $\eta=\Ocal(\varepsilon^{p_{\rm cor}+3\gamma})$. The area expansion in the statement follows. For equal masses we restrict every operator and correction to the reflection-invariant class; the same contraction argument and proof of the Einstein equation then use the single obstruction tensor permitted by reflection symmetry and the scalar separation equation. Reflection imposes $z_{\rm bar}=0$, and the opposite signs at the two ends of the separation interval supply the required separation zero. Finally, multiplying spatial lengths by $\ell$ restores $\Lambda=3/\ell^2$, the masses and the balance relations in~\eqref{AX-eq-binary-existence-balance}. Multiplication of $N$ by a positive constant preserves the static equations and supplies the indicated maximum normalization.
\end{proof}


\subsection{Prescribed Kottler reference masses}
\label{AX-section-prescribed-Kottler-masses}

We select the two auxiliary masses together with the barycenter and separation to prescribe the Kottler parameters. This selection requires only joint continuity of the corrected family. We recall the conversion $\mu_j=m_j\sqrt{c_{h,j}}(1-Q_{H,j}/3)$ between a reference scale and its Kottler parameter. The matching estimates give the uniform conversion bound needed in the proof below.

\begin{proposition}[Existence with prescribed Kottler reference masses]
\label{AX-prop-prescribed-Kottler-masses}
Fix $0<\widehat\theta_0\leq1/2$, $0<\rho<1$, and $\delta_\mu>0$ satisfying $(1+\rho)\delta_\mu<\rho$. Fix target masses $\widehat\mu_\pm>0$, and define
\be
\label{AX-eq-prescribed-Kottler-targets}
\widehat M\coloneqq\widehat\mu_++\widehat\mu_-,\qquad \widehat\varepsilon\coloneqq(\widehat M/\ell)^{1/3},\qquad \widehat\theta\coloneqq\widehat\mu_+/\widehat M \in[\widehat\theta_0,1-\widehat\theta_0].
\ee
Set $\theta_0\coloneqq(1-\rho)\widehat\theta_0/(1+\rho)$ and $\varepsilon_{\max}\coloneqq(1+\rho)^{1/3}\widehat\varepsilon$. After decreasing the threshold $\varepsilon_0(\theta_0)$ of Theorem~\ref{AX-thm-binary-existence}, there is a constant $C_\mu>0$, depending only on this interval of mass fractions and the fixed construction choices, for which the following assertion holds. Assume
\be
\label{AX-eq-prescribed-Kottler-sufficient-bounds}
\varepsilon_{\max}<\varepsilon_0(\theta_0),\qquad C_\mu\varepsilon_{\max}^2\leq\delta_\mu.
\ee
This gives a smooth axisymmetric static solution for which the reference collars have the prescribed Kottler mass parameters $\widehat\mu_+$ and $\widehat\mu_-$. These parameters refer to the collars before the global nonlinear correction.
\end{proposition}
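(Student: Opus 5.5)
We will obtain the prescribed Kottler masses from a fixed-point (or degree) argument in the two auxiliary masses $(m_+,m_-)$, running the construction of Theorem~\ref{AX-thm-binary-existence} for each admissible pair and then correcting the masses. The selected collar constants depend on the pair $(m_+,m_-)$, so we use the conversion map
\[
\Phi(m_+,m_-)\coloneqq\bigl(\mu_+(m_+,m_-),\mu_-(m_+,m_-)\bigr),\qquad \mu_j=m_j\sqrt{c_{h,j}}\bigl(1-Q_{H,j}/3\bigr),
\]
where $c_{h,j}$ and $Q_{H,j}=\Lambda r_{H,j}^2$ are evaluated at the barycenter and separation $(C,D)$ selected by the theorem. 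From~\eqref{AX-eq-improved-reference-constants}, we have $c_{h,j}=1+\Ocal(\varepsilon_{\mathrm{tot}}^2)$ uniformly on the square $\mathcal Q$. We also have $Q_{H,j}=\Ocal(m_j^2)$. This gives the conversion bound $|\mu_j/m_j-1|\leq C_\mu\varepsilon_{\mathrm{tot}}^2$, where $C_\mu$ depends only on the mass-fraction interval and the fixed construction choices.

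The plan is to solve $\Phi(m)=\widehat\mu$ on the box
\[
\mathcal M\coloneqq\Bigl\{(m_+,m_-):\ |m_j/\widehat\mu_j-1|\leq\rho\Bigr\}.
\]
On $\mathcal M$ the total mass satisfies $\mathfrak m\leq(1+\rho)\widehat M$, so $\varepsilon_{\mathrm{tot}}\leq\varepsilon_{\max}<\varepsilon_0(\theta_0)$. The mass fraction satisfies $m_+/\mathfrak m\geq(1-\rho)\widehat\theta_0/(1+\rho)=\theta_0$, and symmetrically $m_-/\mathfrak m\geq\theta_0$. Hence Theorem~\ref{AX-thm-binary-existence} applies at every point of $\mathcal M$.

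We then need continuity of $\Phi$ on $\mathcal M$, and this requires the selected $(C,D)$ to depend continuously on the masses. Lemma~\ref{AX-prop-two-mass-balance} gives a continuous selected zero only under its $C^1$ hypothesis, which we do not have. We therefore avoid selecting $(C,D)$ separately and argue jointly on $\mathcal M\times\mathcal Q$ with the map
\[
\mathcal G(m,C,D)\coloneqq\Bigl(\mathfrak F_{\varepsilon_{\mathrm{tot}}}(C,D),\ \bigl(\mu_j(m,C,D)/\widehat\mu_j-1\bigr)_{j=\pm}\Bigr).
\]
Step~3 of the proof of Theorem~\ref{AX-thm-binary-existence} gives continuity of the corrected family in $(C,D)$. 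The same identification argument, which conjugates by spatial maps and rescales the $dt_E^2$ coefficient near each inner horizon, also handles continuous dependence on $(m_+,m_-)$. The only change is that the Kottler coefficients now depend smoothly on the masses as well. Hence $\mathcal G$ is continuous. Since $\mu_j$ is explicit in terms of $c_{h,j}$ and $Q_{H,j}$, continuity of the last two components is immediate.

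We now check the boundary signs of $\mathcal G$. On the faces $m_j=(1\pm\rho)\widehat\mu_j$, the conversion bound gives
\[
\mu_j/\widehat\mu_j-1=\pm\rho+\Ocal\bigl((1+\rho)C_\mu\varepsilon_{\max}^2\bigr).
\]
The hypotheses $C_\mu\varepsilon_{\max}^2\leq\delta_\mu$ and $(1+\rho)\delta_\mu<\rho$ make the error smaller than $\rho$, so this component has the strict sign $\pm$ on that face. On the faces of $\mathcal Q$, the first two components have the strict signs used in Lemma~\ref{AX-prop-two-mass-balance}.

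With these signs, $\mathcal G$ is homotopic through nonvanishing maps on the boundary to the affine map $(C,\,D-D^{-2},\,m_+/\widehat\mu_+-1,\,m_-/\widehat\mu_--1)$, which has degree one. Hence $\mathcal G$ has a zero. This step replaces the two-dimensional winding-form argument by its four-dimensional analogue; equivalently, one can use the Poincaré--Miranda theorem on the product of intervals. At such a zero, steps~5--7 of the theorem apply unchanged. The axial fluxes vanish, so the obstruction coefficients vanish and $S_3(g)=0$, and the collar masses equal $\widehat\mu_\pm$.

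We expect the main obstacle to be the justification of joint continuity in the masses. The reference angular lifts and the collar geometry move with $m$, so the tensor domain and the centering functionals must be identified across nearby masses by the projections $Q_G$ and $Z_G$ of Step~3. The inverse of the linear operator with two additional scalar unknowns must then be shown continuous in these identifications, using $T^{-1}-S^{-1}=T^{-1}(S-T)S^{-1}$. Once that is done, the contraction in Step~2 is uniform on the compact parameter set, and its fixed point depends continuously on all four parameters.
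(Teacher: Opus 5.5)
Your proposal is correct and matches the paper's proof almost step by step. The paper also avoids selecting $(C,D)$ separately: it works on the joint box $\prod_j[(1-\rho)\widehat\mu_j,(1+\rho)\widehat\mu_j]\times\mathcal Q$, uses the same conversion bound $|\sqrt{c_{h,j}}(1-Q_{H,j}/3)-1|\leq C_\mu\varepsilon^2\leq\delta_\mu$ to get strict face signs, and concludes with a four-dimensional degree argument (Stokes' theorem for the closed angular three-form on $\mathbb R^4\setminus\{0\}$) before invoking Steps~5--7 of Theorem~\ref{AX-thm-binary-existence}. The paper's components $m_j\phi_j-\widehat\mu_j$ differ from your $\mu_j/\widehat\mu_j-1$ only by the positive factor $\widehat\mu_j$; on the face the error term is the exact quantity $(1\pm\rho)(\phi_j-1)$, bounded by $(1+\rho)\delta_\mu<\rho$, rather than an $\Ocal$ term.
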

\begin{proof}
\bse
We work with $\ell=1$ and vary the four parameters in
\be
\label{AX-eq-prescribed-Kottler-parameter-box}
\mathcal B\coloneqq\prod_{j=+,-} [(1-\rho)\widehat\mu_j,(1+\rho)\widehat\mu_j]\times\mathcal Q.
\ee
For each pair, $\theta\in[\theta_0,1-\theta_0]$ and $\varepsilon\leq\varepsilon_{\max}$ hold. The corrected projected metrics therefore exist throughout this box. The dependence is jointly continuous. The horizon and spatial transports from Theorem~\ref{AX-thm-binary-existence} apply to $p=(m_+,m_-,C,D)$; the positive-mass reference coefficients and centering functionals depend smoothly on these parameters. The inverse identity gives continuity of the inverses of the linear operators with two additional scalar unknowns, and the fixed-point iterates converge uniformly because their contraction constants are bounded by $q<1$.

We write the normalized fluxes as $(C,h_b(D))+E(p)$, where $h_b(D)=D-D^{-2}$ and $\|E\|\leq\alpha_{\rm eq}<r_0$, and put $\phi_j(p)\coloneqq\sqrt{c_{h,j}}(1-Q_{H,j}/3)$. The matching formulas~\eqref{AX-eq-improved-reference-constants} give $|c_{h,j}-1|\leq C_h\varepsilon^2$ uniformly on this box. Decreasing $\varepsilon_0$ ensures $\varepsilon\leq1$ and $1/2\leq c_{h,j}\leq3/2$. The identity $Q_{H,j}=12c_{h,j}m_j^2$ then gives $Q_{H,j}\leq18\varepsilon^6$. Thus we may take $C_\mu\coloneqq C_h/(1+1/\sqrt2)+6\sqrt{3/2}$, since
\be
\label{AX-eq-prescribed-Kottler-conversion-bound}
|\phi_j-1|\leq\frac{|c_{h,j}-1|}{1+\sqrt{c_{h,j}}} +\frac{\sqrt{c_{h,j}}}{3}Q_{H,j} \leq C_\mu\varepsilon^2\leq\delta_\mu.
\ee
We consider the continuous map
\be
\label{AX-eq-prescribed-Kottler-four-component-map}
\Phi(p)\coloneqq(m_+\phi_+-\widehat\mu_+,\, m_-\phi_--\widehat\mu_-,\,C+E_1,\,h_b(D)+E_2).
\ee
On the two faces corresponding to $m_j$, its $j$th component has opposite strict signs with magnitude at least $[\rho-(1+\rho)\delta_\mu]\widehat\mu_j>0$. Its last two components have the corresponding strict signs on $\mathcal Q$.

We suppose that $\Phi$ has no zero and approximate its coordinate-clipped extension by a smooth nonvanishing map $\widetilde\Phi$ retaining the face signs. The three-form
\be
\label{AX-eq-prescribed-Kottler-angular-three-form}
\omega_{\rm ang}\coloneqq|y|^{-4}\sum_{j=1}^4(-1)^{j-1}y_j\, dy_1\wedge\cdots\wedge\widehat{dy_j}\wedge\cdots\wedge dy_4
\ee
is closed on $\mathbb R^4\setminus\{0\}$ because $\operatorname{div}(y/|y|^4)=0$. Thus Stokes' theorem gives a zero boundary integral for $\widetilde\Phi^*\omega_{\rm ang}$. Its boundary interpolation to
\be
\label{AX-eq-prescribed-Kottler-leading-four-component-map}
\Phi_0(p)\coloneqq(m_+-\widehat\mu_+,m_--\widehat\mu_-,C,h_b(D))
\ee
never vanishes, by the face signs. The boundary integral is consequently also the integral for $\Phi_0$. The image of the oriented boundary under $\Phi_0$ is the boundary of a rectangular box containing the origin; its integral equals the integral on a centered sphere, namely $2\pi^2$, a contradiction. Hence $\Phi$ has an interior zero. Its first two equations prescribe the two Kottler masses and its last two equations make the fluxes vanish. The Bianchi and integral estimates in the proof of Theorem~\ref{AX-thm-binary-existence} give the Einstein equation at this zero. We restore $\ell$ by scaling.
\ese
\end{proof}
For fixed $\rho,\delta_\mu$ and $\widehat\theta_0$, the sufficient target-mass condition is
\be
\label{AX-eq-prescribed-Kottler-dimensional-threshold}
\sqrt\Lambda\,(\widehat\mu_++\widehat\mu_-) <\frac{\sqrt3}{1+\rho} \min\{\varepsilon_0(\theta_0)^3, (\delta_\mu/C_\mu)^{3/2}\}.
\ee
The assertion prescribes reference parameters and does not assert uniqueness of the selected equilibrium or exhaustiveness of the constructed family.


\section{Critical points and geometry of the level set}
\label{AX-section-critical-geometry}

\subsection{The maximal circle and nearby levels}
\label{AX-section-maximal-circle}

We study the solutions constructed in Theorem~\ref{AX-thm-binary-existence}. We determine the maximum set and the axial critical points, describe their local analytic geometry, and derive the topology of the lapse levels. \begin{proposition}[A circle of maxima and nearby torus levels]
\label{AX-prop-lapse-maximum-circle}
Take a solution from Theorem~\ref{AX-thm-binary-existence} and retain its reference lapse normalization for now, before imposing $N_{\max}=1$. Set $\ell=1$, $\varepsilon=(m_++m_-)^{1/3}$ and $\theta=m_+/(m_++m_-)$. After decreasing the smallness threshold, the maximum set of $N$ is one smooth embedded circle in $M$, whose image is one interior point of the orbit domain $\Omega$. The Hessian of $N$ is negative definite in the two directions transverse to this circle. Every level $\{N=c\}$ with $c<N_{\max}$ sufficiently close to $N_{\max}$ is a smooth two-torus. In the reference coordinates $x=X/\varepsilon$, with axial coordinate $z$ and cylindrical radius $\varrho$, this maximum has
\be
\label{AX-eq-lapse-maximum-location-value}
\begin{aligned}
z_{\max}&=\frac{1-2\theta}{2}+\Ocal(\varepsilon^{p_{\rm bal}}),\qquad
\varrho_{\max}=\frac{\sqrt3}{2}+\Ocal(\varepsilon^{p_{\rm bal}}),\\
N_{\max}&=1-\varepsilon^2\Bigl(\frac32-\frac{\theta(1-\theta)}2\Bigr)
+\Ocal(\varepsilon^{2+p_{\rm bal}}).
\end{aligned}
\ee
All constants are uniform for $\theta\in[\theta_0,1-\theta_0]$. A constant rescaling of the lapse preserves its maximum set.
\end{proposition}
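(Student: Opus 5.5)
The plan is to compare the exact lapse with the leading potential on the separation scale, locate the maximum of that potential explicitly, and transfer its nondegeneracy by a $C^2$ perturbation argument. In the rescaled coordinates $x=X/\varepsilon$ with $\ell=1$, the reference lapse on the intermediate region is $N_I=1-\varepsilon^2\Phi(x)$, where
\[
\Phi(x)\coloneqq\frac{\theta}{|x-q_+|}+\frac{1-\theta}{|x-q_-|}+\frac{|x|^2}{2},
\]
with $q_\pm=q_\pm(C,D)$ the rescaled centers. At the selected zero of Lemma~\ref{AX-prop-two-mass-balance} one has $|C|+|D-1|=\Ocal(\varepsilon^{p_{\rm bal}})$, so $q_+=(1-\theta)e_z$ and $q_-=-\theta e_z$ up to that error.

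The first step is to find the minima of $\Phi$ for these limiting centers. Critical points off the axis require $\partial_\varrho\Phi=0$, that is $\theta|x-q_+|^{-3}+(1-\theta)|x-q_-|^{-3}=1$, together with $\partial_z\Phi=0$. With $|x-q_\pm|=1$ both equations hold exactly when $z=(1-\theta)-\tfrac12=(1-2\theta)/2$ and $\varrho=\sqrt3/2$. This is the apex of the equilateral triangle over the segment $[q_-,q_+]$. At that point $|x|^2=\varrho^2+z^2=1-\theta(1-\theta)$, so $\Phi=1+\tfrac12(1-\theta(1-\theta))=\tfrac32-\tfrac{\theta(1-\theta)}2$, which is the value in \eqref{AX-eq-lapse-maximum-location-value}.

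The second step is uniqueness and strict nondegeneracy of this minimum in the closed half-plane $\varrho\ge0$. I would write the critical system in the variables $(r_+,r_-)$. The radial equation expresses $\varrho$-stationarity as $\theta r_+^{-3}+(1-\theta)r_-^{-3}=1$. The axial equation then reads $z=\theta(z-(1-\theta))r_+^{-3}+(1-\theta)(z+\theta)r_-^{-3}$, which forces $\theta(1-\theta)(r_+^{-3}-r_-^{-3})=0$, hence $r_+=r_-=1$. So this is the only off-axis critical point. For the Hessian, I would compute $\partial^2\Phi$ at the apex from the Kelvin-type formula
\[
\partial_i\partial_j|x|^{-1}=(3n_in_j-\delta_{ij})|x|^{-3},
\]
and check that its $2\times2$ determinant in $(z,\varrho)$ is positive. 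Since $\Phi\to\infty$ at the centers and grows at infinity, and its axial critical points are saddles, the apex is the global minimum on compact subsets of the separation scale. The axial claims are the three saddles of Proposition~\ref{AX-prop-three-axial-saddles}, and I would only use the transverse sign of those saddles here.

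The third step transfers this to the exact lapse and is the main obstacle. Theorem~\ref{AX-thm-binary-existence} gives $\eta_{\rm sep}\le C\varepsilon^{p_{\rm cor}+\gamma}$ on a neighborhood of size $d$. Rescaled to $x$, this means the exact $N$ satisfies $N=1-\varepsilon^2\Phi+\Ocal(\varepsilon^{2+p_{\rm bal}})$ in $C^2$ on a fixed compact region of the $x$ half-plane containing the apex. Here the $C^2$ error of the metric, of order $\varepsilon^{p_{\rm cor}+\gamma}$, must be divided by $\varepsilon^2$, which yields the margin $p_{\rm bal}>0$. One also has to account for the $\Ocal(\varepsilon^4)$ quadratic terms of the reference. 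The implicit function theorem then gives a unique nondegenerate maximum of $N$ near the apex, with the stated location and value.

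Globally, one must exclude larger values of $N$ elsewhere. Near the horizons and in the Kottler collars, $N$ is small relative to $1-\varepsilon^2\Phi_{\min}$. In the outer de Sitter region, $N\le1-c\,r^2$ with $r\gg\varepsilon$, so $N<1-2\varepsilon^2$ there by the weighted $C^0$ bounds. On intermediate annuli $|x|\gg1$ and near the centers, $\Phi\ge2\Phi_{\min}$, and the $C^0$ control with $\eta$ suffices. This comparison is the delicate bookkeeping: the error $\eta\le C\varepsilon^{p_{\rm cor}+3\gamma}$ beats $\varepsilon^2$ precisely because $p_{\rm cor}+3\gamma>2$ in Table~\ref{AX-tab-admissible-exponents}.

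Finally, the maximum point in $\Omega$ has $\varrho>0$, so its orbit is an embedded circle. The Morse–Bott lemma in the normal plane, invariant under rotation, shows that nearby levels are boundaries of tubular neighborhoods of this circle, hence two-tori.
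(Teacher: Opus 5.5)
Your proposal is correct and follows the paper's proof in its main steps: the leading-potential minimum at $r_+^0=r_-^0=1$, confinement of every maximum by $|N-N_{\rm ref}|\leq C\eta N_{\rm ref}=o(\varepsilon^2)$ using $p_{\rm cor}+3\gamma>2$, and a $C^2$ comparison with margin $\varepsilon^{p_{\rm bal}}$ near the apex. The one real difference is that the paper proves global minimality of $f_\theta$ by the exact factorization $f_\theta-f_{\theta,*}=\theta\frac{(r_+^0-1)^2(r_+^0+2)}{2r_+^0}+(1-\theta)\frac{(r_-^0-1)^2(r_-^0+2)}{2r_-^0}$, which makes your appeal to the axial saddle signs unnecessary and gives the deficit away from the circle directly.

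Two bookkeeping corrections. First, the $C^2$ bound near the apex should come from $\|w\|_X\leq C\varepsilon^{p_{\rm cor}}$ with $\omega\asymp\varepsilon$ on a compact rescaled region, not from $\eta_{\rm sep}$. That quantity lives on an annulus about a single center, and such an annulus cannot reach the apex, which lies at distance $\approx d$ from both centers. Second, in the Kottler collars the relevant fact is not that $N$ is small; it is that $1-N_{\rm ref}\geq c\varepsilon^{4/3}\gg\varepsilon^2$ wherever $\min_jr_j\leq2r_{\rm in}$.
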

\begin{proof}
\emph{1. The minimum of the leading potential.} We write $q_+^0\coloneqq(1-\theta)e_z$, $q_-^0\coloneqq-\theta e_z$ and $r_j^0\coloneqq|x-q_j^0|$. The leading potential at barycenter zero and separation one is $f_\theta\coloneqq\theta/r_+^0+(1-\theta)/r_-^0+|x|^2/2$. We set $f_{\theta,*}\coloneqq3/2-\theta(1-\theta)/2$. The weighted identity $\theta(r_+^0)^2+(1-\theta)(r_-^0)^2=|x|^2+\theta(1-\theta)$ and the function $\phi(r)\coloneqq r^{-1}+r^2/2$ give
\bse
\be
\label{AX-eq-leading-maximum-factorization}
\phi(r)-\frac32=\frac{(r-1)^2(r+2)}{2r},\qquad f_\theta-f_{\theta,*} =\theta\frac{(r_+^0-1)^2(r_+^0+2)}{2r_+^0} +(1-\theta)\frac{(r_-^0-1)^2(r_-^0+2)}{2r_-^0}.
\ee
The minimum is therefore attained exactly where $r_+^0=r_-^0=1$. In coordinates on the orbit domain this is the single point $p_\theta\coloneqq((1-2\theta)/2,\sqrt3/2)$, whose rotation is a circle. Differentiating $f_\theta$ at this point gives
\be
\label{AX-eq-leading-maximum-hessian}
H_\theta\coloneqq D^2_{(z,\varrho)}f_\theta(p_\theta)
=\begin{pmatrix}
3/4&(3\sqrt3/4)(1-2\theta)\\
(3\sqrt3/4)(1-2\theta)&9/4
\end{pmatrix},\qquad
\det H_\theta=\frac{27}{4}\theta(1-\theta).
\ee
Since its trace is $3$, its least eigenvalue is at least $(9/4)\theta_0(1-\theta_0)>0$.
\ese

\emph{2. Localization of all global maxima.} We denote the reference lapse by $N_{\rm ref}$ and use the static Killing field $T=\del_{t_E}$. The relative tensor estimate gives $|w(T,T)|\leq C\eta N_{\rm ref}^2$. Since $N^2=N_{\rm ref}^2+w(T,T)$, we obtain
\bse
\be
\label{AX-eq-lapse-maximum-Czero}
|N-N_{\rm ref}|\leq C\eta N_{\rm ref},\qquad \|N-N_{\rm ref}\|_{C^0(M)}=\Ocal(\varepsilon^{p_{\rm cor}+3\gamma})=o(\varepsilon^2).
\ee
The reference lapse is uniformly bounded by its explicit formulas, and $p_{\rm cor}+3\gamma>2$. On an inner collar, $c_{N,j}\leq1$, $Q_{H,j}=\Ocal(\varepsilon^6)$ and $b_{Q_{H,j}}\leq1$ imply $N_{K,j}\leq(1+CQ_{H,j})(1-m_j/(r_j+m_j/2))$. The same inner deficit bound holds for the intermediate lapse on the transition. On the outer transition, both the intermediate lapse and the de Sitter lapse have a deficit bounded below by a constant times $r_{\rm out}^2$. Convex interpolation consequently gives
\be
\label{AX-eq-lapse-maximum-exclusion}
\begin{aligned}
1-N_{\rm ref}&\geq c\varepsilon^{4/3}
&&\text{if }\min_j r_j\leq2r_{\rm in},\qquad r_{\rm in}=\varepsilon^{5/3},\\
1-N_{\rm ref}&\geq c\varepsilon^{6/5}
&&\text{if }|X|\geq r_{\rm out},\qquad r_{\rm out}=\varepsilon^{3/5}.
\end{aligned}
\ee
These regions cannot contain a maximum, since on the reference orbit over $p_\theta$ the lapse is $1-\varepsilon^2f_{\theta,*}+o(\varepsilon^2)$. In the remaining region, $N_{\rm ref}(\varepsilon x)=1-\varepsilon^2f_{\theta,C,D}(x)$, where $f_{\theta,C,D}\coloneqq\theta/|x-q_+(C,D)|+(1-\theta)/|x-q_-(C,D)|+|x|^2/2$. We choose $L_*>\sup_\theta f_{\theta,*}$, then $R_*>0$ and $r_*>0$ with $R_*^2/2>L_*$ and $\theta_0/(2r_*)>L_*$. The estimates $C=\Ocal(\varepsilon^{p_{\rm bal}})$ and $D-1=\Ocal(\varepsilon^{p_{\rm bal}})$ imply that, for small enough $\varepsilon$, all maxima belong to the compact rescaled region $K_\theta\coloneqq\{|x|\leq R_*,\ \min_j|x-q_j^0|\geq r_*\}$. Indeed, outside this region the intermediate potential exceeds $L_*$, while~\eqref{AX-eq-lapse-maximum-Czero} is $o(\varepsilon^2)$.
\ese

\emph{3. Strict convexity in the orbit domain.} On $K_\theta$, the reference is exactly intermediate for sufficiently small $\varepsilon$, and $\omega\asymp\varepsilon$. The $X$ estimate, with $k\geq3$, transports to the reference Cartesian coordinates and gives a scaled $C^2$ lapse error $\Ocal(\varepsilon^{p_{\rm cor}+\gamma})$. The reference lapse is bounded below there, so taking the square root of its coefficient of $dt_E^2$ preserves that estimate. Since $p_{\rm cor}+\gamma=2+p_{\rm bal}$, and the centers differ from $q_j^0$ by $\Ocal(\varepsilon^{p_{\rm bal}})$, we obtain
\be
\label{AX-eq-lapse-maximum-Ctwo}
\Bigl\|\frac{1-N(\varepsilon x)}{\varepsilon^2}-f_\theta(x)\Bigr\|_{C^2(K_\theta)} \leq C\varepsilon^{p_{\rm bal}}.
\ee
We choose a small disk in the orbit domain about $p_\theta$, uniformly in $\theta$, on which the Hessian of $f_\theta$ is positive definite. The factorization~\eqref{AX-eq-leading-maximum-factorization}, compactness and~\eqref{AX-eq-lapse-maximum-Ctwo} place every maximum in this disk. The normalized deficit in~\eqref{AX-eq-lapse-maximum-Ctwo} is strictly convex there, so it has exactly one critical point. This point is interior and has $\varrho>0$; its orbit is one smooth embedded circle, and the Hessian of the lapse is negative definite transversely. The uniform Hessian lower bound and the $C^1$ part of~\eqref{AX-eq-lapse-maximum-Ctwo} give distance $\Ocal(\varepsilon^{p_{\rm bal}})$ from $p_\theta$. Evaluation at this point proves~\eqref{AX-eq-lapse-maximum-location-value}. The maximum outside the corresponding orbit neighborhood is strictly smaller, by compactness and uniqueness of the maximal circle. All levels sufficiently close to $N_{\max}$ therefore lie in that neighborhood. Strict convexity makes each nearby level in the orbit domain a smooth simple closed curve: every ray from the minimum meets that level once, with nonzero radial derivative. Over this disk the circle action is free, and $\varrho>0$. In the reference cylindrical coordinates $(z,\varrho,\varphi)$, the inverse image is the product of the disk with $S^1$. The level curve therefore lifts to $S^1\times S^1$.
\end{proof}


\subsection{\texorpdfstring{Local geometry at the axis and the critical sets}{Local geometry at the axis and the critical sets}}
\label{AX-section-local-critical-geometry}

\paragraph{The circle action and regular coordinates}
The spatial metric, the lapse and its level sets have different regularity properties. The construction gives analytic spatial data at every interior point, including the fixed axis and the critical set of $N$. A critical level can nevertheless be singular as a subset of this smooth manifold. We describe first the circle action at the axis, then three critical points supplied by the construction, and finally the normal geometry of the maximal circle. The local geometric identities retain a general $\Lambda>0$. In the asymptotic estimates, we take $\ell=1$, $\varepsilon\coloneqq(m_++m_-)^{1/3}$ and $\theta\coloneqq m_+/(m_++m_-)$, with $\theta\in[\theta_0,1-\theta_0]$. We keep the reference normalization of the lapse except where $N_{\max}=1$ is stated explicitly. Rescaling the lapse by a positive constant leaves its critical sets and the topology of its levels unchanged. Choose an interior point $p$ on the fixed axis. The analytic exponential map of $h$ identifies a neighborhood of $p$ with one of the origin in $T_pM$. Since rotations are isometries fixing $p$, the exponential map intertwines the circle action and its differential. In an orthonormal frame, this differential acts by rotations on a two-plane and fixes the perpendicular line. The circle action therefore has the following expression in analytic geodesic normal coordinates $(x,y,s)$:
\be
\label{AX-eq-local-axis-action}
(x+iy,s)\longmapsto(e^{i\varphi}(x+iy),s),\qquad N(x,y,s)=F(x^2+y^2,s),
\ee
where $F$ is analytic. Rotation invariance forces every homogeneous term in the convergent Taylor series of $N$ to be a polynomial in $x^2+y^2$ and $s$. The equations of the axis are $x=y=0$. The cylindrical coordinates $r\coloneqq\sqrt{x^2+y^2}$ and $\varphi$ are singular on the axis. In Cartesian coordinates, both the metric and the lapse are analytic there. The set $r=0$ is a boundary of the orbit domain, although its points lie in the interior of the spatial manifold. On the axis, $dN=F_s(0,s)\,ds$. Symmetry makes the transverse derivatives zero; the point is critical precisely when the derivative along the axis also vanishes. Under the condition $F_s(0,s)\ne0$, the implicit function theorem gives the nearby level in the form $s=\chi(x^2+y^2,c)$, where $\chi$ is analytic. The level crosses the axis as a smooth disk. At the intersection, its tangent plane is perpendicular to the axis. Its profile in the orbit domain meets the axis orthogonally; apparent endpoints of that profile are regular spatial points.

\paragraph{The Hessian at an axial critical point and its conical level}
Suppose $dN(p)=0$, and set $N_p\coloneqq N(p)>0$. By rotation invariance, there are real coefficients $a,b,\alpha,\beta$ for which, on writing $r_3\coloneqq(x^2+y^2+s^2)^{1/2}$, the Taylor expansion is
\be
\label{AX-eq-local-axis-taylor}
N=N_p+\tfrac12\bigl(a(x^2+y^2)+bs^2\bigr) +\alpha s^3+\beta s(x^2+y^2)+\Ocal(r_3^4),\qquad 2a+b=-\Lambda N_p.
\ee
In the orthonormal normal coordinates, the lapse equation $\Delta_hN(p)=-\Lambda N_p$ gives the last identity. The tensor equation also gives $a=N_p(\Ric_h(e_x,e_x)-\Lambda)$ and $b=N_p(\Ric_h(e_s,e_s)-\Lambda)$. These are geometric eigenvalues of the Hessian, rather than coordinate singularities.

When $a>0$ and $b<0$, $p$ is a nondegenerate saddle with two positive eigenvalues. There is an exact analytic description of its critical level. Apply the implicit function theorem to $F(t,s)=N_p$, solving for $t$. This gives $t=\psi(s)$ because $F_t(0,0)=a/2>0$. At the critical point, $\psi(0)=\psi'(0)=0$. Two differentiations give $\psi''(0)=-2b/a$. Taylor's formula now reads $\psi(s)=s^2g(s)$, where $g$ is analytic and $g(0)=-b/a>0$. The analytic change $\zeta\coloneqq s\sqrt{g(s)}$ is invertible and yields
\be
\label{AX-eq-local-axis-exact-cone}
\{N=N_p\}=\{x^2+y^2=\zeta^2\}.
\ee
This is a cone in analytic coordinates, not a claim that the spatial metric is Euclidean or conically singular. The metric remains positive definite at its vertex. The tangent cone in the original orthonormal coordinates has slope $r/|s|=\sqrt{-b/a}$. Equation~\eqref{AX-eq-local-axis-taylor} gives $-b/a=2+\Lambda N_p/a>2$. Thus the static equation constrains its opening, without making the critical level smooth. Since $N_p>0$, the Lorentzian static metric is also regular at this point. In a sufficiently small neighborhood, a level just below $N_p$ has two local sheets along the axis; a level just above it has a connected annular neck. To see this, solve $F(t,s)=c$ for $t=\psi(s,c)$. The function $\psi$ has positive second $s$-derivative near $(0,N_p)$. The implicit function theorem gives its unique local minimum at $s=s_0(c)$. Put $\psi_0(c)\coloneqq\psi(s_0(c),c)$. Taylor factorization gives a positive analytic coefficient $g(s,c)$ such that
\be
\label{AX-eq-local-axis-level-family}
\psi(s,c)=\psi_0(c)+(s-s_0(c))^2g(s,c),\qquad g>0,\quad \psi_0(N_p)=0,\quad \psi_0'(N_p)=2/a>0.
\ee
The change $\zeta\coloneqq(s-s_0(c))\sqrt{g(s,c)}$ gives the exact level equation $x^2+y^2=\zeta^2+\psi_0(c)$. Thus a negative $\psi_0(c)$ gives two sheets and a positive $\psi_0(c)$ gives the annular neck. The local topology changes at the saddle. The number of global connected components also depends on how these sheets connect outside the neighborhood.

\begin{proposition}[Existence and local geometry of three axial saddle points]
\label{AX-prop-three-axial-saddles}
For the solutions of Theorem~\ref{AX-thm-binary-existence}, and sufficiently small total mass with comparable mass ratio, there are at least three interior critical points of $N$ on the axis: one between the two inner horizons and one on each outer axis interval. Each has two positive transverse Hessian eigenvalues and one negative axial eigenvalue, and its critical level has the local form~\eqref{AX-eq-local-axis-exact-cone}. Their rescaled positions converge to the three axial critical points of the leading potential $f_\theta$ used in Proposition~\ref{AX-prop-lapse-maximum-circle}. Their exhaustion and, in the reflection-symmetric family, the ordering and level topologies are established below.
\end{proposition}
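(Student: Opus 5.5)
The plan is to follow the strategy of Proposition~\ref{AX-prop-lapse-maximum-circle}. First we locate nondegenerate axial critical points of the leading potential $f_\theta$. Then we transfer them to the exact lapse using the $C^2$ comparison~\eqref{AX-eq-lapse-maximum-Ctwo} on compact rescaled regions away from the centers. Finally we read off the local geometry from~\eqref{AX-eq-local-axis-taylor}--\eqref{AX-eq-local-axis-exact-cone}.

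\emph{Leading potential on the axis.} On the axis, write $g(z)\coloneqq f_\theta(z,0)=\theta/|z-(1-\theta)|+(1-\theta)/|z+\theta|+z^2/2$ on each of the three intervals $(-\infty,-\theta)$, $(-\theta,1-\theta)$ and $(1-\theta,\infty)$. On each interval the singular terms are convex, so $g''>0$. Moreover $g\to+\infty$ at the endpoints and at infinity. Hence $g$ has exactly one critical point $z_i^0$ on each interval, and it is a strict minimum with $g''(z_i^0)\geq c(\theta_0)>0$.

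\emph{Transverse sign.} Since $f_\theta$ is smooth and even in $\varrho$ near the axis, the transverse second derivative is $\partial_\varrho^2f_\theta=\frac12\Delta_{\mathbb R^3}f_\theta-\frac12\partial_z^2 f_\theta$ at axial points, where we use $\Delta f_\theta=2\partial_\varrho^2f_\theta+\partial_z^2f_\theta$. Away from the centers $\Delta f_\theta=3$, so $\partial_\varrho^2 f_\theta=(3-g'')/2$. We must show $g''(z_i^0)<3$. We will check this explicitly using the critical-point equation, as in~\eqref{AX-eq-figure-critical-hessian}. For the outer point $z>1-\theta$, $g'=0$ gives $z=\theta/r_+^2+(1-\theta)/r_-^2$. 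Then $g''-3=2\theta/r_+^3+2(1-\theta)/r_-^3-2 <2\theta/r_+^3+2(1-\theta)/r_-^3-2z\cdot(\text{something})$. The cleanest route is the factorization~\eqref{AX-eq-leading-maximum-factorization}: on the axis $f_\theta\geq f_{\theta,*}$ with equality only off-axis, so the axial minima are saddles of $f_\theta$ in three dimensions provided the transverse direction decreases. Failing a slick argument, I expect a direct, uniform-in-$\theta$ computation of $3-g''(z_i^0)>0$ on each of the three intervals, which is the main obstacle, especially for the middle point, where it is the analogue of $b_{\rm fig}L_{\rm fig}^2>a_{\rm fig}^2(a_{\rm fig}-b_{\rm fig})$. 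With $D=1$ and $C=0$ the middle critical point satisfies $z=\theta/r_+^2-(1-\theta)/r_-^2$ with $r_++r_-=1$. Then $g''-3=2\theta/r_+^3+2(1-\theta)/r_-^3-2$, and since $r_\pm<1$ it is positive. I would recheck this sign. The expected conclusion is that $1-N$ has minimum along the axis but the lapse $N$ has a saddle with two \emph{positive} transverse eigenvalues, so I need transverse curvature of $N=1-\varepsilon^2f$ to be positive, i.e.\ $\partial_\varrho^2 f<0$, i.e.\ $g''>3$, which is consistent with the estimate just obtained. I will determine the correct inequality $g''(z_i^0)>3$ on all three intervals, which also gives the axial eigenvalue of $N$ negative. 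The trace relation $2a+b=-\Lambda N_p$ in~\eqref{AX-eq-local-axis-taylor} serves as a consistency check.

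\emph{Transfer to the exact solution.} On a compact rescaled neighborhood of each $(z_i^0,0)$, uniformly separated from the centers, the reference is exactly intermediate. The lift of~\eqref{AX-eq-lapse-maximum-Ctwo} to regular Cartesian coordinates across the axis gives $\|\varepsilon^{-2}(1-N(\varepsilon x))-f_\theta\|_{C^2}\leq C\varepsilon^{p_{\rm bal}}$. On the axis $dN$ has only the axial component. The one-dimensional implicit function theorem (or strict monotonicity of $\partial_zN$ near $z_i^0$, from the $C^2$ bound and $g''\geq c$) gives a unique zero of $\partial_zN(\cdot,0)$ within $\Ocal(\varepsilon^{p_{\rm bal}})$ of $z_i^0$. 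Its Hessian in Cartesian coordinates is within $\Ocal(\varepsilon^{2+p_{\rm bal}})$ of $-\varepsilon^2D^2f_\theta$, so it retains the signs: $a>0$ and $b<0$ after converting to $h$-orthonormal coordinates, since $h$ is $\Ocal(\varepsilon^2)$ close to Euclidean. The analytic-chart discussion preceding the statement then yields the exact cone~\eqref{AX-eq-local-axis-exact-cone}, together with convergence of the rescaled positions.
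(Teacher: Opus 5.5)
Your overall route is the paper's. Strict convexity of $g$ on each axis interval gives exactly one axial zero per interval. The identity $\partial_\varrho^2 f_\theta=(3-g'')/2=1-S_\theta$ reduces the transverse sign to $S_\theta>1$; here $S_\theta(z)=\theta_+|z-z_+|^{-3}+\theta_-|z-z_-|^{-3}$ with $z_+=1-\theta$, $z_-=-\theta$, $\theta_+=\theta$, $\theta_-=1-\theta$. The $C^2$ comparison and the analytic discussion before the statement then give persistence and the cone.

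There is, however, a genuine gap at exactly the step you flag as the main obstacle. You prove $S_\theta>1$ only at the middle zero, where both distances are below one. For the two outer zeros you leave an unfinished inequality and a promise. The factorization~\eqref{AX-eq-leading-maximum-factorization} cannot fill this: knowing that the global minimum of $f_\theta$ lies on the off-axis circle does not prevent an axial critical point from being a strict local minimum in three dimensions. Without the outer sign you do not get $a>0$, $b<0$ in~\eqref{AX-eq-local-axis-taylor} at the outer points. So neither the saddle signature nor the cone~\eqref{AX-eq-local-axis-exact-cone} follows there. (The requirement is $g''>3$, as you eventually note, not $g''<3$.)

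The missing step is short. Write the critical equation $g'(z)=0$ as $z=\sum_j\theta_j(z-z_j)|z-z_j|^{-3}=zS_\theta(z)-\sum_j\theta_jz_j|z-z_j|^{-3}$. Since $\theta_+z_+=-\theta_-z_-=\theta(1-\theta)$, this gives
\[
z\bigl(S_\theta(z)-1\bigr)=\theta(1-\theta)\bigl(|z-z_+|^{-3}-|z-z_-|^{-3}\bigr).
\]
On the right interval $z>z_+>0$ and $|z-z_+|<|z-z_-|$, so both sides are positive. On the left interval $z<z_-<0$ and $|z-z_-|<|z-z_+|$, so both sides are negative. Hence $S_\theta>1$ at both outer zeros.

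The zeros depend continuously on $\theta$ and stay in compact sets away from the centers for $\theta\in[\theta_0,1-\theta_0]$. So $S_\theta-1$ is bounded below uniformly there, which is what your $C^2$ transfer needs to preserve the signature for all sufficiently small total mass.
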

\begin{proof}
\emph{1. Critical points of the leading potential.} Put $z_+\coloneqq1-\theta$, $z_-\coloneqq-\theta$, $\theta_+\coloneqq\theta$ and $\theta_-\coloneqq1-\theta$. Away from the two centers, the following formulas hold on the axis:
\be
\label{AX-eq-axis-leading-derivatives}
\begin{aligned}
&S_\theta(z)\coloneqq\sum_{j=\pm}\frac{\theta_j}{|z-z_j|^3},\qquad \del_z f_\theta(z,0)=z-\sum_{j=\pm}\frac{\theta_j(z-z_j)}{|z-z_j|^3},\\
&\del_{zz}f_\theta(z,0)=1+2S_\theta(z),\qquad \del_{xx}f_\theta(z,0)=\del_{yy}f_\theta(z,0)=1-S_\theta(z).
\end{aligned}
\ee
The axial derivative is strictly increasing on each of the three intervals separated by $z_-,z_+$. Its limits at the endpoints of each interval have opposite signs, including the limits at infinity on the outer intervals. It consequently has exactly one zero on each interval.

At the middle zero, both distances to the centers are less than one, so $S_\theta>\theta_++\theta_-=1$. At an outer zero $z$, the critical equation gives
\be
\label{AX-eq-axis-leading-transverse-sign}
z(S_\theta(z)-1) =\sum_{j=\pm}\frac{\theta_jz_j}{|z-z_j|^3} =\theta(1-\theta)\bigl(|z-z_+|^{-3}-|z-z_-|^{-3}\bigr).
\ee
On the right interval the right-hand side and $z$ are positive; on the left interval both are negative. Thus $S_\theta>1$ also at the outer zeros. The Hessian of $f_\theta$ has one positive axial eigenvalue and two negative transverse eigenvalues at all three zeros. The zeros vary continuously with $\theta$ by strict positivity of the axial second derivative. For $\theta$ in the fixed compact mass-fraction interval, they remain in bounded sets away from the centers, and their Hessian eigenvalues remain uniformly separated from zero.

\emph{2. Persistence for the exact lapse.} Choose small disjoint Cartesian neighborhoods of these three zeros in rescaled coordinates. They lie in the intermediate reference region for sufficiently small $\varepsilon$. The same local estimate used in~\eqref{AX-eq-lapse-maximum-Ctwo}, applied on their union, gives
\be
\label{AX-eq-axis-critical-Ctwo}
\left\|\frac{1-N(\varepsilon x)}{\varepsilon^2}-f_\theta(x)\right\|_{C^2} \leq C\varepsilon^{p_{\rm bal}}.
\ee
Here the weight satisfies $\omega\asymp\varepsilon$, the reference lapse is bounded below, and the selected centers differ from their leading positions by $\Ocal(\varepsilon^{p_{\rm bal}})$, exactly as in the proof of~\eqref{AX-eq-lapse-maximum-Ctwo}. On each small axial interval, the axial derivative of the normalized deficit retains opposite endpoint signs and a strictly positive derivative. It has exactly one zero there. Rotation invariance makes the transverse derivatives vanish on the axis, so these zeros are genuine critical points in three dimensions. Their displacements are $\Ocal(\varepsilon^{p_{\rm bal}})$ by the uniform axial derivative bound. The $C^2$ estimate preserves the Hessian signatures, reversing those of $f_\theta$ when passing from the deficit to $N$. At a critical point the coordinate Hessian equals the covariant Hessian; its signature is independent of coordinates and of the positive definite metric used to represent it. Formula~\eqref{AX-eq-local-axis-exact-cone} now applies at each point.
\end{proof}

In the equal-mass family with reflection symmetry, the middle saddle lies in the fixed plane of the reflection that interchanges the two horizons. Choose the exponential coordinates so that the reflection maps $(x,y,s)$ to $(x,y,-s)$. At this saddle, the reflection forces the cubic coefficients $\alpha,\beta$ in~\eqref{AX-eq-local-axis-taylor} to vanish. This additional parity is a consequence of the equal-mass reflection; axial symmetry alone permits both cubic terms.

\paragraph{Critical values of the constructed solutions and level transitions}
Proposition~\ref{AX-prop-three-axial-saddles} concerns the exact solutions. It establishes three local saddle transitions, without identifying their critical values with those of $N_{\rm fig}$. Proposition~\ref{AX-prop-complete-critical-set}, below, excludes further critical points. The global progression of level topologies requires the ordering of all relevant critical values and the attachment of the local sheets. For the equal-mass reflection-symmetric solutions, the two outer saddle points obtained above are interchanged by reflection and have equal critical values. For unequal masses, the existence argument does not impose equality of these values. This distinguishes the simultaneous transitions in the symmetric auxiliary picture from a general solution with unequal masses.

\begin{proposition}[The three axial saddles and the maximal circle form the entire critical set]
\label{AX-prop-complete-critical-set}
After decreasing the total-mass threshold, the critical set of every constructed solution consists precisely of the three axial saddles of Proposition~\ref{AX-prop-three-axial-saddles} and the maximal circle of Proposition~\ref{AX-prop-lapse-maximum-circle}.
\end{proposition}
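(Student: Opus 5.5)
The plan is to cover the compact spatial region by finitely many zones, exclude critical points of $N$ in each zone separately, and then use the leading potential $f_\theta$ in the remaining rescaled region. Throughout, $\ell=1$ and we use the estimates of Theorem~\ref{AX-thm-binary-existence}. There are four zones:
\begin{itemize}
\item[(i)] horizon collars;
\item[(ii)] the inner regions $\min_j r_j\leq C_1\varepsilon$ minus the collars;
\item[(iii)] the outer region $|X|\geq R_*\varepsilon$;
\item[(iv)] the rescaled compact region $K_\theta$ from the proof of Proposition~\ref{AX-prop-lapse-maximum-circle}.
\end{itemize}

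\textbf{Horizon collars.} Near each horizon, nondegeneracy gives $|\nabla N|\geq\kappa_j/2$ on a collar of proper width comparable to $m_j$ (inner horizons) or to $1$ (outer horizon). This follows from the regular-lift form~\eqref{AX-eq-affine-horizon-gravity} together with the relative $C^1$ bound $\eta=o(\varepsilon^2)$ in the scaled norm. On the remaining part of the Kottler collars, the reference lapse $N_{K,j}$ has radial derivative comparable to $m_j/r_j^2$ in the scaled sense. The correction changes the scaled gradient by $C\eta$, hence by $o(\varepsilon^2)$ relative to $1$. I expect the main difficulty here: on scale $\omega\asymp r_j$, the reference gradient times $\omega$ is comparable to $m_j/r_j+r_j\,d$-type terms. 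This is not bounded below uniformly, because it degenerates where the companion and cosmological pull balances the mass term, near $r_j\asymp\varepsilon$.

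\textbf{Inner region, scales $r_j\ll\varepsilon$.} On the annuli $Rm_j\leq r_j\leq c\varepsilon$, the unperturbed gradient is dominated by $m_j/r_j^2$. This exceeds the companion and cosmological gradient, which is of order $\varepsilon$, as long as $r_j\leq c\varepsilon$ with $c$ small. The perturbation contributes at most $C\omega^{-1}\eta\,\omega^\gamma$-weighted amounts. I would redo the weighted estimate with the $X$-norm bound $\|w\|_X\leq C\varepsilon^{p_{\rm cor}}$, which gives a gradient error of order $\omega^{\gamma-1}\varepsilon^{p_{\rm cor}}$. Comparing with $m/\omega^2$ reduces to $\omega^{1+\gamma}\varepsilon^{p_{\rm cor}}\ll\varepsilon^3$. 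This holds for $\omega\leq c\varepsilon$ because $p_{\rm cor}+(1+\gamma)>2+3\gamma-\dots$, and I expect it to follow from $p_{\rm cor}+3\gamma>2$. So there are no critical points for $r_j\leq c\varepsilon$, except inside $K_\theta$, which we may enlarge to include $c\leq|x-q_j|$.

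\textbf{Outer region, $|X|\geq R_*\varepsilon$.} On the intermediate part, the radial derivative of $N_{\rm ref}$ is comparable to $|X|$. It dominates the mass term $\mathfrak m/|X|^2$ once $R_*$ is large, and the correction error in scaled form is $o(|X|)$ by the same weighted comparison. On the outer transition and on de Sitter, $\partial_rN_D\asymp -r$ is nonzero away from the center, and the correction is $O(\eta)$ in the scaled $C^1$ norm. On the outer horizon collar we use the nondegeneracy bound from the first step.

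\textbf{Rescaled region $K_\theta$ (with the inner disks removed).} By~\eqref{AX-eq-lapse-maximum-Ctwo} we have $C^2$-closeness to $f_\theta$ up to $O(\varepsilon^{p_{\rm bal}})$. It then suffices to show that the critical points of $f_\theta$ in this region are exactly the circle $p_\theta$ and the three axial zeros, and that all of them are nondegenerate. A standard perturbation argument then finishes: near each nondegenerate critical point there is exactly one critical point, and away from them $|\nabla f_\theta|$ is bounded below on a compact set. For the classification of the critical points of $f_\theta$ off the axis, I would use
\[
\nabla f_\theta = x-\theta(x-q_+)/r_+^3-(1-\theta)(x-q_-)/r_-^3 .
\]
The component orthogonal to the axis vanishes for $\varrho>0$ iff $\theta r_+^{-3}+(1-\theta)r_-^{-3}=1$. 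The axial component then gives $\theta q_+r_+^{-3}+(1-\theta)q_-r_-^{-3}=0$, i.e.\ $\theta(1-\theta)(r_+^{-3}-r_-^{-3})=0$, hence $r_+=r_-=1$. This is exactly $p_\theta$. On the axis, the count is given by Proposition~\ref{AX-prop-three-axial-saddles}.
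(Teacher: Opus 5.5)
Your proposal is correct and follows essentially the paper's proof. It uses the same inner exclusion region $r_j\le c\varepsilon$ with $\partial_{r_j}N\ge c\,m_j/r_j^2$, the same outer region $|X|\ge R_*\varepsilon$ with $\partial_rN\le -cr$, the same correction ratio reducing to $p_{\rm cor}+\gamma>2$ (that is, $p_{\rm bal}>0$), and the identical off-axis computation forcing $r_+^0=r_-^0=1$ for $f_\theta$. You deviate in two places: a separate horizon-collar zone using nondegeneracy and $\eta=o(\varepsilon^2)$, and a generic perturbation of nondegenerate critical points instead of the paper's sign of $\partial_\varrho$ near the saddles and strict convexity near the circle. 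These are harmless, provided you apply the perturbation at the circle to the reduced function of $(z,\varrho)$ on the orbit domain, since the circle is not a nondegenerate critical set in three dimensions. You should also check the cutoff-derivative terms on the two transition annuli, which you omit; the paper shows they are of relative size $O(\varepsilon^{4/3})$ and $O(\varepsilon^{6/5})$.
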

\begin{proof}
\emph{1. A derivative estimate for the lapse correction.} Work in the reference normalization $\ell=1$, and denote the improved reference lapse by $N_{\rm ref}$. The tensor estimate in~\eqref{AX-eq-binary-projected-family}, with the regular angular conditions, implies
\be
\label{AX-eq-lapse-global-Cone}
|\nabla_X(N-N_{\rm ref})| \leq C\varepsilon^{p_{\rm cor}}\omega^{\gamma-1}.
\ee
Here $\nabla_X$ denotes derivatives in reference spatial Cartesian coordinates. To justify this estimate up to an inner horizon, put $T\coloneqq\del_{t_E}$ and $e\coloneqq w(T,T)/N_{\rm ref}^2$, with $w$ the correcting tensor. In a regular polar lift, its angular component divided by the squared polar radius is an even regular function. The scaled tensor bounds, including two derivatives, therefore give $|e|+\omega|\nabla_X e|\leq C\varepsilon^{p_{\rm cor}}\omega^\gamma$. The same assertion holds in the outer horizon lift and at the two-plane poles. Since $N=N_{\rm ref}\sqrt{1+e}$ on the static side, differentiation proves~\eqref{AX-eq-lapse-global-Cone}, using the explicit reference bounds $N_{\rm ref}\leq C$ and $|\nabla_XN_{\rm ref}|\leq C/\omega$. In the remaining interior the argument is ordinary differentiation of the coefficient of $dt_E^2$. All comparisons have uniform constants in the fixed regular lifts.

\emph{2. Inner exclusion regions.} Fix a small number $r_*>0$. On the part of the intermediate region where $r_j\leq r_*\varepsilon$, differentiation in the outward radial direction from $p_j$ gives
\be
\label{AX-eq-inner-lapse-radial-sign}
\del_{r_j}N_I\geq \frac{m_j}{r_j^2}-C\varepsilon \geq c\frac{m_j}{r_j^2}.
\ee
The companion distance is comparable to $\varepsilon$, and $|X|\leq C\varepsilon$ there. Since $m_j\geq\theta_0\varepsilon^3$, the last inequality holds on choosing $r_*$ sufficiently small, independently of $\varepsilon$.

The exact Kottler collar $m_j/2\leq r_j\leq r_{\rm in}$ satisfies the same lower bound for $N_{K,j}$. In~\eqref{AX-eq-kottler-radial-fields}, differentiation of $(1-u_j)/(1+u_j)$ gives $m_j/(r_j+m_j/2)^2$. The remaining factor differs from its positive constant leading value by $\Ocal(r_j^2)$, with derivative $\Ocal(r_j)$; these estimates include $r_j=m_j/2$ by the explicit polynomial expression for $b_j$. Its contribution is smaller than $m_j/r_j^2$, since $r_{\rm in}^3/m_j=\Ocal(\varepsilon^2)$. On the transition $r_j\asymp r_{\rm in}$, the scaled matching bound~\eqref{AX-eq-improved-core-matching} divided by $m_j/r_{\rm in}$ tends to zero: its five terms give respectively $\Ocal(\varepsilon^{4/3})$, $\Ocal(\varepsilon^2)$, $\Ocal(\varepsilon^{4/3})$, $\Ocal(\varepsilon^{4/3})$ and $\Ocal(\varepsilon^2)$. Differentiation of the cutoff costs one factor $r_{\rm in}^{-1}$, so it preserves the lower bound~\eqref{AX-eq-inner-lapse-radial-sign} for the patched reference, after reducing $\varepsilon$.

In all these inner regions $\omega\asymp r_j$. Relative to $m_j/r_j^2$, the correction in~\eqref{AX-eq-lapse-global-Cone} is bounded by
\be
\label{AX-eq-inner-lapse-correction-ratio}
C\varepsilon^{p_{\rm cor}-3}r_j^{\gamma+1} \leq C\varepsilon^{p_{\rm cor}+\gamma-2} =C\varepsilon^{p_{\rm bal}}\longrightarrow0.
\ee
We used $\gamma+1>0$ and $r_j\leq r_*\varepsilon$. Consequently $\del_{r_j}N>0$ throughout both inner exclusion regions, including their horizon limits, and they contain no critical points.

\emph{3. Outer exclusion region.} Choose $R_*>0$ large. In the intermediate region with $r\geq R_*\varepsilon$, the two center distances are comparable to $r$, and
\be
\label{AX-eq-outer-lapse-radial-sign}
\del_rN_I\leq-r+C\varepsilon^3/r^2\leq-cr.
\ee
In the de Sitter region the exact formula~\eqref{AX-eq-vector-reference-models} gives $\del_rN_D=-r/(1+r^2/4)^2\leq-cr$ for $r\leq2$. On the outer transition $r\asymp r_{\rm out}$, the scaled difference between the intermediate lapse and $N_D$ is $\Ocal(\varepsilon^3/r+r^4)$. Its differentiated cutoff error, divided by $r$, is $\Ocal(\varepsilon^3/r^3+r^2)=\Ocal(\varepsilon^{6/5})$. Thus~\eqref{AX-eq-outer-lapse-radial-sign} holds for the patched reference everywhere with $r\geq R_*\varepsilon$. Since $\omega\asymp r$, the relative correction to this bound is at most
\be
\label{AX-eq-outer-lapse-correction-ratio}
C\varepsilon^{p_{\rm cor}}r^{\gamma-2} \leq C\varepsilon^{p_{\rm cor}+\gamma-2} =C\varepsilon^{p_{\rm bal}}\longrightarrow0.
\ee
The outer exclusion region therefore has $\del_rN<0$ and contains no critical point.

\emph{4. The compact intermediate region.} Every critical point now lies in a fixed compact rescaled region away from both leading centers. On that region the normalized deficit converges to $f_\theta$ in $C^2$, as in~\eqref{AX-eq-axis-critical-Ctwo}. The full critical set of $f_\theta$ consists of its three axial zeros and its minimizing circle. Here $r_j^0\coloneqq|x-q_j^0|$ denotes distance to the leading center, in rescaled coordinates. Indeed, at a point with cylindrical radius $\varrho>0$, its radial derivative is
\be
\label{AX-eq-leading-off-axis-critical}
\del_\varrho f_\theta=\varrho\left(1-\sum_{j=\pm}\frac{\theta_j}{(r_j^0)^3}\right).
\ee
Together with $\del_zf_\theta=0$, this gives $\sum_j\theta_j/(r_j^0)^3=1$ and $\sum_j\theta_jz_j/(r_j^0)^3=0$. Since $\theta_+z_+=-\theta_-z_-=\theta(1-\theta)>0$, the latter identity gives $r_+^0=r_-^0$; the former then gives $r_+^0=r_-^0=1$. This is precisely the minimizing circle.

Outside small neighborhoods of these four critical orbits, compactness gives a positive lower bound for $|\nabla f_\theta|$, uniform in the fixed mass-fraction interval. The $C^1$ convergence excludes critical points there. Near each axial saddle, the two transverse second derivatives of the deficit remain negative. Invariance and integration along a transverse radial segment then give $\del_\varrho[(1-N)/\varepsilon^2]<0$ for $\varrho>0$ in this neighborhood. Its critical points must therefore lie on the axis, where strict positivity of the axial second derivative gives the unique saddle already found. Near the minimizing circle, strict convexity of the deficit in the disk in the orbit domain gives the unique critical orbit already established in Proposition~\ref{AX-prop-lapse-maximum-circle}. These observations exhaust the compact region and prove the assertion.
\end{proof}

\begin{corollary}[Ordering of the saddle values for equal masses]
\label{AX-cor-equal-mass-critical-order}
In the equal-mass reflection-symmetric family, let $c_{\rm mid}^{\rm ex}$ and $c_{\rm out}^{\rm ex}$ be the values of the exact lapse at the middle saddle and either of the two outer saddles supplied by Proposition~\ref{AX-prop-three-axial-saddles}. For sufficiently small mass,
\be
\label{AX-eq-equal-mass-critical-order}
c_{\rm mid}^{\rm ex}<c_{\rm out}^{\rm ex}<N_{\max}.
\ee
The ordering is preserved by normalization to $N_{\max}=1$.
\end{corollary}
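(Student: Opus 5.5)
The plan is to reduce the ordering to a comparison of three explicit values of the leading potential $f_\theta$ at $\theta=1/2$. The exact lapse differs from the leading profile by much less than the gaps between these values, so the strict inequalities will pass to the exact solutions.

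\emph{Reduction.} Work in the reference normalization with $\ell=1$. For $\theta=1/2$ the leading centers are $q_\pm^0=\pm\tfrac12e_z$ and
\[
f_{1/2}(x)=\tfrac12|x-q_+^0|^{-1}+\tfrac12|x-q_-^0|^{-1}+\tfrac12|x|^2 .
\]
Both axial saddles supplied by Proposition~\ref{AX-prop-three-axial-saddles} and the maximal circle of Proposition~\ref{AX-prop-lapse-maximum-circle} lie in a fixed compact rescaled region away from the centers. On that region the estimate~\eqref{AX-eq-axis-critical-Ctwo}, and equally~\eqref{AX-eq-lapse-maximum-Ctwo}, gives $N(\varepsilon x)=1-\varepsilon^2f_{1/2}(x)+\Ocal(\varepsilon^{2+p_{\rm bal}})$ uniformly. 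The exact critical points are $\Ocal(\varepsilon^{p_{\rm bal}})$-close to the critical points of $f_{1/2}$, and $\nabla f_{1/2}$ vanishes there. The exact critical values are therefore
\[
c=1-\varepsilon^2 f_{1/2}(x_c)+\Ocal(\varepsilon^{2+p_{\rm bal}}),
\]
where $x_c$ is the corresponding leading critical point. It therefore suffices to prove strict inequalities
\[
f_{\rm mid}>f_{\rm out}>f_*\coloneqq f_{1/2,*}=\tfrac32-\tfrac18=\tfrac{11}8
\]
with a fixed positive margin. The $\Ocal(\varepsilon^{2+p_{\rm bal}})$ errors are then absorbed for small $\varepsilon$. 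By the reflection symmetry the two outer saddles have the same value, so only one of them needs to be treated.

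\emph{The three values.} The middle saddle sits at $z=0$ by reflection, and $f_{\rm mid}=f_{1/2}(0)=2$. The lower inequality $f_{\rm out}>f_*$ is immediate from the factorization~\eqref{AX-eq-leading-maximum-factorization}. Equality there would force the distances to both centers to equal one, and an axial point cannot satisfy $r_+^0=r_-^0=1$. Hence $f_{\rm out}-f_*>0$, and this is a fixed number since $\theta=1/2$ is fixed.

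\emph{Main step: $f_{\rm out}<2$.} Let $z_o>1/2$ be the unique zero of
\[
g(z)\coloneqq z-\tfrac12(z-\tfrac12)^{-2}-\tfrac12(z+\tfrac12)^{-2},
\]
which is strictly increasing by~\eqref{AX-eq-axis-leading-derivatives}. I would first locate $z_o$ in a short interval by evaluating signs, for instance $g(1.19)<0<g(1.2)$. Then I would bound
\[
\phi(z)\coloneqq\tfrac12(z-\tfrac12)^{-1}+\tfrac12(z+\tfrac12)^{-1}+\tfrac12z^2
\]
on that interval. Its derivative is $g$, so $\phi$ is monotone between the interval endpoint and $z_o$, and a crude endpoint evaluation is enough. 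This gives $\phi(z_o)\approx1.73$, comfortably below $2$.

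A cleaner alternative, which I would try first, avoids numerics. Use the critical relation $z_o=\tfrac12\bigl[(z_o-\tfrac12)^{-2}+(z_o+\tfrac12)^{-2}\bigr]$ to eliminate the quadratic term. This reduces $\phi(z_o)<2$ to a polynomial inequality in $u=z_o-\tfrac12$, which can be checked on the bracketing interval.

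With the margins $f_{\rm mid}-f_{\rm out}$ and $f_{\rm out}-f_*$ fixed and positive, the inequalities $c_{\rm mid}^{\rm ex}<c_{\rm out}^{\rm ex}<N_{\max}$ follow for sufficiently small mass. Dividing all three values by $N_{\max}>0$ preserves the ordering, which handles the normalization $N_{\max}=1$. The only genuine obstacle is the quantitative verification of $f_{\rm out}<2$; everything else is bookkeeping with the earlier $C^2$ estimates.
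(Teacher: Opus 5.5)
Your proposal is correct and follows essentially the same route as the paper: you use the local $C^2$ estimates to reduce to the leading values $2$, $f_o$ and $\tfrac{11}{8}$, and you obtain $f_o>\tfrac{11}{8}$ from the factorization. For the main step the paper needs no bracketing of $z_o$: since $g=\phi'$ is increasing, $z_o$ minimizes $\phi$ on $(\tfrac12,\infty)$, so the single evaluation $f_o\leq\phi(1)=\tfrac{11}{6}<2$ suffices (your estimate $\phi(z_o)\approx1.73$ is consistent with this).
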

\begin{proof}
When $\theta=1/2$, the middle zero of the axial derivative of $f_{1/2}$ occurs at $z=0$, where $f_{1/2}(0,0)=2$. Let $z_o>1/2$ be its right outer zero, and put $f_o\coloneqq f_{1/2}(z_o,0)$. Since the axial second derivative is positive, $z_o$ is the minimum on the right outer interval. Evaluation at $z=1$ gives
\be
\label{AX-eq-equal-mass-critical-gap}
f_o\leq f_{1/2}(1,0)=\frac{11}{6}<2.
\ee
The minimum of $f_{1/2}$ over the full spatial region is $11/8$, attained precisely on the circle in Proposition~\ref{AX-prop-lapse-maximum-circle}. In particular $f_o>11/8$, since $z_o$ lies on the axis. The local $C^2$ estimates and the saddle displacement estimate give, in the reference normalization,
\be
\label{AX-eq-equal-mass-critical-values}
\begin{aligned}
&c_{\rm mid}^{\rm ex}=1-2\varepsilon^2+\Ocal(\varepsilon^{2+p_{\rm bal}}),\qquad c_{\rm out}^{\rm ex}=1-f_o\varepsilon^2+\Ocal(\varepsilon^{2+p_{\rm bal}}),\\
&N_{\max}=1-\tfrac{11}{8}\varepsilon^2+\Ocal(\varepsilon^{2+p_{\rm bal}}).
\end{aligned}
\ee
The strict gaps between the three leading coefficients prove~\eqref{AX-eq-equal-mass-critical-order}. Multiplication by a positive normalization constant preserves both inequalities.
\end{proof}

Thus the central local saddle transition occurs before the two simultaneous outer transitions as the level value increases in this family. This ordering concerns the exact equal-mass solutions. For unequal comparable masses there is no pair of outer saddles constrained by reflection to have one common value; Proposition~\ref{AX-prop-general-level-topology} uses their three values in increasing order. Proposition~\ref{AX-prop-equal-mass-level-topology}, below, determines the complete level topologies for the equal-mass solutions.

\begin{proposition}[Topology of every lapse level for equal masses]
\label{AX-prop-equal-mass-level-topology}
For the reflection-symmetric equal-mass solutions, normalize $N_{\max}=1$ and write $c_{\rm mid}^{\rm ex}$ and $c_{\rm out}^{\rm ex}$ for their two distinct axial critical values, with $0<c_{\rm mid}^{\rm ex}<c_{\rm out}^{\rm ex}<1$. Then the regular levels have the topologies
\be
\label{AX-eq-equal-mass-level-topologies}
\{N=c\}\simeq
\begin{cases}
S^2\sqcup S^2\sqcup S^2,&0<c<c_{\rm mid}^{\rm ex},\\
S^2\sqcup S^2,&c_{\rm mid}^{\rm ex}<c<c_{\rm out}^{\rm ex},\\
S^1\times S^1,&c_{\rm out}^{\rm ex}<c<1.
\end{cases}
\ee
The middle critical level is the disjoint union of the outer sphere and two spheres meeting at the middle saddle. The outer critical level is connected and has two conical points; topologically it is obtained from two disjoint spheres by identifying two distinct pairs of points, one pair at each outer saddle. The maximal level is the regular circle $\mathcal C$.
\end{proposition}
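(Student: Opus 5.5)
The plan is Morse theory for the lapse on the orbit domain, reduced to bookkeeping with the critical set of Proposition~\ref{AX-prop-complete-critical-set}. That proposition, together with Propositions~\ref{AX-prop-lapse-maximum-circle} and~\ref{AX-prop-three-axial-saddles}, says the critical set of $N$ in $M^\circ$ is exactly three nondegenerate axial saddles and the maximal circle $\mathcal C$. Corollary~\ref{AX-cor-equal-mass-critical-order} orders the values as $c_{\rm mid}^{\rm ex}<c_{\rm out}^{\rm ex}<1$, and the two outer saddles share the value $c_{\rm out}^{\rm ex}$ by reflection.

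On each open interval between consecutive critical values, $\nabla N\ne0$ on the compact slab $\{c_1\le N\le c_2\}$. Near the boundary, the horizon collars already give $dN\neq0$ by nondegeneracy. The normalized gradient flow of $N$ therefore gives diffeomorphisms between all regular levels in one interval, so it suffices to identify one level in each.

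For $0<c<c_0$, the discussion after Figure~\ref{AX-fig-model-topologies} gives three spheres, one in each horizon collar. For $c$ just below $1$, Proposition~\ref{AX-prop-lapse-maximum-circle} gives a torus. The middle interval is handled by tracking the change across each critical value using the local models~\eqref{AX-eq-local-axis-level-family}. Crossing an axial saddle, two local sheets near the axis become an annular neck, which is surgery on a $0$-sphere (a $1$-handle attachment to the sublevel set).

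At $c_{\rm mid}^{\rm ex}$ the two sheets must belong to the spheres around $H_+$ and $H_-$. The middle saddle lies on $\mathsf A_0$, and the superlevel region near it is connected by the rising axial direction to both inner collars. So the connected sum merges these two spheres into one, giving $S^2\sqcup S^2$ with the outer sphere untouched. The critical level is then $(S^2\vee S^2)\sqcup S^2$ by the cone model~\eqref{AX-eq-local-axis-exact-cone}.

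At $c_{\rm out}^{\rm ex}$ the two outer saddles lie on $\mathsf A_\pm$, between an inner horizon and $H_c$. Each saddle attaches a neck joining the merged inner sphere to the outer sphere, and the two surgeries happen simultaneously. The first neck joins two spheres into one sphere. The second adds a handle to that connected surface, producing genus one, consistent with the torus found near the maximum. The critical level consists of the two spheres identified at two pairs of points, one pair at each outer saddle.

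The expected main obstacle is justifying which level components the local sheets at each saddle belong to. To settle it I would work in the orbit domain, where levels are curves, which is simpler than tracking surfaces in $M$. Regular levels there are arcs with endpoints on the axis, which lift to spheres, or closed curves, which lift to tori. Near each saddle I would use the $C^2$ approximation~\eqref{AX-eq-axis-critical-Ctwo} by $1-\varepsilon^2 f_{1/2}$, and the explicit sublevel picture of $f_{1/2}$, to follow which axis intervals each arc touches.

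Below $c_{\rm mid}^{\rm ex}$, there are three arcs: around $H_+$ with endpoints on $\mathsf A_+$ and $\mathsf A_0$, around $H_-$ with endpoints on $\mathsf A_0$ and $\mathsf A_-$, and near $H_c$ with endpoints on $\mathsf A_\pm$. Since $N$ has a unique critical point on each axis interval, the level meets $\mathsf A_0$ in two points below $c_{\rm mid}^{\rm ex}$ and none above it. Hence the $\mathsf A_0$ endpoints pair off, and the two inner arcs join into one arc, i.e.\ one sphere.

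At $c_{\rm out}^{\rm ex}$ the endpoints on $\mathsf A_+$ and on $\mathsf A_-$ disappear together. The inner arc and the outer arc then close into a single closed curve avoiding the axis, i.e.\ a torus. This axis-endpoint counting replaces the surgery bookkeeping and makes the final step rigorous.
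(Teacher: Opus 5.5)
Your proposal is correct and follows essentially the paper's route. The complete critical set and the regular-band gradient flow reduce everything to the local neck surgery at the saddles. The components carrying the two sheets are then identified through the axis intervals, exactly as the paper does by following the descending axial branch from each saddle to a horizon; your orbit-domain endpoint count is a clean repackaging of that step, ending in the same genus-one count.

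Two small corrections. First, the count only tells you which arcs carry the two sheets; the local model \eqref{AX-eq-local-axis-level-family} is still what determines how they rejoin, so the count does not replace the surgery bookkeeping. Second, it is the descending axial direction through the sublevel set, not a rising one through the superlevel set, that links each sheet to its inner collar.
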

\begin{proof}
By Proposition~\ref{AX-prop-complete-critical-set}, the only critical values are $c_{\rm mid}^{\rm ex},c_{\rm out}^{\rm ex}$ and $1$. The levels between consecutive critical values are identified by the flow of $\nabla N/|\nabla N|^2$. On a closed band of regular values this vector field is smooth, and the band is compact. Its flow thus exists for the finite time needed to cross the band and increases $N$ at unit speed. The small positive levels consist of the three horizon spheres.

At the middle saddle, the two lower local sheets in~\eqref{AX-eq-local-axis-exact-cone} lie on opposite sides of the middle axis interval. The axial derivative has no other zero between this saddle and either inner horizon. Invariance makes the gradient tangent to the axis there, so these two branches continue to the two distinct inner horizon collars. As $c$ increases through $c_{\rm mid}^{\rm ex}$, their two spherical level components are joined by the annular neck described above. Removing one disk from each sphere and joining their boundary circles by this neck gives a sphere. The outer level sphere is unchanged. This proves the middle line of~\eqref{AX-eq-equal-mass-level-topologies} and the description at $c=c_{\rm mid}^{\rm ex}$.

Each outer saddle connects the component issuing from an inner horizon to the component issuing from the outer horizon, by the same axial-branch argument on its outer axis interval. Thus, just below $c_{\rm out}^{\rm ex}$, the two sheets at either saddle belong to the two different spheres just identified. At the common critical value these spheres meet at two distinct points. The two disjoint local surgeries replace those meetings by two necks. Removing two disks from each sphere leaves two cylinders, and joining their two boundary pairs by the necks gives a connected orientable closed surface of genus one. This is a torus, consistently with the maximal-circle neighborhood in Proposition~\ref{AX-prop-lapse-maximum-circle}. The regular-band flow extends this topology throughout $c_{\rm out}^{\rm ex}<c<1$.
\end{proof}

\begin{proposition}[Topology of regular lapse levels for general comparable masses]
\label{AX-prop-general-level-topology}
After normalizing $N_{\max}=1$, order the three axial critical values, with multiplicity, as $c_{(1)}\leq c_{(2)}\leq c_{(3)}$. For every nonempty interval indicated below, the regular-level topology is
\be
\label{AX-eq-general-level-topologies}
\{N=c\}\simeq
\begin{cases}
S^2\sqcup S^2\sqcup S^2,&0<c<c_{(1)},\\
S^2\sqcup S^2,&c_{(1)}<c<c_{(2)},\\
S^2,&c_{(2)}<c<c_{(3)},\\
S^1\times S^1,&c_{(3)}<c<1.
\end{cases}
\ee
An interval with equal endpoints is empty. Each distinct axial critical value gives a level with one conical singularity for each saddle having that value. At $c=0$ the level is the three horizons, and at $c=1$ it is the maximal circle.
\end{proposition}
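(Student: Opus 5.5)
The plan is Morse theory on the lapse, parallel to the proof of Proposition~\ref{AX-prop-equal-mass-level-topology}, but with the three saddle values taken in increasing order. By Proposition~\ref{AX-prop-complete-critical-set}, the critical set consists of the three axial saddles and the maximal circle. Between consecutive critical values, the flow of $\nabla N/|\nabla N|^2$ identifies all regular levels. For small $c>0$, the level is the three horizon spheres, by the collar argument of Section~\ref{AX-section-model-topologies}. Near $c=1$, the levels are tori by Proposition~\ref{AX-prop-lapse-maximum-circle}. It remains to track the three saddle transitions. Each is the local surgery of~\eqref{AX-eq-local-axis-level-family}: two local sheets are replaced by an annular neck.

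\textbf{Step 1: connectivity of each saddle.} Each saddle lies on one of the axis intervals $\mathsf A_-,\mathsf A_0,\mathsf A_+$. Along that interval, $N$ is analytic, vanishes at the two bounding horizons and has exactly one critical point. Invariance makes $\nabla N$ tangent to the axis. So the two lower sheets at the saddle of $\mathsf A_i$ continue along the axis, by downward flow, to the collars of the two horizons bounding $\mathsf A_i$. The pairs are $(H_c,H_+)$ for $\mathsf A_+$, $(H_+,H_-)$ for $\mathsf A_0$, and $(H_-,H_c)$ for $\mathsf A_-$. I assign to each component of a regular level the set of horizons it comes from under downward flow; this set is well defined because the only critical points are the saddles.

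\textbf{Step 2: the surgery count.} Order the saddles by value. Pass the first value. Its sheets belong to two distinct sphere components, one from each bounding horizon, because no earlier merging has occurred. The neck surgery joins them into one sphere. This gives two spheres on $(c_{(1)},c_{(2)})$.

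At the second value, the saddle's interval joins two horizons. I claim they still lie in distinct components. The three intervals form a cycle $H_+\to H_-\to H_c\to H_+$ of pairs, so any two distinct intervals share exactly one horizon. Hence the first two saddles together connect all three horizons without repetition. A sphere joined to a sphere along one neck is again a sphere, so the level is one sphere on $(c_{(2)},c_{(3)})$.

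At the third value, both sheets lie on the same sphere. A neck attached at two points of one sphere produces an orientable surface of genus one, that is, a torus. This agrees with Proposition~\ref{AX-prop-lapse-maximum-circle}, and the regular flow extends the torus up to $c=1$.

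\textbf{Step 3: equal critical values.} If several saddles share a value, the surgeries are disjoint and local, so they can be performed simultaneously. The resulting topology above that value is the same as for the successive surgeries, and the middle intervals are then empty. At a critical value $c$, the level is the level just below with each saddle point of that value realized as a cone point, by~\eqref{AX-eq-local-axis-exact-cone}. This gives one conical singularity per saddle at that value.

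\textbf{Main obstacle.} The delicate point is Step 1: showing that the downward gradient flow from each saddle sheet actually reaches the expected horizon collars, and that component labels persist. I would use the axis-invariance argument and the absence of other critical points. To make it rigorous I would first try the following: every component of $\{N<c\}$ is a regular neighborhood of the horizons it contains, obtained from the collars by flowing, and it changes only when $c$ passes a saddle value.
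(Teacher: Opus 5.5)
Your proposal is correct and follows the paper's proof essentially step for step. Both arguments use the same ingredients: the regular-band gradient flow between critical values, and the labelling of each saddle's neck by the pair of horizons bounding its axis interval, so that the three necks form a triangle. The first neck joins two spheres, the second absorbs the third sphere, and the third raises the genus to one; saddles with equal values give simultaneous disjoint surgeries, and the local cone describes each critical level. The paper obtains the genus count from the identity genus $=e-v+1$, which is equivalent to your direct case analysis, and your sublevel-set formulation is a reasonable way to make the axial-branch labelling precise.
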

\begin{proof}
The regular-band flow and local saddle description used in the preceding proof apply without reflection symmetry. The three axis intervals label the saddle attachments by the pairs of horizons $(H_-,H_+)$, $(H_-,H_c)$ and $(H_+,H_c)$. Indeed, each descending axial branch continues to the corresponding horizon without encountering any other critical point on that interval. Starting from disjoint horizon collars, passing one saddle attaches a single neck between the current level sheets associated with its two labels. The attachment data therefore form the three edges of a triangle whose vertices are the horizon collars.

The first edge joins two different spherical components and leaves the third sphere separate. Any second edge joins the remaining component to the first joined pair, giving a single sphere. The third edge adds a neck between two disks on that sphere, giving a torus. Equivalently, a connected collection of $v$ initial spherical components joined by $e$ such necks has genus $e-v+1$: each neck removes two disks and glues their boundary circles by a cylinder, reducing Euler characteristic by two. This gives genus zero for either connected tree and genus one for the completed triangle. Saddles at the same critical value have disjoint neighborhoods, so their neck attachments take place simultaneously. The empty intervals in~\eqref{AX-eq-general-level-topologies} indicate which intermediate regular stages are absent. There are no further transitions, by Proposition~\ref{AX-prop-complete-critical-set}. At a critical value, the local singularity is described by~\eqref{AX-eq-local-axis-exact-cone}.
\end{proof}

\paragraph{Expansion of the lapse in directions normal to the maximal circle}
Write $\mathcal C\coloneqq\{N=N_{\max}\}$ for the maximum circle from Proposition~\ref{AX-prop-lapse-maximum-circle}. Choose an orthonormal eigenframe in the normal plane at one point of $\mathcal C$, and transport it around the circle by the rotations. Use the normal exponential map to introduce local coordinates $(t,u,v)$, where $t$ parametrizes the circle and $\rho_\perp\coloneqq\sqrt{u^2+v^2}$. Differentiate $dN=0$ along $\mathcal C$. This gives $\Hess_hN(T,\cdot)=0$ for each vector $T$ tangent to the circle. The transverse Hessian is negative definite. There are positive constants $a_{\mathcal C},b_{\mathcal C}>0$ and a homogeneous cubic polynomial $P_3$ in $u,v$ for which the expansion reads
\be
\label{AX-eq-maximum-normal-expansion}
\begin{aligned}
N&=N_{\max}-\tfrac12(a_{\mathcal C}u^2+b_{\mathcal C}v^2)
+P_3(u,v)+\Ocal(\rho_\perp^4),\\
a_{\mathcal C}+b_{\mathcal C}&=\Lambda N_{\max},\qquad
w\coloneqq|\nabla^hN|_h^2
=a_{\mathcal C}^2u^2+b_{\mathcal C}^2v^2+\Ocal(\rho_\perp^3).
\end{aligned}
\ee
The orbit of any off-axis critical point is a circle consisting of critical points, since invariance carries $dN=0$ along the orbit. Such an orbit cannot be an isolated critical point of the three-dimensional manifold. Its image in $\Omega$ may nevertheless be isolated. In the transported frame the coefficients do not vary with $t$, since rotations preserve $h$ and $N$. Rotations of the normal $(u,v)$-plane need not preserve them: the axial action moves points along $\mathcal C$ and does not act as rotations of that plane at a fixed point. In this frame the cubic term is constant along the circle, but it can vary with the normal direction.

Write $d_N\coloneqq N_{\max}-N$. The expansion~\eqref{AX-eq-maximum-normal-expansion} then yields the following two directional limits:
\be
\label{AX-eq-maximum-gradient-directional-limits}
\lim_{u\to0}\frac{w(u,0)}{d_N(u,0)}=2a_{\mathcal C},\qquad \lim_{v\to0}\frac{w(0,v)}{d_N(0,v)}=2b_{\mathcal C}.
\ee
If $a_{\mathcal C}\ne b_{\mathcal C}$, no coefficient $A$ can satisfy $w=A d_N+o(d_N)$ uniformly over the normal directions. This anisotropy is present in the construction. At equal masses the matrix~\eqref{AX-eq-leading-maximum-hessian} has eigenvalues $3/4$ and $9/4$. The spatial metric in the intermediate region converges to the Euclidean metric on this scale, and~\eqref{AX-eq-lapse-maximum-Ctwo} identifies these with the limits of $a_{\mathcal C},b_{\mathcal C}$ in the reference lapse normalization with $\ell=1$. They remain distinct for sufficiently small mass.

\paragraph{An exact analytic normal form at the maximal circle}
There are also analytic normal coordinates $(U,V)$ for which $N=N_{\max}-U^2-V^2$ exactly. We give the factorization to distinguish this statement about the function from the orthonormal expansion~\eqref{AX-eq-maximum-normal-expansion}. Put $y\coloneqq(u,v)^T$. Taylor's formula, using the vanishing value and first derivative of $d_N$, gives
\be
\label{AX-eq-circle-analytic-quadratic-factor}
d_N(y)=y^TA_\perp(y)y,\qquad A_\perp(y)\coloneqq\int_0^1(1-\tau)D^2d_N(\tau y)\,d\tau.
\ee
The symmetric matrix $A_\perp$ is analytic and positive definite near zero, with $A_\perp(0)=\tfrac12\operatorname{diag}(a_{\mathcal C},b_{\mathcal C})$. Write $A_{ij}$ for its entries in this paragraph. Completing the square gives analytic functions
\be
\label{AX-eq-circle-exact-normal-form}
U\coloneqq\sqrt{A_{11}}\,u+\frac{A_{12}}{\sqrt{A_{11}}}\,v,\qquad V\coloneqq\sqrt{A_{22}-A_{12}^2/A_{11}}\,v,\qquad N=N_{\max}-U^2-V^2.
\ee
The Jacobian of $(U,V)$ in $(u,v)$ is invertible at zero, so these are regular analytic coordinates, together with the orbit variable $t$. The metric remains analytic and positive definite in them, with its normal coefficients at the circle equal to $2/a_{\mathcal C}$ and $2/b_{\mathcal C}$. Thus the exact quadratic expression for the function retains the directional dependence of $w$ through the metric coefficients. It does not imply spherical symmetry or constancy of $w$ on a lapse level.

\paragraph{Variation of the cubic coefficient along a surface of maxima}
For a smooth surface of maxima with negative nonzero normal Hessian, a signed normal variable $z$ can be chosen with $N=N_{\max}-z^2$. This follows by factoring the normal quadratic zero and taking the positive square root of its analytic coefficient. Set $M_z\coloneqq|\nabla z|_h^2$ and choose the normal $\nu\coloneqq\nabla z/\sqrt{M_z}$ to the $z$-levels, with mean curvature $H\coloneqq\operatorname{div}_h\nu$. In coordinates whose tangential variables are transported along $\nabla z/M_z$, the identity $\Delta_hz=H\sqrt{M_z}+\tfrac12\del_zM_z$ and the lapse equation give
\be
\label{AX-eq-maximal-surface-trace-expansion}
2M_z+2zH\sqrt{M_z}+z\del_zM_z =\Lambda N_{\max}-\Lambda z^2.
\ee
Writing $\Sigma\coloneqq\{z=0\}$ and $H_\Sigma\coloneqq H|_\Sigma$, evaluation and differentiation at zero give
\be
\label{AX-eq-maximal-surface-cubic}
\begin{aligned}
M_z|_\Sigma&=\tfrac12\Lambda N_{\max},\qquad
\del_zM_z|_\Sigma=-\tfrac23H_\Sigma\sqrt{\Lambda N_{\max}/2},\\
w&=2\Lambda N_{\max}z^2
-\tfrac83H_\Sigma\sqrt{\Lambda N_{\max}/2}\,z^3+\Ocal(z^4).
\end{aligned}
\ee
The cubic coefficient can vary with the point of $\Sigma$. Reversing the chosen normal and the signed variable reverses its sign. This is the hypersurface expansion relevant to comparison with the spherical maximum of Kottler. At the maximal circle, $\sqrt{N_{\max}-N}$ is instead a radial quantity in two normal dimensions and is not differentiable at the circle. A single regular signed variable $z$ with $N=N_{\max}-z^2$ would force the Hessian there to have rank at most one, contrary to its rank two. Thus the surface expansion cannot be applied to the circle. The constant coefficients along the axial orbit and their dependence on the normal direction are compatible with one another.

\paragraph{Regular horizons and poles, and conical singularities of the levels}
On a nondegenerate horizon $H_j$, $N=0$ and $dN\ne0$, so its points are regular points of the lapse as a boundary defining function. In the spatial double, Proposition~\ref{AX-lem-local-horizon-extension} gives the convergent odd expansion of $N$ and even expansion of $h$ in signed normal distance. At an axis--horizon intersection, the quotient has a corner. The spatial geometry has a smooth boundary with a regular rotation pole; the joint Cartesian conditions specified in Section~\ref{AX-section-boundaries} give this compatibility. In the auxiliary Riemannian extension there are two normal polar planes, one for the auxiliary circle and one for the axial circle.

A conical defect of the metric would instead give a transverse circumference-to-radius ratio $2\pi\beta$ with $\beta\ne1$, as in~\eqref{AX-eq-angle}. The solutions constructed here satisfy the full Cartesian regularity conditions and $\beta=1$ on every open axis arc. Their singular critical levels~\eqref{AX-eq-local-axis-exact-cone} describe singular subsets of an analytic Riemannian manifold, rather than defects of that manifold.

\section{Uniqueness from fixed parameters and boundary data}
\label{AX-section-uniqueness}


\subsection{Boundary conditions and uniqueness at fixed reference parameters}
\label{AX-section-fixed-reference-uniqueness}

We prove uniqueness for a fixed reference in the contraction ball and then for prescribed intrinsic geometry on one horizon. The second result uses analytic continuation and simple connectivity; it applies to existing regular solutions without a small-mass assumption. We specify which boundary quantities are fixed in the construction. The contraction then gives uniqueness within the same coordinate condition, centerings and correction ball. The construction imposes horizon and axis regularity in the reference angular coordinates. On each horizon, the spatial data have the parity~\eqref{AX-eq-horizon-data}, with the positive surface gravity fixed by the reference and the chosen normalization of $N$. The induced horizon metric $\gamma_j(0,\cdot)$ is an unknown. Along each axis arc, the circular orbit radius satisfies $R=r_\perp+\Ocal(r_\perp^3)$, with the Cartesian parity conditions of Section~\ref{AX-section-boundaries}; the joint conditions also hold at each horizon pole. The intrinsic horizon metrics are therefore determined as part of the solution.

\begin{lemma}[Uniqueness in the correction ball at a fixed reference metric]
\label{AX-prop-fixed-reference-local-uniqueness}
Fix the parameters $\Lambda,m_+,m_-,C,D$, the improved reference $G=h_0+N_0^2dt_E^2$ and its local angular lifts. Use the coordinate condition $\mathfrak b_Gw=0$ and the centerings~\eqref{AX-eq-complete-core-centering}. Assume the smallness conditions of Theorem~\ref{AX-thm-binary-existence}, and let $\mathcal B_G$ denote its closed contraction ball in the prescribed regular static and axial tensor space. There is at most one solution $g=G+w=h+N^2dt_E^2$, with $w\in\mathcal B_G$ and $N>0$ in the interior, of the static Einstein equations in the fixed auxiliary coordinate $t_E$.
\end{lemma}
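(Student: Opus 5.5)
The plan is to reduce uniqueness of exact static solutions in $\mathcal B_G$ to uniqueness of the fixed point of the contraction map $\mathcal A_G$ in~\eqref{AX-eq-binary-affine-contraction}. Take a solution $g=G+w$ with $w\in\mathcal B_G$. Its Einstein tensor satisfies $S_3(g)=0$, obtained from $\Ric_g=3g$ by trace reversal in dimension four, with $\Lambda=3$ after the normalization $\ell=1$. Hence $\Pi_GS_3(G+w)=0$. Expanding as in~\eqref{AX-eq-binary-kottler-remainder} gives
\be
L_\varepsilon w=-\bigl(E_\varepsilon+\Pi_G\mathcal Q_G(w)\bigr),
\ee
where $E_\varepsilon=S_3(G)$ is divergence-free. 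The right-hand side lies in the target $\mathcal Y_\varepsilon$, since $\Pi_G$ maps into $\ker\delta_G$.

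Since $w$ satisfies $\mathfrak b_Gw=0$ and the centerings, the pair $(w,0)$ solves the augmented equation of Corollary~\ref{AX-cor-kottler-inverse} with source $F\coloneqq-(E_\varepsilon+\Pi_G\mathcal Q_G(w))$. Bijectivity of $(w,\lambda)\mapsto L_Kw-\sum_j\lambda_j\psi_j$ then gives $w=\mathcal R_G^wF=\mathcal A_G(w)$ and $\mathcal R_G^\lambda F=0$. Thus every exact solution in the ball is a fixed point of $\mathcal A_G$.

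The contraction estimate~\eqref{AX-eq-binary-contraction-margin} holds on all of $\mathcal B_G$ with factor $q<1$. If $w_1,w_2$ are two such fixed points, then $\|w_1-w_2\|_X\leq q\|w_1-w_2\|_X$, which forces $w_1=w_2$. The quadratic estimate~\eqref{AX-eq-binary-improved-quadratic} requires only that $\varepsilon^{3\gamma}\|w\|_X$ be small, and this holds on $\mathcal B_G$ because $p_{\rm cor}+3\gamma>0$.

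The remaining step, which I expect to be the main obstacle, is to check that a solution written as $h+N^2dt_E^2$ with $N>0$ actually belongs to the tensor space in which the contraction was performed. Membership in $\mathcal B_G$ is assumed in the statement and includes the regular angular lifts, so the regularity part is automatic. The point to verify is that the equation $\Pi_GS_3(g)=0$ is legitimate for $g$: the projection must be computed with the fixed reference $G$, its lifts and the fixed coordinate $t_E$, with no reparametrization of $t_E$ or the horizon gravities. This holds because the affine domain fixes the three surface gravities, as shown by~\eqref{AX-eq-affine-horizon-gravity}. Finally, positivity of $N$ in the interior is what identifies $N^2$ with the $dt_E^2$ coefficient of $g$. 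This identification makes the static pair $(h,N)$ determined by $g$, so uniqueness of $w$ gives uniqueness of the static solution.
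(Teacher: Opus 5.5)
Your proposal is correct and follows the paper's proof. Both rewrite $S_3(G+w)=0$ as $L_\varepsilon w=-E_\varepsilon-\Pi_G\mathcal Q_G(w)$, apply the inverse of Corollary~\ref{AX-cor-kottler-inverse} to the pair $(w,0)$ to obtain $w=\mathcal A_G(w)$, conclude from uniqueness of the fixed point of the contraction on $\mathcal B_G$, and recover $N$ as the positive square root of the $dt_E^2$ coefficient (your remark on the fixed surface gravities is harmless but not needed, since $w\in\mathcal B_G$ already places it in the domain where $\Pi_G$ and the inverse are defined).
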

\begin{proof}
We normalize $\ell=1$ and let $\widehat g=G+\widehat w$ be such a solution. Since $S_3(\widehat g)=0$, the nonlinear expansion~\eqref{AX-eq-binary-kottler-remainder} and the definition~\eqref{AX-eq-kottler-linearization} give
\be
\label{AX-eq-uniqueness-fixed-reference-equation}
L_\varepsilon\widehat w=-E_\varepsilon-\Pi_G\mathcal Q_G(\widehat w).
\ee
Applying the inverse of the linear operator with two additional scalar unknowns to the solution pair $(\widehat w,0)$ yields
\be
\label{AX-eq-uniqueness-fixed-reference-map}
\widehat w=-\mathcal R_G^w\bigl(E_\varepsilon+\Pi_G\mathcal Q_G(\widehat w)\bigr) =\mathcal A_G(\widehat w).
\ee
The map~\eqref{AX-eq-binary-affine-contraction} has at most one fixed point in $\mathcal B_G$. The positive square root of its coefficient of $dt_E^2$ determines $N$. Restoring $\ell$ proves the assertion.
\end{proof}

When the two axial fluxes vanish, this fixed point is the solution of Theorem~\ref{AX-thm-binary-existence}. This uniqueness statement compares tensors in the same affine space. A local classification of all nearby solutions would additionally require a theorem placing them in the chosen coordinate condition and centerings, and uniqueness of the zero selecting $(C,D)$. Continuity and the face signs of the reduced equations prove existence of a zero, but do not prove its uniqueness.


\subsection{Global uniqueness from the metric of one horizon}
\label{AX-section-horizon-uniqueness}

The full intrinsic metric of a horizon contains enough data to determine an existing regular solution. We first determine its analytic germ by the static equations and then extend the resulting local isometry across complete spatial doubles. Here, regularity at a nondegenerate horizon means the Cartesian regularity of the auxiliary angular metric in Proposition~\ref{AX-lem-local-horizon-extension}. That proposition gives an even analytic spatial metric and an odd analytic lapse in Gaussian normal coordinates. Interior analyticity follows from the Einstein metric $h+N^2dt_E^2$, as in the proof of Theorem~\ref{AX-thm-binary-existence}. These properties hold for all the solutions constructed above.

\begin{theorem}[Uniqueness from one intrinsic horizon metric]
\label{AX-thm-horizon-global-uniqueness}
Fix $\Lambda>0$ and two compact connected three-dimensional static data sets $(M_i,h_i,N_i)$ satisfying~\eqref{LS-eq-static}, with $N_i>0$ in $M_i^\circ$ and $N_i=0$ on $\del M_i$. Assume that their interiors are simply connected and that every boundary component is a regular nondegenerate horizon in the preceding sense. Choose connected horizon components $H_i$ and an isometry $\phi:(H_1,h_1|_{TH_1})\to(H_2,h_2|_{TH_2})$. Denote their positive surface gravities by $\kappa_i$. Then $\phi$ extends uniquely to an isometry $\Phi:M_1\to M_2$ satisfying
\be
\label{AX-eq-horizon-global-isometry}
\Phi^*h_2=h_1,\qquad \Phi^*(N_2/\kappa_2)=N_1/\kappa_1.
\ee
\end{theorem}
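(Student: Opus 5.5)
The plan has three steps: determine the analytic germ at the horizon from the intrinsic metric, continue the resulting local isometry by analyticity, and globalize using simple connectivity.

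\emph{Step 1: the germ at the horizon.} Rescale the lapses to $\widetilde N_i\coloneqq N_i/\kappa_i$, so that both have unit surface gravity on $H_i$. Near each $H_i$ use Gaussian normal coordinates $(s,x)$, where $s$ is the inward proper distance. By Proposition~\ref{AX-lem-local-horizon-extension}, $H_i$ is totally geodesic, and the data have the analytic parity form~\eqref{AX-eq-horizon-data}:
$h_i=ds^2+\gamma_i(s^2,x)$ and $\widetilde N_i=s\,b_i(s^2,x)$ with $b_i(0,x)=1$. Work on the spatial double, where $s$ is allowed to be negative. There the static equations~\eqref{LS-eq-static} become an analytic system in $s$ that is singular only through the factor $N$. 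I would substitute the power series in $s^2$ and show that the coefficients of $s^{2k}$ in $\gamma$ and $b$ are determined recursively by $\gamma(0)$ and $b(0)=1$. At each order the tensor equation $\Hess N=N(\Ric-\Lambda h)$, restricted to the tangential components, gives a relation of the form $(\text{positive indicial factor in }k)\,\gamma_k=\text{polynomial in lower coefficients and their tangential derivatives}$. The lapse equation determines $b_k$ in the same way. The main check is that the indicial factors never vanish. I expect this to follow because the odd/even parity removes the resonant orders, just as regularity of $G_H$ at its fixed surface does.

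\emph{Step 2: local isometry and analytic continuation.} It follows that, in the Gaussian coordinates induced by $\phi$, the formal series of $(h_1,\widetilde N_1)$ and $(h_2\circ\phi,\widetilde N_2\circ\phi)$ coincide. Since both are analytic, $\Phi_0(s,x)\coloneqq(s,\phi(x))$ is an isometry on a collar of $H_1$ and intertwines $\widetilde N_1$ and $\widetilde N_2$. To continue $\Phi_0$, I would use the analyticity of $h_i$ in harmonic coordinates on the interior, obtained via the Einstein metric $h_i+N_i^2dt_E^2$ and DeTurck--Kazdan as in the proof of Theorem~\ref{AX-thm-binary-existence}. A local isometry preserving $\widetilde N$ is then determined by its $1$-jet at one point, because isometries are determined by the exponential map. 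Hence $\Phi_0$ extends along every path in $M_1^\circ$ by the standard Myers--Steenrod/analytic continuation argument, and the extension also preserves $\widetilde N$ because $\widetilde N_2\circ\Phi-\widetilde N_1$ is analytic and vanishes on an open set. Since $M_1^\circ$ is simply connected, monodromy gives a well-defined local isometry $\Phi:M_1^\circ\to M_2^\circ$.

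\emph{Step 3: globalization and uniqueness (the main obstacle).} The hardest part is showing that $\Phi$ extends to all of $\overline M_1$ and is bijective. Continuation along a path can fail only if the path exits $M_2^\circ$ through the image. To rule this out, note that $\widetilde N_2\circ\Phi=\widetilde N_1>0$, which keeps the image away from $\partial M_2$ on compact subsets of the interior. Near any horizon component of $M_1$, the function $\widetilde N_1$ is a regular coordinate, since $dN_1\ne0$ there. I would then extend $\Phi$ across each component of $\partial M_1$ by the collar description of Step 1, applied in reverse: the image approaches a level $\{\widetilde N_2=\text{small}\}$, and the metric completeness of $h_2$ on the compact $M_2$ lets the geodesic extension continue up to $\partial M_2$.

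The result is a local isometry $\overline M_1\to\overline M_2$ mapping boundary to boundary. A local isometry between compact connected manifolds with boundary that maps boundary to boundary is a covering map. Because $M_2^\circ$ is simply connected, this covering is a diffeomorphism, which gives~\eqref{AX-eq-horizon-global-isometry}. Uniqueness follows since any two such extensions agree on $H_1$ together with its normal (the inward normal is forced), and therefore agree everywhere by the $1$-jet rigidity of isometries.
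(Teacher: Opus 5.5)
Your proposal is correct, and your first step matches the paper's. The tangential equation, written as $\partial_s(Nk)=N(\Ric_\gamma-\Lambda\gamma-Hk+2k\circ_\gamma k)-\Hess_\gamma N$, determines $\gamma_{2j}$ with multiplier $2\kappa j^2$. The lapse equation then fixes the next odd lapse coefficient with multiplier $2j(2j+1)$. One correction to your heuristic: nonresonance does not come from parity. At order $s^{n-1}$ the multiplier of $\gamma_n$ is $\tfrac12\kappa n^2$ for every $n\geq1$. The lapse multiplier $n(n-1)$ vanishes only at the two orders fixed by $N=0$ and $|\nabla N|=\kappa$ on $H$.

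Your globalization takes a genuinely different route. The paper doubles each $M_i$ across all its horizons, using the even/odd expansions, to get closed and hence complete analytic manifolds. It passes to their universal covers and applies the Kobayashi--Nomizu continuation theorem, which requires a complete target. The identity theorem for the doubled lapses and simple connectivity then identify the positive components with $M_i^\circ$. Smoothness at every horizon component comes for free from the doubles.

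You stay inside the incomplete interiors and let $\widetilde N_2\circ\Phi=\widetilde N_1$ replace completeness. Along a compact path, the image stays in the compact set $\{\widetilde N_2\geq c\}\subset M_2^\circ$. That set has a uniform radius for the exponential maps, so the continuation never stalls. This is more elementary: no doubles and no universal covers.

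Your detour through extending $\Phi$ to $\partial M_1$ before the covering argument is, however, the vaguest part of your proposal, and it is unnecessary. The same identity gives $\Phi^{-1}(K)\subset\{\widetilde N_1\geq c\}$ for compact $K\subset\{\widetilde N_2\geq c\}$. Hence $\Phi:M_1^\circ\to M_2^\circ$ is a proper local diffeomorphism, therefore a covering, therefore a bijective isometry by simple connectivity of $M_2^\circ$.

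What your route still owes is smoothness at the horizon components other than $H_1$. One way to supply it: the completed map is a distance-preserving bijection of the metric completions sending boundary to boundary. It therefore carries short normal geodesics to normal geodesics. So in Gaussian coordinates it has the form $(s,x)\mapsto(s,\psi(x))$, with $\psi$ an isometry of the intrinsic horizon metrics. The paper's doubling gets this uniformly.
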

\begin{proof}
\emph{1. Determination of the analytic germ.} We first suppose that the two surface gravities are equal to $\kappa$. We identify the horizons by $\phi$ and use inward spatial normal distance $s$ to write $h=ds^2+\gamma(s)$. We set $k\coloneqq\tfrac12\del_s\gamma$ and $H\coloneqq\operatorname{tr}_\gamma k$. The tangential static equation and the scalar lapse equation give
\bse
\be
\label{AX-eq-horizon-metric-recursion}
\del_s(Nk)=N\bigl(\Ric_\gamma-\Lambda\gamma-Hk+2k\circ_\gamma k\bigr)-\Hess_\gamma N,
\ee
\be
\label{AX-eq-horizon-lapse-recursion}
\del_s^2N=-H\del_sN-\Delta_\gamma N-\Lambda N.
\ee
\ese
Here $(k\circ_\gamma k)_{ab}\coloneqq k_{ac}\gamma^{cd}k_{db}$. Indeed, the Gaussian identities are $\Ric_h|_{TH}=\Ric_\gamma-\del_sk-Hk+2k\circ_\gamma k$ and $\Hess_hN|_{TH}=\Hess_\gamma N+k\del_sN$. Substitution in~\eqref{LS-eq-static} gives the first recursion equation.

The parity and analyticity give convergent expansions
\be
\label{AX-eq-horizon-analytic-series}
\gamma(s)=\gamma_0+\sum_{j\geq1}\gamma_{2j}s^{2j},\qquad N(s)=\kappa s+\sum_{j\geq1}n_{2j+1}s^{2j+1}.
\ee
In~\eqref{AX-eq-horizon-metric-recursion}, the coefficient of $s^{2j-1}$ in $\del_s(\kappa s k)$ is $2\kappa j^2\gamma_{2j}$. All other terms at this order depend only on $\gamma_0,\ldots,\gamma_{2j-2}$ and $\kappa,n_3,\ldots,n_{2j-1}$. In particular, tangential differentiation preserves the power of $s$, and the factors $k=\Ocal(s)$ and $N=\Ocal(s)$ exclude $\gamma_{2j}$ from the right side at this order. Thus this equation determines $\gamma_{2j}$. The coefficient of $s^{2j-1}$ in~\eqref{AX-eq-horizon-lapse-recursion} then determines $n_{2j+1}$, with nonzero multiplier $2j(2j+1)$. For example, the first coefficients are
\be
\label{AX-eq-horizon-first-jets}
\gamma_2=\tfrac12(\Ric_{\gamma_0}-\Lambda\gamma_0),\qquad n_3=-\kappa K_{\gamma_0}/6,
\ee
where $K_{\gamma_0}$ is the Gaussian curvature of the horizon metric. Induction determines all coefficients, so analyticity gives equality on a collar. This proves uniqueness of the germ between two existing solutions; no existence for arbitrary $\gamma_0$ is used. If the surface gravities differ, we apply this argument to $N_i/\kappa_i$ instead.

\emph{2. Extension across the spatial doubles.} We double each $M_i$ along all its boundary components. The even spatial expansions and interior analyticity give an analytic metric $\widehat h_i$ on the closed double $DM_i$. Their analytic structures agree on overlaps, since the identity between two analytic presentations of the same metric is analytic in geodesic normal coordinates. We extend $N_i/\kappa_i$ with opposite signs on the two copies; its odd expansions give an analytic function $\widehat N_i$ on $DM_i$. Compactness makes $\widehat h_i$ complete. This construction doubles the spatial metric and permits different surface gravities on different horizons.

We lift the local isometry of collars to connected open sets in the universal covers of the doubles. The covers are connected, simply connected, complete analytic Riemannian manifolds. The continuation theorem for local isometries of complete simply connected analytic Riemannian manifolds~\cite{AX-KobayashiNomizu} 
extends this germ to a global isometry between them. The lifted functions $\widehat N_i$ agree under this isometry on the initial collar, hence everywhere by the identity theorem for analytic functions. Consequently, the isometry maps the positive component meeting that collar onto the corresponding positive component in the other cover.

Each such component projects as a connected covering onto $M_i^\circ$. Simple connectivity makes that restriction an isometric diffeomorphism. It follows that the two static interiors are isometric and their normalized lapses agree. Their metric completions are the original compact manifolds with boundary, so this isometry extends to $M_i$. Near the boundary, the extension agrees with the analytic continuation on the doubles and is therefore smooth. It extends the prescribed $\phi$. Finally, a boundary isometry and the inward unit normal determine the isometry on a collar; two extensions agree there and hence everywhere on a connected Riemannian manifold. This proves uniqueness of the extension.
\end{proof}

\paragraph{Determination of a solution by its horizon meridian profile}
We fix the topology of a closed ball with two disjoint open balls removed and mark one horizon component. We consider the regular static data at fixed $\Lambda$, modulo isometries preserving the marked component and multiplication of $N$ by a positive constant. Theorem~\ref{AX-thm-horizon-global-uniqueness} says that the map assigning to such data the isometry class of the induced metric on the marked horizon is injective. It thus parametrizes this class by the horizon metrics that actually occur. It does not characterize that image or assert existence for arbitrary horizon metrics. Prescribing only two masses, horizon areas or surface gravities specifies fewer data; uniqueness from these quantities has not been proved here.

For an axial horizon $H\simeq\mathbb S^2$, the intrinsic data are one profile and the length of the orbit interval. We use arc length $u$ on the orbit interval $H/S^1$ and an angle $\varphi$ of period $2\pi$. We denote the interval length by $L_H>0$ and the circular orbit radius by $a_H(u)$. The metric and pole conditions are
\bse
\be
\label{AX-eq-intrinsic-horizon-profile}
\gamma_0=du^2+a_H(u)^2d\varphi^2,\qquad 0\leq u\leq L_H,\qquad a_H(u)>0\quad(0<u<L_H),
\ee
\be
\label{AX-eq-intrinsic-horizon-pole-data}
a_H(0)=a_H(L_H)=0,\qquad a_H'(0)=1,\qquad a_H'(L_H)=-1.
\ee
\ese
To express full regularity, we use $v\coloneqq L_H-u$ near the second pole and analytic functions $A_\pm$ with $A_\pm(0)=1$. The profile satisfies $a_H(u)=uA_-(u^2)$ and $a_H(L_H-v)=vA_+(v^2)$ near the respective poles. The orbit interval is contractible, so a change of angular coordinate removes any mixed term in the intrinsic metric. Its primitive is even analytic at each regular pole, and the change extends there. Thus no additional angular parameter is needed. Interchanging the poles replaces $u$ by $L_H-u$. Among profiles realized by static solutions, $(L_H,a_H)$ therefore determines the solution up to the equivalences in the theorem. These pole conditions alone are not an existence criterion for a static filling.

For instance, an exactly round inner horizon of area $4\pi r_H^2$ with $0<\Lambda r_H^2<1$ would determine the compact Kottler region with that black-hole horizon, up to lapse rescaling. This region has two boundary components, whereas the constructed regions have three. The inner horizons of the constructed solutions therefore cannot remain exactly round. The exact Kottler geometry belongs to the reference collars, and the global nonlinear correction changes their induced horizon metrics.


\paragraph*{Acknowledgments.}

The first author (PLF) was partially supported by the research project ANR-23-CE40-0010-02: Einstein-PPF, entitled {\it ``Einstein constraints: past, present, and future''} funded by the Agence Nationale de la Recherche (ANR), as well as by the MSCA Staff Exchange Project 101131233: Einstein-Waves, entitled {\it ``Einstein gravity and nonlinear waves: physical models, numerical simulations, and data analysis'',} funded by the European Research Council. All of the figures in this paper were generated with the help of ChatGPT 6.

\appendix


\section{Proof of the local horizon extension}
\label{AX-appendix-horizon}

We prove Proposition~\ref{AX-lem-local-horizon-extension}. The argument gives constancy of surface gravity, the even and odd analytic expansions of the spatial metric and lapse, and an analytic local Lorentzian extension.

\begin{proof}
\bse
The undivided static equation gives $\Hess_hN=0$ at $H$. For vectors $X,Y$ tangent to $H$ and the inward unit normal $\nu$, we therefore have
\be
\label{AX-eq-local-horizon-principal-identities}
X(|\nabla^hN|_h^2)=2\Hess_hN(X,\nabla^hN)=0, \qquad \kappa\,h(\nabla_X\nu,Y)=\Hess_hN(X,Y)=0.
\ee
Nondegeneracy and connectedness prove the two assertions about $\kappa$ and the second fundamental form. The warped Ricci formulas give $\Ric_{G_H}=\Lambda G_H$ off $H$. Since $G_H$ extends as a $C^{2,\alpha}$ metric across $H$, its Ricci tensor is continuous, so the equation holds also on the fixed surface. Theorem~5.2 of DeTurck--Kazdan~\cite[Theorem~5.2, p.~258]{LS-DeTurckKazdan} makes this Einstein metric analytic in harmonic and geodesic normal coordinates.

The rotation action is analytic in these coordinates: about a fixed point its action is conjugate by the analytic exponential map to its linear differential. Its fixed surface and an analytic normal frame are consequently analytic. The normal exponential map from that frame gives analytic tubular coordinates. The reflection $\theta\mapsto-\theta$ is an isometry, since the static metric has no mixed components involving $d\theta$. Its fixed hypersurface contains a spatial section through $H$, and the normal exponential curves in that section are parametrized by spatial proper distance $s$. Rotation by $\pi$ sends $s$ to $-s$ while preserving the horizon coordinates. The tangential spatial coefficients are therefore even analytic functions of $s$. The squared length of the rotation field is $s^2$ times a positive even analytic function with value one at $s=0$. Its positive analytic square root gives $N=\kappa s b(s^2,x)$ on the spatial side, with $b(0,x)=1$. The Gauss lemma gives $h=ds^2+\gamma(s^2,x)$. These arguments hold in ordinary coordinates on $H$, including at an axial pole.

We set $X=s\cosh(\kappa t)$, $T=s\sinh(\kappa t)$ and $\sigma\coloneqq X^2-T^2$. Since the analytic function $b(\sigma,x)^2-1$ vanishes at $\sigma=0$, it can be written as $\sigma\beta(\sigma,x)$ with $\beta$ analytic. Direct substitution yields
\be
\label{AX-eq-local-Lorentzian-extension}
\mathbf g=dX^2-dT^2+\gamma(\sigma,x) -\beta(\sigma,x)(X\,dT-T\,dX)^2.
\ee
This is analytic and nondegenerate near $X=T=0$, and restricts to the static metric on $X>|T|$. The construction applies separately to each horizon, with the angle normalized by its own surface gravity.
\ese
\end{proof}


\section{Weighted estimates and the kernel of the Schwarzschild linearized operator}
\label{AX-appendix-Schwarzschild}


\subsection{Weighted estimates under the inner orthogonality conditions}
\label{AX-appendix-model-proof-1}

We prove the estimates and classify the kernel of the linearized operator in the zero-parameter limit of the rescaled Kottler collars. These results justify the centering conditions that remove the two inner displacement modes from the global tensor estimate. We prove Proposition~\ref{AX-prop-completely-centered-inverse-bound} by establishing a weighted estimate on the inner limiting metric and finite dimensionality of its decaying kernel. A rescaling contradiction then gives the global estimate under the centering conditions.

\begin{proof}
\bse
We first prove the fixed-model estimate
\be
\label{AX-eq-Schwarzschild-core-estimate}
\|w\|_{k+2,\alpha,\gamma;\rho_S} \leq C\Bigl(\|\mathcal P_Sw\|_{k,\alpha,\gamma-2;\rho_S} +\sup_{K_S}|w|\Bigr).
\ee
Outside a sufficiently large fixed radius, the Schwarzschild metric differs from the flat product in scaled derivatives by $\Ocal(|y|^{-1})$. Its curvature term is $\Ocal(|y|^{-3})$. The barrier proof of Proposition~\ref{AX-prop-global-tensor-remainders}, with the single radial weight $|y|^\gamma$, therefore applies on exterior annuli. To justify passage to infinity for a tensor merely bounded by that weight, we choose $\gamma<\gamma'<0$ and use $|y|^\gamma+\delta|y|^{\gamma'}$, with $\delta>0$. Both powers have a positive barrier inequality beyond a fixed radius, and the ratio of $|w|$ to their sum tends to zero at infinity. The maximum argument gives a bound with a constant independent of $\delta$. Letting $\delta$ decrease to zero gives the weighted $C^0$ estimate. The same local Schauder estimates, on the end and on the regular compact horizon region, prove~\eqref{AX-eq-Schwarzschild-core-estimate}.

For a bounded sequence in $\mathcal K_S$, these local estimates give a subsequence converging uniformly on $K_S$. The pairwise differences have zero image. Applying~\eqref{AX-eq-Schwarzschild-core-estimate} to them shows that the subsequence is Cauchy in the full weighted norm. The closed unit ball of $\mathcal K_S$ is therefore compact, which makes the space finite dimensional. In an infinite-dimensional normed space, one can successively choose unit vectors whose distance from the span of the preceding vectors is at least a fixed $c_{\rm sep}\in(0,1)$. Each such span is finite dimensional and closed; take a vector outside it, approximate its distance to the span, and rescale to obtain the required separation. Such a sequence has no convergent subsequence, which proves the implication used here. If the $L^2(K_S)$ norm of a kernel element is zero, continuity makes it zero on $K_S$, and~\eqref{AX-eq-Schwarzschild-core-estimate} makes it zero everywhere. This proves positive definiteness of the compact pairing and justifies the chosen basis.

We suppose that~\eqref{AX-eq-completely-centered-global-bound} fails along a sequence $\varepsilon_i\to0$. After normalization, we have $\|v_i\|_{k+2,\alpha,\gamma;\omega}=1$, with both the source norm and every centering functional tending to zero. The estimate~\eqref{AX-eq-global-tensor-core-estimate} forces a positive lower bound for $\mathcal C_{\varepsilon_i}(v_i)$. After passing to a subsequence, this lower bound holds on one fixed inner collar $j$. Its rescaled Kottler metric $\widehat G_{Q_j}$ converges to $G_S$ on every fixed compact region, since $Q_j\to0$ and $r_{\rm in}/m_j\to\infty$. The rescaled tensors $z_i\coloneqq m_j^{-\gamma}\widehat v_{i,j}$ satisfy
\be
\label{AX-eq-core-rescaled-linear-equation}
\mathcal P_{Q_j} z_i =m_j^{-\gamma}\Phi_{j,\varepsilon_i}^*(\mathcal P_Gv_i), \qquad \mathcal P_{Q_j}=\tfrac12\nabla_{\widehat G_{Q_j}}^*\nabla_{\widehat G_{Q_j}}-\mathcal R_{\widehat G_{Q_j}}.
\ee
The right side has $\widehat G_{Q_j}$ norm $c_{h,j}m_j^{2-\gamma}|\mathcal P_Gv_i|_G$, so it tends to zero in the scaled local source norm. The coefficients of $\mathcal P_{Q_j}$ converge smoothly to those of $\mathcal P_S$ on fixed compact sets. Uniform local estimates give a diagonal subsequence converging to a regular tensor $z$ on the complete Schwarzschild model. The global weighted bound passes to this limit and gives $|D^bz|\leq C\rho_S^{\gamma-b}$, with the corresponding H\"older bounds. Hence $z\in\mathcal K_S$.

The positive lower bound on the chosen inner collar passes to the compact limit, so $z\neq0$. On the other hand, the centering conditions pass to the limit and give $\int_{K_S}\langle z,w_a\rangle\,d\mu=0$. Since $z$ belongs to the span of the $w_a$, these equalities give $z=0$, a contradiction. This proves the uniform estimate. On the domain satisfying the orthogonality conditions its right side contains only the source norm, which gives injectivity and the inverse bound on the range. If a sequence of sources in that range converges, the estimate makes the corresponding tensors Cauchy in the closed Banach space defined by those conditions; its limit maps to the limiting source. The range is therefore closed.
\ese
\end{proof}


\subsection{\texorpdfstring{Classification of the decaying kernel of the Schwarzschild limiting operator}{Classification of the decaying kernel of the Schwarzschild limiting operator}}
\label{AX-appendix-model-proof-2}

We prove Proposition~\ref{AX-prop-Schwarzschild-kernel-classification} by classifying the regular decaying static axisymmetric tensors in the kernel of the inner limiting operator. Spherical harmonic decomposition reduces the calculation to radial equations and leaves only the axial displacement mode.

\begin{proof}
\bse
\emph{1. Trace and Bianchi conditions.} We first show that the gauge and trace equations follow from membership in the kernel. On a Ricci-flat metric, contraction of the operator and the linearized contracted Bianchi identity give
\be
\label{AX-eq-Schwarzschild-automatic-gauge}
\operatorname{tr}(\mathcal P_Sw)=\tfrac12\nabla^*\nabla\operatorname{tr}w, \qquad \mathfrak b_{G_S}(\mathcal P_Sw)=\tfrac12\nabla^*\nabla(\mathfrak b_{G_S}w).
\ee
For the second identity, the gauged variation formula gives $\mathcal P_Sw=\Ric'[w]+B(\mathfrak b w)^\sharp$; differentiation of $\mathfrak b_g\Ric_g=0$ at a Ricci-flat metric gives $\mathfrak b\Ric'[w]=0$, while commuting two covariant derivatives gives $\mathfrak b B\alpha^\sharp=\tfrac12\nabla^*\nabla\alpha$. The trace tends to zero at infinity, so its harmonic equation and the maximum principle give $\operatorname{tr}w=0$. The one-form $\mathfrak b w$ tends to zero as $\Ocal(R^{\gamma-1})$. The equation gives $\Delta|\mathfrak b w|^2=2|\nabla\mathfrak b w|^2$, and the maximum principle then implies $\mathfrak b w=0$. The horizon is an interior set in the regular lift and introduces no boundary in these arguments. Consequently, we have
\be
\label{AX-eq-Schwarzschild-kernel-TT}
\operatorname{tr}w=0,\qquad \delta w=0,\qquad \Ric'[w]=0.
\ee
\ese

\bse
\emph{2. Harmonic decomposition and a regular gauge.} We write $P_l$ for the Legendre polynomial of degree $l$, normalized by $P_l(1)=1$, and decompose $w$ into the axisymmetric spherical harmonics $Y_l(\theta)\coloneqq P_l(\cos\theta)$, with $L_l\coloneqq l(l+1)$ and $\Delta_{g_{\mathbb S^2}} Y_l=-L_lY_l$. The angular derivatives of these harmonics give the corresponding one-form and trace-free tensor harmonics. The reflection $\varphi\mapsto-\varphi$ excludes the odd tensor harmonics. Spherical projection commutes with the background operator; thus each projected tensor separately satisfies~\eqref{AX-eq-Schwarzschild-kernel-TT} and the prescribed decay. We can work at each fixed $l$ without requiring an estimate uniform in $l$. Smooth convergence of the harmonic expansion on compact sets then recovers the original tensor.

We recall that $G_S=F\,d\tau^2+F^{-1}dR^2+R^2g_{\mathbb S^2}$ with $F=1-2/R$. For $l\geq2$, a projected tensor has components
\be
\label{AX-eq-Schwarzschild-polar-components}
\begin{aligned}
w_{\tau\tau}&=F A Y_l,& w_{RR}&=F^{-1}B Y_l,& w_{Ra}&=j D_aY_l,\\
w_{ab}&=R^2\bigl(KY_l(g_{\mathbb S^2})_{ab}+G\mathbb Y_{ab}\bigr),&&
\mathbb Y_{ab}\coloneqq D_aD_bY_l+\tfrac12L_lY_l(g_{\mathbb S^2})_{ab}.
\end{aligned}
\ee
Here $a,b$ are angular indices, $D$ is the connection of $g_{\mathbb S^2}$, and the five scalar coefficients depend on $R$. A spatial vector orthogonal to the axial symmetry orbits $X=xY_l\del_R+zD^aY_l\del_a$ changes $G$, $j$ and $K$ by $2z$, $F^{-1}x+R^2z'$ and $2x/R-L_lz$, respectively. Therefore subtracting its Lie derivative with
\be
\label{AX-eq-Schwarzschild-regular-RW-gauge}
z=\tfrac12G,\qquad x=F\bigl(j-\tfrac12R^2G'\bigr)
\ee
sets $G=j=0$. This gauge is regular at the horizon. Indeed, with normal distance $s$, the quantities $R-2$, $A$, $B$, $K$ and $G$ are smooth functions of $s^2$, while regularity gives $w_{sa}=s$ times a smooth function of $s^2$. Since $dR/ds$ is $s$ times a nonzero smooth function, $j$ is also smooth in $R-2$. Hence $x=\Ocal(R-2)$, and $x\del_R$ is a smooth radial vector in the normal plane. At infinity, $j=\Ocal(R^{\gamma+1})$, $G=\Ocal(R^\gamma)$ and $G'=\Ocal(R^{\gamma-1})$, so $X=\Ocal(R^{\gamma+1})$ and $\mathcal L_XG_S=\Ocal(R^\gamma)$, with differentiated versions of these bounds. The resulting tensor retains its decay and solves the linearized Ricci equation.
\ese

\bse
\emph{3. Components of the linearized Ricci tensor.} We compute the components with arbitrary $A,B,K$ before specializing to $l\geq2$, since the same calculation will be used for the dipole. We define
\be
\label{AX-eq-Schwarzschild-general-polar-radial-coefficients}
\begin{aligned}
Q_l&\coloneqq R(R-2)A''+(2R-1)A'+2K'-B'-L_lA,\\
T_l&\coloneqq B+\frac{R-2}{2}(B'-A')
+\Bigl(\frac{L_l}{2}-1\Bigr)K
-(2R-3)K'-\frac{R(R-2)}2K''.
\end{aligned}
\ee
The $\tau\tau$, mixed and spherical components of the Ricci variation are
\be
\label{AX-eq-Schwarzschild-general-polar-Ricci-components}
\begin{aligned}
\Ric'[w]_{\tau\tau}&=-\frac{F}{2R^2}Q_lY_l,\\
\Ric'[w]_{RR}&=\Bigl(-\frac{A''}{2}-\frac{3F'}{4F}A'
+\bigl(\frac1R+\frac{F'}{4F}\bigr)B'
-K''\\
&\qquad-\bigl(\frac2R+\frac{F'}{2F}\bigr)K'
+\frac{L_l}{2FR^2}B\Bigr)Y_l,\\
\Ric'[w]_{Ra}&=\Bigl(-\frac{A'+K'}2+\frac{A+B}{2R}
-\frac{F'}{4F}(A-B)\Bigr)D_aY_l,\\
\Ric'[w]_{ab}&=T_lY_l(g_{\mathbb S^2})_{ab}
-\frac{A+B}{2}D_aD_bY_l.
\end{aligned}
\ee
For the mixed component, the spatial Codazzi equation gives $\Ric_h'[w]_{Ra}=(-K'/2+B/(2R))D_aY_l$, whereas variation of $N^{-1}\Hess_hN$ gives $(A'/2+F'(A-B)/(4F)-A/(2R))D_aY_l$. Their difference gives the mixed component. For the component in $\tau$, we use $\Ric_{\tau\tau}=-N\Delta_hN$. After removing the fixed spherical density, the variation of its radial lapse flux is $(K+(A-B)/2+R^2FA'/2)Y_l$, and its angular divergence is $-L_lAY_l/2$. Multiplication by the unperturbed factor $-F/R^2$ gives the first line.

For the angular component, the radial contribution to $\Ric_{ab}$ is the coefficient
\be
\label{AX-eq-Schwarzschild-angular-radial-Ricci}
1-fa'^2-af a''-aa'\Bigl(\frac{f'}2+f\frac{N'}N\Bigr)
\ee
times $(g_{\mathbb S^2})_{ab}$. Differentiate at $t=0$ after substituting $f=F(1-tBY_l)$, $a=R(1+tKY_l/2)$ and $N=\sqrt F(1+tAY_l/2)$. The factor $Y_l$ is held fixed during radial differentiation. The coefficient obtained is $[B+(R-2)(B'-A')/2-K-(2R-3)K'-R(R-2)K''/2]Y_l$. The intrinsic angular Ricci variation of $R^2(1+tKY_l)g_{\mathbb S^2}$ contributes $L_lKY_l(g_{\mathbb S^2})_{ab}/2$. In two dimensions, the conformal variation $KY_lg_{\mathbb S^2}$ contributes $-\Delta_{g_{\mathbb S^2}}(KY_l)g_{\mathbb S^2}/2$. The remaining angular Hessian terms are $-(A+B)D_aD_bY_l/2$. These follow by varying the normal coefficient $F^{-1}(1+tBY_l)$ in the spatial connection and the angular Hessian of $\sqrt F(1+tAY_l/2)$ in the lapse term. Thus they contribute $-BD_aD_bY_l/2$ and $-AD_aD_bY_l/2$, respectively, proving the last line of~\eqref{AX-eq-Schwarzschild-general-polar-Ricci-components}.

We also give the radial calculation, including the angular term. For the same functions $f,a,N$, the terms in $\Ric_{RR}$ containing only radial derivatives are
\be
\label{AX-eq-Schwarzschild-normal-radial-Ricci}
-\frac{N''}{N}-\frac{f'}{2f}\frac{N'}N -2\frac{a''}a-\frac{f'}f\frac{a'}a.
\ee
Indeed, the radial connection coefficients are $\Gamma^R_{RR}=-f'/(2f)$, $\Gamma^\tau_{R\tau}=N'/N$ and $\Gamma^a_{Rb}=a'\delta^a_b/a$. Substitution into the coordinate Ricci formula cancels the derivative and square of $\Gamma^R_{RR}$ and gives this expression. At first order its four terms contribute, respectively,
\be
\label{AX-eq-Schwarzschild-normal-radial-variation}
-\tfrac12A''-\tfrac{F'}{2F}A', -\tfrac{F'}{4F}A'+\tfrac{F'}{4F}B' -K''-\tfrac2R K', -\tfrac{F'}{2F}K'+\tfrac1R B',
\ee
each multiplied by $Y_l$. The remaining first-order angular contribution is $-\Delta_{g_{\mathbb S^2}}(BY_l)/(2FR^2)$: the varying connection is $\dot\Gamma^a_{RR}=-(2FR^2)^{-1}D^a(BY_l)$, and its angular divergence gives this term. All other terms containing an angular derivative in this component are quadratic in the variation. Adding these contributions proves the radial component of~\eqref{AX-eq-Schwarzschild-general-polar-Ricci-components}. Thus the calculation accounts for all potentially nonzero components in this gauge.
\ese

\bse
\emph{4. The components with $l\geq2$.} We specialize these equations to $l\geq2$. The trace-free angular component of $\Ric'[w]$ is $-\tfrac12(A+B)\mathbb Y_{ab}$. Thus $A=-B$ for $l\geq2$; we put $B\coloneqq H$. The mixed component and the lapse equation give
\be
\label{AX-eq-Schwarzschild-static-Legendre-system}
K'=H'+\frac{F'}F H, \qquad \bigl(R(R-2)H'\bigr)'-\bigl(L_l+\frac4{R(R-2)}\bigr)H=0.
\ee
For the spatial metric $F^{-1}(1+HY_l)dR^2+R^2(1+KY_l)g_{\mathbb S^2}$, the principal curvature of an $R$-sphere is $\sqrt F/R$ times $1+(RK'/2-H/2)Y_l$ to first order. Its Codazzi equation gives $\Ric_h{}_{Ra}=(-K'/2+H/(2R))D_aY_l$. With $N=\sqrt F(1-HY_l/2)$, direct differentiation gives $N^{-1}\Hess_hN{}_{Ra}=(-H'/2-F'H/(2F)+H/(2R))D_aY_l$, proving the first equation. In $\Delta_hN=0$, the perturbed radial flux $\sqrt{\det h}\,h^{RR}N_R$, after suppressing the fixed spherical density, is $1+(K-H-R^2FH'/2)Y_l$; the angular divergence contributes $L_lH Y_l/2$. Hence $(R^2FH')'-L_lH=2(K'-H')$. Substitution of the first equation and $2F'/F=4/(R(R-2))$ proves the second. Finally, the trace-free angular Ricci formula follows either by inserting~\eqref{AX-eq-Schwarzschild-polar-components} with $G=j=0$ into the connection variation or by collecting its angular Hessian terms: the two base coefficients contribute $-\tfrac12D_aD_b((A+B)Y_l)$, whereas a conformal variation of the two-dimensional sphere contributes only a multiple of $(g_{\mathbb S^2})_{ab}$.

Smoothness at the horizon gives $A(2)=B(2)$ with a fixed period of $\tau$, since these are the relative coefficients of the angular and radial directions of the normal plane. Thus $H(2)=0$ and $H=\Ocal(R-2)$. This horizon condition and the decay at infinity give a positive weighted identity throughout the full interval $-1<\gamma<0$. We multiply the radial equation by $H/R$ and integrate twice by parts on $[R_0,R_1]\subset(2,\infty)$. The integrations by parts give the identity
\be
\label{AX-eq-Schwarzschild-polar-weighted-positive-energy}
\bigl[(R-2)HH'+\bigl(\tfrac12-\tfrac1R\bigr)H^2\bigr]_{R_0}^{R_1} =\int_{R_0}^{R_1}\Bigl[(R-2)(H')^2 +\Bigl(\frac{L_l}{R}+\frac1{R^2}+\frac4{R^2(R-2)}\Bigr)H^2\Bigr]dR.
\ee
For the integration by parts, we set $p_{\mathrm{rad}}\coloneqq R(R-2)$ and $b\coloneqq1/R$. The potential after integration is $b(L_l+4/p_{\mathrm{rad}})-(p_{\mathrm{rad}}b')'/2$, with $(p_{\mathrm{rad}}b')'=-2/R^2$, and the boundary term is $bp_{\mathrm{rad}}HH'-p_{\mathrm{rad}}b'H^2/2$. These give exactly this expression. At the horizon, $H=\Ocal(R-2)$ and $H'=\Ocal(1)$ make that boundary term $\Ocal((R-2)^2)$. At infinity, the weighted derivative estimate gives $H=\Ocal(R^\gamma)$ and $H'=\Ocal(R^{\gamma-1})$. The boundary term is thus $\Ocal(R^{2\gamma})$, which tends to zero for every admissible $\gamma$. Both ends give convergent integrals: the terms are $\Ocal(R-2)$ or smaller near the horizon and $\Ocal(R^{2\gamma-1})$ or smaller at infinity. Pass to the limits $R_0\downarrow2$ and $R_1\to\infty$. The integral of the nonnegative integrand is zero, and consequently $H=0$. The factor $1/R$ is needed for this energy argument when $-1/2\leq\gamma<0$, since the unweighted boundary estimate would be $\Ocal(R^{2\gamma+1})$.

The explicit regular solution also identifies the growing mode excluded by this identity. We put $u\coloneqq R-1$ and use the polynomial $P_l$ characterized by
\be
\label{AX-eq-Schwarzschild-Legendre-polynomial}
P_l(1+t)=\sum_{q=0}^l\frac{(l+q)!}{2^q(q!)^2(l-q)!}\,t^q, \qquad \bigl((u^2-1)P_l'\bigr)'=L_lP_l.
\ee
The differential equation follows by substituting these coefficients and comparing successive powers. Differentiating it twice shows that $H_l(u)\coloneqq(u^2-1)P_l''(u)$ satisfies the second equation in~\eqref{AX-eq-Schwarzschild-static-Legendre-system}. It is positive for $u>1$, vanishes to first order at $u=1$, and grows like a positive multiple of $u^l$. The Wronskian $(u^2-1)(H'H_l-HH_l')$ is constant. Smoothness and first-order vanishing make that constant zero at $u=1$, so $H$ is a multiple of $H_l$. Its decay excludes a nonzero multiple. The first equation in~\eqref{AX-eq-Schwarzschild-static-Legendre-system} now makes $K$ constant, and decay gives $K=0$. Thus every component with $l\geq2$ is a Lie derivative along a smooth vector $X=\Ocal(R^{\gamma+1})$.
\ese

\bse
\emph{5. The dipole and spherical components.} For $l=1$, the trace-free harmonic $\mathbb Y$ is zero. We remove $K$ and $j$ by the vector determined by
\be
\label{AX-eq-Schwarzschild-dipole-gauge}
z(R)\coloneqq\frac1{R-2}\int_2^R\Bigl(\frac{(s-2)j(s)}{s^2}-\frac{K(s)}2\Bigr)ds, \qquad x(R)\coloneqq R\bigl(K(R)/2+z(R)\bigr).
\ee
The integral makes $z$ smooth at $R=2$, with $z(2)=-K(2)/2$, and therefore $x=\Ocal(R-2)$. The vector is smooth in the horizon lift. Because $\gamma>-1$, the integral and the weighted derivative bounds give $z=\Ocal(R^\gamma)$ and $x=\Ocal(R^{\gamma+1})$ at infinity. The formulas for its Lie derivative show that the resulting tensor has $K=j=0$. The remaining relative coefficients $A,B$ satisfy
\be
\label{AX-eq-Schwarzschild-dipole-equations}
\begin{aligned}
R(R-2)A'+(3-R)A-(R-1)B&=0,\\
(R-2)(B'-A')+A+3B&=0,\\
\bigl(R(R-2)A'\bigr)'+A'-B'-2A&=0.
\end{aligned}
\ee
The first equation is the mixed static equation, computed as above with arbitrary $A,B$ and $K=0$. The last follows from the radial lapse flux $1+(A-B+R^2FA')Y_1/2$ and its angular divergence $-A Y_1$. For the middle equation, the radial part of the angular Ricci variation is $[B+RF(B'-A')/2]Y_1g_{\mathbb S^2}$; its angular Hessian part is $-(A+B)D^2Y_1/2=(A+B)Y_1g_{\mathbb S^2}/2$. Their sum gives this equation. Differentiate the first equation and subtract the result from the third. This gives $(R-2)(A'+B')-A+B=0$. Combined with the middle equation, this yields $(R-2)B'=-2B$ and $(R-2)A'=A+B$. Substitution into the first gives $B=-3A$, whence $A=C(R-2)^{-2}$. Horizon smoothness forces $C=0$. Thus the dipole component is also a Lie derivative along a smooth vector satisfying $|X|=\Ocal(R^{\gamma+1})$.

We deal separately with $l=0$. We write its angular component as $R^2Kg_{\mathbb S^2}$ and put $c\coloneqq K(2)$. The tensor $cG_S$ solves the linearized Ricci equation. Subtracting this tensor and then the Lie derivative of $X\coloneqq R(K-c)\del_R/2$ removes the angular component. This vector is smooth at the horizon because $K-c=\Ocal(R-2)$. The remaining spherical tensor has relative coefficients $A,B$. The spherical vacuum equations, written for $a(R)d\tau^2+f(R)^{-1}dR^2+R^2g_{\mathbb S^2}$, are $(a/f)'=0$ and $(Rf)'=1$. Their linearizations give $(A+B)'=0$ and $(RFB)'=0$. Regularity at $R=2$ gives $B=0$; equality of the two normal coefficients there then gives $A=0$. Consequently the original spherical tensor is $cG_S+\mathcal L_XG_S$. Since $X=\Ocal(R)$ and $F'=\Ocal(R^{-2})$, the relative coefficient of $d\tau^2$ of $\mathcal L_XG_S$ tends to zero. The decay of the original tensor therefore gives $c=0$. Its trace equation reduces to $(R^2X^R)'=0$. For a smooth radial vector, $X^R(2)=0$. Hence $X=0$, and the spherical component is zero. The two spherical vacuum identities can also be checked directly: the difference of the normalized radial and $\tau\tau$ Ricci equations gives $(\log(a/f))'=0$, and the angular equation then gives $1-f-Rf'=0$.
\ese

\bse
\emph{6. Classification of the harmonic gauge fields.} It remains to identify the vector fields generating these Lie derivatives for $l\geq1$. For each such component we have $w=\mathcal L_XG_S$, with $X=xY_l\del_R+zD^aY_l\del_a$ smooth at the horizon and $|X|=\Ocal(R^{\gamma+1})$. The trace and Bianchi equations in~\eqref{AX-eq-Schwarzschild-kernel-TT} give $\operatorname{div}X=0$ and $\nabla^*\nabla X^\flat=0$. Since $G_S$ is Ricci flat, the latter is also the Hodge Laplace equation. We use the following notation:
\be
\label{AX-eq-Schwarzschild-harmonic-gauge-vector}
\alpha\coloneqq X^\flat=F^{-1}xY_l\,dR+R^2z\,dY_l, \qquad q\coloneqq(R^2z)'-F^{-1}x, \qquad d\alpha=q\,dR\wedge dY_l.
\ee
We have $\delta\alpha=0$ and $\delta d\alpha=0$. The radial component of the last equation is $-L_lqY_l/R^2=0$: this follows from the divergence formula for a two-form and $\sqrt{\det G_S}=R^2\sqrt{\det g_{\mathbb S^2}}$. Therefore $q=0$ and $\alpha=d(pY_l)$ with $p\coloneqq R^2z$. Its divergence equation becomes
\be
\label{AX-eq-Schwarzschild-harmonic-gauge-Legendre}
\bigl(R(R-2)p'\bigr)'-L_lp=0, \qquad p=\Ocal(R^{\gamma+2}).
\ee
The scalar $p$ and its derivative are bounded at $R=2$ by smoothness of the angular vector coefficient. The Wronskian with $P_l(R-1)$ is zero there; hence $p$ is proportional to this polynomial. The growth bound, together with $\gamma+2<2$, rules out all $l\geq2$. For $l=1$ it gives $X=c\nabla((R-1)\cos\theta)$. This proves $\mathcal K_S=\operatorname{span}\{w_{\rm disp}\}$.
\ese

\bse
\emph{7. Normalization of the displacement tensor.} The explicit Hessian gives the norm as follows. In an orthonormal frame ordered by the $\tau$, radial and two spherical directions, the nonzero components are
\be
\label{AX-eq-Schwarzschild-displacement-components}
(w_{00},w_{11},w_{22},w_{33})=\frac{2\cos\theta}{R^2}(1,1,-1,-1), \qquad w_{12}=-\frac{2\sqrt F\sin\theta}{R^2}.
\ee
These follow from $\Gamma^R_{\tau\tau}=-F/R^2$, $\Gamma^R_{RR}=-1/(FR^2)$, $\Gamma^R_{\theta\theta}=-FR$, and $\Gamma^\theta_{R\theta}=1/R$. Thus $|w_{\rm disp}|^2=R^{-4}(16\cos^2\theta+8F\sin^2\theta)$. Integration over the sphere and the auxiliary circle gives
\be
\label{AX-eq-Schwarzschild-displacement-Ltwo}
\|w_{\rm disp}\|_{L^2(G_S)}^2 =8\pi\,\frac{64\pi}{3}\int_2^\infty\frac{1+F}{R^2}\,dR =128\pi^2.
\ee
\ese
\end{proof}


\section{Quadratic flux calculation for the reference metric}
\label{AX-appendix-quadratic-flux}

We give the coefficient calculation used in Proposition~\ref{AX-prop-flux-stability}. The fields $U,Q,G_t$, the sphere $\Sigma_D$ and the annulus $\mathcal A_D$ are those of Section~\ref{AX-section-flux-proof}.

\bse
\emph{1. The quadratic tensor of the reference.} We compute $S_{3t}(G_t)$. For a metric $e^{2v}d\tau^2+e^{2u}\delta$, the spatial Christoffel coefficients are $\Gamma^k_{ij}=\delta^k_j u_i+\delta^k_i u_j-\delta_{ij}u_k$. Their contraction and the Hessian of $e^v$ give
\be
\label{AX-eq-conformal-flux-components}
\begin{aligned}
(\Ric_h)_{ij}&=-u_{ij}+u_i u_j-(\Delta u+|\nabla u|^2)\delta_{ij},\\
e^{-v}(\Hess_h e^v)_{ij}
&=v_{ij}+v_i v_j-u_i v_j-u_j v_i+(\nabla u\cdot\nabla v)\delta_{ij},\\
\Scal_g&=e^{-2u}\bigl(-4\Delta u-2|\nabla u|^2
-2\Delta v-2|\nabla v|^2-2\nabla u\cdot\nabla v\bigr).
\end{aligned}
\ee
For completeness, the derivative terms in the Ricci contraction are $\del_k\Gamma^k_{ij}-\del_j\Gamma^k_{ik}=-u_{ij}-\delta_{ij}\Delta u$, and the quadratic terms are $u_i u_j-\delta_{ij}|\nabla u|^2$. The Hessian formula follows by subtracting $\Gamma^k_{ij}\del_k e^v$ from $\del_i\del_j e^v$. The scalar formula is the trace of $\Ric_g=\Ric_h-e^{-v}\Hess_h e^v$ in the spatial directions, together with the component in the auxiliary coordinate $e^{-2v}(\Ric_g)_{\tau\tau}=-e^{-v}\Delta_h e^v$. In particular, for an arbitrary cosmological coefficient $\Lambda_{\rm loc}$, substituting these expressions in the Einstein tensor gives the exact spatial identity
\be
\label{AX-eq-exact-conformal-flux}
(S_{\Lambda_{\rm loc}})_{ij} =-\del_i\del_j(u+v)+u_i u_j-v_i v_j+u_i v_j+v_i u_j +\bigl(\Delta(u+v)+|\nabla v|^2+\Lambda_{\rm loc} e^{2u}\bigr)\delta_{ij}.
\ee

For $G_t$, Taylor expansion of $\tfrac12\log(1+t(2U-Q))$ and $\log(1-t(U+Q))$, with their first two derivatives on $\mathcal A_D$, yields
\be
\label{AX-eq-flux-logarithms}
\begin{aligned}
u&=t(U-Q/2)-t^2(U-Q/2)^2+\Ocal_I(t^3),\\
v&=-t(U+Q)-\tfrac12t^2(U+Q)^2+\Ocal_I(t^3),\\
u+v&=-\tfrac32tQ+t^2H_2+\Ocal_I(t^3),\qquad
H_2\coloneqq-\tfrac32U^2-\tfrac34Q^2.
\end{aligned}
\ee
Since $\Hess Q=\delta$, $\Delta Q=3$ and $\Lambda_{\rm loc}=3t$, the terms linear in $t$ in~\eqref{AX-eq-exact-conformal-flux} cancel. Expanding each quadratic gradient term and $\Lambda_{\rm loc}e^{2u}=3t+6t^2(U-Q/2)+\Ocal_I(t^3)$ gives
\be
\label{AX-eq-quadratic-flux-tensor}
\begin{aligned}
&(S_{3t}(G_t))_{ij} =t^2\bigl((\Delta H_2)\delta_{ij}-\del_i\del_jH_2+K_{ij}\bigr)+\Ocal_I(t^3),\\
&K_{ij}\coloneqq-2U_iU_j-2(U_iQ_j+Q_iU_j)+\tfrac14Q_iQ_j\qquad \hspace{13mm}+\bigl(|\nabla U+\nabla Q|^2+6U-3Q\bigr)\delta_{ij}.
\end{aligned}
\ee
The Hessian combination has zero flux: its divergence vanishes by commutation of derivatives, and a scalar cutoff extension through the ball gives the divergence theorem as in~\eqref{AX-eq-flat-zero-flux}. This argument is needed because $H_2$ itself is singular at the enclosed center.
\ese

\bse
\emph{2. Evaluation of the axial coefficient.} To compute the remaining flux, we differentiate $K$ away from the centers. Using $\Delta U=0$, $Q_{ij}=\delta_{ij}$ and $\Delta Q=3$, the three tensor terms contribute respectively $-2U_iU_{ij}$, $-8U_j-2Q_iU_{ij}$ and $Q_j$ to $\del_iK_{ij}$. The derivative of the scalar term is $2U_iU_{ij}+2Q_iU_{ij}+8U_j-Q_j$. Their sum vanishes. Consequently, the flux of $K$ through a sphere about $p_+^0$ is independent of its radius below $D$.

Writing $x=p_+^0+rn$, the smooth companion and cosmological gradients have the expansions
\be
\label{AX-eq-flux-centre-expansion}
\nabla U=-n/r^2+A+\Ocal_I(r),\qquad \nabla Q=P+rn,\qquad A\coloneqq-D^{-2}e_z,\qquad P\coloneqq(D/2)e_z.
\ee
The term quadratic in the singular gradient $-n/r^2$ is $r^{-4}(-2n_i n_j+\delta_{ij})$ and has zero vector flux by oddness. The terms of order $r^{-2}$ with a nonzero limiting flux are
\be
\label{AX-eq-flux-dipole}
r^{-2}\bigl(2n_i C_j+2C_i n_j-2(n\cdot C)\delta_{ij}\bigr), \qquad C\coloneqq A+P.
\ee
All remaining terms give $\Ocal_I(r)$ or smaller integrated contributions. Contraction of this tensor with $n^i$ gives $2C_j/r^2$, whose integral is $8\pi C_j$. Conservation then yields $\int_{\Sigma_D}K_{ij}n^i\,dA=8\pi B(D)(e_z)_j$. On the fixed proportional sphere, the uniform Taylor remainder in~\eqref{AX-eq-quadratic-flux-tensor} gives an additional $\Ocal_I(t^3)$.
\ese

\section{Estimates for the vector operator near the horizons} 
\label{AX-appendix-weighted-loss}

We verify the second estimate in Proposition~\ref{AX-lem-vector-weight-loss} for the field defined in~\eqref{AX-eq-weighted-loss-example}, using the notation and cutoffs of its proof.
\bse
We estimate the image in three regions. All derivative bounds below follow from the homogeneity of $\Gamma$ and the scaled bounds on the reference; they hold in $C^{k,\alpha}$ as well. On a cutoff annulus $r_j\asymp m$, the field has scaled norm at most $Cm^{-1}$ and the reference coefficients have bounded scaled norms. The scaled image on this cutoff annulus therefore satisfies
\be
\omega^{2-\beta}\|A_{G_\varepsilon}V_\varepsilon\|_{C^{k,\alpha}_\omega} \leq C m^{-1-\beta}=C m^{-\gamma}.
\ee
Inside the smaller cutoff balls the field is zero. In the region $4m\leq r_j$ and $|X|\leq C_*d$, for a fixed sufficiently large $C_*$, we have $A_0W=0$. If $r_*\coloneqq\min(r_+,r_-)$, then $|D^jW|\leq C_jr_*^{-1-j}$. The scaled metric difference from the flat product, including its first two derivatives, is bounded by $C(m/r_*+d^2)$. This follows directly from the Kottler radial fields, the intermediate fields and the inner cutoff formulas. The coordinate expression for $A_G$ therefore gives
\be
\omega^{2-\beta}|A_{G_\varepsilon}W| \leq C\bigl(mr_*^{-2-\beta}+d^2r_*^{-1-\beta}\bigr) \leq C\bigl(m^{-\gamma}+d^{2-\gamma}\bigr).
\ee
Here $\omega\asymp r_*$, $-2-\beta=-1-\gamma$ and $-1-\beta=-\gamma$.

For $C_*d\leq r\leq R_{\rm cut}$, cancellation between the two poles gives $|D^jW|\leq C_jd r^{-2-j}$. The reference differs from the flat product by at most $C(m/r+r^2)$ in the scaled $C^{k+2,\alpha}_r$ norm; the outer transition obeys this bound as well. Away from the fixed cutoff annulus we again use $A_0W=0$ and obtain
\be
r^{2-\beta}|A_{G_\varepsilon}W| \leq C\bigl(md r^{-3-\beta}+d r^{-\beta}\bigr) \leq C\bigl(md^{-1-\gamma}+d\bigr).
\ee
The derivatives of $\chi_f$ contribute at most $Cd$ on its fixed annulus. Since $m=\varepsilon^3$ and $d=\varepsilon D$, all these terms are bounded by $Cm^{-\gamma}$ for small $\varepsilon$. This proves the second inequality.
\ese

\end{document}